\documentclass[pra,twocolumn,floatfix,superscriptaddress,notitlepage,longbibliography]{revtex4-2}

\usepackage[T1]{fontenc}             % Ensures proper encoding for PDF output.
\usepackage{amsmath, amsfonts, amssymb, amsthm} % For advanced mathematical typesetting.
\usepackage{mathtools}               % Enhances amsmath, e.g., for \coloneqq and showonlyrefs.
\usepackage[many]{tcolorbox}         % For colored boxes around text, the 'many' option loads many libraries.
\makeatletter
\mathchardef\ordinarycolon="603A
\mathchardef\ordinaryequal="303D

\begingroup
  \lccode`~=`\:
\lowercase{\endgroup
  \def~{\futurelet\@let@token\colon@active}
}
\def\colon@active{%
  \ifx\@let@token=%
    \expandafter\colon@eq
  \else
    \ordinarycolon
  \fi
}
\def\colon@eq={\coloneqq}

\begingroup
  \lccode`~=`\=
\lowercase{\endgroup
  \def~{\futurelet\@let@token\equal@active}
}
\def\equal@active{%
  \ifx\@let@token:%
    \expandafter\equal@colon
  \else
    \ordinaryequal
  \fi
}
\def\equal@colon:{\eqqcolon}

\AtBeginDocument{%
  \mathcode`\:=\string"8000
  \mathcode`\==\string"8000
}
\makeatother

\usepackage{xcolor}
\usepackage{graphicx}                % For including images.
\usepackage{tikz}                    % For creating vector graphics in LaTeX.
\usetikzlibrary{shapes.geometric, arrows.meta, positioning, fit, backgrounds}
\usepackage{quantikz}
\usetikzlibrary{3d}                  % Adds 3D functionalities to TikZ.

\usepackage{enumitem}                % Control over itemize, enumerate, and description.

\usepackage{multirow}                % Allows for multi-row cells in tables.
\usepackage{longtable}
\usepackage{booktabs}                % For professional-looking tables
\renewcommand{\thetable}{\arabic{table}}

\usepackage{physics}                 % For simplifying notation in physics.
\usepackage{nicefrac}                % For creating "nice" fractions with \nicefrac.
\usepackage{verbatim}                % For including raw text or comments.
\usepackage{ascmac}                  % For creating boxed text.
\usepackage[linesnumbered, ruled, vlined]{algorithm2e} % For typesetting algorithms. This package should be loaded after ascmac.
\SetKwInput{kwInit}{Init}            % Custom keyword for initialization in algorithms.
\usepackage[normalem]{ulem}          % For underlining and strikeout (sout).
\usepackage{fontawesome}             % For including icons in the document.
\usepackage{wrapfig}                % For wrapping text around figures.
\usepackage[top=25truemm,bottom=20truemm,left=20truemm,right=20truemm]{geometry}                % For setting page dimensions and margins.

\allowdisplaybreaks[4]               % Allows page breaks in multiline equations.
\usepackage{float}                   % Improved control over figure and table placement.
\usepackage{lineno}                   % For adding line numbers to the document.

\usepackage{apptools}                % Extends the functionality of theorem environments in appendices.
\usepackage{appendix}                % Manage appendices in the document.
\usepackage{titlesec}                % Customization of section titles.

\usepackage[colorlinks=true,citecolor=green,linkcolor=blue]{hyperref}

\theoremstyle{definition}
\newtheorem{theorem}{Theorem}
\newtheorem{lemma}{Lemma}
\newtheorem{corollary}{Corollary}
\newtheorem{proposition}{Proposition}

\newtheorem{definition}{Definition}
\newtheorem{assumption}{Assumption}
\newtheorem{remark}{Remark}

\AtAppendix{%
    \counterwithin{theorem}{section}
    \counterwithin{corollary}{section}
    \counterwithin{lemma}{section}
    \counterwithin{proposition}{section}
    \counterwithin{definition}{section}
    \counterwithin{assumption}{section}
    \counterwithin{remark}{section}
    \counterwithin{example}{section}
    \counterwithin{property}{section}
    \counterwithin{problem}{section}
}

\tcolorboxenvironment{definition}{
    colback=gray!20!white,
    boxrule=0pt,
    boxsep=1pt,
    left=5pt,right=5pt,top=5pt,bottom=5pt,
    oversize=2pt,
    sharp corners,
    before skip=\topsep,
    after skip=\topsep,
    breakable
}
\tcolorboxenvironment{assumption}{
    colback=gray!20!white,
    boxrule=0pt,
    boxsep=1pt,
    left=5pt,right=5pt,top=5pt,bottom=5pt,
    oversize=2pt,
    sharp corners,
    before skip=\topsep,
    after skip=\topsep,
    breakable
}
\tcolorboxenvironment{proposition}{
    colback=gray!20!white,
    boxrule=0pt,
    boxsep=1pt,
    left=5pt,right=5pt,top=5pt,bottom=5pt,
    oversize=2pt,
    sharp corners,
    before skip=\topsep,
    after skip=\topsep,
    breakable
}
\tcolorboxenvironment{theorem}{
    colback=gray!20!white,
    boxrule=0pt,
    boxsep=1pt,
    left=5pt,right=5pt,top=5pt,bottom=5pt,
    oversize=2pt,
    sharp corners,
    before skip=\topsep,
    after skip=\topsep,
    breakable
}
\tcolorboxenvironment{corollary}{
    colback=gray!20!white,
    boxrule=0pt,
    boxsep=1pt,
    left=5pt,right=5pt,top=5pt,bottom=5pt,
    oversize=2pt,
    sharp corners,
    before skip=\topsep,
    after skip=\topsep,
    breakable
}
\tcolorboxenvironment{lemma}{
    colback=gray!20!white,
    boxrule=0pt,
    boxsep=1pt,
    left=5pt,right=5pt,top=5pt,bottom=5pt,
    oversize=2pt,
    sharp corners,
    before skip=\topsep,
    after skip=\topsep,
    breakable
}

\newcommand{\n}[1]{\null}

\newif\ifverbose
\verbosetrue                         % Toggle to show or hide corrections.
\newcommand{\iid}{\text{i.i.d.\ }}

\newcommand{\bbC}{\mathbb{C}}
\newcommand{\bbD}{\mathbb{D}}

\newcommand{\bbN}{\mathbb{N}}

\newcommand{\bbP}{\mathbb{P}}
\newcommand{\bbQ}{\mathbb{Q}}
\newcommand{\bbR}{\mathbb{R}}

\newcommand{\calB}{\mathcal{B}}

\newcommand{\calD}{\mathcal{D}}

\newcommand{\calF}{\mathcal{F}}

\newcommand{\calN}{\mathcal{N}}
\newcommand{\calO}{\mathcal{O}}

\newcommand{\calT}{\mathcal{T}}
\newcommand{\calU}{\mathcal{U}}
\newcommand{\calV}{\mathcal{V}}

\newcommand{\poly}{\mathrm{poly}}

\DeclareMathOperator{\diag}{diag}
\DeclareMathOperator{\E}{\mathbb{E}}
\DeclareMathOperator{\Var}{Var}

\renewcommand{\bar}[1]{\overline{#1}}

\newcommand{\ind}{\,\mathrm{d}}

\newcommand{\bx}{\boldsymbol{x}}
\newcommand{\by}{\boldsymbol{y}}
\newcommand{\bth}{\boldsymbol{\theta}}

\newcommand{\dg}{^\dagger}

\newcommand{\T}{^\mathsf{T}}

\newcommand{\bs}{\boldsymbol}
\newcommand{\kett}[1]{\ket*{#1}\!\rangle}

\newcommand{\evv}[2]{\langle\!\langle{#1}|{#2}\rangle\!\rangle}

\newcommand{\vep}{\varepsilon}
\usepackage{dsfont}

\NewDocumentCommand{\set}{s m o}{%
  \IfBooleanTF{#1}%
    {%
      \IfValueTF{#3}%
        {\{~ #2 \,~:~ #3 ~\}}%
        {\{~ #2 ~\}}%
    }%
    {%
      \IfValueTF{#3}%
        {\left\{~ #2 \,~:~ #3 ~\right\}}%
        {\left\{~ #2 ~\right\}}%
    }%
}
\graphicspath{{img/}}

\begin{document}
\title{Double Descent in Quantum Kernel Ridge Regression}

\author{Kensuke Kamisoyama}
    \email[]{kamisoyama-kensuke564(at)g.ecc.u-tokyo.ac.jp}
    \affiliation{
    Department of Physics, Graduate School of Science, The University of Tokyo, 7-3-1 Hongo, Bunkyo-ku, Tokyo 113-0033, Japan
    }

\author{Lento Nagano}
    % \email[]{lento(at)icepp.s.u-tokyo.ac.jp}
    \affiliation{
    International Center for Elementary Particle Physics (ICEPP),
    The University of Tokyo, 7-3-1 Hongo, Bunkyo-ku, Tokyo 113-0033, Japan
    }

\author{Koji Terashi}
    % \email[]{terashi(at)icepp.s.u-tokyo.ac.jp}
    \affiliation{
    International Center for Elementary Particle Physics (ICEPP),
    The University of Tokyo, 7-3-1 Hongo, Bunkyo-ku, Tokyo 113-0033, Japan
    }

% \date{}

\begin{abstract}
    Various classical machine learning models, including linear models, kernel methods, and deep neural networks, exhibit double descent, in which the test risk peaks near the interpolation threshold and then decreases in the overparameterized regime.
    However, this phenomenon has received less attention in the quantum setting.
    In this work, we investigate the double descent phenomenon in quantum kernel ridge regression (QKRR).
    By applying tools from random matrix theory (RMT), we rigorously derive a deterministic-equivalent formula for the test risk of QKRR in the high-dimensional regime where both the effective feature dimension and the number of training samples grow large.
    Our analysis characterizes the double-descent peak of QKRR and reveals how explicit regularization effectively suppresses it.
    Numerical experiments show close agreement between this formula and empirical test errors even for 3- and 5-qubit models.
\end{abstract}

\maketitle

\section{Introduction}
Rapid advancements in quantum computing technologies have spurred the development of quantum machine learning (QML)~\cite{cerezo2022challenges,chang2025primer} as a highly promising research frontier.
By harnessing inherently quantum phenomena, such as superposition and entanglement, QML aims to process information in ways that are fundamentally inaccessible to classical computers.
However, realizing this potential requires overcoming a critical hurdle that mirrors the evolution of classical machine learning (CML): understanding and controlling the generalization capabilities of quantum models.

\subsection{Double Descent in Classical Machine Learning}
Conventional capacity-based statistical learning theories---grounded in concepts such as covering numbers, Rademacher complexity, or VC dimension---typically give U-shaped test risk curve bounds as a function of model complexity.
As model complexity (e.g., number of model parameters~$p$) increases relative to the number of training samples~$N_{\mathrm{tr}}$, the test risk initially decreases due to reduced bias and then increases due to high variance (overfitting).
This is illustrated by the red dotted curve in Fig.~\ref{fig:double_descent_concept}.
This conventional bias--variance trade-off implies that overparameterized models ($p > N_{\mathrm{tr}}$) should perform poorly.

\begin{figure}[b]
    \centering
    \includegraphics[width=7cm]{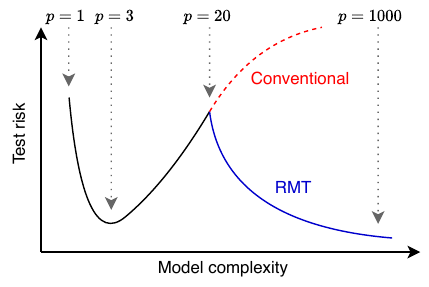}
    \caption{
        Conceptual illustration of double descent. The test risk is plotted against model complexity (e.g., number of model parameters $p$).
        The \textbf{red dotted curve} represents the bias--variance trade-off, where the test risk increases when $p > N_{\mathrm{tr}}$, as suggested by conventional capacity-based theories.
        The \textbf{blue solid curve} illustrates double descent, where the test risk decreases again beyond the interpolation threshold ($p = N_{\mathrm{tr}}$), captured by random matrix theory (RMT).
        The values $p \in \{1,3,20,1000\}$ correspond to the polynomial regression examples in Fig.~\ref{fig:double_descent_fit}.
    }
    \label{fig:double_descent_concept}
\end{figure}

\begin{figure*}[t]
    \centering
    \includegraphics[width=17cm]{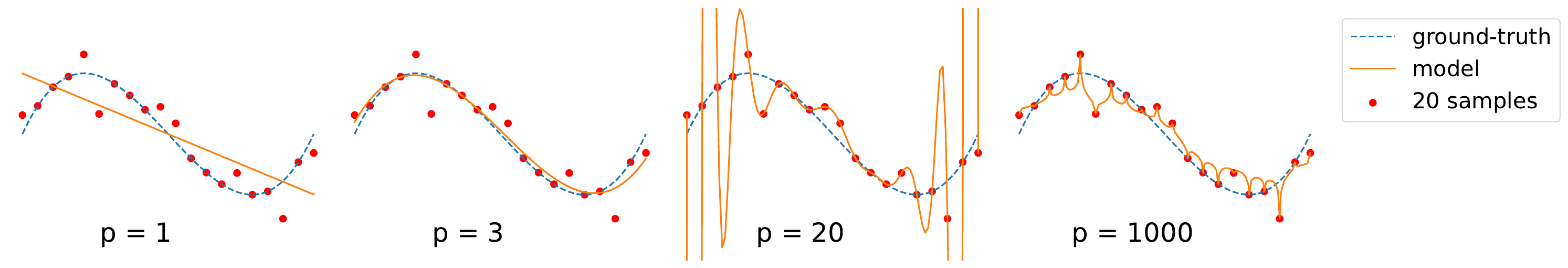}
    \caption{
        \textbf{Example of double descent (adapted from~\cite{nakkiran2019windows}):} Classical polynomial regression with varying numbers of parameters $p$ fitted to a training dataset of $N_{\mathrm{tr}} = 20$ samples. The training dataset is generated from a polynomial of degree $p_* = 3$ with added noise. As $p$ increases beyond $N_{\mathrm{tr}}$, the model fits the training dataset perfectly (interpolation), yet the test risk decreases again, demonstrating double descent. When $p = 1$, the model is underparameterized and underfits; at $p = 3$, it fits well due to correct specification; at $p = 20$, it interpolates the training dataset but with high variance; and at $p = 1000$, it again achieves low test risk despite being highly overparameterized.
    }\label{fig:double_descent_fit}
\end{figure*}

However, as computational resources have grown, it has become feasible to train large machine learning models operating in the overparameterized regime ($p > N_{\mathrm{tr}}$), where conventional theories predict failure due to overfitting.
Surprisingly, beyond the interpolation threshold ($p = N_{\mathrm{tr}}$), the models often regain generalization ability and sometimes outperform their underparameterized counterparts.
Since the test risk curve exhibits two distinct descending regions---one before and one after a peak (see the blue solid curve in Fig.~\ref{fig:double_descent_concept})---this phenomenon is known as \emph{double descent}~\cite{belkin2019reconciling,belkin2020two,nakkiran2021deep}.
Fig.~\ref{fig:double_descent_fit} illustrates how the model behavior changes as the number of model parameters $p$ increases in a simple polynomial regression task.
Double descent appears in diverse model classes, including linear regression~\cite{schaeffer2023double,rocks2022memorizing,hastie2022surprises,bach2024highdimensional,atanasov2024scaling}, kernel methods~\cite{canatar2021spectral,MisiakiewiczSaeed2024,atanasov2024scaling}, random features~\cite{mei2022generalization,gerace2021generalisation,liao2021random,rocks2022memorizing,schroder2024deterministic,atanasov2024scaling}, deep neural networks~\cite{nakkiran2021deep}, and others~\cite{poggio2020double,holzmuller2023universality}. %~\cite{gedonno2024nodouble}

To explain why overparameterized models can generalize, modern statistical learning theory has moved beyond purely capacity-based generalization bounds.
For instance, PAC-Bayesian frameworks provide data-dependent generalization bounds that can remain informative in the overparameterized regime (compatible with double-descent behavior) and can help explain the observed generalization behavior of overparameterized models~\cite{alquier2024userfriendly,hellstrom2024generalization,lotfi2022pac}, while conventional capacity-based theories often yield vacuous worst-case upper bounds over a hypothesis class.
% In a standard PAC-Bayesian bound, the population risk of a randomized predictor drawn from a data-dependent posterior distribution over hypotheses is controlled in terms of its empirical test error and a complexity term, typically involving the Kullback--Leibler divergence from a data-independent prior to the posterior.
% Depending on the choice of posterior and prior, such bounds can reflect properties of the learned solution---for example, robustness to perturbations, flatness, norm control, or compressibility---rather than the overall capacity of the hypothesis class.
% Nevertheless, limitations of PAC-Bayesian bounds have also been identified~\cite{nagarajan2019uniform,livni2020limitation}.
Bound-based approaches, including PAC-Bayesian approaches, are applicable to various models and are valuable for certifying or explaining generalization, but they are not designed to predict the detailed shape of the test-risk curve.

Alongside these approaches, tools from random matrix theory (RMT) and the replica method from statistical mechanics have elucidated the mechanisms of double descent and the effect of explicit regularization on the double-descent peak in analytically tractable models such as linear regression, kernel methods, and random features.
In particular, RMT and the replica method allow us to analyze the \emph{typical} behavior of the test risk in the high-dimensional regime ($p, N_{\mathrm{tr}} \to \infty$ with $p/N_{\mathrm{tr}} \to \gamma \in (0,\infty)$), thereby reproducing double-descent curves that cannot be captured by purely capacity-based worst-case bounds.

We emphasize two points regarding double descent.
First, double descent is not intrinsically a problem to be ``solved'', but rather a phenomenon to be understood in order to make principled decisions in model design and training. Indeed, even without explicit regularization, some models can still generalize better than their regularized counterparts in highly overparameterized regimes despite the higher peak~\cite{nakkiran2021optimal,liang2020just}.
Second, double descent is fundamentally a continuous phenomenon governed by the interplay among continuous hyperparameters, such as the ratio of feature dimension to sample size, the regularization parameter, and the magnitude of noise in the data~\cite{nakkiran2021deep,liu2023understanding}. %  and the learning rate

A critical aspect of the double-descent curve is the relative depth of the two minima surrounding the double-descent peak. In some tasks, the test risk in the highly overparameterized regime (the ``second dip'') is lower than the best test risk achieved in the underparameterized regime, providing a strong motivation for using large models.
Such models can perfectly interpolate noisy training data without sacrificing generalization---a phenomenon known as \emph{benign overfitting}~\cite{bartlett2020benign,mallinar2207benign}.
However, the existence of a double-descent peak does not guarantee that the second dip is lower than the first~\cite{hastie2022surprises}.
In CML, whether the second dip is lower than the first heavily depends on the relationship between the data-generating process and the model specification.
While well-specified linear models with isotropic features fail to achieve a second dip lower than the first, complex setups---such as misspecified models, latent space models~\cite{hastie2022surprises}, random projection models~\cite{bach2024highdimensional}, and random feature models~\cite{liao2021random}---can reproduce a second dip that is lower than the first.
Investigating whether QML models exhibit this ordering of the two minima is therefore as important as studying the double-descent peak itself.
However, accurately quantifying the second dip requires more fine-grained analysis than the double-descent peak.
Consequently, this work leaves the detailed analysis of the second dip for future research and focuses primarily on characterizing the double-descent peak itself. In particular, we aim to quantify how explicit regularization suppresses this peak.

\subsection{Generalization and Double Descent in Quantum Machine Learning}
\begin{figure*}[t]
    \centering
    \begin{tikzpicture}[
        box/.style={rectangle, draw=blue!60, thick, fill=blue!5, text width=5.5cm, align=center, rounded corners=3pt, minimum height=1.7cm, font=\small},
        tool/.style={rectangle, draw=gray!80, thick, fill=gray!10, text width=5cm, align=center, rounded corners=3pt, minimum height=1.7cm, font=\small},
        result/.style={rectangle, draw=red!60, thick, fill=red!5, text width=4cm, align=center, rounded corners=3pt, minimum height=1.7cm, font=\small},
        arrow/.style={-{Stealth[scale=1.2]}, thick, draw=black!70},
        label_text/.style={font=\footnotesize, align=center},
        dash_blue/.style={rectangle, draw=blue!80, dashed, thick, inner sep=4mm, rounded corners=5pt},
        dash_gray/.style={rectangle, draw=gray!90, dashed, thick, inner sep=4mm, rounded corners=5pt},
        dash_red/.style={rectangle, draw=red!80, dashed, thick, inner sep=4mm, rounded corners=5pt}
    ]

    % --- Column 1: Setup (Left) ---
    \node (qmap) at (0, 0) [box] {
        \textbf{Effective Quantum Feature Map}\\
        $\bs{u} \mapsto \bs{r}(\bs{u}) \in \bbR^{p\leq 4^n}$\\
        (Lipschitz Continuous)\\
        \textit{Section~\ref{subsec:feature_vector_generating_process}, \ref{subsec:qkm_setting}}
    };
    \node (risk) at (0, -2) [box] {
        \textbf{Bias--Variance Decomposition}\\
        $R_{\lambda, \hat{\Sigma}} = \mathcal{B}_{\lambda, \hat{\Sigma}} + \mathcal{V}_{\lambda, \hat{\Sigma}} + \sigma^2$\\
        \textit{Section~\ref{subsec:test_risk_decomposition}}
    };

    % --- Column 2: RMT Tools (Middle) ---
    \node (rmt) at (6.5, 0) [tool] {
        \textbf{Random Matrix Theory}\\
        DE of Resolvent (\textit{Prop.~\ref{prop:resolvent_deterministic_equivalent}}~\cite{louart2021spectral})\\
        $\lambda(\hat{\Sigma} + \lambda I_p)^{-1} \asymp_F \kappa_\lambda(\Sigma + \kappa_\lambda I_p)^{-1}$ \\
        \textit{Section~\ref{subsec:DE_intro}}
    };
    \node (deriv) at (6.5, -2) [tool] {
        \textbf{Derivative Trick \& Vitali's Theorem}\\
        DE of Second-Order Resolvents\\
        \textit{Appendix~\ref{appdx:sec:derivative_method}}
    };

    % --- Column 3: Main Results (Right) ---
    \node (main) at (12.4, -1) [result] {
        \textbf{Test-Risk DE}\\
        $R_{\lambda, \hat{\Sigma}} \asymp R^{\mathrm{DE}}_{\lambda,\Sigma}$\\
        \textit{Theorem~\ref{thm:test_risk_DE}, Section~\ref{subsec:asymptotic_test_risk}}
    };

    % --- Background boxes for grouping ---
    \begin{scope}[on background layer]
        \node (group1) [dash_blue, fit=(qmap)(risk), label={[font=\small\bfseries, text=blue!80]above:Problem Setup (Section \ref{sec:setting})}, inner sep=1mm] {};
        \node (group2) [dash_gray, fit=(rmt)(deriv), label={[font=\small\bfseries, text=gray!90]above:Mathematical Tools (Section \ref{sec:DE_test_risk})}, inner sep=1mm] {};
        \node (group3) [dash_red, fit=(main), label={[font=\small\bfseries, text=red!80]above:Main Result (Section \ref{sec:DE_test_risk})}, inner sep=1mm] {};
    \end{scope}

    % --- Draw Arrows between the dashed group boundaries ---
    \draw [arrow] (group1.east) -- node[above, label_text] {} node[below, label_text] {} (group2.west);
    \draw [arrow] (group2.east) -- node[above, label_text] {} node[below, label_text] {} (group3.west);

    \end{tikzpicture}
    \caption{
        \textbf{Logical flow of the theoretical framework.} In the problem setup (blue dashed box), we first show that the quantum feature map produces concentrated feature vectors due to its Lipschitz continuity. Then, we consider the effective feature map and decompose the test risk into bias and variance terms, which involve higher-order resolvents.
        In the mathematical tools section (gray dashed box), concentrated feature vectors permit the use of random matrix theory (RMT) to derive the deterministic equivalent (DE) of the first-order resolvent. We then apply the derivative trick with Vitali's theorem to evaluate the DEs of the higher-order terms in the bias and variance.
        Integrating these tools yields our main theoretical contribution (red dashed box): the test-risk DE $R_{\lambda, \Sigma}^{\mathrm{DE}}$ of QKRR. This formula explicitly reveals the mechanism of the double descent and characterizes the double-descent peak.
    }
    \label{fig:logical_flow}
\end{figure*}
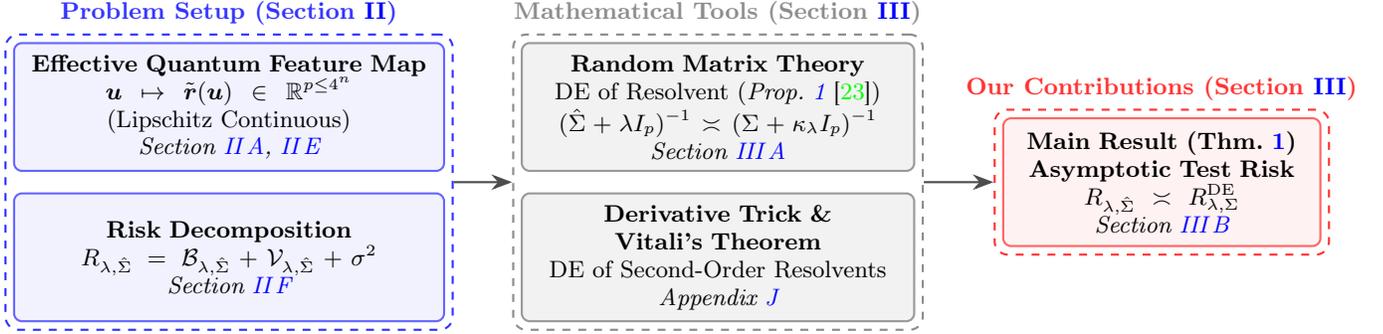

Double descent is increasingly well understood in CML, but its manifestation and mechanisms in QML remain less explored.
Early studies on generalization performance in QML~\cite{caro2021encodingdependent,caro2022generalization,caro2022outofdistribution,banchi2024statistical,bu2023effects,du2022efficient,du2023problemdependent,khanal2024generalization} largely relied on conventional theories.
As in the classical domain, these capacity-based frameworks give standard U-shaped risk curve bounds, known as the bias--variance trade-off.
Indeed, some works have observed U-shaped curves in quantum settings~\cite{du2023problemdependent,strashko2022generalization}.

However, emerging numerical evidence suggests that this picture is incomplete. In particular, certain overparameterized quantum models show better generalization performance~\cite{larocca2021theory,peters2023generalization,haug2024generalization}, and Gil-Fuster \emph{et al.}~\cite{gil-fuster2024understanding} highlighted the inadequacy of using conventional theories to explain generalization in quantum machine learning, mirroring similar findings in CML~\cite{zhang2021understanding}.
Motivated by these observations, PAC-Bayesian generalization bounds have been derived for quantum models~\cite{pablo2026pacbayes}.
Moreover, double-descent peaks have also been observed in quantum kernel methods (QKMs)~\cite{kempkes2025double} and quantum reservoir computing~\cite{hu2024generalization}; these settings were analyzed using RMT and the replica method, respectively.

In this work, we focus on non-variational supervised QKMs~\cite{schuld2019quantum,schuld2021supervised,havlicek2019supervised,tanner2026nonvariational}, where the quantum circuit defines a fixed feature map and the subsequent learning problem is solved classically.
Our work aims to bridge the following gaps in the current literature on generalization in QKMs:

\paragraph{Regularization\,:} Ref.~\cite{kempkes2025double} shows the existence of the double-descent peak in quantum kernel \textit{ridgeless} regression (i.e., without explicit regularization: $\lambda = 0$) based on RMT and provides supporting numerical experiments.
While the authors theoretically elucidate the mechanism of the double-descent peak, they do not provide an explicit formula for predicting the test risk.
In contrast, our work complements the ridgeless analysis of Ref.~\cite{kempkes2025double} by considering quantum kernel \textit{ridge} regression (QKRR) (i.e., with explicit regularization: $\lambda > 0$).
We derive an explicit test-risk formula for QKRR that accounts for the interplay between the effective feature dimension $p \leq 4^n$, the number of training samples $N_{\mathrm{tr}}$, the population covariance matrix $\Sigma$, and the fixed regularization parameter $\lambda$.
This allows us to quantify how explicit regularization modifies the magnitude of the double-descent peak.

\paragraph{Mathematical Rigor\,:} Ref.~\cite{canatar2023bandwidth} also gives an explicit test-risk formula for QKRR using a result based on the replica method from statistical physics~\cite{canatar2021spectral}, but this method relies on heuristic assumptions regarding replica symmetry.
In contrast, we obtain a test-risk formula for QKRR by leveraging mathematically provable results on \emph{deterministic equivalent} (DE) regarding concentrated feature vectors from RMT~\cite{louart2021spectral}, exploiting the fact that quantum feature maps realized via parameterized quantum circuits (PQCs) are Lipschitz continuous~\cite{kempkes2025double}.
Moreover, whereas Ref.~\cite{canatar2023bandwidth} investigates the impact of the bandwidth of simple quantum kernels on spectral decay (i.e., the rate at which the eigenvalues decrease) and generalization, our analysis explicitly investigates double descent and the effect of explicit regularization on the double-descent peak, and provides clear numerical validation.

Understanding the precise shape of the test risk curve in QML is not merely of academic interest: it dictates the optimal strategy for model design.
If the curve is U-shaped, one must carefully limit model complexity. Conversely, if it exhibits double descent, increasing the model complexity beyond the interpolation threshold may yield superior performance.

\subsection{Our Contributions}
Our key contributions are summarized as follows:
\begin{enumerate}[leftmargin=15pt, label=\textbf{\arabic*.}, itemsep=5pt]
    \item \textbf{Rigorous QML-to-RMT Framework (Section~\ref{sec:setting}):}
    We establish a rigorous mathematical bridge that connects QML models to RMT.
    Inspired by the ``linearly independent dimension'' ($p \leq 4^n$) introduced in Ref.~\cite{kempkes2025double} (which we refer to as the \textit{effective feature dimension}), we explicitly formalize an \textit{effective quantum feature map} based on the eigenfunctions of the quantum kernel.
    Using a result from Ref.~\cite{kempkes2025double}, we prove that this map is norm-bounded and Lipschitz continuous, as required to apply the RMT results.
    \item \textbf{Test-Risk DE for QKRR (Section~\ref{sec:DE_test_risk}):}
    Addressing the limitations of ridgeless ($\lambda = 0$) analyses in Ref.~\cite{kempkes2025double}, we derive a rigorous deterministic-equivalent formula for the test risk of QKRR at each fixed $\lambda > 0$ using RMT (Theorem~\ref{thm:test_risk_DE}). This captures the double descent and explicitly quantifies how the regularization parameter~$\lambda$ suppresses the magnitude of the double-descent peak.
    \item \textbf{Numerical Validation (Section~\ref{sec:experiment}):}
    While our theoretical result becomes exact in the high-dimensional limit, we provide numerical validation demonstrating that the test-risk DE can capture the behavior of the empirical test error even for small quantum models (e.g., 3- and 5-qubit models).
\end{enumerate}

The logical flow of our theoretical framework, from the problem setup to the main asymptotic results, is conceptually summarized in Fig.~\ref{fig:logical_flow}.
Guided by this flow, the remainder of the paper is organized as follows.
Section~\ref{sec:setting} formulates the problem setup and introduces the effective quantum feature map.
Section~\ref{sec:DE_test_risk} applies random matrix theory to derive a deterministic equivalent of the test risk of QKRR.
Section~\ref{sec:experiment} provides numerical validation of our theoretical predictions.
Finally, Section~\ref{sec:conclusion} summarizes our findings and outlines future research directions.
The appendices contain a table of notations (Table~\ref{tab:notation_table}) and supplementary details supporting the main text.
\section{Problem Setup}\label{sec:setting}
In this section, we formalize the mathematical framework.
In Section~\ref{subsec:feature_vector_generating_process}, we specify the feature-vector generating process using Lipschitz maps, which can capture quantum feature maps.
In Section~\ref{subsec:ridge_regression}, we review classical ridge regression in its primal formulation.
In Section~\ref{subsec:kernel_regression}, we present the kernel formulation of ridge regression.
In Section~\ref{subsec:eigendecomposition}, we present the eigendecomposition of the kernel function to simplify the analysis of the test risk.
In Section~\ref{subsec:qkm_setting}, we explicitly map the QKRR model into the classical framework, showing that quantum feature maps satisfy the Lipschitz continuity condition.
Finally, in Section~\ref{subsec:test_risk_decomposition}, we define the test risk and perform a bias--variance decomposition, setting the stage for analyzing the double-descent peak.

\subsection{Feature-Vector Generating Process}\label{subsec:feature_vector_generating_process}
\vspace{-7pt}
\begin{figure}[htb]
    \centering
    \begin{tikzpicture}[>=stealth, auto, thick]
        % Draw sets as ellipses representing mathematical spaces
        \draw[fill=gray!5, draw=black!80] (0,0) ellipse (1 and 1);
        \draw[fill=gray!5, draw=black!80] (3,0) ellipse (1 and 1);
        \draw[fill=gray!5, draw=black!80] (6,0) ellipse (1 and 1);

        % Titles for the spaces
        \node[align=center] at (0, 1.8) {Latent Space\\ $\bbR^q$};
        \node[align=center] at (3, 1.8) {Input Space\\ $\calU \subset \bbR^d$};
        \node[align=center] at (6, 1.8) {Feature Space\\ $\bbR^{p_a}$};

        % Elements (Points within the sets)
        \node[circle, fill, inner sep=1.5pt, label={[align=center]left:$\bs{z}_i$}] (z) at (0, 0.3) {};
        \node[circle, fill, inner sep=1.5pt, label={[align=center]below:$\bs{u}_i$}] (u) at (3, 0.3) {};
        \node[circle, fill, inner sep=1.5pt, label={[align=center]right:$\bx^{a}_i$}] (x) at (6, 0.3) {};

        % Mappings (Arrows between elements)
        \draw[->, shorten >=4pt, shorten <=4pt] (z) to[bend left=30] node[midway, above] {$\chi$} (u);
        \draw[->, shorten >=4pt, shorten <=4pt] (u) to[bend left=30] node[midway, above] {$\phi_{a}$} (x);

        % Composite Mapping
        \draw[->, shorten >=4pt, shorten <=4pt] (z) to[bend right=80] node[midway, below] {$\Omega_a = \phi_{a} \circ \chi$} (x);
    \end{tikzpicture}
    \vspace{-10pt}
    \caption{The two-stage feature-vector generating process. A latent standard Gaussian vector $\bs{z}_i$ is transformed into an input vector $\bs{u}_i$ by the data-generating map \(\chi\). The input vector is then mapped to the ambient feature vector $\bx^{a}_i$ by the ambient feature map \(\phi_a\). Their composition is the feature-generating map \(\Omega_a = \phi_a \circ \chi\).}
    \label{fig:feature_generation}
\end{figure}

In typical machine learning scenarios, including QML, the feature vectors used for training and prediction are not completely arbitrary; rather, they are the result of a structured generating process, as detailed in Chapter~8 of Ref.~\cite{couillet2022random}. To make our mathematical analysis clear, we first explicitly define the generating process for both input vectors and feature vectors before introducing the formal assumptions.

As illustrated in Fig.~\ref{fig:feature_generation}, we model the generation of a feature vector $\bx^{a}_i \in \bbR^{p_a}$ in two consecutive stages:
\begin{enumerate}[leftmargin=15pt]
    \item \textbf{Data Generation ($\chi \colon \bbR^q \to \calU$):} We assume an \textit{input vector}~$\bs{u}_i \in \calU \subset \bbR^d$ is generated from an independent latent vector~$\bs{z}_i \sim \calN(0, I_q)$ via a data-generating map~$\chi \colon \bbR^q \to \calU$ as $\bs{u}_i = \chi(\bs{z}_i)$.
    For instance, the generator in Generative Adversarial Networks (GANs)~\cite{goodfellow2014generative}, which maps latent Gaussian noise to realistic data samples (e.g., images), has been shown to be Lipschitz continuous~\cite{seddik2020random}, serving as a practical example of this stage.
    \item \textbf{Feature Mapping ($\phi_{a} \colon \calU \to \bbR^{p_a}$):} This input vector $\bs{u}_i$ is then mapped into a high-dimensional feature space via a feature map $\phi_{a} \colon \calU \to \bbR^{p_a}$, yielding the \textit{feature vector} $\bx^{a}_i = \phi_{a}(\bs{u}_i) \in \bbR^{p_a}$. In the context of QKRR considered in this paper, $\phi_{a}$ represents a quantum feature map implemented by a PQC (discussed further in Section~\ref{subsec:qkm_setting}).
\end{enumerate}

Consequently, the end-to-end generation of the feature vector from the latent vector can be described by the composite map $\Omega_a := \phi_{a} \circ \chi$, such that $\bx^{a}_i = \Omega_a(\bs{z}_i)$.
If both the data-generating map~$\chi$ and the feature map~$\phi_{a}$ are Lipschitz continuous, their composition~$\Omega_a$ preserves the Lipschitz property.
We call the feature vectors generated in this manner \textit{concentrated feature vectors}.
We formalize this process with the following assumption, adapted from Assumptions~0.4--0.6 in Ref.~\cite{louart2021spectral} (reviewed in Appendix~\ref{appdx:sec:definitions_assumptions_louart2021}):
\begin{assumption}\label{assum:feature_generation}
    Suppose there exist universal constants~$C_1, C_2 > 0$. For any feature dimension~$p_a \in \bbN$ and sample size~$N_{\mathrm{tr}} \in \bbN$, we consider a set of $N_{\mathrm{tr}}$ independent random feature vectors~$\bx^{a}_1, \ldots, \bx^{a}_{N_{\mathrm{tr}}} \in \bbR^{p_a}$. These vectors are generated as
    \begin{align}
        \forall i \in [N_{\mathrm{tr}}], \quad \bx^{a}_i = \Omega_a(\bs{z}_i) \,,
    \end{align}
    subject to the following conditions:
    \begin{itemize}[leftmargin=15pt]
        \item The latent vectors $\bs{z}_1, \ldots, \bs{z}_{N_{\mathrm{tr}}}$ are \iid as $\calN(0, I_q)$ for some $q \in \bbN$.
        \item The composite map $\Omega_a \colon \bbR^q \to \bbR^{p_a}$ (representing $\phi_{a} \circ \chi$) is $C_1$-Lipschitz continuous with respect to the $\ell_2$ norm.
        \item The resulting feature vectors satisfy $\|\E[\bx^{a}_i]\|_2 \leq C_2$.
    \end{itemize}
\end{assumption}
This assumption allows us to capture a wide range of feature vectors and to apply powerful tools from RMT to analyze the test risk of ridge regression in the high-dimensional regime.
For quantum feature maps, the Lipschitz constant might not be universal as required in Assumption~\ref{assum:feature_generation}; however, under Assumption~\ref{assum:exponential_effective_dimension} and the polynomial-depth circuit condition introduced below, it satisfies $C_1 = \order{\poly(\log{p})}$, which is sufficient for RMT results used in this work, as shown in Remark~\ref{remark:lipschitz_constant}.
Thus, we call Assumption~\ref{assum:feature_generation} with this relaxed $C_1$ as Assumption~\ref{assum:feature_generation}(q).
While the theoretical framework in Ref.~\cite{louart2021spectral} permits a different map $\Omega_a^{(i)}$ for each $\bs{z}_i$, we assume a common map $\Omega_a$ for all $i \in [N_{\mathrm{tr}}]$ throughout our analysis for simplicity.
The detailed correspondence between our setting and that of Ref.~\cite{louart2021spectral} is provided in Appendix~\ref{appdx:sec:correspondence_feature_generation}.
We further note that Assumption~0.4 in Ref.~\cite{louart2021spectral} is formulated directly as a concentration-of-measure condition and is therefore more general than the Gaussian-Lipschitz construction imposed in Assumption~\ref{assum:feature_generation}.

\subsection{Ridge Regression in Primal Formulation}\label{subsec:ridge_regression}
Following the setting detailed in Refs.~\cite{MisiakiewiczSaeed2024,defilippis2024dimension}, we consider a standard ridge regression framework involving a training dataset of $N_{\mathrm{tr}}$ \iid pairs, denoted by~$\{(\bs{u}_i, y_i)\}_{i=1}^{N_{\mathrm{tr}}}$, drawn from a joint probability distribution over $\calU \times \bbR$.
We assume that each label~$y_i$ is generated by an unknown target function $f_*$ with additive noise, such that $y_i = f_*(\bs{u}_i) + \vep_i$, where the noise term $\vep_i$ is independent of $\bs{u}_i$ and has zero mean and variance $\sigma^2$.
Our objective is to construct a \textit{predictor}~$\hat{f} \colon \calU \to \bbR$ that minimizes the test risk for a new test sample $(\bs{u}_{\mathrm{ts}}, y_{\mathrm{ts}}) \in \calU \times \bbR$ drawn from the same joint distribution, i.e., $y_{\mathrm{ts}} = f_*(\bs{u}_{\mathrm{ts}}) + \vep_{\mathrm{ts}}$.
The test risk is defined as
\begin{align}
    \E_{\bs{u}_{\mathrm{ts}},\vep_{\mathrm{ts}},\bs{\vep}_{\mathrm{tr}}} \left[ (y_{\mathrm{ts}} - \hat{f}(\bs{u}_{\mathrm{ts}}))^2 \right] \,,
    \label{eq:test_risk1}
\end{align}
where the expectation is taken over the test input $\bs{u}_{\mathrm{ts}}$, the test noise $\vep_{\mathrm{ts}}$, and the training label noise vector $\bs{\vep}_{\mathrm{tr}} = (\vep_1, \dots, \vep_{N_{\mathrm{tr}}})\T$.

Using a feature map $\phi_{a} \colon \calU \to \bbR^{p_a}$ that maps an input vector $\bs{u}$ to a feature vector $\bx^{a} := \phi_{a}(\bs{u})$, we consider a linear model in this feature space,
\begin{align}
    f(\bs{u}) = \bth_a\T \bx^{a} = \bth_a\T \phi_{a}(\bs{u}) \,,
\end{align}
where $\bth_a \in \bbR^{p_a}$ is the parameter vector to be learned from the training dataset.
We assume the target function is also linear in this feature space, such that $f_*(\bs{u}) = \bth_{*,a}\T \bx^{a}$ for a fixed but unknown target vector $\bth_{*,a} \in \bbR^{p_a}$.

We employ a ridge loss function to estimate the target vector $\bth_{*,a}$ from the training dataset. The \textit{estimator}~$\hat{\bth}_a$ is defined as
\begin{align}
    \hat{\bth}_a := \underset{\bth_a \in \bbR^{p_a}}{\arg \min} \left\{ \frac{1}{N_{\mathrm{tr}}}\sum_{i=1}^{N_{\mathrm{tr}}} (y_i - \bth_a\T \bx^{a}_i)^2 + \lambda \|\bth_a\|_2^2 \right\} \,.
\end{align}
The addition of the regularization term $\lambda \norm{\bth_a}_2^2$, where~$\lambda > 0$ is the regularization parameter, helps suppress the double-descent peak.
Defining the training feature matrix as $X_a := [\bx^{a}_1, \dots, \bx^{a}_{N_{\mathrm{tr}}}]\T \in \bbR^{N_{\mathrm{tr}} \times p_a}$, the sample covariance matrix as $\hat{\Sigma}_a := \frac{1}{N_{\mathrm{tr}}} X_a\T X_a \in \bbR^{p_a \times p_a}$, and the label vector as $\by := (y_1, \dots, y_{N_{\mathrm{tr}}})\T$, the estimator is given by
\begin{align}
    \hat{\bth}_a
    = (\hat{\Sigma}_a + \lambda I_{p_a})^{-1} \frac{1}{N_{\mathrm{tr}}} X_a\T\by \,.
    \label{eq:ridge_regression_estimator}
\end{align}

The computational cost of this estimator is dominated by the inversion of the $p_a \times p_a$ matrix $\hat{\Sigma}_a + \lambda I_{p_a}$, which typically scales as $\calO((p_a)^3)$.
This becomes prohibitive when the ambient feature dimension~$p_a$ is very large.
The predictor is given by~$\hat{f}(\bs{u}) := \hat{\bth}_a\T \phi_{a}(\bs{u})$.

\subsection{Ridge Regression in Kernel Formulation}\label{subsec:kernel_regression}
The kernel formulation of ridge regression provides an alternative path to the predictor~$\hat{f}$ that is particularly advantageous when the number of training samples $N_{\mathrm{tr}}$ is smaller than the ambient feature dimension~$p_a$~\cite{shawe2004kernel}.
The kernel function $k \colon \calU \times \calU \to \bbR$ is defined as the inner product in the feature space:
\begin{align}
    k(\bs{u}, \bs{u}') := \phi_{a}(\bs{u})\T \phi_{a}(\bs{u}') \,.
\end{align}
The kernel matrix $K \in \bbR^{N_{\mathrm{tr}} \times N_{\mathrm{tr}}}$ is defined as $K := X_aX_a\T$, whose components are $K_{ij} = k(\bs{u}_i, \bs{u}_j)$.
The kernel formulation hinges on the representer theorem~\cite{scholkopf2001generalized}, which states that the estimator~$\hat{\bth}_a$ can be expressed as a linear combination of the training feature vectors for some parameter~$\hat{\bs{\alpha}} \in \bbR^{N_{\mathrm{tr}}}$:
\begin{align}
    \hat{\bth}_a = \sum_{i=1}^{N_{\mathrm{tr}}} \hat{\bs{\alpha}}_i \bx^{a}_i = X_a\T \hat{\bs{\alpha}} \,.
    \label{eq:dual_estimator_theta1}
\end{align}
On the other hand, using the \textit{push-through identity} $A(I + BA)^{-1} = (I + AB)^{-1}A$ in Eq.~\eqref{eq:ridge_regression_estimator}, the estimator can be rewritten as
\begin{align}
    \hat{\bth}_a = X_a\T(K + N_{\mathrm{tr}}\lambda I_{N_{\mathrm{tr}}})^{-1}\by \,.
    \label{eq:dual_estimator_theta2}
\end{align}
Thus, comparing the two expressions~\eqref{eq:dual_estimator_theta1} and~\eqref{eq:dual_estimator_theta2}, one finds that the \textit{dual estimator}~$\hat{\bs{\alpha}} \in \bbR^{N_{\mathrm{tr}}}$ is given by
\begin{align}
    \hat{\bs{\alpha}} = (K + N_{\mathrm{tr}}\lambda I_{N_{\mathrm{tr}}})^{-1}\by \,.
    \label{eq:dual_estimator}
\end{align}
The computational cost to obtain this dual estimator is dominated by the inversion of the $N_{\mathrm{tr}} \times N_{\mathrm{tr}}$ matrix $K + N_{\mathrm{tr}} \lambda I_{N_{\mathrm{tr}}}$, which scales as~$\calO(N_{\mathrm{tr}}^3)$.
This is more efficient than the primal estimator when $N_{\mathrm{tr}} < p_a$.

Once $\hat{\bs{\alpha}}$ is found, a prediction for a new test sample~$\bs{u}_{\mathrm{ts}}$ is made not by computing the estimator $\hat{\bth}_a$ explicitly, but by leveraging the following prediction formula:
\begin{align}
    \hat{f}(\bs{u}_{\mathrm{ts}}) = (X_a\T \hat{\bs{\alpha}})\T \phi_{a}(\bs{u}_{\mathrm{ts}}) = \sum_{i=1}^{N_{\mathrm{tr}}} \hat{\bs{\alpha}}_i k(\bs{u}_i, \bs{u}_{\mathrm{ts}}) \,.
    \label{eq:kernel_prediction}
\end{align}
Crucially, both the training (solving for $\hat{\bs{\alpha}}$) and prediction phases depend only on the kernel function $k$.
This allows us to use kernel functions that implicitly define high-dimensional or even infinite-dimensional feature spaces without explicitly computing the feature vectors~$\bx^{a} = \phi_{a}(\bs{u})$ themselves, enabling efficient learning in complex spaces. % (e.g., Gaussian kernels)

\subsection{Eigendecomposition of the Kernel Function}\label{subsec:eigendecomposition}
Let $\mu$ be the marginal distribution of the input vector $\bs{u}$.
We assume that the ambient feature map satisfies $\phi_a \in L^2(\mu\,;\,\bbR^{p_a})$.

We define the population covariance matrix of the ambient feature vector
as~\footnote{Precisely, $\Sigma_a$ is the uncentered second-moment matrix. We loosely call it the population covariance matrix.}
\begin{align}
    \Sigma_a
    :=
    \int_{\calU}
    \phi_a(\bs{u})\phi_a(\bs{u})\T
    \ind\mu(\bs{u})
    \in
    \bbR^{p_a\times p_a} \,.
    \label{eq:ambient_population_covariance}
\end{align}
This matrix is symmetric positive semidefinite. Let its eigendecomposition be
\begin{align}
    \Sigma_a
    =
    V_a\Lambda_aV_a\T
    =
    \sum_{i=1}^{p_a} \xi_i v_i v_i\T \,,
    \label{eq:ambient_covariance_eigendecomposition}
\end{align}
where $\Lambda_a := \diag(\xi_1, \ldots, \xi_{p_a}) \in \bbR^{p_a \times p_a}$ is a diagonal matrix of eigenvalues and $V_a := (v_1, \ldots, v_{p_a}) \in \bbR^{p_a \times p_a}$ is the orthonormal matrix of eigenvectors.
Let $p := \rank(\Sigma_a)$. We order the positive eigenvalues as $\xi_1 \geq \cdots \geq \xi_p > 0$; if $p < p_a$, the remaining eigenvalues satisfy $\xi_{p+1} = \cdots = \xi_{p_a} = 0$.
Then, we define the following functions:
\begin{align}
    e_i(\bs{u}) := \phi_a(\bs{u})\T v_i \quad (i = 1,\dots,p_a) \,.
    \label{eq:unnormalized_eigenfunctions}
\end{align}
These functions satisfy the eigenvalue equation for the integral operator $\calT_k$ associated with the kernel, as verified by the following calculation:
\begin{align}
    (\calT_k e_i)(\bs{u})
    &:= \int_\calU k(\bs{u},\bs{u}')e_i(\bs{u}') \ind\mu(\bs{u}') \nonumber\\
    &= \phi_{a}(\bs{u})\T \left( \int_\calU \phi_{a}(\bs{u}')\phi_{a}(\bs{u}')\T \ind\mu(\bs{u}') \right) v_i \nonumber\\
    &= \phi_{a}(\bs{u})\T \Sigma_a v_i \nonumber\\
    &= \xi_i e_i(\bs{u}) \,.
\end{align}
These functions are orthogonal with respect to $\mu$:
\begin{align}
    \int_{\calU}
    e_i(\bs{u})e_j(\bs{u})
    \ind\mu(\bs{u})
    =
    v_i\T\Sigma_a v_j
    =
    \xi_i\delta_{ij} \,.
    \label{eq:ei_inner_products}
\end{align}
For every $i > p$, Eq.~\eqref{eq:ei_inner_products} gives $\int_{\calU} e_i(\bs{u})^2 \ind\mu(\bs{u}) = 0$.
Hence, $e_i=0$ $\mu$-almost everywhere for every $i>p$.
Since there are only finitely many such indices, there exists a measurable set $A\subseteq\calU$ with $\mu(A) = 1$ such that $e_i(\bs{u}) = 0$ for every $\bs{u} \in A$ and every $i > p$.
On the other hand, for each $i \le p$, $e_i$ is a nonzero eigenfunction of the kernel integral operator $\calT_k$ with eigenvalue $\xi_i$.
For $i = 1,\ldots,p$, define the normalized eigenfunctions $\psi_i(\bs{u}) := e_i(\bs{u})/\sqrt{\xi_i}$.
Eq.~\eqref{eq:ei_inner_products} shows that these functions are orthonormal with respect to $\mu$.

Let
\begin{align}
    V
    &:=
    (v_1,\ldots,v_p)
    \in
    \bbR^{p_a\times p} \,,
    \\
    \Lambda
    &:=
    \diag(\xi_1,\ldots,\xi_p)
    \in
    \bbR^{p\times p} \,,
    \\
    \psi(\bs{u})
    &:=
    [\psi_1(\bs{u}),\ldots,\psi_p(\bs{u})]\T \,.
\end{align}
We define the following feature map by
\begin{align}
    \phi(\bs{u})
    &:=
    V\T\phi_a(\bs{u})
    \nonumber\\
    &=
    [e_1(\bs{u}),\ldots,e_p(\bs{u})]\T
    \nonumber\\
    &=
    \Lambda^{\frac12}\psi(\bs{u}) \in \bbR^p \,.
    \label{eq:effective_feature_map}
\end{align}
Since $\{v_i\}_{i=1}^{p_a}$ is an orthonormal basis and $e_i(\bs{u}) = 0$ for every $\bs{u} \in A$ and every $i>p$, we have
\begin{align}
    \phi_a(\bs{u})
    =
    \sum_{i=1}^{p_a} e_i(\bs{u})v_i
    =
    \sum_{i=1}^{p} e_i(\bs{u})v_i
    =
    V\phi(\bs{u})
    \label{eq:ambient_effective_feature_relation}
\end{align}
for every $\bs{u}\in A$, and hence $\phi_a(\bs{u})=V\phi(\bs{u})$ $\mu$-almost everywhere.
Consequently, the kernel has the exact finite-rank spectral representation
\begin{align}
    k(\bs{u},\bs{u}')
    &=
    \phi_a(\bs{u})\T\phi_a(\bs{u}')
    \nonumber\\
    &=
    \phi(\bs{u})\T\phi(\bs{u}')
    \nonumber\\
    &=
    \sum_{i=1}^{p}
    \xi_i
    \psi_i(\bs{u})
    \psi_i(\bs{u}')
    \label{eq:kernel_spectral_expansion}
\end{align}
for $\mu\otimes\mu$-almost every $(\bs{u},\bs{u}')\in\calU\times\calU$.
If the restriction of $\phi_a$ to $\operatorname{supp}(\mu)$ is continuous, then Eq.~\eqref{eq:ambient_effective_feature_relation} holds pointwise on $\operatorname{supp}(\mu)$, and Eq.~\eqref{eq:kernel_spectral_expansion} holds pointwise on $\operatorname{supp}(\mu)\times\operatorname{supp}(\mu)$.
This applies, in particular, to the Lipschitz quantum feature maps considered below.

We refer to $\phi(\cdot)$ as the \textit{effective feature map}, $p$ as the \textit{effective feature dimension}, and $p_a$ as the \textit{ambient feature dimension}.

Due to the orthogonality of the eigenfunctions $\{e_i\}_{i=1}^p$, the population covariance matrix of the effective feature vector is diagonalized as
\begin{align}
    \Sigma
    :=
    \E_{\calU} [\phi(\bs{u})\phi(\bs{u})\T]
    =
    V\T \Sigma_a V
    =
    \Lambda \succ 0 \,.
    \label{eq:effective_population_covariance}
\end{align}

For any ambient parameter vector $\bth_a\in\bbR^{p_a}$, we define the effective parameter vector $\bth := V\T \bth_a$.
Equation~\eqref{eq:ambient_effective_feature_relation} then gives $f(\bs{u})=\bth_a\T\phi_a(\bs{u})=\bth\T\phi(\bs{u})$ for $\mu$-almost every $\bs{u}\in\calU$.
Similarly, defining the effective target vector by $\bth_* := V\T\bth_{*,a}$, we have $f_*(\bs{u})= \bth_{*,a}\T \phi_a(\bs{u}) = \bth_*\T \phi(\bs{u})$ for $\mu$-almost every $\bs{u} \in \calU$.

For the effective representation, we define
\begin{align}
    \Omega
    &:=
    V\T\Omega_a \,,
    \\
    \bx
    &:=
    \phi(\bs{u}) \in \bbR^p \,,
    \quad
    \bx_i
    = V\T\bx_i^a = \Omega(\bs{z}_i) \,,
    \\
    X
    &:=
    [\bx_1,\ldots,\bx_{N_{\mathrm{tr}}}]\T
    =
    X_aV \in \bbR^{N_{\mathrm{tr}}\times p} \,,
    \\
    \hat{\Sigma}
    &:=
    \frac{1}{N_{\mathrm{tr}}}
    X\T X
    =
    V\T \hat{\Sigma}_aV \in \bbR^{p\times p} \,,
    \\
    K
    &:=
    XX\T
    \,.
    \label{eq:effective_sample_quantities}
\end{align}
For any finite i.i.d.\ training sample from $\mu$, all training inputs belong to $A$ almost surely. Therefore, $X_a=XV\T$ and $K = XX\T = X_aX_a\T$ almost surely.
Consequently, the ambient and effective representations produce the same random Gram matrix, the same kernel-ridge predictions, and the same population test risk almost surely.
In this precise distributional sense, the ambient- and effective-space ridge-regression formulations are equivalent under i.i.d.\ sampling from $\mu$.
Hence, for the distributional analysis considered here, we can restrict our attention to the effective feature space without loss of generality.
Since $V\T$ has operator norm one, the effective feature-generating map $\Omega$ inherits the Lipschitz and mean-norm bounds in Assumption~\ref{assum:feature_generation}.
The corresponding effective-space ridge estimator is
\begin{align}
    \hat{\bth}
    :=
    (\hat{\Sigma} + \lambda I_p)^{-1}
    \frac{1}{N_{\mathrm{tr}}} X\T\by \,.
\end{align}

The primary motivation for introducing this effective space is twofold.
First, it ensures that the population covariance matrix $\Lambda$ is diagonal and positive definite, which simplifies the theoretical analysis.
Second, it clarifies that the interpolation threshold is fundamentally determined by the effective feature dimension~$p$ rather than the ambient feature dimension~$p_a$.
For a discussion of how the kernel formulation reflects overparameterization ($p > N_{\mathrm{tr}}$) despite involving only $N_{\mathrm{tr}}$ degrees of freedom, see Appendix~\ref{appdx:sec:kernel_overparameterization}.

For simplicity, we assume that $f_*$, viewed as an element of $L^2(\mu)$, belongs to $\mathrm{span}\{\psi_1,\psi_2,\ldots,\psi_p\}$, so that
\begin{align}
    f_*(\bs{u})
    &= \bs{\beta}_*\T \psi(\bs{u})
    = (\Lambda^{-\frac12} \bs{\beta}_*)\T \phi(\bs{u})
    \quad \mu\text{-a.e.} \,.
\end{align}
We refer to $\bs{\beta}_*$ as the projected target vector.
Since $f_*(\bs{u})=\bth_*\T\phi(\bs{u})$ $\mu$-almost everywhere, the projected target vector is related to the effective target vector as $\bs{\beta}_* = \Lambda^{\frac12}\bth_*$.
Even though the target function $f_*$ for some datasets may not be fully contained in the span of $\{\psi_i\}_{i=1}^p$, the unrepresented part of $f_*$ orthogonal to the feature space acts as additional noise (model misspecification error), which gets absorbed into the constant $\sigma^2$ term in the risk with some additional technical assumptions~\cite{canatar2021spectral}.

\begin{table*}[ht]
    \centering
    \caption{Correspondence between classical and quantum ridge regression frameworks.}
    \label{tab:qml_classical_correspondence}
    \renewcommand{\arraystretch}{1.4}
    \begin{tabular}{@{}lcc@{}}
        \toprule
        \textbf{Framework} & \textbf{Classical (Section~\ref{subsec:ridge_regression} \& \ref{subsec:kernel_regression})} & \textbf{Quantum (Section~\ref{subsec:qkm_setting})} \\
        \midrule
        % \multicolumn{3}{@{}l}{\textit{\textbf{Ambient Space Representation}}} \\
        Ambient feature dimension & $p_a$ & $4^n$ \\
        Ambient feature vector & $\bx^{a} := \phi_{a}(\bs{u}) \in \bbR^{p_a}$  & $\bs{r}_a(\bs{u}) \in \bbR^{4^n}$ (Pauli coeff. of $\rho(\bs{u})$) \\
        Ambient parameter vector & $\bth_a \in \bbR^{p_a}$ & $\bs{w}_a \in \bbR^{4^n}$ (Pauli coeff. of $O$) \\
        Ambient target vector & $\bth_{*,a} \in \bbR^{p_a}$ & $\bs{w}_{*,a} \in \bbR^{4^n}$ (Pauli coeff. of $O_*$) \\
        Model $f(\bs{u})$ / Target $f_*(\bs{u})$ & $\bth_a\T \phi_{a}(\bs{u})$ / $\bth_{*,a}\T \phi_{a}(\bs{u})$ & $\Tr[\rho(\bs{u})O] = \bs{w}_a\T \bs{r}_a(\bs{u})$ / $\Tr[\rho(\bs{u})O_*] = \bs{w}_{*,a}\T \bs{r}_a(\bs{u})$ \\

        \midrule
        % \multicolumn{3}{@{}l}{\textit{\textbf{Effective Space Representation}}} \\
        Effective feature dimension & $p \leq p_a$ & $p \leq 4^n$ for HEA, $p \leq 3^n$ for TPA \\
        Effective feature vector & $\phi(\bs{u}) = V\T \phi_{a}(\bs{u}) \in \bbR^p$ & $\bs{r}(\bs{u}) = V_r\T \bs{r}_a(\bs{u}) \in \bbR^p$ \\
        Effective parameter vector & $\bth = V\T \bth_a \in \bbR^p$ & $\bs{w} = V_r\T \bs{w}_a \in \bbR^p$ \\
        Effective target vector & $\bth_* = V\T \bth_{*,a} \in \bbR^p$ & $\bs{w}_* = V_r\T \bs{w}_{*,a} \in \bbR^p$ \\
        Effective model $f(\bs{u})$ / Target $f_*(\bs{u})$ & $\bth\T \phi(\bs{u})$ / $\bth_*\T \phi(\bs{u})$ & $\bs{w}\T \bs{r}(\bs{u})$ / $\bs{w}_*\T \bs{r}(\bs{u})$ \\

        \midrule
        Ambient population covariance $\Sigma_a$ & $\E_{\calU}[\phi_{a}(\bs{u})\phi_{a}(\bs{u})\T] \in \bbR^{p_a \times p_a}$ & $\E_{\calU}[\bs{r}_a(\bs{u})\bs{r}_a(\bs{u})\T] \in \bbR^{4^n \times 4^n}$ \\
        Effective population covariance $\Sigma$ & $\E_{\calU}[\phi(\bs{u})\phi(\bs{u})\T] \in \bbR^{p \times p}$ & $\E_{\calU}[\bs{r}(\bs{u})\bs{r}(\bs{u})\T] \in \bbR^{p \times p}$ \\
        Kernel function $k(\bs{u}, \bs{u}')$ & $\phi_{a}(\bs{u})\T \phi_{a}(\bs{u}') = \phi(\bs{u})\T \phi(\bs{u}')$ & $\Tr[\rho(\bs{u})\rho(\bs{u}')] = \bs{r}_a(\bs{u})\T \bs{r}_a(\bs{u}') = \bs{r}(\bs{u})\T \bs{r}(\bs{u}')$ \\
        \bottomrule
    \end{tabular}
\end{table*}

\subsection{Quantum Kernel Ridge Regression}\label{subsec:qkm_setting}
To apply the classical ridge regression framework to QML, we first express the quantum model in its primal formulation to identify the feature space, and then demonstrate how it is solved using the quantum kernel.

We consider the quantum dataset $\{(\rho(\bs{u}_i), y_i)\}_{i=1}^{N_{\mathrm{tr}}}$, where each pair consists of an $n$-qubit quantum state~$\rho(\bs{u}_i) \in \bbC^{2^n \times 2^n}$ and a corresponding real-valued label $y_i \in \bbR$.
Each quantum state $\rho(\bs{u}_i)$ is generated from a classical input vector $\bs{u}_i \in \calU \subset \bbR^d$ through a data-encoding parameterized quantum circuit $U(\bs{u})$ as
\begin{equation}
    \rho(\bs{u}_i) = U(\bs{u}_i)\dyad{\bs{0}}U\dg(\bs{u}_i)\,.
\end{equation}
The QML model $f(\bs{u})$ is defined as the expectation value of a trainable observable $O$ with respect to $\rho(\bs{u})$:
\begin{equation}
    f(\bs{u}) := \Tr[\rho(\bs{u}) O]\,.
\end{equation}

We assume that for each quantum state $\rho(\bs{u}_i)$, the label $y_i$ is generated from a target function given by the expectation value of some observable $O_*$ with respect to~$\rho(\bs{u}_i)$:
\begin{align}
    y_i = \Tr[\rho(\bs{u}_i)O_*] + \vep_i, \quad \vep_i \stackrel{\iid}{\sim} \calN(0, \sigma^2) \,.
\end{align}

To connect this QML model to the classical primal regression framework with real-valued feature vectors and parameter vectors described in Section~\ref{subsec:ridge_regression}, we express both the quantum state $\rho(\bs{u})$ and the observable $O$ in the normalized Pauli basis $\{P_i\}_{i=1}^{4^n}$ (satisfying $\Tr[P_i P_j] = \delta_{ij}$) with real coefficient vectors $\bs{r}_a(\bs{u}) := (r_i^a(\bs{u}))_{i=1}^{4^n} \in \bbR^{4^n}$ and $\bs{w}_a := (w_i^a)_{i=1}^{4^n} \in \bbR^{4^n}$ as follows:
\begin{align}
    \rho(\bs{u}) = \sum_{i=1}^{4^n} r_i^a(\bs{u}) P_i
    \quad\text{and} \quad
    O = \sum_{i=1}^{4^n} w_i^a P_i \,.
\end{align}
Then, the QML model can be rewritten as
\begin{align}
    f(\bs{u}) = \Tr[\rho(\bs{u}) O] = \sum_{i=1}^{4^n} w_i^a r_i^a(\bs{u}) = \bs{w}_a\T \bs{r}_a(\bs{u}) \,.
\end{align}
We also define $\bs{w}_{*,a} \in \bbR^{4^n}$ as the Pauli coefficients of the unknown observable $O_*$, so that the target function can be expressed as $f_*(\bs{u}) = \Tr[\rho(\bs{u}) O_*] = \bs{w}_{*,a}\T \bs{r}_a(\bs{u})$.
This shows that the quantum model can be written as the regression model introduced in Section~\ref{subsec:ridge_regression}.
Table~\ref{tab:qml_classical_correspondence} summarizes this correspondence between the classical and quantum frameworks.

This feature map $\bs{u} \mapsto \bs{r}_a(\bs{u})$ can be shown to be Lipschitz continuous with respect to the $\ell_2$ norm when using PQCs for data encoding.
The formal proof of this property is provided in Appendix~\ref{appdx:sec:lipschitz_qml_feature_map} based on a result from Ref.~\cite{kempkes2025double}.
The feature vector $\bs{r}_a(\bs{u})$ satisfies $\norm{\E_{\calU}[\bs{r}_a(\bs{u})]}_2 \leq \E_{\calU}[\norm{\bs{r}_a(\bs{u})}_2] = \E_{\calU}[\sqrt{\Tr[\rho(\bs{u})^2]}] = 1$.
These properties align with Assumption~\ref{assum:feature_generation}(q).

In this quantum model, the ambient feature dimension is $p_a=4^n$. For a given input distribution $\mu$, however, the effective feature dimension is
\begin{align}
    p
    :=
    \rank(
    \E_{\bs{u} \sim \mu}[
        \bs{r}_a(\bs{u})\bs{r}_a(\bs{u})\T
    ])
        \leq 4^n.
\end{align}
Thus, the ambient feature dimension is determined by the number of qubits, whereas the effective dimension depends jointly on the quantum feature map $\bs{r}_a(\cdot)$ and the input distribution $\mu$.
In particular, $p = 4^n$ holds if and only if the $4^n$ Pauli-coefficient functions are linearly independent in $L^2(\mu)$, namely,
\begin{align}
    \bs{c}\T\bs{r}_a(\bs{u})=0, \quad \mu\text{-a.e.}
    \quad\Longrightarrow\quad
    \bs{c}=\bs{0}.
\end{align}
The left-hand side of this implication means
\begin{align}
    \mu(\{\bs{u} \in \calU\,:\, \bs{c}\T\bs{r}_a(\bs{u}) \neq 0 \}) = 0 \,.
\end{align}
A sufficiently expressive circuit architecture, such as the Hardware-Efficient Ansatz (HEA) discussed in Section~\ref{sec:experiment}, may permit $p = 4^n$, but does not guarantee it for an arbitrary input distribution $\mu$. Thus, $p \leq 4^n$ for HEA in general.

Conversely, if the quantum circuit is not sufficiently expressive, such as the Tensor Product Ansatz (TPA) consisting only of $R_X$ gates, the entries of $\bs{r}_a(\bs{u})$ are linearly dependent. In this case, as shown in Section~\ref{subsec:eigendecomposition}, we can project $\bs{r}_a(\bs{u})$ onto a lower-dimensional vector~$\bs{r}(\bs{u}) \in \bbR^{p}$ with $p < p_a = 4^n$.
The effective feature dimension for the TPA is $p \leq 3^n$, as we show in Appendix~\ref{appdx:sec:effective_dim_TPA}.

\begin{assumption}[Exponential growth of the effective feature dimension]
    \label{assum:exponential_effective_dimension}
    We consider quantum feature-map families for which the effective feature
    dimension grows at least exponentially with the number of qubits.
    Specifically, we assume that there exists a constant \(c > 1\) such that $p = \Omega(c^n)$.
    Equivalently, this implies
    \begin{align}
        n = \order{\log{p}}.
        \label{eq:qubits_log_effective_dimension}
    \end{align}
\end{assumption}

Specifically, $\bs{r}(\bs{u})$ is obtained via a linear transformation:
\begin{align}
    \bs{r}(\bs{u}) = V_r\T \bs{r}_a(\bs{u}) \,,
\end{align}
where $V_r \in \bbR^{4^n \times p}$ is the matrix of $p$ eigenvectors corresponding to the nonzero eigenvalues of the population covariance matrix~$\Sigma_a = \E_{\calU}[\bs{r}_a(\bs{u})\bs{r}_a(\bs{u})\T]$.
% We also define $V_{r,a}$ as the matrix of all $4^n$ eigenvectors.
We call $\bs{r}(\cdot)$ the \textit{effective quantum feature map}.
As we have established for the effective feature map $\phi(\cdot)$ in Section~\ref{subsec:eigendecomposition}, the entries of $\bs{r}(\bs{u})$ are also orthogonal functions of $\bs{u}$ and are therefore linearly independent.
Moreover, $\bs{r}(\bs{u})$ yields the same kernel as $\bs{r}_a(\bs{u})$.
Because the linear transformation $V_r$ has orthonormal columns, we have
\begin{align}
    \norm{\bs{r}(\bs{u}) - \bs{r}(\bs{v})}_2
    &= \norm*{V_r\T(\bs{r}_a(\bs{u}) - \bs{r}_a(\bs{v}))}_2 \nonumber\\
    &\leq \norm{\bs{r}_a(\bs{u}) - \bs{r}_a(\bs{v})}_2 \,.
\end{align}
Therefore, the effective quantum feature map is also Lipschitz continuous with respect to the $\ell_2$ norm and satisfies $\norm{\E_{\calU}[\bs{r}(\bs{u})]}_2 \leq \E_{\calU}[\norm{\bs{r}(\bs{u})}_2] \leq \E_{\calU}[\norm{\bs{r}_a(\bs{u})}_2] = 1$, thus satisfying Assumption~\ref{assum:feature_generation}(q).

Similarly, defining the effective parameter vector~$\bs{w} := V_r\T \bs{w}_a$ and the effective target vector $\bs{w}_* := V_r\T \bs{w}_{*,a}$, we have $f(\bs{u}) = \bs{w}\T \bs{r}(\bs{u})$ and $f_*(\bs{u}) = \bs{w}_*\T \bs{r}(\bs{u})$.

Finally, computing the estimator~\eqref{eq:ridge_regression_estimator} of the quantum model directly in the $4^n$-dimensional primal space is generally intractable for large $n$.
Instead, quantum kernel methods (QKMs) bypass the explicit computation of the feature vector~$\bs{r}_a(\bs{u})$. Since the Pauli matrices are orthonormal ($\Tr[P_i P_j] = \delta_{ij}$), the dot product in the primal feature space corresponds exactly to the Hilbert-Schmidt inner product of the quantum states. Thus, we compute the kernel directly on the quantum computer as
\begin{align}
    k(\bs{u}, \bs{u}') = \bs{r}_a(\bs{u})\T \bs{r}_a(\bs{u}') = \Tr[\rho(\bs{u})\rho(\bs{u}')] \,.
\end{align}
Using this quantum kernel, we solve for the dual estimator $\hat{\bs{\alpha}}$ classically via Eq.~\eqref{eq:dual_estimator}.
One can construct various quantum kernels tailored to different tasks by choosing different encoding strategies~$U(\bs{u})$, as detailed in Ref.~\cite{schuld2021supervised}.

Since $\norm{\bx}_2^2 = \norm{\bx^{a}}_2^2 = k(\bs{u},\bs{u}) = \Tr[\rho(\bs{u})^2] = 1$ holds for the pure-state fidelity kernel used in this work, the following normalization holds:
\begin{align}
    \Tr[\Sigma]
    = \Tr[\Sigma_a]
    = \E_{\calU}[\norm{\bx^{a}}_2^2]
    = 1.
    \label{eq:pure_state_kernel_normalization}
\end{align}

\subsection{Bias--Variance Decomposition of the Risk}\label{subsec:test_risk_decomposition}
As shown in Sections~\ref{subsec:eigendecomposition} and~\ref{subsec:qkm_setting}, the test risk depends only on the kernel function, which remains invariant under the reduction to the effective feature space. Therefore, we hereafter focus exclusively on this effective feature space.
Henceforth, we only use the unadorned effective feature space quantities $\phi$, $\bx$, $\bth$, $\bth_*$, $X$, $\hat{\Sigma}$, and $\Sigma$, both in the classical and quantum settings for notational simplicity.
This choice of feature map allows us to work with a diagonal and positive definite population covariance matrix~$\Sigma = \Lambda = \diag(\xi_1, \xi_2, \ldots, \xi_p)$.

The ultimate goal of the regression task is to minimize the test risk, which is conditional on the training features:
\begin{align}
    R_{\lambda,\hat{\Sigma}}
    &:= \E_{\bx_{\mathrm{ts}},\vep_{\mathrm{ts}},\bs{\vep}_{\mathrm{tr}}} \left[ (y_{\mathrm{ts}} - \hat{\bth}\T \bx_{\mathrm{ts}})^2 \right] \nonumber\\
    &= \E_{\bs{\vep}_{\mathrm{tr}}}
    [(\bth_* - \hat{\bth})\T \Sigma (\bth_* - \hat{\bth})] + \sigma^2 \,.
    \label{eq:test_risk2}
\end{align}
Here, the expectation is taken over the test feature~$\bx_{\mathrm{ts}}$, the test noise $\vep_{\mathrm{ts}}$ associated with $y_{\mathrm{ts}}$, and the training label noise~$\bs{\vep}_{\mathrm{tr}}$.
Note that the constant term $\sigma^2$ in Eq.~\eqref{eq:test_risk2} represents the irreducible error arising from the test noise $\vep_{\mathrm{ts}}$.
We consider the bias--variance decomposition with respect to the training label noise $\bs{\vep}_{\mathrm{tr}}$, on which the estimator $\hat{\bth}$ depends. By defining the mean vector $\bar{\bth} := \E_{\bs{\vep}_{\mathrm{tr}}}[\hat{\bth}]$, we have
\begin{align}
    R_{\lambda,\hat{\Sigma}}
    &= (\bth_* - \bar{\bth})\T \Sigma (\bth_* - \bar{\bth}) \nonumber\\
    &\quad+ \E_{\bs{\vep}_{\mathrm{tr}}} \left[ (\hat{\bth} - \bar{\bth})\T \Sigma (\hat{\bth} - \bar{\bth}) \right]
    + \sigma^2 \nonumber\\
    &=: \calB_{\lambda,\hat{\Sigma}} + \calV_{\lambda,\hat{\Sigma}} + \sigma^2 \,.
    \label{eq:bias_variance_decomposition}
\end{align}
As derived in Appendix~\ref{appdx:sec:qml_analysis}, the bias and variance terms are given explicitly by
\begin{align}
    \calB_{\lambda,\hat{\Sigma}}
    &= \lambda^2 \bth_*\T (\hat{\Sigma} + \lambda I_p)^{-1} \Sigma (\hat{\Sigma} + \lambda I_p)^{-1} \bth_*\nonumber\\
    &= \lambda^2 \bs{\beta}_*\T \Sigma^{-\frac12} (\hat{\Sigma} + \lambda I_p)^{-1} \Sigma (\hat{\Sigma} + \lambda I_p)^{-1} \Sigma^{-\frac12}\bs{\beta}_* \,, \label{eq:bias_term_in_risk}\\
    \calV_{\lambda,\hat{\Sigma}}
    &= \frac{\sigma^2}{N_{\mathrm{tr}}}
    \Tr[
        \Sigma \hat{\Sigma} (\hat{\Sigma} + \lambda I_p)^{-2}
    ] \,.\label{eq:variance_term_in_risk}
\end{align}
The effective target vector~$\bth_*$ only appears in the bias term.
To derive the deterministic equivalent of the bias term, we impose a standard condition on $\bth_*$~\cite{MisiakiewiczSaeed2024}:
\begin{assumption}[Bounded effective target vector]\label{assum:target_regularity}
    We assume that there exists a constant $C_3 > 0$, independent of $p$, such that
    \begin{align}
        \sup_{p \in \bbN}\|\bth_*\|_2^2 < C_3 \,.
        \label{eq:bounded_effective_target}
    \end{align}
\end{assumption}

As we will see in the next section, the variance term is the origin of the double-descent peak: it becomes large when the regularization parameter $\lambda$ is small near the interpolation threshold, whereas the bias term becomes small in the same regime.
\section{Deterministic-Equivalent Analysis of the Test Risk}\label{sec:DE_test_risk}
Having defined the effective quantum feature map and decomposed the risk, we can apply RMT to analyze the test risk of QKRR.

Since the test risk~\eqref{eq:bias_variance_decomposition} depends on the sample covariance matrix $\hat{\Sigma} := \frac{1}{N_{\mathrm{tr}}} X\T X$, it is a random quantity.

However, in the proportional high-dimensional regime ($p, N_{\mathrm{tr}} \to \infty$ with $p/N_{\mathrm{tr}} \to \gamma \in (0,\infty)$) and under the assumptions stated below, the random fluctuations of the sample covariance matrix $\hat{\Sigma}$ become negligible when observed through the test risk.
In this section, we theoretically characterize the asymptotic behavior of the test risk of QKRR by deriving the deterministic equivalent of the test risk with the help of recent advances from RMT.
Using this equivalent, we can predict the typical behavior of the test risk in the high-dimensional regime without repeatedly retraining the model on independently sampled training datasets.

Although the deterministic equivalents of the test risks of various models have been derived in CML settings~\cite{atanasov2024scaling,mei2022generalization}, they often rely on restrictive assumptions, such as simple Gaussian features~\cite{atanasov2024scaling} or random features~\cite{mei2022generalization}. These assumptions are insufficient for QKRR, where the entries of the feature vectors are nonlinearly dependent due to the encoding by PQCs, and there is no explicit randomization in the feature generation.
In this section, we derive the deterministic equivalent of the test risk for a general class of feature vectors generated by Lipschitz functions of latent standard Gaussian vectors, referred to as \emph{concentrated feature vectors}, as specified in Section~\ref{subsec:feature_vector_generating_process}.
Crucially, this class includes feature vectors generated by quantum feature maps implemented by PQCs.
The derived test-risk DE makes explicit that quantum-specific feature geometry---including effects associated with entanglement and circuit expressibility---is encoded by the population covariance matrix $\Sigma$, while task dependence enters through the alignment of $\bs{\beta}_*$ with its eigendirections.

This section is organized as follows.
In Section~\ref{subsec:DE_intro}, we introduce the notion of deterministic equivalence and state the core result from Ref.~\cite{louart2021spectral} regarding the deterministic equivalent of the resolvent of sample covariance matrices for concentrated feature vectors.
In Section~\ref{subsec:asymptotic_test_risk}, we apply the \emph{derivative trick}~\cite{zavatone-vethlecture,atanasov2024scaling,dobriban2018high} to derive the test-risk DE for QKRR, expressing it as a function of the population covariance matrix $\Sigma$ and the effective regularization ($\kappa_\lambda$ defined in Proposition~\ref{prop:resolvent_deterministic_equivalent}), thereby clarifying the mechanisms behind the double descent phenomenon.

\subsection{Deterministic Equivalents of Random Matrices}\label{subsec:DE_intro}
We begin with the definition of deterministic equivalence for random matrices (``a.s.'' = almost surely).
\begin{definition}[Frobenius-dual deterministic equivalent]
\label{def:frobenius_deterministic_equivalent}
    Let $\{A_p\}$ be a sequence of random $p\times p$ matrices and $\{B_p\}$ a sequence of deterministic $p\times p$ matrices.
    We say that $B_p$ is a \textbf{Frobenius-dual deterministic equivalent} of $A_p$, denoted by
    \begin{align}
        A_p \asymp_F B_p \,,
    \end{align}
    if, for every deterministic matrix sequence $C_p \in \bbR^{p\times p}$ satisfying $\sup_{p \in \bbN}\norm{C_p}_F < \infty$, we have
    \begin{align}
        \lim_{p \to \infty} \Tr[C_p(A_p - B_p)]
        = 0
        \quad \text{a.s.}
        \label{eq:def:frobenius_deterministic_equivalent}
    \end{align}
\end{definition}
For a scalar random sequence $\{a_p\}$ and a deterministic sequence $\{b_p\}$, we say $b_p$ is a deterministic equivalent of $a_p$, denoted by $a_p \asymp b_p$, if $\lim_{p \to \infty} (a_p - b_p) = 0$ a.s.

The double-descent behavior of the test risk is governed by terms involving the resolvent $(\hat{\Sigma}+\lambda I_p)^{-1}$ of the sample covariance matrix.
The following proposition gives a Frobenius-dual deterministic equivalent of the scaled resolvent for concentrated feature vectors generated according to Assumption~\ref{assum:feature_generation}(q), which is satisfied by the effective quantum feature map $\bs{r}(\cdot)$ as discussed in Section~\ref{subsec:qkm_setting}.
This result follows from Theorem~0.9 in Ref.~\cite{louart2021spectral} by setting $z = -\lambda < 0$ and $\Sigma_i := \E[\bx_i\bx_i\T] = \Sigma$ in the resolvent formula; Appendix~\ref{appdx:sec:theorem_0.9_to_deterministic_equivalent} gives the detailed derivation.

\begin{proposition}[Deterministic Equivalent of the Resolvent for Concentrated Feature Vectors]\label{prop:resolvent_deterministic_equivalent}
    Assume that the input-vector dimension $d$ and the depth of the quantum circuit $L$ are $\order{\poly(n)}$, and that Assumption~\ref{assum:exponential_effective_dimension} holds.
    Suppose further that the feature vectors $\bx_i$ are generated independently as described in Assumption~\ref{assum:feature_generation}(q) with \(\Omega\) in place of \(\Omega_a\).
    For fixed $\lambda > 0$, we consider the proportional regime; $p/N_{\mathrm{tr}} \to \gamma \in (0,\infty)$ as $p, N_{\mathrm{tr}} \to \infty$.
    Let $\kappa_\lambda > 0$ be the unique positive solution of the self-consistent equation
    \begin{align}
        \frac{1}{N_{\mathrm{tr}}}
        \Tr\left[(\Sigma + \kappa_\lambda I_p)^{-1}\Sigma\right]
        +
        \frac{\lambda}{\kappa_\lambda}
        =1 \,.
        \label{eq:self-consistent_eq}
    \end{align}
    % This equation is also referred to as a fixed-point equation in the following equivalent form:
    % \begin{align}
    %     \kappa_\lambda
    %     =
    %     \lambda
    %     +
    %     \frac{1}{N_{\mathrm{tr}}}
    %     \Tr\left[
    %         \Sigma
    %         \left(I_p+\frac{\Sigma}{\kappa_\lambda}\right)^{-1}
    %     \right] \,.
    %     \label{eq:self-consistent_eq_2}
    % \end{align}
    Define
    \begin{align}
        Q_p(\lambda)
        &:=
        \lambda(\hat{\Sigma}+\lambda I_p)^{-1},
        \\
        \overline Q_p(\lambda)
        &:=
        \kappa_\lambda
        (\Sigma+\kappa_\lambda I_p)^{-1} \,.
    \end{align}
    Then
    \begin{align}
        Q_p(\lambda)
        &\asymp_F
        \overline Q_p(\lambda),
        \label{eq:frobenius_resolvent_deterministic_equivalent}
    \end{align}
\end{proposition}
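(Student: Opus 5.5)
The plan is to obtain the statement as a specialization of Theorem~0.9 in Ref.~\cite{louart2021spectral}, which gives a deterministic equivalent for the resolvent $(\frac{1}{N_{\mathrm{tr}}}X\T X - zI_p)^{-1}$ of concentrated random vectors. That theorem is stated with nonuniform concentration rates and allows sample-dependent covariances $\Sigma_i$. We will specialize to $z=-\lambda<0$ and $\Sigma_i=\Sigma$ for all $i$.

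\textbf{Step 1: verify the hypotheses.} We first check that the effective feature vectors $\bx_i=\Omega(\bs{z}_i)$ are concentrated vectors in the sense of Assumption~0.4 of Ref.~\cite{louart2021spectral}. By Gaussian concentration, a $C_1$-Lipschitz image of $\calN(0,I_q)$ satisfies
\begin{align}
    \bbP\bigl(|f(\bx_i)-\E f(\bx_i)|>t\bigr)\le 2e^{-t^2/(2C_1^2)}
\end{align}
for every $1$-Lipschitz $f$. Here $C_1=\order{\poly(\log p)}$ rather than a constant; this follows from Remark~\ref{remark:lipschitz_constant}, using $d,L=\order{\poly(n)}$ and $n=\order{\log p}$ from Assumption~\ref{assum:exponential_effective_dimension}. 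We also check that $\|\E\bx_i\|_2\le 1$ and $\|\Sigma\|\le\Tr\Sigma=1$, so the covariance is bounded, and that the regime is proportional.

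\textbf{Step 2: the fixed point and the equivalent.} Specializing the theorem's fixed-point equation to identical $\Sigma_i$, the equivalent takes the form $\bigl(\frac{\Sigma}{1+\delta}+\lambda I_p\bigr)^{-1}$, where
\begin{align}
    \delta=\frac{1}{N_{\mathrm{tr}}}\Tr\Bigl[\Sigma\Bigl(\frac{\Sigma}{1+\delta}+\lambda I_p\Bigr)^{-1}\Bigr].
\end{align}
Set $\kappa_\lambda:=\lambda(1+\delta)$. Then $(\hat\Sigma+\lambda I_p)^{-1}$ has equivalent $(1+\delta)(\Sigma+\kappa_\lambda I_p)^{-1}$, and multiplying by $\lambda$ gives $\overline{Q}_p(\lambda)$. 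Substituting into the $\delta$ equation yields $\kappa_\lambda-\lambda=\frac{\kappa_\lambda}{N_{\mathrm{tr}}}\Tr[\Sigma(\Sigma+\kappa_\lambda I_p)^{-1}]$, which is exactly Eq.~\eqref{eq:self-consistent_eq}. For uniqueness, the map $\kappa\mapsto\frac{1}{N_{\mathrm{tr}}}\Tr[(\Sigma+\kappa I_p)^{-1}\Sigma]+\lambda/\kappa$ is strictly decreasing on $(0,\infty)$, goes to $+\infty$ as $\kappa\to0^+$ and to $0$ as $\kappa\to\infty$, so there is exactly one positive solution.

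\textbf{Step 3: upgrade to Frobenius-dual, almost-sure convergence.} Ref.~\cite{louart2021spectral} gives concentration of linear functionals $\Tr[C(Q-\E Q)]$ with tails of the form $Ce^{-cN t^2/\|C\|_F^2}$ up to Lipschitz-constant factors. It also gives a bound $\|\E Q_p-\overline{Q}_p\|_F=\order{\poly(\log p)/\sqrt{p}}$, or a comparable nuclear-norm-dual bound.

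\begin{itemize}
    \item \emph{Bias.} By Cauchy--Schwarz, $|\Tr[C_p(\E Q_p-\overline Q_p)]|\le\|C_p\|_F\,\|\E Q_p-\overline Q_p\|_F\to0$.
    \item \emph{Fluctuations.} The tails of $\Tr[C_p(Q_p-\E Q_p)]$ are summable in $p$ because $N_{\mathrm{tr}}\asymp p$, so Borel--Cantelli gives almost-sure convergence to zero.
\end{itemize}

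The main obstacle is tracking the $\poly(\log p)$ Lipschitz constant through all of Louart--Couillet's error rates. Their bounds are stated for a constant concentration parameter, so I would rescale $\bx_i$ by $C_1^{-1}$, or redo the rates with an explicit $C_1$ dependence. The aim is to check that every error is $\order{\poly(\log p)\,p^{-1/2}}$, or that the probability tails are $e^{-cp/\poly(\log p)}$, both of which still vanish and remain summable. A secondary care point is to use the correct norm pairing: the Frobenius norm is dual to itself, so the theorem's Frobenius-norm estimate of $\E Q-\overline Q$ is exactly what is needed.
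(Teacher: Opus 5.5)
The steps you take match the paper's own route. You specialize Louart--Couillet at $z=-\lambda$ with $\Sigma_i=\Sigma$, and your identification $\kappa_\lambda=\lambda(1+\delta)$ and uniqueness argument are correct; they agree with Lemma~\ref{lemma:unique_positive_fixed_point}. You then pair against $C_p$ by Frobenius self-duality and finish with Borel--Cantelli using $N_{\mathrm{tr}}\gtrsim p$. The one structural difference is packaging. You split the error into a bias $\E Q_p-\overline Q_p$ and a fluctuation $Q_p-\E Q_p$. The paper instead invokes Theorem~\ref{thm:0.9}, which already concentrates $R-\widetilde R$ around zero in the seminorm $\norm{\cdot}_{F,S^\epsilon}$, and then shows that $A\mapsto\Re\Tr[C_pA(-\lambda)]$ has Lipschitz constant $\norm{C_p}_F$ (Lemma~\ref{lemma:frobenius_dual_pairing}). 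The two packagings are equivalent.

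\textbf{Gap in Step~1: the covariance lower bound.} Louart--Couillet's results, including the Frobenius estimate $\norm{\E R-\widetilde R}_F=\order{1/\sqrt{N_{\mathrm{tr}}}}$ that your bias step relies on, are proved under their Assumption~0.7. That assumption is a uniform \emph{lower} bound $\Sigma_i\succeq K_3I_p$ on the covariances. You only checked the upper bound $\norm{\Sigma}\le\Tr\Sigma=1$, which is not the hypothesis that matters. The lower bound in fact fails in this setting: the pure-state normalization $\Tr\Sigma=1$ forces $\lambda_{\min}(\Sigma)\le 1/p\to0$. So the cited theorem cannot be applied as you state it. You need the extra observation in the paper's Remark~\ref{remark:assumption_0.7}: Assumption~0.7 is used only in their Proposition~6.5, to control the ratio $\sup_i\Im\tilde\Lambda_i^z/\inf_i\Im\tilde\Lambda_i^z$. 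That ratio is identically $1$ when all $\Sigma_i$ coincide, so the assumption can be dropped in the identically distributed case.

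\textbf{The growing Lipschitz constant.} Your first remedy, rescaling $\bx_i$ by $C_1^{-1}$, does not work. It turns the ridge into $\lambda/C_1^2\to0$, which only moves the growth into the distance between the spectral parameter and the spectrum, and Louart--Couillet's constants depend on that distance. Your second option is the right one, and it is what the paper does in Remark~\ref{remark:lipschitz_constant}. Tracking $C_1$ explicitly, the exponents become $\order{N_{\mathrm{tr}}/C_1^2}=\order{N_{\mathrm{tr}}/\poly(\log N_{\mathrm{tr}})}$, which is still summable for Borel--Cantelli.
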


The following corollary records the two scalar interfaces of Proposition~\ref{prop:resolvent_deterministic_equivalent} that are used in the bias and variance calculations below.
\begin{corollary}[Scalar Interfaces of the Resolvent Deterministic Equivalent]
\label{cor:resolvent_scalar_interfaces}
    Under the assumptions of Proposition~\ref{prop:resolvent_deterministic_equivalent}, the following two interfaces hold:
    \begin{enumerate}[label=(\roman*),leftmargin=25pt]
        \item \textbf{Bounded-bilinear-form interface.}
        For every pair of deterministic vector sequences
        $u_p,v_p \in \bbR^p$ satisfying $\sup_p \norm{u_p}_2 \norm{v_p}_2 < \infty$, we have
        \begin{align}
            u_p\T
            \{Q_p(\lambda)-\overline Q_p(\lambda)\}
            v_p
            \longrightarrow 0
            \qquad\text{a.s.}
            \label{eq:resolvent_bilinear_interface}
        \end{align}
        \item \textbf{Normalized-trace interface.}
        For every deterministic matrix sequence $D_p$ satisfying $\sup_p\norm{D_p}_\infty < \infty$,
        \begin{align}
            \frac1p
            \Tr\left[
                D_p\{Q_p(\lambda)-\overline Q_p(\lambda)\}
            \right]
            \longrightarrow0
            \qquad\text{a.s.}
            \label{eq:resolvent_trace_interface}
        \end{align}
    \end{enumerate}
\end{corollary}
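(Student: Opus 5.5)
Both interfaces follow from Proposition~\ref{prop:resolvent_deterministic_equivalent} by choosing a suitable test matrix $C_p$ in Definition~\ref{def:frobenius_deterministic_equivalent}. In each case we only need to check that $\sup_p \norm{C_p}_F < \infty$.

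For (i), take $C_p := v_p u_p\T$. Then
\begin{align}
    \Tr[C_p(Q_p(\lambda)-\overline Q_p(\lambda))] = u_p\T\{Q_p(\lambda)-\overline Q_p(\lambda)\}v_p ,
\end{align}
and the norm bound is immediate because $\norm{v_pu_p\T}_F = \norm{u_p}_2\norm{v_p}_2$, which is uniformly bounded by hypothesis. Applying Eq.~\eqref{eq:frobenius_resolvent_deterministic_equivalent} with this $C_p$ gives the almost sure convergence to zero. No further work is needed, since the definition already supplies the almost sure statement for each fixed deterministic sequence.

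For (ii), take $C_p := D_p/p$. The trace expression in Eq.~\eqref{eq:resolvent_trace_interface} is then exactly $\Tr[C_p(Q_p-\overline Q_p)]$. The norm bound follows from $\norm{D_p}_F \le \sqrt{p}\,\norm{D_p}_\infty$, which gives
\begin{align}
    \norm{C_p}_F \le \norm{D_p}_\infty/\sqrt{p} \le \sup_p\norm{D_p}_\infty/\sqrt{p} ,
\end{align}
so $\sup_p \norm{C_p}_F<\infty$. This bound in fact tends to zero, so (ii) is weaker than what the Frobenius-dual equivalence provides. As a cross-check that does not use the proposition, $\norm{Q_p}_\infty\le 1$ and $\norm{\overline Q_p}_\infty\le 1$ because $\hat\Sigma,\Sigma\succeq 0$ and $\lambda,\kappa_\lambda>0$. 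Hence $|\frac1p\Tr[D_p(Q_p-\overline Q_p)]|\le 2\norm{D_p}_\infty$ is bounded, but proving that it vanishes still requires the proposition.

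The argument contains no real obstacle. The only point needing care is that Definition~\ref{def:frobenius_deterministic_equivalent} requires $C_p$ to be deterministic. This holds because $u_p$, $v_p$ and $D_p$ are deterministic by assumption. It also means the null set in the almost sure statement may depend on the chosen sequence, which the corollary's wording permits. The quantum-specific conditions (polynomial $d$ and $L$, Assumption~\ref{assum:exponential_effective_dimension}) enter only through the proposition, so they need no separate treatment here.
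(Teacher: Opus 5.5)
Your proposal is correct and is exactly the paper's argument: plug $C_p = v_p u_p\T$ (with $\norm{C_p}_F = \norm{u_p}_2\norm{v_p}_2$) for (i) and $C_p = D_p/p$ (with $\norm{C_p}_F \le \norm{D_p}_\infty/\sqrt{p}$) for (ii) into Definition~\ref{def:frobenius_deterministic_equivalent}. The paper records the same two computations in Corollaries~\ref{cor:bounded_bilinear_resolvent_convergence} and~\ref{cor:normalized_trace_resolvent_convergence}.
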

Indeed, these statements follow directly from Definition~\ref{def:frobenius_deterministic_equivalent} by taking $C_p=v_pu_p\T$ for the bilinear-form interface and $C_p=D_p/p$ for the normalized-trace interface.

We refer to the scalar $\kappa_\lambda$ as the \emph{effective regularization parameter} and it satisfies $\kappa_\lambda > \lambda > 0$ by Lemma~\ref{lemma:unique_positive_fixed_point}. We refer the reader to Appendix~\ref{appdx:sec:properties_of_kappa} for its detailed properties.

\subsection{Deterministic Equivalent of the Test Risk}\label{subsec:asymptotic_test_risk}
The test risk $R_{\lambda,\hat{\Sigma}}$ decomposes into bias and variance terms involving second-order resolvents (e.g., $(\hat{\Sigma} + \lambda I_p)^{-1} \Sigma (\hat{\Sigma} + \lambda I_p)^{-1}$ in Eq.~\eqref{eq:bias_term_in_risk} and $(\hat{\Sigma} + \lambda I_p)^{-2}$ in Eq.~\eqref{eq:variance_term_in_risk}), whereas Proposition~\ref{prop:resolvent_deterministic_equivalent} involves only the first-order resolvent $(\hat{\Sigma} + \lambda I_p)^{-1}$.
By observing that these second-order terms can be expressed as derivatives of the first-order resolvent with respect to $\lambda$ or an auxiliary parameter (\emph{derivative trick}~\cite{zavatone-vethlecture,atanasov2024scaling,dobriban2018high}) and using Vitali's convergence theorem, we can derive their deterministic equivalents from Proposition~\ref{prop:resolvent_deterministic_equivalent}.
In particular, the quadratic specialization $u_p = v_p$ of the bounded-bilinear-form interface~\eqref{eq:resolvent_bilinear_interface} is used for the bias, and the normalized-trace interface~\eqref{eq:resolvent_trace_interface} for the variance; Appendix~\ref{appdx:sec:derivative_method} gives the proof.
Using this approach, we obtain our main theoretical result.

\begin{theorem}[Deterministic Equivalent of Test Risk]\label{thm:test_risk_DE}
    Fix $\lambda > 0$.
    Suppose that Assumption~\ref{assum:target_regularity} holds on top of the assumptions of Proposition~\ref{prop:resolvent_deterministic_equivalent}.
    Then, the test risk $R_{\lambda,\hat{\Sigma}}$ of QKRR has the following deterministic equivalent:
    \begin{align}
        R_{\lambda,\hat{\Sigma}}
        \asymp
        R^{\mathrm{DE}}_{\lambda,\Sigma}
        &:=
        \underbrace{\frac{\kappa_\lambda^2}{1 - \eta_\lambda} \bs{\beta}_*\T (\Sigma + \kappa_\lambda I_p)^{-2} \bs{\beta}_*}_{\text{DE of Bias}} \nonumber\\
        &\quad+ \underbrace{\sigma^2 \frac{\eta_\lambda}{1 - \eta_\lambda}}_{\text{DE of Variance}}
        +\;\; \sigma^2 \,,
        \label{eq:test_risk_DE}
    \end{align}
    where ``DE'' stands for deterministic equivalent, $\kappa_\lambda$ is the unique solution to the self-consistent equation~\eqref{eq:self-consistent_eq}, and $\eta_\lambda$ is the normalized (second-order) effective degrees of freedom defined by
    \begin{align}
        \eta_\lambda
        :=
        \frac{1}{N_{\mathrm{tr}}}
        \Tr[\Sigma^2 (\Sigma + \kappa_\lambda I_p)^{-2}] \,.
    \end{align}
\end{theorem}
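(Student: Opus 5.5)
The proof treats the bias and variance terms of Eq.~\eqref{eq:bias_variance_decomposition} separately. In each, the second-order resolvent is written as a derivative of a first-order resolvent with respect to an auxiliary parameter. Proposition~\ref{prop:resolvent_deterministic_equivalent} is then applied uniformly on a complex neighbourhood of that parameter, and Vitali's theorem lets us exchange the limit with the derivative. Throughout, the randomness is only in $\hat\Sigma$; the noise expectation has already been taken in Eqs.~\eqref{eq:bias_term_in_risk}--\eqref{eq:variance_term_in_risk}.

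\textbf{Variance.} Since $\hat\Sigma(\hat\Sigma+\lambda I_p)^{-2} = (\hat\Sigma+\lambda I_p)^{-1} - \lambda(\hat\Sigma+\lambda I_p)^{-2}$, we can write
\begin{align}
\calV_{\lambda,\hat\Sigma} = \frac{\sigma^2}{N_{\mathrm{tr}}}\Tr\bigl[\Sigma(\hat\Sigma+\lambda I_p)^{-1}\bigr] + \frac{\sigma^2\lambda}{N_{\mathrm{tr}}}\,\partial_\lambda \Tr\bigl[\Sigma(\hat\Sigma+\lambda I_p)^{-1}\bigr].
\end{align}
Set $g_p(\lambda):=\frac{1}{N_{\mathrm{tr}}}\Tr[\Sigma\lambda(\hat\Sigma+\lambda I_p)^{-1}]$. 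This is the normalized-trace interface of Corollary~\ref{cor:resolvent_scalar_interfaces} with $D_p=\frac{p}{N_{\mathrm{tr}}}\Sigma$, which is bounded in operator norm because $\Tr\Sigma=1$. Its deterministic counterpart is $\bar g_p(\lambda)=\frac{1}{N_{\mathrm{tr}}}\Tr[\Sigma\kappa_\lambda(\Sigma+\kappa_\lambda I_p)^{-1}]$.

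Both $g_p$ and $\bar g_p$ extend holomorphically to a domain in $\{\Re\lambda>0\}$. On compact subsets of that domain they are uniformly bounded, using $\|(\hat\Sigma+\lambda I_p)^{-1}\|\le 1/\Re\lambda$ together with analyticity of $\kappa_\lambda$ via the implicit function theorem. Almost-sure pointwise convergence holds on a countable set with an accumulation point, so Vitali's theorem upgrades it to locally uniform convergence, and derivatives converge as well.

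Then $\calV \asymp \sigma^2[\bar g_p/\lambda + \lambda\partial_\lambda(\bar g_p/\lambda)]$, and it remains to simplify. Differentiating the self-consistent equation~\eqref{eq:self-consistent_eq} gives
\begin{align}
\partial_\lambda\kappa_\lambda = \frac{1}{1-\eta_\lambda}.
\end{align}
To see this, write the equation as $\kappa = \lambda + \frac{1}{N_{\mathrm{tr}}}\Tr[\Sigma\kappa(\Sigma+\kappa I_p)^{-1}]$ and note that $\frac{d}{d\kappa}\bigl[\kappa\Sigma(\Sigma+\kappa I_p)^{-1}\bigr]=\Sigma^2(\Sigma+\kappa I_p)^{-2}$. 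Substituting yields $\sigma^2\eta_\lambda/(1-\eta_\lambda)$.

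\textbf{Bias.} Introduce $\Sigma$-tilted resolvents $(\hat\Sigma+\lambda I_p+\lambda t\Sigma)^{-1}$. Differentiating at $t=0$ gives
\begin{align}
\partial_t\bigl[\lambda(\hat\Sigma+\lambda I_p+\lambda t\Sigma)^{-1}\bigr]_{t=0} = -\lambda^2(\hat\Sigma+\lambda I_p)^{-1}\Sigma(\hat\Sigma+\lambda I_p)^{-1},
\end{align}
so $\calB_{\lambda,\hat\Sigma} = -\partial_t\bigl[\bth_*\T \lambda(\hat\Sigma+\lambda(I_p+t\Sigma))^{-1}\bth_*\bigr]_{t=0}$.

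To reduce to the proposition, factor $\hat\Sigma+\lambda(I_p+t\Sigma)=A_t^{1/2}(A_t^{-1/2}\hat\Sigma A_t^{-1/2}+\lambda I_p)A_t^{1/2}$ with $A_t=I_p+t\Sigma$. The whitened features $A_t^{-1/2}\bx_i$ still satisfy Assumption~\ref{assum:feature_generation}(q), since $\|A_t^{-1/2}\|\le C$ for small real or complex $t$ (treating the complex case carefully), and their population covariance is $\Sigma_t=A_t^{-1/2}\Sigma A_t^{-1/2}$. The bilinear interface applied with $u_p=v_p=A_t^{-1/2}\bth_*$, which is bounded by Assumption~\ref{assum:target_regularity}, gives a deterministic equivalent
\begin{align}
\bth_*\T\kappa_t(\Sigma+\kappa_t(I_p+t\Sigma))^{-1}\bth_*,
\end{align}
where $\kappa_t$ solves~\eqref{eq:self-consistent_eq} with $\Sigma_t$ in place of $\Sigma$.

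The same Vitali argument in $t$ justifies differentiating at $t=0$. This uses holomorphy in $t$ on a small disc together with a uniform bound of order $\|\bth_*\|^2$. Computing $\partial_t\kappa_t|_{t=0}$ from the implicit equation and combining terms should collapse to $\frac{\kappa_\lambda^2}{1-\eta_\lambda}\bth_*\T\Sigma(\Sigma+\kappa_\lambda I_p)^{-2}\bth_*$, which equals the stated bias term once we substitute $\bs\beta_*=\Sigma^{1/2}\bth_*$.

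\textbf{Main obstacle.} The hardest step is justifying the uniform-in-parameter application of Proposition~\ref{prop:resolvent_deterministic_equivalent}, because the proposition is stated only for fixed real $\lambda>0$ and almost-sure convergence. Three things are needed:
\begin{itemize}
\item almost-sure convergence simultaneously on a countable dense set of parameters, obtained by a countable union of null sets;
\item holomorphy and uniform local boundedness of both the random and the deterministic sides on a complex neighbourhood, in particular analyticity and nonvanishing of $\Re\kappa$ there;
\item for the bias, control that the tilted features keep the relaxed Lipschitz constant $\order{\poly(\log p)}$.
\end{itemize}
If the complex extension of the proposition is unavailable, a fallback works on real parameters only. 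Monotonicity and convexity in $\lambda$ of the traces, plus equicontinuity of their derivatives, would allow convergence of functions to transfer to convergence of derivatives for convex sequences.
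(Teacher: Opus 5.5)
Your route is the paper's. The variance is $\sigma^2\partial_z\{\frac{1}{N_{\mathrm{tr}}}\Tr[\Sigma z(\hat\Sigma+zI_p)^{-1}]\}$ at $z=\lambda$, with deterministic side $\sigma^2(\kappa_z-z)$. The bias goes through the tilt $\hat\Sigma+\lambda I_p+\lambda J\Sigma$ and the whitening $(I_p+J\Sigma)^{-1/2}$. Both use the normalized-trace and bilinear interfaces on a countable set of real parameters, then Vitali. Your $\partial_\lambda\kappa_\lambda=1/(1-\eta_\lambda)$ and the final bias and variance expressions are correct. However, the step you list as the main obstacle is left open, and the tool you name does not close it. Vitali needs one domain $D$, independent of $p$, on which every deterministic side ($z\mapsto\kappa_{z,p}$, $J\mapsto\kappa_{\lambda,J,p}$ and the functions built from them) is analytic and bounded uniformly in $p$ on compacts. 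The analytic implicit function theorem applied separately for each $p$ gives a neighbourhood whose radius, and whose bound on $|\kappa|$, may shrink with $p$. So the hypotheses of Vitali are not verified.

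The missing idea is uniform non-degeneracy of the self-consistent equation at fixed $\lambda$.
\begin{itemize}
\item Write $H_p(J,\kappa):=\kappa-\lambda-\frac{\kappa}{N_{\mathrm{tr}}}\Tr[\Sigma\{\Sigma+\kappa(I_p+J\Sigma)\}^{-1}]$. Then $\partial_\kappa H_p(0,\kappa_{\lambda,p})=1-\eta_{\lambda,p}$.
\item From $\mathrm{df}_2\le\mathrm{df}_1$ and the equation itself, $1-\eta_{\lambda,p}\ge\lambda/\kappa_{\lambda,p}$. Since $\kappa_{\lambda,p}\le\lambda+\Tr\Sigma/N_{\mathrm{tr}}\le\lambda+\Gamma M_\Sigma$, with $\Gamma\ge\sup_p p/N_{\mathrm{tr}}$ and $M_\Sigma\ge\sup_p\norm{\Sigma}_\infty$, this gives $1-\eta_{\lambda,p}\ge\lambda/(\lambda+\Gamma M_\Sigma)$ uniformly in $p$.
\item On a fixed polydisc around $(0,\kappa_{\lambda,p})$, all denominators satisfy $|\xi_i+\kappa(1+J\xi_i)|\ge\lambda/2$. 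This bounds $\partial_\kappa^2H_p$ and $\partial_JH_p$ uniformly in $p$.
\item A quantitative implicit-function argument (Rouch\'e applied twice, Lemma~\ref{lemma:uniform_analytic_implicit_branch}) then gives, for $|J|<r$ and $|z-\lambda|<\delta$ with $r,\delta$ independent of $p$, a unique simple zero in a disc of $p$-independent radius around $\kappa_{\lambda,p}$. This zero is analytic and uniformly bounded.
\item At real parameters you must still identify this branch with the root used in Proposition~\ref{prop:resolvent_deterministic_equivalent}. Conjugation symmetry plus uniqueness in the disc make it real and $\ge\lambda/2$, and uniqueness of the positive root (Lemma~\ref{lemma:unique_positive_fixed_point}) finishes the identification.
\end{itemize}

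You never need complex whitened features or a complex version of the proposition. The interfaces are used only at real rational parameters. Analyticity and boundedness of the random sides on the discs follow from $\norm{(\hat\Sigma+zI_p)^{-1}}_\infty\le1/\operatorname{Re}z$ and the Neumann series with $\norm{\lambda J(\hat\Sigma+\lambda I_p)^{-1}\Sigma}_\infty\le|J|M_\Sigma\le1/2$. Your real-variable fallback would also work: the random sides are concave in $z$, and in $J$ by operator convexity of inversion. But it needs the same uniform bound on $1-\eta_{\lambda,p}$ to control the second derivatives of the deterministic sides.
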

The same deterministic-equivalent machinery also yields the deterministic equivalent of the training error; see Appendix~\ref{appdx:sec:training_error_DE}.

Since Theorem~\ref{thm:test_risk_DE} holds pointwise for each fixed $\lambda > 0$, the normalization in Eq.~\eqref{eq:pure_state_kernel_normalization} has the following direct consequence of Theorem~\ref{thm:test_risk_DE}.
\begin{corollary}[Fixed-ridge Consequence of Pure-state Kernel Normalization]
\label{cor:fixed_ridge_normalized_quantum_kernel}
    Consider the normalization in Eq.~\eqref{eq:pure_state_kernel_normalization}. For every fixed $\lambda > 0$,
    \begin{align}
        \kappa_\lambda \longrightarrow \lambda \,,
        \quad
        \eta_\lambda \longrightarrow 0
        \quad\text{as} \;\; p \to \infty.
    \end{align}
    As a result, the deterministic equivalent of the test risk satisfies
    \begin{align}
        R^{\mathrm{DE}}_{\lambda,\Sigma}
        -
        \lambda^2 \bs{\beta}_*\T (\Sigma + \lambda I_p)^{-2} \bs{\beta}_* - \sigma^2
        \longrightarrow 0
        \;\;\text{as} \;\; p \to \infty.
    \end{align}
\end{corollary}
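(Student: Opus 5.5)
The plan is to control $\kappa_\lambda$ and $\eta_\lambda$ with explicit bounds of order $1/N_{\mathrm{tr}}$, using only the normalization $\Tr[\Sigma]=1$ from Eq.~\eqref{eq:pure_state_kernel_normalization}. Because $p/N_{\mathrm{tr}}\to\gamma\in(0,\infty)$, the limit $p\to\infty$ forces $N_{\mathrm{tr}}\to\infty$. The remaining work is to substitute these bounds into $R^{\mathrm{DE}}_{\lambda,\Sigma}$ from Theorem~\ref{thm:test_risk_DE} and show that each resulting difference vanishes. No random-matrix input is needed here, since the statement concerns only the deterministic quantities.

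\emph{Step 1 ($\kappa_\lambda\to\lambda$).} We already know $\kappa_\lambda>\lambda$ from Lemma~\ref{lemma:unique_positive_fixed_point}. Since $\Sigma\succ 0$, we have $(\Sigma+\kappa_\lambda I_p)^{-1}\Sigma\preceq \Sigma/\kappa_\lambda$. Plugging this into the self-consistent equation~\eqref{eq:self-consistent_eq} gives
\begin{align}
1-\frac{\lambda}{\kappa_\lambda}=\frac{1}{N_{\mathrm{tr}}}\Tr\left[(\Sigma+\kappa_\lambda I_p)^{-1}\Sigma\right]\le \frac{\Tr[\Sigma]}{N_{\mathrm{tr}}\kappa_\lambda}=\frac{1}{N_{\mathrm{tr}}\kappa_\lambda}.
\end{align}
Multiplying through by $\kappa_\lambda$ yields $0<\kappa_\lambda-\lambda\le 1/N_{\mathrm{tr}}\to 0$.

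\emph{Step 2 ($\eta_\lambda\to0$).} Each summand satisfies $\xi_i^2/(\xi_i+\kappa_\lambda)^2\le \xi_i^2/\kappa_\lambda^2$. Summing, and using $\Tr[\Sigma^2]\le(\Tr[\Sigma])^2=1$ and $\kappa_\lambda>\lambda$, we get
\begin{align}
\eta_\lambda\le \frac{1}{N_{\mathrm{tr}}\lambda^2}\to0.
\end{align}
In particular $\eta_\lambda<1/2$ for all large $p$, so the factors $1/(1-\eta_\lambda)$ are uniformly bounded. The variance DE $\sigma^2\eta_\lambda/(1-\eta_\lambda)$ then tends to $0$.

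\emph{Step 3 (bias DE).} I expect this to be the only step requiring care, because $\bs{\beta}_*$ is not assumed bounded directly. Only $\bth_*$ is controlled, through Assumption~\ref{assum:target_regularity}. Writing $\bs{\beta}_*=\Sigma^{1/2}\bth_*$ (recall $\Sigma=\Lambda$), define
\begin{align}
g(\kappa):=\kappa^2\bs{\beta}_*\T(\Sigma+\kappa I_p)^{-2}\bs{\beta}_*=\sum_{i=1}^p \theta_{*,i}^2\,\frac{\kappa^2\xi_i}{(\xi_i+\kappa)^2}.
\end{align}
Each weight satisfies $\kappa^2\xi_i/(\xi_i+\kappa)^2\le \kappa/4$ by AM--GM, so $g(\kappa_\lambda)\le C_3\kappa_\lambda/4$, which stays bounded. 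Differentiating each weight in $\kappa$ gives $2\kappa\xi_i^2/(\xi_i+\kappa)^3\le 2$. Hence $g$ is $2C_3$-Lipschitz on $[\lambda,\infty)$, and $|g(\kappa_\lambda)-g(\lambda)|\le 2C_3(\kappa_\lambda-\lambda)\to0$. Finally, we decompose
\begin{align}
\frac{g(\kappa_\lambda)}{1-\eta_\lambda}-g(\lambda)=\frac{\eta_\lambda}{1-\eta_\lambda}\,g(\kappa_\lambda)+\bigl(g(\kappa_\lambda)-g(\lambda)\bigr).
\end{align}
The first term is a vanishing factor times a bounded quantity, and the second term vanishes by the Lipschitz bound. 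Combined with Step~2 for the variance part, this proves the stated limit for $R^{\mathrm{DE}}_{\lambda,\Sigma}-\lambda^2\bs{\beta}_*\T(\Sigma+\lambda I_p)^{-2}\bs{\beta}_*-\sigma^2$.
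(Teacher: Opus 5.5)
Your proof is correct and follows the paper's route. The bound $0<\kappa_\lambda-\lambda\le \Tr[\Sigma]/N_{\mathrm{tr}}=1/N_{\mathrm{tr}}$ is exactly the paper's argument, and for $\eta_\lambda$ the paper uses $x^2\le x$ to get $\eta_\lambda\le 1-\lambda/\kappa_\lambda\le 1/(N_{\mathrm{tr}}\lambda)$, a minor variant of your $1/(N_{\mathrm{tr}}\lambda^2)$. The paper then only asserts that the limit of $R^{\mathrm{DE}}_{\lambda,\Sigma}$ follows, so your Step~3 fills in a step it leaves implicit: you bound $g(\kappa_\lambda)$ and show $g$ is Lipschitz in $\kappa$ uniformly in $p$ via Assumption~\ref{assum:target_regularity}.
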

The proof is given in Appendix~\ref{appdx:subsec:fixed_ridge_normalized_quantum_kernel}.
For the test-risk DE to reach the Bayes risk $\sigma^2$ in QKRR, a sufficient scaling is $\lambda = \lambda_p \downarrow 0$ and $N_{\mathrm{tr}}\lambda_p \to \infty$ as $p \to \infty$, but this scaling is beyond the scope of Theorem~\ref{thm:test_risk_DE}, where $\lambda$ is fixed.

We now discuss the mechanisms of double descent by analyzing the behavior of the test-risk DE $R^{\mathrm{DE}}_{\lambda,\Sigma}$ at finite $(p,N_{\mathrm{tr}})$ as $\lambda \downarrow 0$ and the implications of the theorem; this does not assert a deterministic equivalent along a sequence $\lambda = \lambda_p \downarrow 0$ as $p \to \infty$.

\paragraph{Effective Degrees of Freedom\,:}
To quantify the model complexity more precisely than the effective feature dimension~$p$, we can consider the \emph{effective degrees of freedom (DOF)}~\cite{hastie2022surprises,bach2024highdimensional,atanasov2024scaling}.
For a generic parameter~$t$, the order-$s$ effective DOF is defined based on the eigenvalues~$\{\xi_i\}_{i=1}^p$ of the population covariance~$\Sigma$:
\begin{align}
    \mathrm{df}_s(t) = \sum_{i=1}^p \qty(\frac{\xi_i}{\xi_i + t})^s = \Tr[\Sigma^s(\Sigma + t I_p)^{-s}] \,.\label{eq:effective_DOF}
\end{align}
This quantity measures the number of eigenmodes of the feature vectors that are effectively learned by the model given the parameter~$t$.
Since $\eta_\lambda$ can be expressed as $\eta_\lambda = \frac{1}{N_{\mathrm{tr}}}\mathrm{df}_2(\kappa_\lambda)$, the term $\eta_\lambda$ captures the effective DOF of the model for $\kappa_\lambda$ relative to the sample size, which plays a central role in characterizing the double-descent peak.
From the definition of the effective DOF and the self-consistent equation~\eqref{eq:self-consistent_eq}, we have $\mathrm{df}_s(\kappa_\lambda) \leq \mathrm{df}_1(\kappa_\lambda) = N_{\mathrm{tr}}(1 - \lambda/\kappa_\lambda) < N_{\mathrm{tr}}$ for every $s\geq1$ and every $\lambda > 0$, and thus $0 \leq \eta_\lambda = \frac{1}{N_{\mathrm{tr}}}\mathrm{df}_2(\kappa_\lambda) < 1$. More quantitatively, Proposition~\ref{prop:effective_regularization_stability} gives, for each fixed $\lambda > 0$, a constant $a_\lambda > 0$ independent of $p$ such that $1 - \eta_\lambda \geq a_\lambda$ uniformly along the high-dimensional sequence; see Eq.~\eqref{eq:uniform_stability_eta}. This guarantees that the denominators in Theorem~\ref{thm:test_risk_DE} stay uniformly away from zero along the high-dimensional sequence at fixed $\lambda$, avoiding any divergence in the test-risk DE. However, as $\lambda \downarrow 0$, the denominator can approach zero at $p = N_{\mathrm{tr}}$ as shown below (c), leading to the double-descent peak.

\paragraph{Ridgeless Limit of the Effective Regularization Parameter\,:}
We show the following in Proposition~\ref{prop:kappa_zero_positive}.
For each fixed finite pair $(p,N_{\mathrm{tr}})$ and $\Sigma \succ 0$, the positive-$\lambda$ solution branch has the following ridgeless limit:
\begin{align}
    \kappa_0
    :=
    \lim_{\lambda \downarrow 0}\kappa_\lambda
    \begin{cases}
        = 0, & p \leq N_{\mathrm{tr}} \,,\\
        > 0, & p > N_{\mathrm{tr}} \,.
    \end{cases}
\end{align}
The solution for $p > N_{\mathrm{tr}}$ is the unique positive solution of
\begin{align}
    \mathrm{df}_1(\kappa_0)
    =
    \Tr[\Sigma(\Sigma + \kappa_0I_p)^{-1}]
    =
    N_{\mathrm{tr}} \,.
    \label{eq:kappa_zero_characterization}
\end{align}
Moreover, $\kappa_\lambda > \kappa_0$ for every $\lambda > 0$.
This positive endpoint in the overparameterized regime $p > N_{\mathrm{tr}}$ acts as an implicit regularization in the ridgeless limit $\lambda \downarrow 0$.

\paragraph{Mechanism behind the Interpolation Peak\,:}
Within the finite-$(p,N_{\mathrm{tr}})$ test-risk DE, the double-descent peak corresponds to a singularity in the variance term of the test risk.
As shown in Theorem~\ref{thm:test_risk_DE}, the variance is proportional to $\frac{\eta_\lambda}{1 - \eta_\lambda}$, which diverges when $\eta_\lambda \to 1$.
Since the ridgeless endpoint is $\kappa_0 = 0$ at $p = N_{\mathrm{tr}}$, $\mathrm{df}_2(\kappa_\lambda)\uparrow p = N_{\mathrm{tr}}$ and hence $\eta_\lambda \uparrow 1$.
More precisely, Appendix~\ref{appdx:sec:properties_of_kappa} shows that
\begin{align}
    \kappa_\lambda
    \propto
    \sqrt{\lambda},
    \qquad
    1 - \eta_\lambda
    \propto
    \sqrt{\lambda} \,.
\end{align}
Consequently, at fixed $(p,N_{\mathrm{tr}})$, the variance contribution grows as $\lambda^{-1/2}$ for $\sigma^2>0$, while the bias prefactor $\kappa_\lambda^2/(1 - \eta_\lambda)$ decays as $\lambda^{1/2}$, yielding the interpolation-peak scaling observed in classical kernel regression models~\cite{defilippis2024dimension}.

\paragraph{Task--model alignment and Bias Term\,:}
For the bias term to be small in the overparameterized regime, the large squared projected target coefficients $\beta_{*,i}^2$ should mainly correspond to eigendirections with large eigenvalues $\xi_i$.
This relationship is referred to as \emph{task--model alignment} in Ref.~\cite{canatar2021spectral,hastie2022surprises}.
Together with the spectral decay discussed below, task--model alignment is a crucial factor in determining the generalization performance of overparameterized models.

\paragraph{Spectral Decay\,:}
Complementary to task--model alignment, the eigenvalue distribution of $\Sigma$ also determines the effective degrees of freedom and thereby controls the variance contribution in Theorem~\ref{thm:test_risk_DE}.
In the context of CML, Refs.~\cite{bartlett2020benign,mallinar2207benign} demonstrate that spectral decay is essential for ridgeless models to generalize well in the overparameterized regime, i.e., to exhibit \emph{benign overfitting}.
In the quantum setting, however, highly expressive circuits approximating Haar-random unitaries tend to produce a flat spectrum with exponentially small eigenvalues.
Such behavior indicates a lack of spectral inductive bias, which makes it difficult to achieve generalization advantage over classical models~\cite{kubler2021inductive,canatar2023bandwidth,tomasi2025benign}.
This motivates designing structured quantum feature maps so that their spectra decay appropriately and their eigendirections align with the target function, thereby enhancing generalization performance.

\section{Experiments}\label{sec:experiment}
In this section, we compare the test-risk DE $R^{\mathrm{DE}}_{\lambda,\Sigma}$ derived in Theorem~\ref{thm:test_risk_DE} against the empirical test error in two complementary numerical studies. First, we use teacher--student models in which datasets are synthetically generated from the same quantum feature maps as the student models, providing a controlled validation of the test-risk DE $R^{\mathrm{DE}}_{\lambda,\Sigma}$ from Theorem~\ref{thm:test_risk_DE}.
Second, we apply the same formula to the Fashion-MNIST dataset as a robustness check; it is impossible to verify the assumptions of Theorem~\ref{thm:test_risk_DE} unless the data-generating process is known, but we can still examine whether the test-risk DE $R^{\mathrm{DE}}_{\lambda,\Sigma}$ provides an accurate prediction of the empirical test error.
We used a Hardware-Efficient Ansatz (HEA) and a Tensor Product Ansatz (TPA) for the quantum feature map.
We performed statevector simulations using \texttt{PennyLane}~\cite{bergholm2022pennylane}, neglecting shot noise in the estimation of the kernel matrix.
The key parameters and settings used across all experiments are summarized in Table~\ref{tab:experiment_settings}.

\begin{table*}[t]
\centering
\setlength{\tabcolsep}{10pt}
\caption{Settings used in the experiments. The number of training samples $N_{\mathrm{tr}}$ is varied to sweep the model complexity ratio $\tilde{\gamma} := p/N_{\mathrm{tr}}$, and the regularization parameter takes values in $\{10^{-10},10^{-8},10^{-6},10^{-4},10^{-2}\}$.}
\label{tab:experiment_settings}
\renewcommand{\arraystretch}{1.05}
\begin{tabular}{llcccccccc}
    \toprule
    Dataset & Ansatz & $n$ & $L$ & $d$ & $p$ & $N_{\mathrm{reps}}$ & $N_{\mathrm{ts}}$ & $N_{\mathrm{est}}$ & $\sigma$ \\
    \midrule
    Fashion-MNIST & HEA & 3 & 3 & 6 & 64 & 100 & 1,000 & 3,000 & $0.5$ (calibrated) \\
    Synthetic & HEA & 3 & 3 & 6 & 64 & 100 & 1,000 & 3,000 & $0.3$ \\
    Synthetic & HEA & 5 & 5 & 10 & 1024 & 30 & 1,000 & 10,000 & $0.3$ \\
    \midrule
    Fashion-MNIST & TPA & 3 & 3 & 6 & 27 & 100 & 1,000 & 3,000 & $0.5$ (calibrated) \\
    Synthetic & TPA & 3 & 3 & 6 & 27 & 100 & 1,000 & 3,000 & $0.3$ \\
    Synthetic & TPA & 5 & 5 & 10 & 243 & 30 & 1,000 & 10,000 & $0.3$ \\
    \bottomrule
\end{tabular}
\end{table*}

\subsection{Experimental Setup}
\paragraph{Quantum Feature Map\,:}
A quantum feature map maps an input vector $\bs{u} \in \bbR^d$ to a quantum state~$\rho(\bs{u}) = U(\bs{u})\dyad{\bs{0}}U\dg(\bs{u})$ in an $n$-qubit Hilbert space.
The kernel is defined as $k(\bs{u}, \bs{u}') = \Tr[\rho(\bs{u}) \rho(\bs{u}')] = |\bra{\bs{0}} U\dg(\bs{u}) U(\bs{u}') \ket{\bs{0}}|^2$.
We used the HEA and the TPA to construct the unitary~$U(\bs{u})$.

The employed HEA consists of $L = n$ layers, where each layer applies a sequence of single-qubit rotation gates ($R_X^{(j)}$ and $R_Z^{(j)}$) followed by a linear chain of CX gates for entanglement. Indices of the rotation gates indicate the qubits on which they act.
Specifically, the unitary~$U(\bs{u})$ is given by
\begin{align}
    U(\bs{u})
    =
    &\Bigg[
        \Bigg( \prod_{j=1}^{n-1} \mathrm{CX}_{j, j+1} \Bigg)
        \Bigg( \prod_{j=1}^d R^{(\omega_j)}_Z(u_j)R^{(\omega_j)}_X(u_j) \Bigg)
    \Bigg]^{L = n} \,,
\end{align}
where $\omega_j := ((j-1) \bmod n) + 1$.
For input vectors with dimension $d > n$, the entries of $\bs{u}$ are cyclically mapped to the qubits, meaning that the $j$-th entry of~$\bs{u}$ is mapped to the $\omega_j$-th qubit.
Fig.~\ref{fig:hardware_efficient_ansatz} illustrates the HEA for $n = 4$ qubits.

The employed TPA consists only of single-qubit rotation gates $R_X^{(j)}$ as follows:
\begin{align}
    U(\bs{u})
    =
    \Bigg[
        \prod_{j=1}^d R^{(\omega_j)}_X(u_j)
    \Bigg]^{L = n} \,.
\end{align}

As discussed in Section~\ref{sec:setting}, the effective feature dimension is $p \leq 4^n$ for the HEA and $p \leq 3^n$ for the TPA.
While $p$ can be smaller than these upper bounds depending on the input distribution, we observed that for the synthetic datasets and Fashion-MNIST with our specific encodings, the effective feature dimension reaches these upper bounds. Thus, we set $p = 4^n$ for the HEA and $p = 3^n$ for the TPA in our experiments.

\begin{figure}[t]
\centering
\begin{quantikz}[row sep=0.1cm, column sep=0.16cm]
\lstick{$q_1$}
    & \qw & \gate{R_X(u_1)}\gategroup[wires=4,steps=7,style={dashed}]{$\times L$}
                            & \gate{R_Z(u_1)} & \gate{R_X(u_5)} & \gate{R_Z(u_5)} & \ctrl{1} & \qw      & \qw       & \qw & \qw \\
\lstick{$q_2$}
    & \qw & \gate{R_X(u_2)} & \gate{R_Z(u_2)} & \gate{R_X(u_6)} & \gate{R_Z(u_6)} & \targ{}  & \ctrl{1} & \qw       & \qw & \qw \\
\lstick{$q_3$}
    & \qw & \gate{R_X(u_3)} & \gate{R_Z(u_3)} & \gate{R_X(u_7)} & \gate{R_Z(u_7)} & \qw      & \targ{}  & \ctrl{1}  & \qw & \qw \\
\lstick{$q_4$}
    & \qw & \gate{R_X(u_4)} & \gate{R_Z(u_4)} & \gate{R_X(u_8)} & \gate{R_Z(u_8)} & \qw      & \qw      & \targ{}   & \qw & \qw
\end{quantikz}
\caption{
    HEA for $n = 4$ qubits. The circuit consists of $L = n$ layers, where each layer applies single-qubit rotations ($R_X$ and $R_Z$) on each qubit, followed by a linear chain of CX gates. The entries of $\bs{u} \in \bbR^{d=2n}$ are cyclically mapped to the qubits. The TPA is obtained by removing the $R_Z$ and CX gates from this circuit, leaving only the $R_X$ gates.
}\label{fig:hardware_efficient_ansatz}
\end{figure}
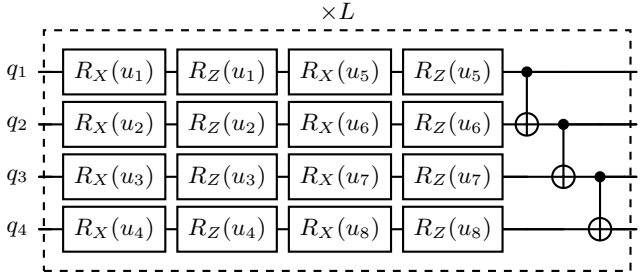

\paragraph{Synthetic Dataset Generation\,:}
To provide a controlled validation of Theorem~\ref{thm:test_risk_DE}, we generated each synthetic dataset from the same quantum feature map as each student model.
The input vector was drawn independently as $\bs{u} \sim \calN(0,I_d)$ with $d = 2n$; hence, the data-generating map $\chi$ in Assumption~\ref{assum:feature_generation} is the identity.
For each ansatz and qubit number $n$, a coefficient vector $\bs{w}_{*,a}$ was independently drawn from the unit sphere in $\bbR^{4^n}$ and used as the normalized-Pauli coefficients of the teacher observable $O_*$. Reusing the notation of Section~\ref{subsec:qkm_setting}, we generated the label as
\begin{align}
    y
    &= \Tr[\rho(\bs{u})O_*] + \vep
    = \bs{w}_{*,a}\T\bs{r}_a(\bs{u}) + \vep
    = \bs{w}_*\T\bs{r}(\bs{u}) + \vep
\end{align}
where the noise $\vep$ was drawn independently from $\calN(0,\sigma^2 = 0.3^2)$.
This is the well-specified quantum regression model introduced in Section~\ref{subsec:qkm_setting}. For the TPA, $\bs{w}_{*,a}$ still lies in the $4^n$-dimensional ambient Pauli-coordinate space, but only its projection $\bs{w}_*$ onto the $p=3^n$ effective feature space contributes to the target. Hence, $\|\bs{w}_*\|_2 \leq \|\bs{w}_{*,a}\|_2 = 1$.
This synthetic dataset generation process satisfies the assumptions of Theorem~\ref{thm:test_risk_DE}.
We performed experiments with $n = 3, 5$ qubits.

\paragraph{Fashion-MNIST Dataset\,:}
We also evaluated the test-risk DE $R^{\mathrm{DE}}_{\lambda,\Sigma}$ on the Fashion-MNIST dataset to examine whether it is informative for real-world structured data. This experiment is treated as an empirical robustness check rather than a direct validation of the theorem, since the data-generating process is unknown.
We selected a binary classification subset consisting of classes 0 (T-shirt) and 1 (Trouser).
The data preprocessing pipeline involved the following steps:
\begin{enumerate}[leftmargin=20pt]
    \item Standardizing the pixel values to zero mean and unit variance.
    \item Reducing the dimensionality via principal component analysis (PCA) to $d = 2n$ components.
\end{enumerate}
We used $n = 3$ qubits for this task. Although this is a classification problem, we treated it as a regression task on the class labels $\{0, 1\}$ for the purpose of analyzing the double descent profile.
Unlike the synthetic dataset, the true noise level is unknown for real-world data. Since we needed to specify the noise variance $\sigma^2$ to plot the test-risk DE $R^{\mathrm{DE}}_{\lambda,\Sigma}$, we calibrated $\sigma$ to $0.5$ so that the test-risk DE curve matches the empirical test error at $\lambda = 10^{-2}$, and fixed it for all values of $\lambda$.

\paragraph{Training and Evaluation\,:}
For each combination of dataset and ansatz, we trained the QKRR model for various numbers of training samples $N_{\mathrm{tr}}$ to sweep the model complexity ratio $\tilde{\gamma} := p/N_{\mathrm{tr}}$ ($p = 4^n$ for HEA and $p = 3^n$ for TPA).
We also varied the regularization parameter~$\lambda$ across a range of values to observe its effect on the double-descent peak.
The empirical test error was calculated as the mean squared error (MSE) over a test dataset of size $N_{\mathrm{ts}} = 1,000$ for the synthetic datasets and Fashion-MNIST.
We computed the empirical test error for each setting by averaging over $N_{\mathrm{reps}} = 100$ different training datasets for synthetic data with $n = 3$ qubits and Fashion-MNIST with $n = 3$ qubits, and $N_{\mathrm{reps}} = 30$ for the synthetic data with $n = 5$ qubits.

\begin{figure*}[t]
    \centering
    \includegraphics[width=0.32\textwidth]{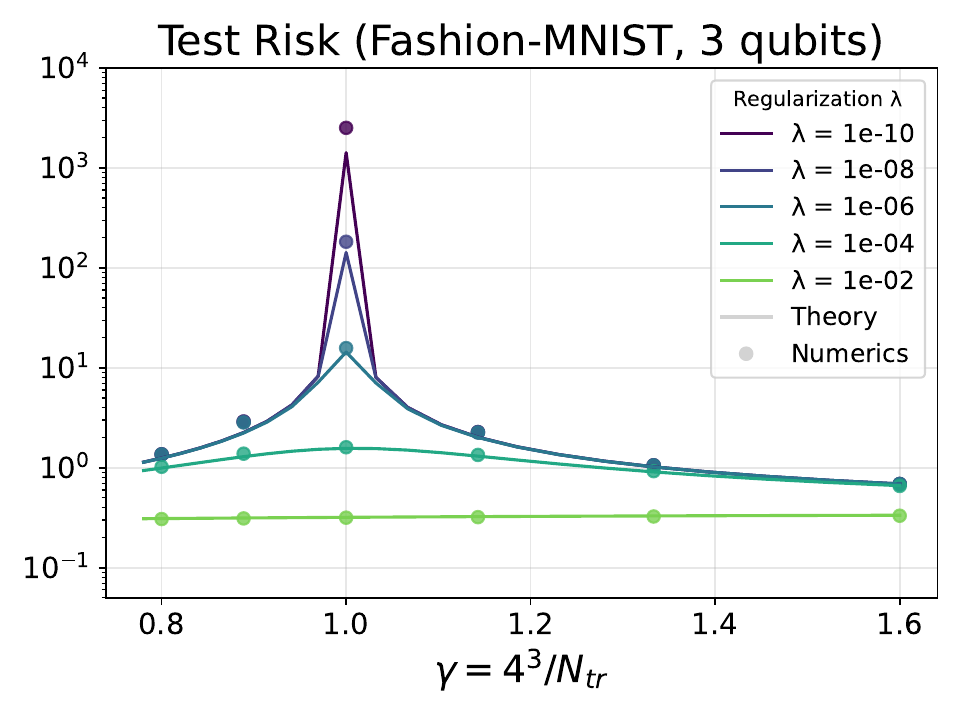} \hfill
    \includegraphics[width=0.32\textwidth]{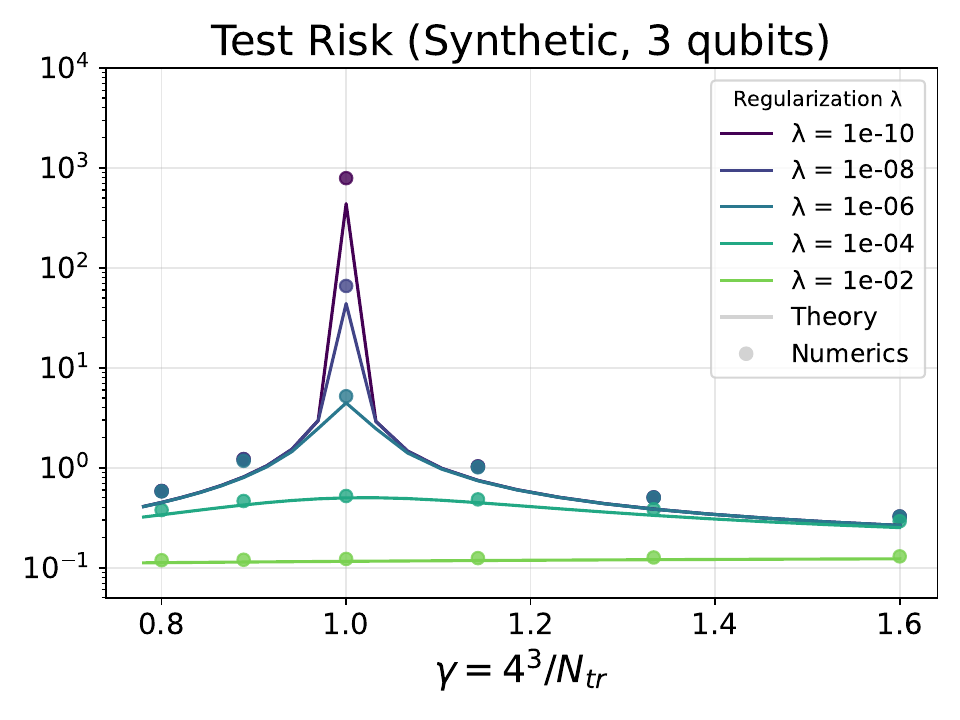} \hfill
    \includegraphics[width=0.32\textwidth]{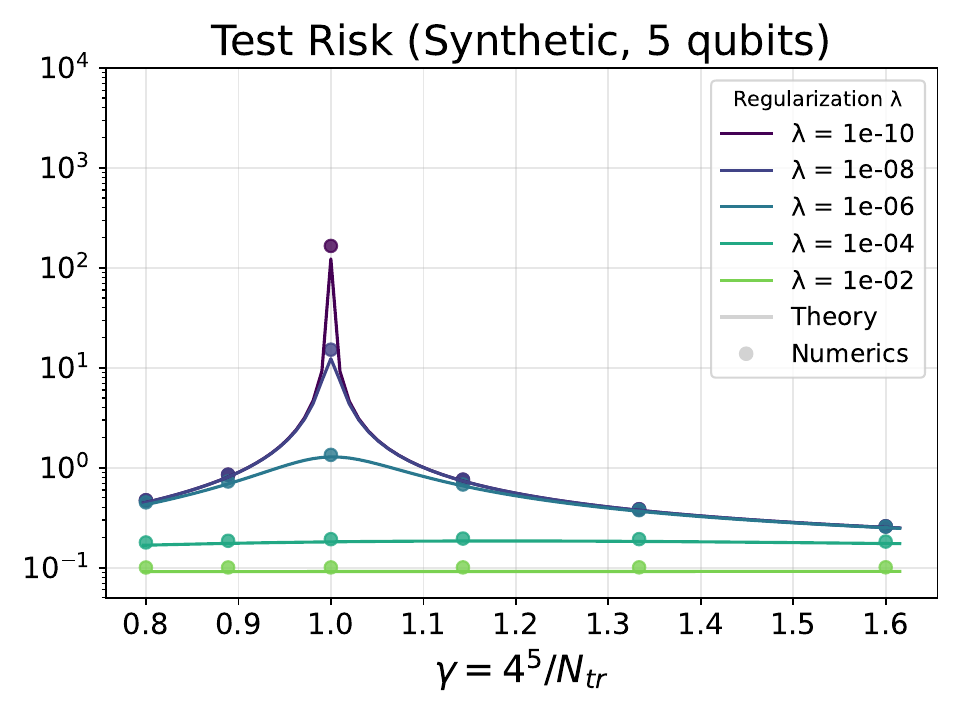}
    \caption{
        Test-risk DE and empirical test error of QKRR with HEA.
        The plots show the test-risk DE (solid lines) and empirical test error (dots) as a function of the model complexity ratio $\tilde{\gamma} := p/N_{\mathrm{tr}} = 4^n/N_{\mathrm{tr}}$.
        \textbf{Left:} Fashion-MNIST dataset (classes 0 vs 1) with $n = 3$ qubits ($p = 64$).
        \textbf{Middle:} Synthetic dataset with $n = 3$ qubits ($p = 64$).
        \textbf{Right:} Synthetic dataset with $n = 5$ qubits ($p = 1024$).
        Solid lines represent the test-risk DE $R^{\mathrm{DE}}_{\lambda,\Sigma}$. Different colors correspond to different regularization-parameter values $\lambda$. The interpolation threshold corresponds to $N_{\mathrm{tr}} = 4^n$.
        Dots represent empirical test errors averaged over multiple trials: $N_{\mathrm{reps}} = 100$ for Fashion-MNIST dataset and 3-qubit synthetic dataset, $N_{\mathrm{reps}} = 30$ for 5-qubit synthetic dataset.
        % Error bars represent the standard deviation across independent training datasets.
    }
    \label{fig:test_risk_HE}
\end{figure*}

\subsection{Estimation of Population Statistics}\label{subsec:estimation_sigma_beta}
To evaluate the test-risk DE $R^{\mathrm{DE}}_{\lambda,\Sigma}$, one requires explicit knowledge of the population covariance matrix $\Sigma$ and the projected target vector $\bs{\beta}_*$ (i.e., the representation of the target function $f_*$ in the kernel eigenbasis). For complex data distributions such as Fashion-MNIST or for quantum feature maps, these quantities are not analytically available.

To address this issue, we estimated $\Sigma$ and $\bs{\beta}_*$ using a procedure similar to Algorithm 1 in Ref.~\cite{defilippis2024dimension} with a large \textit{estimation dataset} $\calD_{\mathrm{est}} = \{(\bs{u}_j, y_j)\}_{j=1}^{N_{\mathrm{est}}}$ that is independent of the training and test data.
For the synthetic dataset tasks, we used $N_{\mathrm{est}} = 3,000$ samples for $3$-qubit tasks or $N_{\mathrm{est}} = 10,000$ samples for $5$-qubit tasks. For the Fashion-MNIST $3$-qubit classification tasks, we used $N_{\mathrm{est}} = 3,000$ samples.
The estimation procedure is based on the eigendecomposition of the kernel matrix $K_{\mathrm{est}}$ computed on the estimation dataset.
We present the final estimators here, and their derivations are detailed in Appendix~\ref{appdx:sec:estimation_Sigma_beta}.
For the experiments reported here, we numerically verified that the rank of $K_{\mathrm{est}}$ agrees with the effective dimension $p$. Since $N_{\mathrm{est}} > p$, we use the compact rank-$p$ eigendecomposition
\begin{align}
    \frac{1}{N_{\mathrm{est}}} K_{\mathrm{est}}
    &= H_p \Lambda_p^{(\mathrm{est})} H_p\T \,, \\
    \Lambda_p^{(\mathrm{est})}
    &:= \diag(\xi_1^{(\mathrm{est})},\ldots,\xi_p^{(\mathrm{est})}) \,,
\end{align}
where $\xi_i^{(\mathrm{est})}$ are the eigenvalues of $K_{\mathrm{est}}/N_{\mathrm{est}}$ and $H_p \in \bbR^{N_{\mathrm{est}} \times p}$ is the matrix whose orthonormal columns are the corresponding eigenvectors.
The population covariance matrix $\Sigma$ was estimated by the diagonal matrix $\Lambda_p^{(\mathrm{est})}$.
To estimate the projected target vector $\bs{\beta}_*$, we need the noise-free target values $f_*(\bs{u}_j)$.
For synthetic data, we use the available noise-free teacher values $f_*(\bs{u}_j)$; for Fashion-MNIST, the observed class labels are used as a proxy for the unknown noise-free target values.
With $\bs{t}_{\mathrm{est}}=(t_1,\ldots,t_{N_{\mathrm{est}}})\T$, where $t_j=f_*(\bs{u}_j)$ for synthetic data and $t_j=y_j$ for Fashion-MNIST, the projected target vector is estimated as
\begin{align}
    \bs{\beta}^{(\mathrm{est})}_*
    := \frac{1}{\sqrt{N_{\mathrm{est}}}}
    H_p\T\bs{t}_{\mathrm{est}}
    \in \bbR^{p} \,.
\end{align}
These estimates $\Lambda_p^{(\mathrm{est})}$ and $\bs{\beta}^{(\mathrm{est})}_*$ were then used in place of the true $\Sigma$ and $\bs{\beta}_*$ to plot the test-risk DE $R^{\mathrm{DE}}_{\lambda,\Sigma}$.

We note that Appendix~\ref{appdx:sec:training_error_DE} discusses the generalized cross-validation (GCV) risk estimator, which is expected to be asymptotically equivalent to the test risk under the same assumptions and is computed only from the training dataset without requiring an additional estimation dataset.

\subsection{Results and Discussion}
\begin{figure*}[t]
    \centering
    \includegraphics[width=0.32\textwidth]{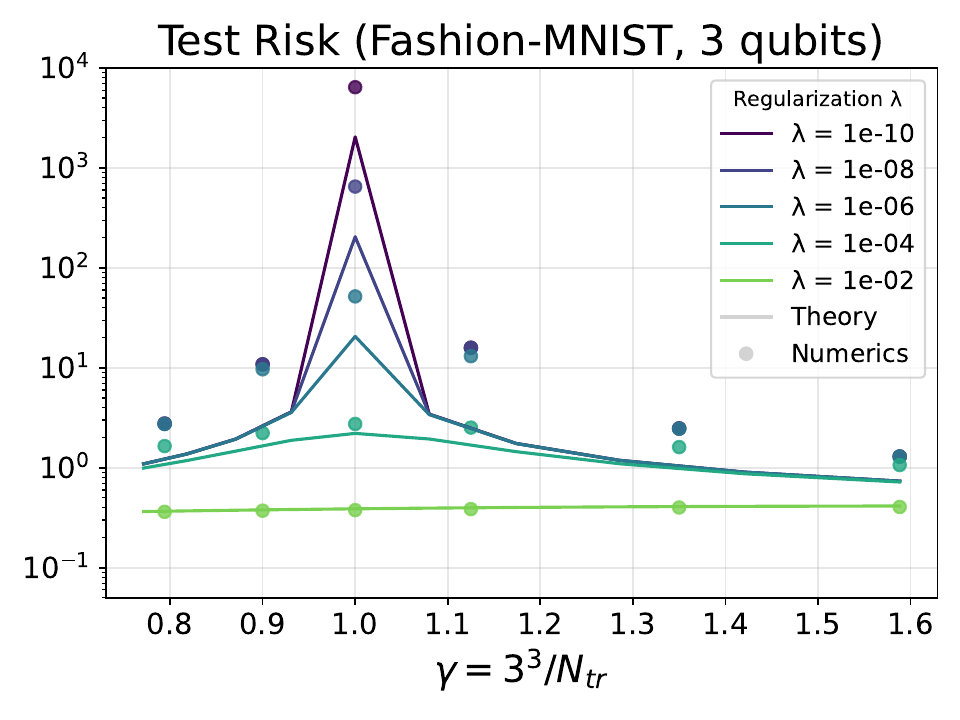} \hfill
    \includegraphics[width=0.32\textwidth]{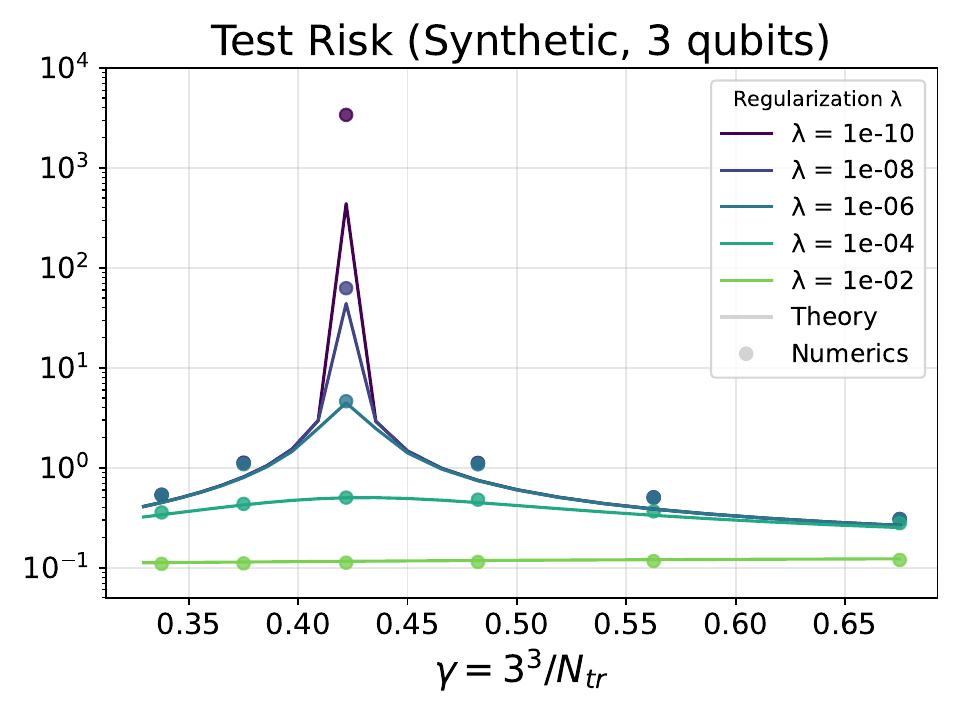} \hfill
    \includegraphics[width=0.32\textwidth]{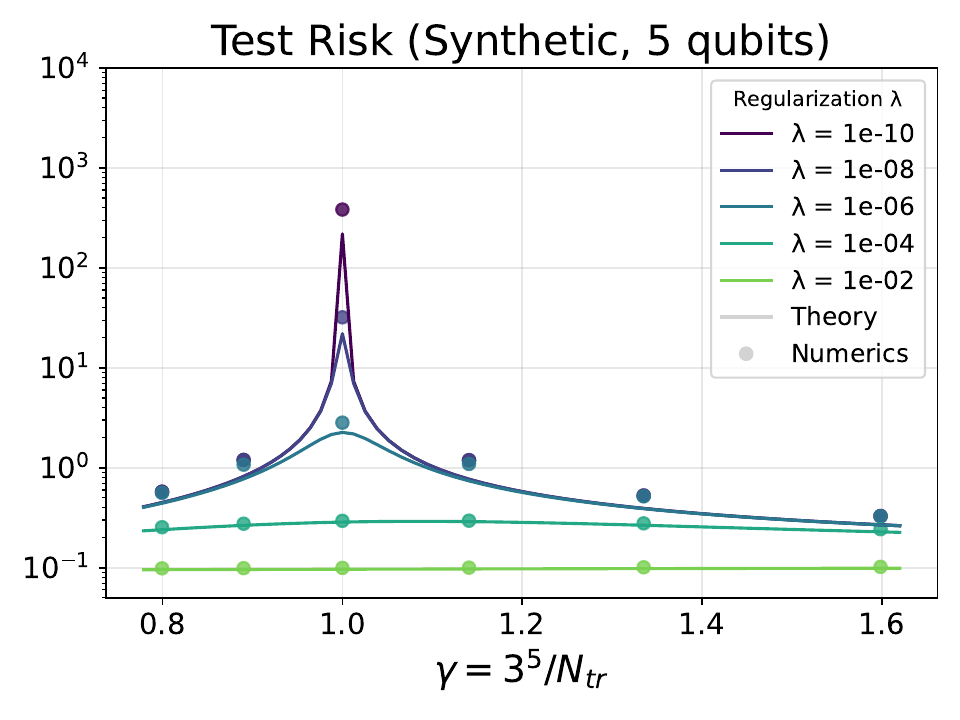}
    \caption{
        Test-risk DE and empirical test error of QKRR with TPA.
        The plots show the test-risk DE (solid lines) and empirical test error (dots) as a function of the model complexity ratio $\tilde{\gamma} := p/N_{\mathrm{tr}} = 3^n/N_{\mathrm{tr}}$.
        \textbf{Left:} Fashion-MNIST dataset (classes 0 vs 1) with $n = 3$ qubits ($p = 27$).
        \textbf{Middle:} Synthetic dataset with $n = 3$ qubits ($p = 27$).
        \textbf{Right:} Synthetic dataset with $n = 5$ qubits ($p = 243$).
        Solid lines represent the test-risk DE $R^{\mathrm{DE}}_{\lambda,\Sigma}$. Different colors correspond to different regularization-parameter values $\lambda$. The interpolation threshold corresponds to $N_{\mathrm{tr}} = 3^n$.
        Dots represent empirical test errors averaged over multiple trials: $N_{\mathrm{reps}} = 100$ for Fashion-MNIST dataset and 3-qubit synthetic dataset, $N_{\mathrm{reps}} = 30$ for 5-qubit synthetic dataset.
        % Error bars represent the standard deviation across independent training datasets.
    }
    \label{fig:test_risk_TP}
\end{figure*}

Fig.~\ref{fig:test_risk_HE} presents the comparison between the test-risk DE $R^{\mathrm{DE}}_{\lambda,\Sigma}$ (solid lines) and the empirical test error (dots) for the HEA. Fig.~\ref{fig:test_risk_TP} shows the same comparison for the TPA.
Different colors indicate different regularization-parameter values $\lambda \in \{10^{-10}, 10^{-8}, 10^{-6}, 10^{-4}, 10^{-2}\}$.
Each figure covers three scenarios: (Left) Fashion-MNIST classification (0 vs 1) using 3 qubits; (Middle) synthetic dataset regression using 3 qubits; and (Right) synthetic dataset regression using 5 qubits.
The effective feature dimension~$p$ is determined by the choice of ansatz and the input distribution: $p = 4^n$ for the HEA and $p = 3^n$ for the TPA in our experiments.
For the HEA, we have $p = 64$ for $n = 3$ qubits and $p = 1024$ for $n = 5$ qubits. For the TPA, we have $p = 27$ for $n = 3$ qubits and $p = 243$ for $n = 5$ qubits.
The $x$-axis represents the ratio of the effective feature dimension to the number of training samples, $\tilde{\gamma} := p/N_{\mathrm{tr}}$.
Since the number of qubits (and thus $p$) is fixed for a given circuit, varying~$\tilde{\gamma}$ corresponds to varying the training sample size $N_{\mathrm{tr}}$.

\paragraph{Double Descent and Regularization\,:}
In all scenarios, both the empirical test error and the test-risk DE exhibit the characteristic non-monotonic behavior of double descent. A peak in the test risk is observed at the interpolation threshold $N_{\mathrm{tr}} = p$, particularly for small regularization-parameter values (e.g., $\lambda = 10^{-10}$). Within the test-risk DE, this double-descent peak arises from the divergence of the variance term when the normalized effective DOF $\eta_\lambda$ approaches~$1$. As predicted by our theory, increasing the explicit regularization $\lambda$ effectively suppresses the double-descent peak, smoothing the transition between the underparameterized and overparameterized regimes.
We provide a plot of the normalized effective DOF $\eta_\lambda$ for the synthetic 3-qubit dataset in Appendix~\ref{appdx:sec:effective_DOF}, which shows that $\eta_\lambda$ approaches~$1$ near the interpolation threshold for small $\lambda$.

\paragraph{Agreement between Test-Risk DE and Empirical Test Errors\,:}
The test-risk DE curves show close agreement with the empirical test errors across all settings, capturing both the location and the magnitude of the double-descent peak.
Notably, the agreement improves as the effective feature dimension increases. Comparing the middle plot ($n = 3, p = 64$) with the right plot ($n = 5, p = 1024$) in Fig.~\ref{fig:test_risk_HE} for the HEA, we observe that the 5-qubit case exhibits a more accurate prediction. Such improved agreement is also seen in Fig.~\ref{fig:test_risk_TP} for the TPA, where the 5-qubit case ($p = 243$) shows a closer match than the 3-qubit case ($p = 27$). For the HEA, even at the modest scale of $n = 5$ qubits, the effective feature dimension seems to be sufficiently large for the test-risk DE to provide a highly accurate description of the empirical test error. For the TPA, the agreement in the 5-qubit case is not as accurate as in the HEA, which may be attributed to the smaller effective feature dimension.

\paragraph{Robustness Check on Fashion-MNIST\,:}
The left panels of Figs.~\ref{fig:test_risk_HE} and~\ref{fig:test_risk_TP} provide a robustness check on Fashion-MNIST.
After calibrating an effective noise parameter $\sigma$, the test-risk DE curves reproduce several qualitative features of the empirical test error profile, including the double-descent peak and its suppression by explicit regularization.
This suggests that the test risk behavior identified by the theory can remain informative beyond the controlled setting.

\vspace{-7pt}
\section{Conclusion}\label{sec:conclusion}
\paragraph{Summary of Contributions\,:}
In this work, we derived a rigorous deterministic equivalent of the test risk of quantum kernel ridge regression (QKRR) at fixed $\lambda > 0$ using the effective quantum feature map and deterministic equivalents for concentrated feature vectors from random matrix theory (RMT).
This analytical expression accurately captures the double-descent curve, explaining how the interplay between the effective feature dimension $p\leq 4^n$, the number of training samples $N_{\mathrm{tr}}$, the population covariance matrix $\Sigma$, and the regularization parameter $\lambda$ determines the model's performance.

We performed numerical experiments using HEA and TPA data-encoding circuits on synthetic data with 3 and 5 qubits and on Fashion-MNIST with 3 qubits to evaluate how accurately the test-risk DE predicts the empirical test error.
The first experiment is a controlled teacher--student experiment where the targets are synthetically generated from the same quantum feature maps as the student models, and the second is a robustness check on Fashion-MNIST, where the assumptions in Theorem~\ref{thm:test_risk_DE} cannot be rigorously verified.
Close agreement between the test-risk DE and the empirical test errors was observed even for 3- and 5-qubit circuits, with particularly strong agreement for the HEA.
This underscores the practical utility of RMT in evaluating QKRR models without the need for computationally expensive exhaustive training runs.

\paragraph{Limitations\,:}
A central limitation is that Theorem~\ref{thm:test_risk_DE} is pointwise in each fixed $\lambda > 0$.
Under the pure-state normalization in Eq.~\eqref{eq:pure_state_kernel_normalization} and Assumption~\ref{assum:target_regularity}, for both the bias and variance contributions to vanish, we will need to consider the regime where $\lambda_p$ decays with $p$, as is considered in the CML literature~\cite{atanasov2024scaling,defilippis2024dimension,MisiakiewiczSaeed2024}.
In our analysis, we have focused on the case of fixed positive regularization so that we can rigorously derive a deterministic equivalent of the test risk.
Characterizing dimension-dependent regularization is important for understanding benign overfitting and \emph{scaling laws} in QKRR.
Future work should aim to extend our analysis to cover this regime, potentially by developing new techniques in random matrix theory that can handle vanishing regularization parameter sequences.

While our framework offers significant insights, it assumes that the kernel matrix can be estimated without noise, which is not the case in real quantum hardware.
In practice, deep quantum circuits often suffer from quantum kernel concentration, where kernel values concentrate exponentially around their mean as the number of qubits increases~\cite{thanasilp2022exponential}, requiring an exponential number of shots to resolve. This complicates the numerical estimation of the kernel in high-qubit regimes. Future work should integrate the effects of quantum kernel concentration and measurement noise into the RMT analysis to provide a more realistic account of scalability.

\paragraph{Future Directions\,:}
An important avenue for future research is to evaluate the effect of quantum-circuit properties (e.g., entanglement, circuit expressibility) on the spectral decay of the population covariance matrix. Since the test-risk DE is governed by this spectrum, understanding how these properties correlate with spectral decay would enable the principled design of quantum kernels that might lead to benign overfitting.
This direction has been particularly investigated in terms of the bandwidth of quantum kernels in~\cite{shaydulin2022importance,canatar2023bandwidth,tomasi2025benign}.
Pursuing this line of inquiry using our framework could ultimately enable a comparative analysis of benign overfitting in quantum versus classical kernels and help delineate conditions under which quantum kernels offer a genuine generalization advantage in the overparameterized regime.

Other promising directions for future work include the study of variational quantum machine learning (VQML)~\cite{mitarai2018quantum,farhi2018classification,benedetti2019parameterized,schuld2020circuit}.
In VQML, the inherent unitarity of trainable quantum gates may impose implicit constraints distinct from those of classical regularization. Understanding how this \emph{unitarity-induced regularization}~\cite{chen2021expressibility} interacts with double descent is a compelling open question.
Additionally, investigating \emph{regularization-wise}~\cite{yilmaz2022regularizationwise} and \emph{epoch-wise}~\cite{nakkiran2021deep} double descent in the quantum context might yield further insights into the generalization dynamics of QML models.

\section*{Acknowledgments}
We thank Cosme Louart for fruitful discussions on random matrix theory.
We thank Pablo Rodriguez-Grasa for bringing Ref.~\cite{pablo2026pacbayes} to our attention and for helpful comments on PAC-Bayesian bounds.
K.K. was supported by JST SPRING, Grant Number JPMJSP2108.
This work was supported by the Center of Innovation for Sustainable Quantum AI, JST Grant Number JPMJPF2221.

\section*{Data Availability}
The code for the numerical experiments is available at \href{https://github.com/kkensuke/DD_in_QKRR_code}{github.com/kkensuke/DD\_in\_QKRR\_code}.

% \clearpage
\makeatletter
\def\bibsection{\section*{\refname}}
\makeatother
\bibliography{ref}

\onecolumngrid
\appendix
\newpage
\newpage
\section{Table of Notations}\label{appdx:sec:notation_table}
\begin{longtable}{|c|l|}
\caption{Summary of key notations used in this work.}\label{tab:notation_table}\\
\hline
\textbf{Symbol} & \textbf{Description} \\
\hline
\endfirsthead
\multicolumn{2}{c}
{{\bfseries \tablename\ \thetable{} -- continued from previous page}} \\
\hline
\textbf{Symbol} & \textbf{Description} \\
\hline
\endhead
\hline \multicolumn{2}{r}{{Continued on next page}} \\
\endfoot
\hline
\endlastfoot

$[m]$ & The set $\{1,2,\ldots,m\}$ for a positive integer $m$. \\

$\norm{\cdot}_2$ & Euclidean ($\ell_2$) norm for finite-dimensional vectors. \\

$\norm{\cdot}_1$ & Trace norm (Schatten $1$ norm) for matrices. \\

$\norm{\cdot}_F$ & Frobenius norm (Schatten $2$ norm) for matrices. \\

$\norm{\cdot}_\infty$ & Spectral norm (Schatten $\infty$ norm) for matrices. \\

$n$ & Number of qubits. \\

$N_{\mathrm{tr}}$ & Number of training samples. \\

$\mu$ & Probability distribution of the input vector $\bs{u}$ on $\calU$. \\

$\{(\bs{u}_i,y_i)\}_{i=1}^{N_{\mathrm{tr}}}$ & Training dataset, where $\bs{u}_i$ is the $i$-th input vector and $y_i$ is its label. \\

$\rho(\bs{u})$ & Density matrix produced from the input vector $\bs{u}$. \\

$\kett{\rho(\bs{u})}$ & Vectorized form of the density operator $\rho(\bs{u})$. \\

$\bs{r}_a(\bs{u})$ & Ambient Pauli feature vector, i.e., the coefficient vector of $\rho(\bs{u})$ in the normalized Pauli basis; $\bs{r}_a(\bs{u})\in\bbR^{4^n}$. \\

$\phi_a(\bs{u})$ & Ambient feature vector produced by the ambient feature map $\phi_a \colon \calU \to \bbR^{p_a}$. For QKRR, $\phi_a(\bs{u}) = \bs{r}_a(\bs{u})$. \\

$\phi(\bs{u})$ & Effective feature vector produced by the effective feature map $\phi \colon \calU \to \bbR^p$. \\
& It reproduces the same kernel as $\phi_a$. For QKRR, $\phi(\bs{u}) = \bs{r}(\bs{u})$. \\

$p_a$ & Ambient feature dimension. For the normalized-Pauli representation used in QKRR, $p_a = 4^n$. \\

$p$ & Effective feature dimension, defined as $p := \rank\E_{\bs{u}\sim\mu}[\phi_a(\bs{u})\phi_a(\bs{u})\T] \leq p_a$. \\

$\bth_a,\bth$ & Model-parameter vectors in the ambient and effective feature spaces, respectively; $\bth_a \in \bbR^{p_a}$ and $\bth \in \bbR^p$. \\

$\bth_{*,a},\bth_*$ & Target vectors in the ambient and effective feature spaces, respectively; $\bth_{*,a} \in \bbR^{p_a}$ and $\bth_* \in \bbR^p$. \\

$y,\vep,\sigma^2$ & Label model $y = f_*(\bs{u}) + \vep$, where $f_*(\bs{u}) = \bth_{*,a}\T\phi_a(\bs{u}) = \bth_*\T\phi(\bs{u})$, $\E[\vep] = 0$, and $\operatorname{Var}(\vep) = \sigma^2$. \\

$\bx$ & Effective feature vector $\bx := \phi(\bs{u}) \in \bbR^p$. \\

$\by$ & Training-label vector $\by:=(y_1,\ldots,y_{N_{\mathrm{tr}}})\T\in\bbR^{N_{\mathrm{tr}}}$. \\

$X$ & Row-oriented training feature matrix $X:=[\bx_1,\ldots,\bx_{N_{\mathrm{tr}}}]\T\in\bbR^{N_{\mathrm{tr}}\times p}$. \\

$X_{\mathrm L}$ & Column-oriented feature matrix used only when translating the notation of Ref.~\cite{louart2021spectral}; $X_{\mathrm L}:=X\T\in\bbR^{p\times N_{\mathrm{tr}}}$. \\

$\hat{\Sigma}$ & Sample covariance matrix; $\hat{\Sigma}:=X\T X/N_{\mathrm{tr}}=\sum_{i=1}^{N_{\mathrm{tr}}}\bx_i\bx_i\T/N_{\mathrm{tr}}\in\bbR^{p\times p}$. \\

$\Sigma$ & Population covariance matrix; $\Sigma:=\E_{\bs{u}\sim\mu}[\bx\bx\T]\in\bbR^{p\times p}$. \\

$k(\bs{u},\bs{u}')$ & Kernel function defined by $k(\bs{u},\bs{u}'):=\phi_a(\bs{u})\T\phi_a(\bs{u}') = \phi(\bs{u})\T\phi(\bs{u}')$. \\

$K$ & Kernel matrix $K:=XX\T\in\bbR^{N_{\mathrm{tr}}\times N_{\mathrm{tr}}}$, with entries $K_{ij}=k(\bs{u}_i,\bs{u}_j)$. \\

$\{(\xi_i,\psi_i)\}_{i=1}^p$ & Positive eigenvalues and corresponding orthonormal eigenfunctions in $L^2(\mu)$ of the kernel integral operator,\\
& ordered as $\xi_1 \geq \cdots \geq \xi_p > 0$, with $k(\bs{u},\bs{u}')=\sum_{i=1}^p\xi_i\psi_i(\bs{u})\psi_i(\bs{u}')$. \\

$\psi(\bs{u})$ & Vector of normalized eigenfunctions, $\psi(\bs{u}):=[\psi_1(\bs{u}),\ldots,\psi_p(\bs{u})]\T$. \\

$\bs{\beta}_*$ & Projected target vector in the kernel-eigenfunction coordinates; $f_*(\bs{u})=\bs{\beta}_*\T\psi(\bs{u})$ and $\bs{\beta}_* = \Sigma^{1/2}\bth_*$. \\

$\lambda > 0$ & $\ell_2$ Regularization parameter. \\

$\hat{\bth}$ & Ridge-regression estimator, $\hat{\bth}:=(\hat{\Sigma}+\lambda I_p)^{-1}X\T\by/N_{\mathrm{tr}}$. \\

$\hat{\bs{\alpha}}$ & Dual estimator, $\hat{\bs{\alpha}}:=(K+N_{\mathrm{tr}}\lambda I_{N_{\mathrm{tr}}})^{-1}\by$. \\

$\tilde{\gamma},\gamma$ & Finite-dimensional feature-to-sample aspect ratio $\tilde{\gamma}:=p/N_{\mathrm{tr}}$ and its limit $\gamma:=\lim_{p\to\infty}\tilde{\gamma}\in(0,\infty)$. \\

$\kappa_\lambda$ & Effective regularization parameter, defined as the unique positive solution of Eq.~\eqref{eq:self-consistent_eq}. \\

$\mathrm{df}_s(t)$ & Effective degrees of freedom of order $s$, $\mathrm{df}_s(t):=\Tr[\Sigma^s(\Sigma+tI_p)^{-s}]$, for $s\geq1$ and $t\geq0$. \\

$\eta(t),\eta_\lambda$ & Normalized second-order effective-DOF, $\eta(t):=\mathrm{df}_2(t)/N_{\mathrm{tr}}$ and $\eta_\lambda:=\eta(\kappa_\lambda)$. \\

$R_{\lambda,\hat{\Sigma}}$ & Conditional test risk $R_{\lambda,\hat{\Sigma}} :=\E_{(\bx_{\mathrm{ts}},y_{\mathrm{ts}})}\;\E_{\bs{\vep}_{\mathrm{tr}} \sim \calN(0, \sigma^2I_{N_{\mathrm{tr}}})} [(y_{\mathrm{ts}} - \hat{\bth}\T \bx_{\mathrm{ts}})^2]$. \\

$R^{\mathrm{DE}}_{\lambda,\Sigma}$ & Fixed-$\lambda$ deterministic equivalent of $R_{\lambda,\hat{\Sigma}}$ in Eq.~\eqref{eq:test_risk_DE}.

\end{longtable}
\newpage
\section{Correspondence with the Feature-Vector Generating Process in Ref.~\cite{louart2021spectral}}\label{appdx:sec:correspondence_feature_generation}
To apply the theoretical results from Ref.~\cite{louart2021spectral}, we must relate our feature-vector generating process (Assumption~\ref{assum:feature_generation}) to the mathematical framework in Ref.~\cite{louart2021spectral}. Specifically, Ref.~\cite{louart2021spectral} considers a feature-matrix generating map~$\Phi \colon (\bbR^{q'}, \|\cdot\|_2) \rightarrow (\bbR^{p_a \times N_{\mathrm{tr}}}, \|\cdot\|_F)$ and requires it to be Lipschitz continuous.

In our setting, we apply a common $C_1$-Lipschitz map~$\Omega_a \colon \bbR^q \to \bbR^{p_a}$ to each latent standard Gaussian vector~$\bs{z}_i \in \bbR^q$ to generate the feature vector~$\bx_i^a \in \bbR^{p_a}$. Thus, for any two latent standard Gaussian vectors~$\bs{z}_i, \bs{z}'_i \in \bbR^q$, we have
\begin{align}\label{eq:omega_lipschitz}
    \|\Omega_a(\bs{z}_i) - \Omega_a(\bs{z}'_i)\|_2 \leq C_1 \|\bs{z}_i - \bs{z}'_i\|_2 \,.
\end{align}

We can construct the map $\Phi$ by concatenating the inputs and outputs of $\Omega_a$. Let $Z = (\bs{z}_1\T, \bs{z}_2\T, \ldots, \bs{z}_{N_{\mathrm{tr}}}\T)\T \in \bbR^{q'}$ be the concatenated latent standard Gaussian vector, where $q' = q N_{\mathrm{tr}}$.
The map $\Phi$ produces the column-oriented matrix
\begin{align}
    X_{\mathrm{L},a}
    := \Phi(Z)
    = [\bx_1^a,\ldots,\bx_{N_{\mathrm{tr}}}^a]
    \in \bbR^{p_a\times N_{\mathrm{tr}}},
\end{align}
whose $i$-th column is $\bx_i^a = \Omega_a(\bs{z}_i)$.
The training feature matrix used in this paper is its transpose: $X_a = X_{\mathrm{L},a}\T$.

We now demonstrate that if the map $\Omega_a$ is $C_1$-Lipschitz with respect to the $\ell_2$ norm, the matrix-valued map $\Phi$ is also $C_1$-Lipschitz from the $\ell_2$ norm to the Frobenius norm.

Let $Z, Z' \in \bbR^{q'}$ be two concatenated latent standard Gaussian vectors. The squared $\ell_2$ distance between $Z$ and $Z'$ is the sum of the squared $\ell_2$ distances of their constituent vectors:
\begin{align}
    \|Z - Z'\|_2^2 = \sum_{i=1}^{N_{\mathrm{tr}}} \|\bs{z}_i - \bs{z}'_i\|_2^2 \,.
\end{align}
Similarly, because the squared Frobenius norm of a matrix is the sum of the squared $\ell_2$ norms of its columns, the squared Frobenius distance between the corresponding outputs is:
\begin{align}
    \|\Phi(Z) - \Phi(Z')\|_F^2 = \sum_{i=1}^{N_{\mathrm{tr}}} \|\Omega_a(\bs{z}_i) - \Omega_a(\bs{z}'_i)\|_2^2 \,.
\end{align}
Substituting the Lipschitz condition of $\Omega_a$ (Eq.~\eqref{eq:omega_lipschitz}) into the above expression yields:
\begin{align}
    \|\Phi(Z) - \Phi(Z')\|_F^2 \leq \sum_{i=1}^{N_{\mathrm{tr}}} C_1^2 \|\bs{z}_i - \bs{z}'_i\|_2^2 = C_1^2 \|Z - Z'\|_2^2 \,.
\end{align}
Taking the square root of both sides, we obtain:
\begin{align}
    \|\Phi(Z) - \Phi(Z')\|_F \leq C_1 \|Z - Z'\|_2 \,.
\end{align}
Since the Frobenius norm is invariant under transposition, $\|X_a - X_a'\|_F = \|X_{\mathrm{L},a} - X_{\mathrm{L},a}'\|_F$, the row-oriented map $Z \mapsto X_a = \Phi(Z)\T$ is also $C_1$-Lipschitz. Thus, the choice between the row- and column-oriented conventions does not affect the concentration property.
This confirms that the concentrated feature vectors generated under Assumption~\ref{assum:feature_generation} satisfy the global Lipschitz condition required by the framework in Ref.~\cite{louart2021spectral}.
The same argument applies even if we use different Lipschitz maps $\Omega_i^a$ with different Lipschitz constants $C_i < \infty$ for each $i$. In that case, the global map $\Phi$ would be $(\max_i C_i)$-Lipschitz.

\newpage
\section{Overparameterization in the Kernel Formulation}\label{appdx:sec:kernel_overparameterization}
In the primal formulation of ridge regression, the double descent phenomenon is governed by the ratio $\tilde{\gamma} = p/N_{\mathrm{tr}}$. The interpolation threshold occurs at $\tilde{\gamma} = 1$, or $p = N_{\mathrm{tr}}$.
However, as shown in Eq.~\eqref{eq:dual_estimator}, kernel ridge regression is solved by optimizing $N_{\mathrm{tr}}$ parameters ($\hat{\bs{\alpha}}$), regardless of the effective feature dimension $p$. This leads to a conceptual question: if the optimization problem always involves $N_{\mathrm{tr}}$ parameters, how does the model ``sense'' the overparameterization $p \gg N_{\mathrm{tr}}$?

The resolution lies in understanding that in the kernel formulation, the model's complexity is determined by the rank and the spectral properties of the kernel matrix $K$.
When $p < N_{\mathrm{tr}}$, the rank of $K$ is at most $p$, making $K$ singular.
Conversely, when $p \geq N_{\mathrm{tr}}$, assuming the features are generic, $K$ becomes full rank (rank $N_{\mathrm{tr}}$).
This affects the quality of the solution Eq.~\eqref{eq:dual_estimator}.

In the ridgeless limit $\lambda \to 0$ at fixed $p$ and $N_{\mathrm{tr}}$, the estimator simplifies to $\hat{\bs{\alpha}} = K^{-1}\by$ when $K$ is invertible. In this case, the model can perfectly interpolate the training dataset, achieving zero training error. However, if $K$ is singular, perfect interpolation is not generally possible.
In primal form, there are infinitely many solutions that can interpolate the training dataset in the overparameterized regime ($p > N_{\mathrm{tr}}$).
In the ridgeless limit $\lambda \to 0$ (again at fixed $p$ and $N_{\mathrm{tr}}$), the dual formulation selects the unique minimum-norm solution among all solutions that perfectly fit the training dataset:
\begin{align}
    \hat{\bth} = \underset{\bth \in \bbR^p}{\arg \min} \|\bth\|_2 \quad \text{s.t.} \quad \forall i \in [N_{\mathrm{tr}}], \; y_i = \bth\T \bx_i \,.
\end{align}
This selection principle constrains the model complexity, even in the absence of explicit regularization.

\section{Lipschitz Continuity of the Quantum Feature Map $\bs{u} \mapsto \bs{r}_a(\bs{u})$}\label{appdx:sec:lipschitz_qml_feature_map}
Using Theorem 2 from Ref.~\cite{kempkes2025double}, we prove that the quantum feature map $\bs{u} \mapsto \bs{r}_a(\bs{u}) := (r_i^a(\bs{u}))_{i=1}^{4^n}$ is a Lipschitz function of the classical input vector~$\bs{u} \in \bbR^d$ with respect to the $\ell_2$ norm.

By vectorizing $\rho(\bs{u})$ and $O$ as $\kett{\rho(\bs{u})}$ and $\kett{O}$, the QML model can be expressed as an inner product:
\begin{align}
    f(\bs{u}) = \evv{O}{\rho(\bs{u})} \,.
\end{align}

First, we can show that the quantum feature map $\bs{u} \mapsto \kett{\rho(\bs{u})}$ is a Lipschitz function of the classical input vector~$\bs{u} \in \bbR^d$ with respect to the $\ell_2$ norm.
This follows from the relation $\norm{\kett{\rho(\bs{u})} - \kett{\rho(\bs{v})}}_2 = \norm{\rho(\bs{u}) - \rho(\bs{v})}_F$, and the inequalities $\norm{AB}_F \leq \norm{A}_\infty \norm{B}_F$ and $\norm{\rho'}_F \leq 1$ for any density matrix $\rho'$, alongside the proof of Theorem 2 in Ref.~\cite{kempkes2025double}.
Thus, we have
\begin{align}
    \norm{\kett{\rho(\bs{u})} - \kett{\rho(\bs{v})}}_2
    = \norm{\rho(\bs{u}) - \rho(\bs{v})}_F
    \leq \sqrt{dL}\nu_{\max}\norm{\bs{u} - \bs{v}}_2 \,,
\end{align}
where $d$ is the dimension of the classical input vector, $L$ is the depth of the quantum circuit and $\nu_{\max}:=\max_{l,k}\|H_{lk}\|_\infty$, where $H_{lk}$ is the Hermitian generator of the corresponding rotation gate.

The coefficient vector $\bs{r}_a(\bs{u})$ and the vectorized density matrix $\kett{\rho(\bs{u})}$ are representations of the same state in different orthonormal bases. Indeed, by vectorizing both sides of $\rho(\bs{u}) = \sum_{i=1}^{4^n} r_i^a(\bs{u}) P_i$, we have $\kett{\rho(\bs{u})} = \sum_{i=1}^{4^n} r_i^a(\bs{u}) \kett{P_i}$. Considering a unitary matrix $M = (\kett{P_1}, \kett{P_2}, \ldots, \kett{P_{4^n}})$, we can express the relationship as
\begin{align}
    \kett{\rho(\bs{u})} = M \bs{r}_a(\bs{u}) \,.
\end{align}
Therefore, for any $\bs{u}, \bs{v} \in \bbR^d$, we have $\|\bs{r}_a(\bs{u})\|_2 = \|\kett{\rho(\bs{u})}\|_2 = \|\rho(\bs{u})\|_F$ and
\begin{align}
    \norm{\bs{r}_a(\bs{u}) - \bs{r}_a(\bs{v})}_2
    = \norm{M\bs{r}_a(\bs{u}) - M\bs{r}_a(\bs{v})}_2
    = \norm{\kett{\rho(\bs{u})} - \kett{\rho(\bs{v})}}_2
    \leq \sqrt{dL}\nu_{\max}\norm{\bs{u} - \bs{v}}_2 \,.
\end{align}

\newpage
\section{Effective Feature Dimension of the Tensor Product Ansatz}
\label{appdx:sec:effective_dim_TPA}
In this appendix, we characterize the effective feature dimension of the Tensor Product Ansatz (TPA) consisting only of single-qubit $R_X$ rotations and containing no entangling gates.
The key observation is that this circuit architecture restricts the Bloch vector of each qubit to the $YZ$-plane, eliminating the $X$-component of the Pauli expansion.
Although the circuit architecture restricts the Pauli feature vector to a subspace of dimension at most $3^n$, whether this upper bound is attained depends on the input distribution.

Since consecutive rotations about the same axis can be combined, the output state can be written as
\begin{align}
    \rho(\bs{u})
    &=
    \bigotimes_{j=1}^n
    \rho_j(\vartheta_j(\bs{u})),\\
    \rho_j(\vartheta)
    &:=
    R_X(\vartheta)\dyad{0}R_X^\dagger(\vartheta)
    =
    \frac{1}{2}
    (I-\sin\vartheta\,Y+\cos\vartheta\,Z) \,,
    \label{eq:TPA_single_qubit_state}
\end{align}
where $\vartheta_j(\bs{u})$ denotes the effective rotation angle on the $j$-th qubit.
For $a \in \{0,1,2\}$, define
\begin{alignat}{3}
    q_0 &:= \frac{I}{\sqrt{2}} \,, &\quad
    q_1 &:= \frac{Y}{\sqrt{2}} \,, &\quad
    q_2 &:= \frac{Z}{\sqrt{2}} \,,\\
    h_0(\vartheta) &:= \frac{1}{\sqrt{2}} \,, &\quad
    h_1(\vartheta) &:= -\frac{\sin\vartheta}{\sqrt{2}} \,, &\quad
    h_2(\vartheta) &:= \frac{\cos\vartheta}{\sqrt{2}} \,.
\end{alignat}
For each multi-index
$\bs{a} = (a_1,\ldots,a_n) \in \{0,1,2\}^n$, let
\begin{align}
    Q_{\bs{a}}
    :=
    \bigotimes_{j=1}^n q_{a_j} \,,
    \quad
    g_{\bs{a}}(\bs{u})
    :=
    \prod_{j=1}^n
    h_{a_j}(\vartheta_j(\bs{u})) \,.
\end{align}
Taking the tensor product in Eq.~\eqref{eq:TPA_single_qubit_state} gives
\begin{align}
    \rho(\bs{u})
    =
    \sum_{\bs{a} \in \{0,1,2\}^n}
    g_{\bs{a}}(\bs{u})\, Q_{\bs{a}} \,.
    \label{eq:TPA_Pauli_expansion}
\end{align}
Thus, every Pauli string containing at least one $X$ factor has an identically zero coefficient, and the Pauli feature vector is supported on the coordinate subspace corresponding to $\{I,Y,Z\}^{\otimes n}$.
For example, in the two-qubit case, the potentially nonzero
coordinates correspond to
\begin{align}
    II,\quad IY,\quad IZ,\quad
    YI,\quad YY,\quad YZ,\quad
    ZI,\quad ZY,\quad ZZ,
\end{align}
illustrating the architecture-induced count $3^2=9$.
This coordinate count alone, however, does not imply that the corresponding coefficient functions are linearly independent under an arbitrary input distribution.

We define
\begin{align}
    \bs{g}(\bs{u})
    := (g_{\bs{a}}(\bs{u}))_{\bs{a} \in \{0,1,2\}^n}
    \in \bbR^{3^n} \,,
    \quad
    G_\mu
    :=
    \E_{\bs{u}\sim\mu}[\bs{g}(\bs{u})\bs{g}(\bs{u})\T] \,.
\end{align}
Recall that we have defined the following quantities in Section~\ref{subsec:eigendecomposition} (with the subscript $\mu$ added to emphasize the dependence on the input distribution):
\begin{align}
    \Sigma_{a,\mu}
    :=
    \E_{\bs{u}\sim\mu}
    [\bs{r}_a(\bs{u})\bs{r}_a(\bs{u})\T] \,,
    \qquad
    p_\mu
    :=
    \rank(\Sigma_{a,\mu}) \,,
\end{align}
where $\bs{r}_a(\bs{u}) \in \bbR^{4^n}$ is the coefficient vector of
$\rho(\bs{u})$ in the normalized Pauli basis and we add the subscript $\mu$ to emphasize the dependence on the input distribution.

Let $J \in \bbR^{4^n\times3^n}$ denote the coordinate embedding from the coordinates indexed by $\{I,Y,Z\}^{\otimes n}$ into the full Pauli-coordinate space $\{I,X,Y,Z\}^{\otimes n}$, which has $1$ in the $(\bs{a},\bs{a})$ entry for each $\bs{a} \in \{0,1,2\}^n$ and $0$ elsewhere.
Then $J\T J = I_{3^n}$, and Eq.~\eqref{eq:TPA_Pauli_expansion} implies
\begin{align}
    \bs{r}_a(\bs{u}) = J\bs{g}(\bs{u}) \,.
\end{align}
Consequently,
\begin{align}
    \Sigma_{a,\mu}
    :=
    \E_{\bs{u}\sim\mu}[\bs{r}_a(\bs{u})\bs{r}_a(\bs{u})\T]
    = J
    \E_{\bs{u}\sim\mu}[\bs{g}(\bs{u})\bs{g}(\bs{u})\T]
    J\T
    = J G_\mu J\T \,.
\end{align}
Since $J$ is a coordinate embedding, $JG_\mu J\T$ is obtained
by placing the entries of $G_\mu$ in the coordinates indexed by
$\{I,Y,Z\}^{\otimes n}$ and setting all remaining rows and columns to zero. Thus, we have
\begin{align}
    p_\mu
    =
    \rank(\Sigma_{a,\mu})
    =
    \rank(G_\mu)
    \leq 3^n \,.
    \label{eq:TPA_effective_dimension_bound}
\end{align}

The upper bound is attained if and only if the coefficient
functions are linearly independent in $L^2(\mu)$:
\begin{align}
    p_\mu=3^n
    \quad\Longleftrightarrow\quad
    \{g_{\bs{a}} \,:\, \bs{a} \in \{0,1,2\}^n\}
    \text{ is linearly independent in }L^2(\mu) \,.
    \label{eq:TPA_full_rank_condition}
\end{align}
This follows from the following equivalence:
\begin{align}
    &\phantom{\iff}\;\; p_\mu = \rank(G_\mu) = 3^n \\[5pt]
    &\iff G_\mu \text{ is positive definite} \\[5pt]
    &\iff \forall \bs{c} \in \bbR^{3^n}, \quad
    \qty[
        \bs{c}\T G_\mu \bs{c} = 0
        \quad\Longrightarrow\quad
        \bs{c} = \bs{0}
    ] \\
    &\iff \forall \bs{c} \in \bbR^{3^n}, \quad
    \qty[
        \int_{\calU}\qty(
            \sum_{\bs{a} \in \{0,1,2\}^n} c_{\bs{a}}g_{\bs{a}}(\bs{u})
        )^2
        \ind\mu(\bs{u}) = 0
        \quad\Longrightarrow\quad
        \bs{c} = \bs{0}
    ] \\
    &\iff \forall \bs{c} \in \bbR^{3^n}, \quad
    \qty[
        \sum_{\bs{a} \in \{0,1,2\}^n}
        c_{\bs{a}}g_{\bs{a}}(\bs{u}) = 0 \quad \mu\text{-a.e.}
        \quad\Longrightarrow\quad
        \bs{c} = \bs{0}
    ] \\
    &\iff \{g_{\bs{a}} \,:\, \bs{a} \in \{0,1,2\}^n\}
    \text{ is linearly independent in }L^2(\mu) \,,
\end{align}
where we used the following identity for the third equivalence:
\begin{align}
    \bs{c}\T G_\mu\bs{c}
    =
    \int_{\calU}\qty(
        \sum_{\bs{a} \in \{0,1,2\}^n} c_{\bs{a}}g_{\bs{a}}(\bs{u})
    )^2
    \ind\mu(\bs{u}) \quad\text{for every}\quad \bs{c} \in \bbR^{3^n} \,.
    \label{eq:TPA_Gram_quadratic_form}
\end{align}

Therefore, the TPA architecture guarantees only the upper bound
$p_\mu\leq3^n$.
The equality $p_\mu=3^n$ additionally requires the effective-angle distribution induced by $\mu$ to be sufficiently nondegenerate, so that the coefficient functions $\{g_{\bs{a}} \,:\, \bs{a} \in \{0,1,2\}^n\}$ are linearly independent in $L^2(\mu)$.
Thus, the effective feature dimension is determined jointly by the circuit architecture and the input distribution.

\newpage
\section{Derivation of the Bias--Variance Decomposition of the Test Risk}\label{appdx:sec:qml_analysis}
Since the estimator of the quantum ridge regression is given by Eq.~\eqref{eq:ridge_regression_estimator}, the term $\bth_* - \hat{\bth}$, which appears in the test risk, can be expressed as follows:
\begin{align}
    \bth_* - \hat{\bth}
    &= \bth_* - (\hat{\Sigma} + \lambda I_p)^{-1} \frac{1}{N_{\mathrm{tr}}}X\T \by \nonumber\\
    &= \bth_* - (\hat{\Sigma} + \lambda I_p)^{-1} \frac{1}{N_{\mathrm{tr}}}X\T(X\bth_* + \bs{\vep}_{\mathrm{tr}}) \nonumber\\
    &= \bth_* - (\hat{\Sigma} + \lambda I_p)^{-1} \hat{\Sigma}\bth_* - (\hat{\Sigma} + \lambda I_p)^{-1} \frac{1}{N_{\mathrm{tr}}}X\T \bs{\vep}_{\mathrm{tr}} \nonumber\\
    &= \bth_* - (\hat{\Sigma} + \lambda I_p)^{-1}(\hat{\Sigma} + \lambda I_p - \lambda I_p)\bth_* - (\hat{\Sigma} + \lambda I_p)^{-1} \frac{1}{N_{\mathrm{tr}}}X\T \bs{\vep}_{\mathrm{tr}} \nonumber\\
    &=
    \underbrace{
        \lambda (\hat{\Sigma} + \lambda I_p)^{-1}\bth_*
    }_{\bth_* - \bar{\bth}}
    -
    \underbrace{
        (\hat{\Sigma} + \lambda I_p)^{-1} \frac{1}{N_{\mathrm{tr}}}X\T \bs{\vep}_{\mathrm{tr}}
    }_{\hat{\bth} - \bar{\bth}} \,.
    \label{eq:decomposition_of_theta_difference}
\end{align}

Substituting this decomposition into the test risk $R_{\lambda,\hat{\Sigma}}$ gives the following decomposition:
\begin{align}
    R_{\lambda,\hat{\Sigma}}
    &:= \E_{\bx_{\mathrm{ts}},\vep_{\mathrm{ts}},\bs{\vep}_{\mathrm{tr}}}
    \left[ (y_{\mathrm{ts}} - \hat{\bth}\T \bx_{\mathrm{ts}})^2 \right] \nonumber\\
    &= \E_{\bs{\vep}_{\mathrm{tr}} \sim \calN(0, \sigma^2I_{N_{\mathrm{tr}}})}
        [(\bth_* - \hat{\bth})\T \Sigma (\bth_* - \hat{\bth})]
    + \sigma^2 \nonumber\\
    &= (\bth_* - \bar{\bth})\T \Sigma (\bth_* - \bar{\bth})
    + \E_{\bs{\vep}_{\mathrm{tr}}} \left[ (\hat{\bth} - \bar{\bth})\T \Sigma (\hat{\bth} - \bar{\bth}) \right]
    + \sigma^2 \nonumber\\
    &= \lambda^2\, \bth_*\T(\hat{\Sigma} + \lambda I_p)^{-1}\Sigma(\hat{\Sigma} + \lambda I_p)^{-1}\bth_* \nonumber\\
    &\quad + \frac{1}{N_{\mathrm{tr}}}\E_{\bs{\vep}_{\mathrm{tr}} \sim \calN(0, \sigma^2I_{N_{\mathrm{tr}}})}
    \qty[
        \Tr[
            (\hat{\Sigma} + \lambda I_p)^{-1}\Sigma(\hat{\Sigma} + \lambda I_p)^{-1}\frac{1}{N_{\mathrm{tr}}}X\T \bs{\vep}_{\mathrm{tr}} \bs{\vep}_{\mathrm{tr}}\T X
        ]
    ]
    + \sigma^2 \nonumber\\
    &= \lambda^2\, \bth_*\T(\hat{\Sigma} + \lambda I_p)^{-1}\Sigma(\hat{\Sigma} + \lambda I_p)^{-1}\bth_*
    + \frac{\sigma^2}{N_{\mathrm{tr}}}
    \Tr[
            (\hat{\Sigma} + \lambda I_p)^{-1}\Sigma(\hat{\Sigma} + \lambda I_p)^{-1}\hat{\Sigma}
    ] + \sigma^2 \nonumber\\
    &= \lambda^2\, \bth_*\T(\hat{\Sigma} + \lambda I_p)^{-1}\Sigma(\hat{\Sigma} + \lambda I_p)^{-1}\bth_*
    + \frac{\sigma^2}{N_{\mathrm{tr}}}\Tr[(\hat{\Sigma} + \lambda I_p)^{-2}\hat{\Sigma}\Sigma]
    + \sigma^2 \nonumber\\
    &= \lambda^2\, \bs{\beta}_*\T\Sigma^{-1/2}(\hat{\Sigma} + \lambda I_p)^{-1}\Sigma(\hat{\Sigma} + \lambda I_p)^{-1}\Sigma^{-1/2}\bs{\beta}_*
    + \frac{\sigma^2}{N_{\mathrm{tr}}}\Tr[(\hat{\Sigma} + \lambda I_p)^{-2}\hat{\Sigma}\Sigma]
    + \sigma^2 \,.
\end{align}
\newpage
\section{Definitions and Assumptions in Ref.~\cite{louart2021spectral}}\label{appdx:sec:definitions_assumptions_louart2021}
To make this work self-contained, we review the definitions and assumptions in Ref.~\cite{louart2021spectral}.
A central object in this work is the resolvent of the sample covariance matrix $\hat{\Sigma} := \frac{1}{N_{\mathrm{tr}}}X\T X = \frac{1}{N_{\mathrm{tr}}}X_{\mathrm L}X_{\mathrm L}\T$ (Our matrix \(X\) stores samples as rows; it is the transpose of the column-oriented training feature matrix $X_{\mathrm L}$ used in Ref.~\cite{louart2021spectral}), defined for a complex parameter $z \notin \mathrm{Sp}(\hat{\Sigma})$ by
\begin{align}
    R(z) := \qty(\hat{\Sigma} - zI_p)^{-1} \,.
\end{align}
Because $R(z)$ is a random matrix, it is useful to approximate it by a deterministic equivalent $\widetilde{R}(z)$, such that the difference between $R(z)$ and $\widetilde{R}(z)$ vanishes in the high-dimensional limit.
In Ref.~\cite{louart2021spectral}, this deterministic equivalent is given by
\begin{align}
    \widetilde{R}(z) := \qty(\frac{1}{N_{\mathrm{tr}}}\sum_{i=1}^{N_{\mathrm{tr}}} \frac{z\Sigma_i}{\tilde{\Lambda}_i^z} - zI_p)^{-1} \,,
\end{align}
where $\Sigma_i = \E[\bx_i\bx_i\T]$ and the elements $(\tilde{\Lambda}_1^z, \tilde{\Lambda}_2^z, \ldots, \tilde{\Lambda}_{N_{\mathrm{tr}}}^z) =: \tilde{\Lambda}^z$ are the unique solutions to the fixed-point equation:
\begin{align}
    \forall i \in [N_{\mathrm{tr}}], \quad \tilde{\Lambda}_i^z = z - \frac{1}{N_{\mathrm{tr}}}\Tr\qty[\Sigma_i\qty(I_p - \frac{1}{N_{\mathrm{tr}}}\sum_{j=1}^{N_{\mathrm{tr}}} \frac{\Sigma_j}{\tilde{\Lambda}_j^z})^{-1}] \,.
    \label{eq:self_consistent_eq_louart}
\end{align}

\begin{assumption}\label{assum:0.3-0.7}
Ref.~\cite{louart2021spectral} establishes the convergence to this deterministic equivalent under the following five core assumptions regarding the training feature matrix $X$ (Assumption~0.4 is slightly modified from the original version since we consider the row-oriented training feature matrix $X = X_{\mathrm L}\T$):
\begin{itemize}
    \item \textbf{Assumption~0.3 (Dimensionality):} The feature dimension $p$ grows at most linearly with the number of samples $N_{\mathrm{tr}}$, i.e., there exists $K_1 > 0$ such that $\forall N_{\mathrm{tr}} \in \bbN, p \le K_1 N_{\mathrm{tr}}$.
    \item \textbf{Assumption~0.4 (Concentration):} The training feature matrix $X$ satisfies a concentration of measure property. Specifically, there exist $C, c > 0$ such that for any $1$-Lipschitz mapping $f \colon (\bbR^{N_{\mathrm{tr}} \times p}, \|\cdot\|_F) \to (\bbR, |\cdot|)$,
    \begin{align}
        \bbP(|f(X) - \E[f(X)]| \ge t) \le C e^{-(t/c)^2} \,.
        \label{eq:concentration_inequality}
    \end{align}
    \item \textbf{Assumption~0.5 (Independence):} The feature vectors $\bx_1, \dots, \bx_{N_{\mathrm{tr}}} \in \bbR^p$ are independent.
    \item \textbf{Assumption~0.6 (Bounded Means):} The expectations of the feature vectors are uniformly bounded, meaning there exists $K_2 > 0$ such that $\forall N_{\mathrm{tr}} \in \bbN,\; \forall i \in [N_{\mathrm{tr}}],\; \norm{\E[\bx_i]}_2 \le K_2$.
    \item \textbf{Assumption~0.7 (Bounded Covariances):} The covariance matrices are uniformly positive definite, meaning there exists $K_3 > 0$ such that $\forall N_{\mathrm{tr}} \in \bbN,\; \forall i \in [N_{\mathrm{tr}}],\; \Sigma_i \ge K_3 I_p$.
\end{itemize}
\end{assumption}

\begin{remark}
    Our feature-vector generating process, formalized in Assumption~\ref{assum:feature_generation}, is designed to satisfy the above assumptions.
    Since we consider $p/N_{\mathrm{tr}} \to \gamma \in (0, \infty)$ as $N_{\mathrm{tr}} \to \infty$, \textbf{Assumption~0.3} is satisfied.
    \textbf{Assumption~0.5} is satisfied because we generate the feature vectors by applying a common map $\Omega$ to independent latent standard Gaussian vectors $\{\bs{z}_i\}_{i=1}^{N_{\mathrm{tr}}}$.
    Furthermore, by proving that our effective quantum feature map is Lipschitz continuous and norm-bounded, we guarantee that the resulting feature vectors satisfy \textbf{Assumption~0.4} and \textbf{Assumption~0.6}.
\end{remark}

\begin{remark}\label{remark:assumption_0.7}
    We clarify that we removed \textbf{Assumption~0.7} in our setting.
    Since we apply a common map $\Omega$ for all samples in our framework, $\Sigma_i = \Sigma$ for all $i \in [N_{\mathrm{tr}}]$.
    Assumption~0.7 is only used in the proof of Proposition~6.5 to show $\norm*{\hat{\Lambda}^z - \tilde{\Lambda}^z} \leq \calO(\norm*{\hat{\Lambda}^z - I^z(\hat{\Lambda}^z)})$. However, since $\Sigma_i = \Sigma$ for all $i$, we have $\tilde{\Lambda}_i^z = \tilde{\Lambda}_1^z$ for all $i$, and thus $\sup_{i \in [N_{\mathrm{tr}}]}{\Im(\tilde{\Lambda}_i^z)} / \inf_{i \in [N_{\mathrm{tr}}]}{\Im(\tilde{\Lambda}_i^z)} = 1$, which allows us to bypass the need for Assumption~0.7.
\end{remark}

\begin{remark}
    As an example of a training feature matrix $X$ satisfying Assumption~0.4, Ref.~\cite{louart2021spectral} provides a random matrix~$X = \Phi(Z)$, where $Z \sim \calN(0, I_{q'})$ is a standard Gaussian vector and $\Phi \colon (\bbR^{q'}, \|\cdot\|_2) \to (\bbR^{N_{\mathrm{tr}} \times p}, \|\cdot\|_F)$ is a $C_1$-Lipschitz function.
    In this example, the constants in the concentration inequality~\eqref{eq:concentration_inequality} are $C = 2$ and $c = C_1\sqrt{2}$.
    In Appendix~\ref{appdx:sec:correspondence_feature_generation}, we show that our feature-vector generating process (Assumption~\ref{assum:feature_generation}) is a special case of this example, and thus satisfies the concentration property required by Assumption~0.4.
\end{remark}

\newpage
\section{From Theorem~0.9 in Ref.~\cite{louart2021spectral} to
Proposition~\ref{prop:resolvent_deterministic_equivalent}}
\label{appdx:sec:theorem_0.9_to_deterministic_equivalent}
In this section, we derive the two scalar interfaces to the resolvent concentration theorem~\ref{thm:0.9} that are used later. The normalized-trace interface is used for the variance and the bounded quadratic-form interface is used for the bias.

\begin{figure}[H]
    \centering
    \begin{tikzpicture}[font=\small, node distance=0pt]
        \tikzset{
            bx/.style={draw, rounded corners=2pt, align=center, inner sep=4pt},
            lb/.style={align=left, font=\scriptsize\itshape}
        }
        \node[bx] (t09) at (0,0)
            {Theorem~\ref{thm:0.9}: concentration at the level of functionals,\\
             measured by the Lipschitz constant};
        \node[bx] (prop) at (0,-1.8)
            {Proposition~\ref{prop:resolvent_deterministic_equivalent}: \;
            $Q_p(\lambda)\asymp_F\overline Q_p(\lambda)$};
        \node[bx] (bias) at (-3.4,-3.8)
            {$C_p=\bs{v}_p\bs{v}_p\T$\\ $\norm{C_p}_F=\norm{\bs{v}_p}_2^2$\\[2pt] \textbf{bias}};
        \node[bx] (var) at (3.4,-3.8)
            {$C_p=D_p/p$\\ $\norm{C_p}_F\leq\norm{D_p}_\infty/\sqrt p$\\[2pt] \textbf{variance}};
        \draw[->] (t09) -- (prop)
        node[lb, midway, right=4pt, xshift=2pt] {
            Lemma~\ref{lemma:borel_cantelli_convergence}: Borel--Cantelli\\
            Lemma~\ref{lemma:frobenius_dual_pairing}: $\operatorname{Lip}(u_p)=\norm{C_p}_F$
            };
        \draw[->] (prop) -- (bias);
        \draw[->] (prop) -- (var);
    \end{tikzpicture}
    \caption{Logical structure of the derivation. The final step branches according to the choice of test matrix and the evaluation of its Frobenius norm.}
    \label{fig:de_logical_chain}
\end{figure}

\subsection{Resolvent concentration and scalar interfaces}
Before stating Theorem~0.9, we define a seminorm used in Ref.~\cite{louart2021spectral}.
We define \(\calF(\bbC, \bbC^{p \times p})\) to be the space of matrix-valued functions \(f\colon\bbC\to\bbC^{p \times p}\).
\begin{definition}[Seminorm \(\norm{\cdot}_{F,S^{\epsilon}}\)]
    \label{def:seminorm_F_S_epsilon}
    Let \(\mu_1 \geq \mu_2 \geq \cdots \geq \mu_p\) be the eigenvalues of \(\hat{\Sigma}_p\), and define
    \begin{align}
        S
        &:=
        \set{\E[\mu_i]}[i\in[p]] \,.
    \end{align}
    For \(\epsilon > 0\), let
    \begin{align}
        S^{\epsilon}
        &:=
        \set{z \in \bbC}[
            \operatorname{dist}(z,S)\leq\epsilon
        ] \,.
    \end{align}
    For \(f \in \calF(\bbC,\bbC^{p\times p})\), define
    \begin{align}
        \norm{f}_{F,S^{\epsilon}}
        &:=
        \sup_{z \in \bbC \setminus S^{\epsilon}}
        \norm{f(z)}_F \,.
    \end{align}
    The dependence of \(S\) on \(p\) is left implicit.
\end{definition}

Because \(\hat{\Sigma}_p \succeq 0\), we have \(S \subset [0,\infty)\). Therefore, for any \(0 < \epsilon < \lambda\),
\begin{align}
    \operatorname{dist}(-\lambda,S)
    \geq \lambda > \epsilon
    \iff
    -\lambda \in \bbC \setminus S^{\epsilon}.
    \label{eq:minus_lambda_outside_S_epsilon}
\end{align}
Thus, for any \(f\in\calF(\bbC,\bbC^{p \times p})\) which is analytic on \(\bbC \setminus S^{\epsilon}\), we have \(\norm{f(-\lambda)}_F \leq \norm{f}_{F,S^{\epsilon}}\).

Ref.~\cite{louart2021spectral} gives the following concentration of the resolvent \(R(\cdot)\) around its deterministic approximation \(\widetilde{R}(\cdot)\).
\begin{theorem}[Theorem~0.9 in Ref.~\cite{louart2021spectral}]\label{thm:0.9}
    Under Assumption~\ref{assum:0.3-0.7}, there exist constants \(C,c>0\)
    such that, for every linear functional
    \begin{align}
        v\colon
        \bigl(
            \calF(\bbC,\bbC^{p\times p}),
            \norm{\cdot}_{F,S^{\epsilon}}
        \bigr)
        \longrightarrow
        (\bbR,|\cdot|)
    \end{align}
    with Lipschitz constant at most one, and every \(s > 0\),
    \begin{align}
        \bbP\left(
            |v(R - \widetilde{R})| \geq s
        \right)
        \leq
        C\exp(-cN_{\mathrm{tr}}s^2)
        +
        C\exp(-cN_{\mathrm{tr}}) \,.
        \label{eq:concentration_resolvent}
    \end{align}
\end{theorem}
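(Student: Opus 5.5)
The plan is to split the random fluctuation from the deterministic bias. For a linear functional $v$ that is $1$-Lipschitz for $\norm{\cdot}_{F,S^{\epsilon}}$, write
\begin{align}
    v(R-\widetilde R) = v(R-\E[R]) + v(\E[R]-\widetilde R),
\end{align}
and then show two things: the first term satisfies a Gaussian concentration bound at scale $N_{\mathrm{tr}}^{-1/2}$, and the second term is deterministically $\order{N_{\mathrm{tr}}^{-1/2}}$. For $s \lesssim N_{\mathrm{tr}}^{-1/2}$ the claimed bound is trivial once the constant $C$ is large enough, because the right-hand side of Eq.~\eqref{eq:concentration_resolvent} then exceeds one. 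For larger $s$ the deterministic term is absorbed into $s/2$, so it suffices to control $\bbP(|v(R-\E[R])|\geq s/2)$.

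\textbf{Fluctuation term.} I would treat $X\mapsto R(\cdot)$ as a map from $(\bbR^{N_{\mathrm{tr}}\times p},\norm{\cdot}_F)$ into $(\calF(\bbC,\bbC^{p\times p}),\norm{\cdot}_{F,S^\epsilon})$ and bound its Lipschitz constant. Since $dR=-R\,(d\hat\Sigma)\,R$ and $d\hat\Sigma=(X\T dX+dX\T X)/N_{\mathrm{tr}}$, we get
\begin{align}
    \norm{dR(z)}_F\leq 2\norm{R(z)}_\infty^2\,\frac{\norm{X}_\infty}{N_{\mathrm{tr}}}\,\norm{dX}_F .
\end{align}
To make this $\order{N_{\mathrm{tr}}^{-1/2}}$ uniformly on $\bbC\setminus S^\epsilon$, I restrict to a good event $\mathcal A$ with two properties. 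First, $\norm{X}_\infty\leq K\sqrt{N_{\mathrm{tr}}}$; this follows from Assumption~0.4, since the operator norm is $1$-Lipschitz in $X$, together with an $\epsilon$-net bound on $\E\norm{X}_\infty$ that uses $p\leq K_1N_{\mathrm{tr}}$ and the bounded means. Second, every eigenvalue $\mu_i$ of $\hat\Sigma$ lies within $\epsilon/2$ of $\E[\mu_i]$; this holds because $\sqrt{\mu_i}$ is a $N_{\mathrm{tr}}^{-1/2}$-Lipschitz function of $X$, by Weyl's inequality for singular values. On $\mathcal A$ we have $\operatorname{dist}(z,\mathrm{Sp}(\hat\Sigma))\geq\epsilon/2$, hence $\norm{R(z)}_\infty\leq 2/\epsilon$, and the map is $\order{N_{\mathrm{tr}}^{-1/2}}$-Lipschitz. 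The event $\mathcal A$ has probability at least $1-Ce^{-cN_{\mathrm{tr}}}$, which produces the second exponential in Eq.~\eqref{eq:concentration_resolvent}. I would extend the restricted map to a globally Lipschitz map with the same constant, for instance by composing with a projection of $X/\sqrt{N_{\mathrm{tr}}}$ onto an operator-norm ball and an eigenvalue clipping. Applying Assumption~0.4 to the scalar $v\circ R$ then gives the $C\exp(-cN_{\mathrm{tr}}s^2)$ term. The difference between the mean of the extended map and $\E[R]$ is exponentially small, so it is harmless.

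\textbf{Deterministic term.} I would use a leave-one-out argument. With $R_{-i}$ the resolvent with sample $i$ removed, Sherman--Morrison gives
\begin{align}
    R\bx_i=\frac{R_{-i}\bx_i}{1+\bx_i\T R_{-i}\bx_i/N_{\mathrm{tr}}}.
\end{align}
Combining this with the identity $I+zR=R\hat\Sigma$, I would express $\E[R]$ as $(\frac1{N_{\mathrm{tr}}}\sum_i z\Sigma_i/\hat\Lambda_i - zI_p)^{-1}$ plus an error term. Here $\hat\Lambda_i$ is defined through $\frac1{N_{\mathrm{tr}}}\Tr[\Sigma_i\E R_{-i}]$. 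The error is controlled by two facts: the quadratic forms $\bx_i\T R_{-i}\bx_i/N_{\mathrm{tr}}$ concentrate, by Hanson--Wright-type concentration available under Assumption~0.4, and $R_{-i}$ is close to $R$ in normalized trace. This shows that $\hat\Lambda$ solves the fixed-point equation~\eqref{eq:self_consistent_eq_louart} up to an error of order $\order{N_{\mathrm{tr}}^{-1/2}}$.

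\textbf{Main obstacle.} The hard step is stability of that fixed point: showing $\norm{\hat\Lambda-\tilde\Lambda}\lesssim\norm{\hat\Lambda-I^z(\hat\Lambda)}$ uniformly for $z\in\bbC\setminus S^\epsilon$, including near the real axis. There the imaginary parts degenerate, and this is exactly where Assumption~0.7, or in our common-$\Sigma$ setting the ratio $\sup_i\Im\tilde\Lambda_i/\inf_i\Im\tilde\Lambda_i=1$, is needed. I would first try a contraction argument in a semimetric adapted to $\Im$, of the form $|a-b|/\sqrt{\Im a\,\Im b}$, for which the fixed-point map is $1$-Lipschitz with a strict gap. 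From there, the resolvent identity $\E[R]-\widetilde R=\E[R](\widetilde R^{-1}-\E[R]^{-1})\widetilde R$ together with $\norm{\widetilde R}_\infty\lesssim 1/\epsilon$ would convert the $\Lambda$-error into the required Frobenius bound.
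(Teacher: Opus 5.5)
The paper does not prove this statement. It imports Theorem~0.9 verbatim from Ref.~\cite{louart2021spectral}, and its only comment on the proof is Remark~\ref{remark:assumption_0.7}: Assumption~0.7 enters solely through the stability estimate $\norm{\hat\Lambda^z-\tilde\Lambda^z}\lesssim\norm{\hat\Lambda^z-I^z(\hat\Lambda^z)}$, and when $\Sigma_i=\Sigma$ it is replaced by $\sup_i\Im\tilde\Lambda^z_i/\inf_i\Im\tilde\Lambda^z_i=1$. Your outline matches the architecture of that cited proof: concentration of $R$ on a high-probability spectral event $\mathcal A$ (which produces the $Ce^{-cN_{\mathrm{tr}}}$ term), a leave-one-out identification of an approximate fixed point $\hat\Lambda$, and stability in the semimetric $|a-b|/\sqrt{\Im a\,\Im b}$. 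The fluctuation half is essentially right, with two small repairs. Since $\mathcal A$ is not convex, get the Lipschitz bound on $\mathcal A$ from $R(X)-R(X')=R(X)(\hat\Sigma'-\hat\Sigma)R(X')$ rather than by integrating $dR$ along segments. Then extend the real-valued $v\circ R$ by McShane's formula; index-dependent eigenvalue clipping is not even well defined on degenerate eigenspaces. The deterministic half, however, does not go through as written.

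First, you cannot center at $\E[R]$ on all of $\bbC\setminus S^{\epsilon}$. For real $z$ in a gap of $S^{\epsilon}$, the spectrum of $\hat\Sigma$ reaches $z$ with small but positive probability, so $R(z)$ need not be integrable. As $\Im z\to0$, the only available bound is $\E[\norm{R(z)}_F\mathbf{1}_{\mathcal A^c}]\le\sqrt p\,\bbP(\mathcal A^c)/|\Im z|$, so the mean of your extension is not uniformly exponentially close to $\E[R]$. You must work with $\E[R\mathbf{1}_{\mathcal A}]$, or the conditional mean on $\mathcal A$, throughout. The leave-one-out step must then handle the fact that $\mathcal A$ depends on $\bx_i$ as well as on $R_{-i}$. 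At $z=-\lambda$, the only point the paper uses, this is harmless because $\norm{R(-\lambda)}_\infty\le1/\lambda$ surely.

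Second, your scalar rate is too weak for the conversion you describe. The matrix $\widetilde R^{-1}$ and your approximate inverse differ by $\frac{1}{N_{\mathrm{tr}}}\sum_i z\Sigma_i(\tilde\Lambda_i^{-1}-\hat\Lambda_i^{-1})$. Its Frobenius norm is of order $\norm{\Sigma}_F\max_i|\hat\Lambda_i-\tilde\Lambda_i|$, and $\norm{\Sigma}_F$ can be of order $\sqrt p$ under Assumption~\ref{assum:0.3-0.7} (e.g.\ $\Sigma=I_p$). An $\order{N_{\mathrm{tr}}^{-1/2}}$ residual therefore yields only $\norm{\E[R]-\widetilde R}_F=\order{1}$, which cannot be absorbed into $s/2$. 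What you need, and what the leave-one-out actually gives, is $\hat\Lambda-I^z(\hat\Lambda)=\order{N_{\mathrm{tr}}^{-1}}$. The reason is that the residual is the normalized trace $\frac{1}{N_{\mathrm{tr}}}\Tr[\Sigma_i\,\cdot\,]$, a test matrix of Frobenius norm $\order{N_{\mathrm{tr}}^{-1/2}}$, applied to a matrix error that is $\order{N_{\mathrm{tr}}^{-1/2}}$ in the Frobenius-dual sense.

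Third, the stability step, which you leave as a plan, carries the real content, and a $d_s$-contraction alone does not give it uniformly on $\bbC\setminus S^{\epsilon}$. Take $\Sigma=\diag(\xi_j)$ and $I^z(\Lambda)=z-\frac{1}{N_{\mathrm{tr}}}\sum_j\xi_j\Lambda/(\Lambda-\xi_j)$, and set $a(\Lambda)=\frac{1}{N_{\mathrm{tr}}}\sum_j\xi_j^2/|\Lambda-\xi_j|^2$. Then $\Im I^z(\Lambda)=\Im z+a(\Lambda)\Im\Lambda$ and $|I^z(\Lambda)-I^z(\Lambda')|\le\sqrt{a(\Lambda)a(\Lambda')}\,|\Lambda-\Lambda'|$. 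So the contraction factor this argument yields at the fixed point is $a(\tilde\Lambda^z)=1-\Im z/\Im\tilde\Lambda^z$. A gap that stays uniform as $\Im z\to0$ is therefore equivalent to an a priori bound $\Im\tilde\Lambda^z\lesssim\Im z$ on $\bbC\setminus S^{\epsilon}$, i.e.\ to the deterministic equivalent having no spectrum near $z$. At $z=-\lambda$ one has $a(-\kappa_\lambda)=\eta_\lambda$, and the required gap is exactly the paper's $1-\eta_\lambda\ge\lambda/\kappa_\lambda$ (Proposition~\ref{prop:effective_regularization_stability}). Near the positive real axis, however, the bound has to be proved, for instance by transferring to $\tilde\Lambda^z$ the estimate $\Im\hat\Lambda^z=\order{\Im z}$, which holds on $\mathcal A$ because all eigenvalues stay $\epsilon/2$ away from $z$. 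Your sketch assumes this bound without proving it. It also does not specify which root of the fixed-point equation defines $\widetilde R(z)$ for real $z$ in gaps of $S^{\epsilon}$; it should be the continuation from $\Im z>0$.
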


We use the following Borel--Cantelli convergence criterion to convert the concentration inequality~\eqref{eq:concentration_resolvent} into an almost-sure convergence statement in Proposition~\ref{prop:uniform_linear_functional_resolvent_convergence} below.
\begin{lemma}[Borel--Cantelli convergence criterion]
    \label{lemma:borel_cantelli_convergence}
    Let \((\Omega,\calF,\bbP)\) be a probability space, and let
    \(\{Y_p\}_p\) be a sequence of real- or complex-valued random
    variables on this space. If, for every \( s>0\),
    \begin{align}
        \sum_{p=1}^{\infty}
        \bbP(|Y_p| \geq s)
        < \infty \,,
    \end{align}
    where $\bbP(|Y_p| \geq s) := \bbP(\set{\omega \in \Omega}[|Y_p(\omega)|\geq s])$, then
    \begin{align}
        Y_p
        \longrightarrow
        0
        \qquad
        \text{a.s. as } p \to \infty \,.
    \end{align}
    Here and below, ``a.s.'' means almost surely; specifically, the last
    conclusion means that
    \begin{align}
        \bbP\left(
            \set*{\omega \in \Omega}[
                Y_p(\omega) \longrightarrow 0
                \text{ as } p \to \infty
            ]
        \right)
        =
        1 \,.
    \end{align}
\end{lemma}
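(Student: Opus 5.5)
We prove the Borel--Cantelli criterion by showing that, for each fixed threshold $s>0$, only finitely many of the events $\{|Y_p|\geq s\}$ occur almost surely, and then intersecting over a countable family of thresholds.

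\textbf{Step 1 (first Borel--Cantelli lemma).} Fix $s>0$ and set $E_p(s):=\set{\omega\in\Omega}[|Y_p(\omega)|\geq s]$, together with $E(s):=\limsup_p E_p(s)=\bigcap_{m\geq1}\bigcup_{p\geq m}E_p(s)$. For every $m$, $E(s)\subseteq\bigcup_{p\geq m}E_p(s)$, so countable subadditivity gives
\begin{align}
    \bbP(E(s))\leq\sum_{p\geq m}\bbP(E_p(s)) \,.
\end{align}
The right-hand side is the tail of a convergent series, so it tends to $0$ as $m\to\infty$. Hence $\bbP(E(s))=0$. Equivalently, for almost every $\omega$ there is $m(\omega,s)$ such that $|Y_p(\omega)|<s$ for all $p\geq m(\omega,s)$.

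\textbf{Step 2 (countable intersection).} The summability hypothesis holds for every $s>0$, so we apply Step 1 with $s=1/k$ for each $k\in\bbN$. The set $N:=\bigcup_{k\geq1}E(1/k)$ is a countable union of null sets, so $\bbP(N)=0$. Take $\omega\notin N$ and $\varepsilon>0$, and choose $k$ with $1/k<\varepsilon$. Then $|Y_p(\omega)|<1/k<\varepsilon$ for all sufficiently large $p$, so $Y_p(\omega)\to0$. This shows that $\{\omega\in\Omega:\,Y_p(\omega)\to0\}\supseteq\Omega\setminus N$, which has probability one.

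\textbf{Remarks.} Both the real and complex cases are covered, since only $|Y_p|$ is used. Measurability of the convergence event follows from writing it as $\bigcap_k\bigcup_m\bigcap_{p\geq m}\{|Y_p|<1/k\}$.

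The only point requiring care is the passage from "for each $s$" to "almost surely for all $s$" simultaneously. A direct union over uncountably many $s$ is not legitimate, which is why Step 2 restricts to the countable family $s=1/k$.

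In later uses, the hypothesis will be checked from Theorem~\ref{thm:0.9}. The bound $C e^{-cN_{\mathrm{tr}}s^2}+Ce^{-cN_{\mathrm{tr}}}$ with $N_{\mathrm{tr}}\asymp p/\gamma$ is summable in $p$ for each fixed $s$, since it decays exponentially in $p$.
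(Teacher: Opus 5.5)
Your proof is correct. The paper states this lemma without proof as a standard fact, and your argument is exactly the standard proof being invoked: the first Borel--Cantelli lemma applied to $\{|Y_p|\geq s\}$ for each fixed $s$, then a countable union of null sets over $s=1/k$, with measurability of the convergence event handled explicitly. Your closing remark also matches how the paper checks the summability hypothesis in Proposition~\ref{prop:uniform_linear_functional_resolvent_convergence}: a linear lower bound $N_{\mathrm{tr}}\geq c_0 p$ from Assumption~0.3 makes the concentration bound decay exponentially in $p$.
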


We first extract the common probabilistic engine from
Theorem~\ref{thm:0.9}.
\begin{proposition}[Almost-sure scalar resolvent convergence]
    \label{prop:uniform_linear_functional_resolvent_convergence}
    Let \(u_p\) be a linear functional satisfying
    \begin{align}
        \sup_p\operatorname{Lip}(u_p)
        \leq L
        < \infty \,.
        \label{eq:uniform_linear_functional_lipschitz_bound}
    \end{align}
    Under Assumption~\ref{assum:0.3-0.7},
    \begin{align}
        u_p(R_p - \widetilde{R}_p)
        \longrightarrow
        0
        \qquad
        \text{a.s. as } p \to \infty \,.
        \label{eq:uniform_linear_functional_resolvent_convergence}
    \end{align}
\end{proposition}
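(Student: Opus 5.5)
Applying Theorem~\ref{thm:0.9} directly to the fixed sequence $u_p$ at each fixed threshold $s>0$ and summing over $p$ with Lemma~\ref{lemma:borel_cantelli_convergence} should suffice. The functional $u_p/L$ (or $u_p/\max(L,1)$) is a linear functional on $\bigl(\calF(\bbC,\bbC^{p\times p}),\norm{\cdot}_{F,S^{\epsilon}}\bigr)$ with Lipschitz constant at most one, so Theorem~\ref{thm:0.9} applied with threshold $s/L$ gives
\begin{align}
    \bbP\bigl(|u_p(R_p-\widetilde R_p)|\geq s\bigr)
    \leq
    C\exp\!\left(-cN_{\mathrm{tr}}s^2/L^2\right)+C\exp(-cN_{\mathrm{tr}}) \,,
\end{align}
with $C,c$ independent of $p$. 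This uniformity is the point that must be checked: the constants in Theorem~\ref{thm:0.9} depend only on the constants of Assumption~\ref{assum:0.3-0.7}, namely $K_1$, the concentration constants $C=2$, $c=C_1\sqrt2$, and $K_2$. Assumption~0.7 has been removed by Remark~\ref{remark:assumption_0.7}.

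Next, along the high-dimensional sequence we have $p/N_{\mathrm{tr}}\to\gamma\in(0,\infty)$, so $N_{\mathrm{tr}}\geq p/(2\gamma)$ for all large $p$. Each term on the right is then bounded by $C e^{-c' p}$ with $c'>0$ depending on $s$, $L$ and $\gamma$. Such a geometric series is summable in $p$. Lemma~\ref{lemma:borel_cantelli_convergence} then yields $u_p(R_p-\widetilde R_p)\to0$ almost surely. If $N_{\mathrm{tr}}$ is regarded as a function of $p$ that may repeat values, nothing changes, since the bound is indexed by $p$. If the functional is complex-valued, I would treat the real and imaginary parts separately, each being a real linear functional with Lipschitz constant at most $L$.

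The main obstacle I expect is the uniformity of the constants when the Lipschitz constant $C_1$ is only $\order{\poly(\log p)}$ rather than universal, as in Assumption~\ref{assum:feature_generation}(q). In that case the concentration constant $c$ degrades like $1/\poly(\log p)$. The exponent becomes of order $N_{\mathrm{tr}}/\poly(\log p)\sim p/\poly(\log p)$, which still diverges faster than $2\log p$, so the series $\sum_p e^{-p/\poly(\log p)}$ remains summable. I would therefore track the dependence of $c$ on $C_1$ explicitly, which is quadratic in the Gaussian concentration, and verify this summability rather than rely on fixed constants. A secondary check is that $\epsilon<\lambda$ is fixed so that $-\lambda\notin S^\epsilon$ for all $p$, using \eqref{eq:minus_lambda_outside_S_epsilon}. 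This makes evaluation at $-\lambda$ a legitimate bounded functional.
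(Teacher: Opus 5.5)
Your proposal is correct and follows essentially the same route as the paper. You normalize $u_p$ by $L$ (the paper treats $L=0$ separately), apply Theorem~\ref{thm:0.9} at threshold $s/L$, and use a linear bound $N_{\mathrm{tr}}\geq kp$ to get a summable exponential tail before concluding with Lemma~\ref{lemma:borel_cantelli_convergence}; the paper takes that bound from Assumption~0.3 rather than from $p/N_{\mathrm{tr}}\to\gamma$, and your treatment of the $\order{\poly(\log p)}$ Lipschitz constant matches Remark~\ref{remark:lipschitz_constant}.
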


\begin{proof}
    If \(L = 0\), every \(u_p\) vanishes on differences and the conclusion is immediate. Suppose \(L > 0\). Since \(\tilde{u}_p := u_p/L\) has Lipschitz constant at most one, Theorem~\ref{thm:0.9} gives, for every fixed \(s > 0\),
    \begin{align}
        \bbP\left(
            |u_p(R_p - \widetilde{R}_p)| \geq s
        \right)
        \leq
        \bbP\left(
            |\tilde{u}_p(R_p - \widetilde{R}_p)| \geq \frac{s}{L}
        \right)
        \leq
        C\exp\!\left(
            -cN_{\mathrm{tr}}\frac{s^2}{L^2}
        \right)
        +
        C\exp(-cN_{\mathrm{tr}}) \,.
        \label{eq:uniform_linear_functional_concentration}
    \end{align}
    Assumption~0.3 in Ref.~\cite{louart2021spectral} implies that there exists a constant $k > 0$ such that $\forall p \in \bbN,\, kp \leq N_{\mathrm{tr}}$.
    Therefore,
    \begin{align}
        \bbP\left(
            |u_p(R_p - \widetilde{R}_p)|\geq s
        \right)
        &\leq
        C\exp\!\left(
            -ckp\frac{s^2}{L^2}
        \right)
        +
        C\exp(-ckp) \,.
    \end{align}
    Lemma~\ref{lemma:borel_cantelli_convergence} now gives Eq.~\eqref{eq:uniform_linear_functional_resolvent_convergence}.
\end{proof}

The class of observables directly covered by Proposition~\ref{prop:uniform_linear_functional_resolvent_convergence} includes those whose Lipschitz constants are uniformly bounded in \(p\).
The following lemma computes that constant for every trace-pairing functional $u_p$ and thereby identifies a sufficient uniformly bounded class of test matrices \(\{C_p\}\) for which the almost-sure convergence~\eqref{eq:uniform_linear_functional_resolvent_convergence} holds.
\begin{lemma}[Lipschitz constant of the trace-pairing functional]
    \label{lemma:frobenius_dual_pairing}
    Fix \(\lambda > \epsilon > 0\) and let \(C_p \in \bbR^{p\times p}\) be deterministic.
    Define the linear functional
    \begin{align}
        u_p(A)
        :=
        \Re\!\left(
            \Tr[C_p A(-\lambda)]
        \right) \,,
        \qquad
        A \in \calF(\bbC,\bbC^{p\times p}) \,.
        \label{eq:trace_pairing_functional}
    \end{align}
    Then, on the subspace of functions analytic on \(\bbC\setminus S^{\epsilon}\),
    \begin{align}
        \operatorname{Lip}(u_p)
        =
        \norm{C_p}_F \,,
        \label{eq:frobenius_dual_pairing}
    \end{align}
    the Lipschitz constant being taken with respect to \(\norm{\cdot}_{F,S^{\epsilon}}\).
\end{lemma}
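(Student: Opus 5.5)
The plan is to prove the two inequalities $\operatorname{Lip}(u_p)\leq\norm{C_p}_F$ and $\operatorname{Lip}(u_p)\geq\norm{C_p}_F$ separately. Because $u_p$ is linear, its Lipschitz constant with respect to the seminorm $\norm{\cdot}_{F,S^{\epsilon}}$ is
\begin{align}
    \operatorname{Lip}(u_p)=\sup\set{|u_p(D)|}[\norm{D}_{F,S^{\epsilon}}\leq1] \,,
\end{align}
where $D=A-B$ ranges over the analytic subspace. It is therefore enough to bound $|u_p(D)|$ for a single function $D$, and then to exhibit one $D$ that attains the bound.

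\emph{Upper bound.} Take any $D$ that is analytic on $\bbC\setminus S^{\epsilon}$. Since the real part can only decrease the modulus,
\begin{align}
    |u_p(D)|\leq|\Tr[C_pD(-\lambda)]| \,.
\end{align}
I would write $\Tr[C_pD(-\lambda)]$ as the Hilbert--Schmidt inner product of $C_p\T$ with $D(-\lambda)$. This is legitimate because $C_p$ is real, so its conjugate transpose is $C_p\T$. The Cauchy--Schwarz inequality then gives
\begin{align}
    |\Tr[C_pD(-\lambda)]|\leq\norm{C_p\T}_F\norm{D(-\lambda)}_F=\norm{C_p}_F\norm{D(-\lambda)}_F \,.
\end{align}
Next I would use Eq.~\eqref{eq:minus_lambda_outside_S_epsilon}: since $0<\epsilon<\lambda$, the point $-\lambda$ lies in $\bbC\setminus S^{\epsilon}$, so $\norm{D(-\lambda)}_F\leq\norm{D}_{F,S^{\epsilon}}$. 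Combining the three displays yields $|u_p(D)|\leq\norm{C_p}_F\norm{D}_{F,S^{\epsilon}}$. This gives $\operatorname{Lip}(u_p)\leq\norm{C_p}_F$, and the bound is compatible with the seminorm: if $\norm{D}_{F,S^{\epsilon}}=0$, then $u_p(D)=0$.

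\emph{Lower bound.} If $C_p=0$, both sides of Eq.~\eqref{eq:frobenius_dual_pairing} vanish and there is nothing to prove. Otherwise I would use the constant function $A(z)\equiv C_p\T$ and compare it with $B\equiv0$. A constant function is entire, so $A$ lies in the admissible subspace. Its seminorm is $\norm{A}_{F,S^{\epsilon}}=\norm{C_p}_F$, because the supremum runs over the nonempty set $\bbC\setminus S^{\epsilon}$ and $\norm{A(z)}_F$ takes the same value at every point. On the other hand,
\begin{align}
    u_p(A)=\Tr[C_pC_p\T]=\norm{C_p}_F^2 \,.
\end{align}
Hence $|u_p(A)-u_p(B)|/\norm{A-B}_{F,S^{\epsilon}}=\norm{C_p}_F$, which shows $\operatorname{Lip}(u_p)\geq\norm{C_p}_F$ and completes the equality.

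There is no real obstacle here. The only points needing care are (i) keeping the transpose and conjugation straight, which is easy because $C_p$ is real, and (ii) remembering that the seminorm takes a supremum only over points outside $S^{\epsilon}$. That is exactly why the condition $\epsilon<\lambda$, through Eq.~\eqref{eq:minus_lambda_outside_S_epsilon}, is needed to control the value at $-\lambda$.
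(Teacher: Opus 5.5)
Your proposal is correct and follows essentially the same route as the paper. The upper bound comes from Cauchy--Schwarz for the Frobenius pairing together with $-\lambda\in\bbC\setminus S^{\epsilon}$, which holds because $\epsilon<\lambda$; tightness comes from a constant test function built from $C_p\T$. The only cosmetic difference is that the paper normalizes the test function to $C_p\T/\norm{C_p}_F$ so that it has unit seminorm, whereas you use $C_p\T$ and compute the ratio directly.
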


\begin{proof}
    For the upper bound, the Cauchy--Schwarz inequality for the Frobenius inner
    product gives, for any \(A, B\),
    \begin{align}
        |u_p(A) - u_p(B)|
        &\leq
        |\Tr[C_p\{A(-\lambda) - B(-\lambda)\}]|
        \\
        &\leq
        \norm{C_p}_F
        \norm{A(-\lambda) - B(-\lambda)}_F
        \\
        &\leq
        \norm{C_p}_F
        \norm{A - B}_{F,S^{\epsilon}} \,,
    \end{align}
    where the last step uses \(-\lambda \in \bbC \setminus S^{\epsilon}\), which holds because \(\hat{\Sigma}_p \succeq 0\) implies \(S \subset [0,\infty)\) and hence \(\operatorname{dist}(-\lambda,S) \geq \lambda > \epsilon\).
    This is the only place in this lemma where the condition \(0 < \epsilon < \lambda\) is used.
    If \(C_p = 0\), the claim is immediate. Assume henceforth that \(C_p \neq 0\).
    To check the tightness of the bound, take \(B = 0\) and let \(A\) be the constant function \(A\equiv C_p\T/\norm{C_p}_F\), which is analytic on all of \(\bbC\) and satisfies \(\norm{A}_{F,S^{\epsilon}}=1\); then \(u_p(A) = \Tr[C_p C_p\T]/\norm{C_p}_F=\norm{C_p}_F\).
\end{proof}

Combining Lemma~\ref{lemma:frobenius_dual_pairing} with Proposition~\ref{prop:uniform_linear_functional_resolvent_convergence} yields the master interface.
\begin{corollary}[Frobenius-dual resolvent convergence]
    \label{cor:frobenius_dual_resolvent_convergence}
    Fix \(\lambda > \epsilon > 0\) and let \(C_p \in \bbR^{p\times p}\) be deterministic matrices with
    \begin{align}
        M_F
        :=
        \sup_p\norm{C_p}_F
        < \infty \,.
        \label{eq:frobenius_bounded_test_matrices}
    \end{align}
    Under Assumption~\ref{assum:0.3-0.7},
    \begin{align}
        \Tr\left[
            C_p
            \bigl\{
                R_p(-\lambda)-\widetilde{R}_p(-\lambda)
            \bigr\}
        \right]
        \longrightarrow
        0
        \qquad
        \text{a.s.}
        \label{eq:frobenius_dual_resolvent_convergence}
    \end{align}
    Equivalently, \(R_p(-\lambda)\asymp_F\widetilde{R}_p(-\lambda)\) and \(Q_p(\lambda)\asymp_F\overline Q_p(\lambda)\) in the sense of Definition~\ref{def:frobenius_deterministic_equivalent}.
\end{corollary}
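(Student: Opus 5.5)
The proof will combine Lemma~\ref{lemma:frobenius_dual_pairing} with Proposition~\ref{prop:uniform_linear_functional_resolvent_convergence}, and then translate the abstract deterministic equivalent $\widetilde{R}_p(-\lambda)$ into the form $\kappa_\lambda(\Sigma+\kappa_\lambda I_p)^{-1}/\lambda$.

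\textbf{Step 1 (real part).} Fix $\epsilon\in(0,\lambda)$ and consider the real-part functional $u_p(A)=\Re\Tr[C_pA(-\lambda)]$. By Lemma~\ref{lemma:frobenius_dual_pairing}, its Lipschitz constant is $\norm{C_p}_F\le M_F$, uniformly in $p$. The resolvent $R_p$ is analytic off the spectrum of $\hat{\Sigma}_p$. Theorem~\ref{thm:0.9} is formulated on that function space, and $\widetilde{R}_p$ is analytic away from $S^\epsilon$. Hence Proposition~\ref{prop:uniform_linear_functional_resolvent_convergence} gives $\Re\Tr[C_p(R_p-\widetilde R_p)(-\lambda)]\to0$ almost surely.

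\textbf{Step 2 (imaginary part).} Apply the same argument to the functional $A\mapsto\Re\Tr[(-iC_p)A(-\lambda)]$, which has the same Lipschitz bound. This shows the imaginary part also tends to zero. In fact, at the real point $z=-\lambda$ both $R_p$ and $\widetilde R_p$ are real matrices, so this step is only a formality.

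\textbf{Step 3 (explicit form of $\widetilde R_p$).} Set $\Sigma_i=\Sigma$ and $z=-\lambda$. By uniqueness, all $\tilde\Lambda_i^z$ coincide; write $\tilde\Lambda:=\tilde\Lambda_1^{-\lambda}$. Then
\begin{align}
\widetilde R_p(-\lambda)=\Bigl(-\tfrac{\lambda}{\tilde\Lambda}\Sigma+\lambda I_p\Bigr)^{-1}.
\end{align}
Now define $\kappa:=-\lambda\tilde\Lambda/\lambda=-\tilde\Lambda$, so that $-\lambda/\tilde\Lambda=\lambda/\kappa$. This gives
\begin{align}
\widetilde R_p(-\lambda)=\frac{\kappa}{\lambda}(\Sigma+\kappa I_p)^{-1}.
\end{align}
Substituting into the fixed-point equation~\eqref{eq:self_consistent_eq_louart} yields
\begin{align}
-\kappa=-\lambda-\frac{1}{N_{\mathrm{tr}}}\Tr\bigl[\Sigma(I_p+\Sigma/\kappa)^{-1}\bigr],
\end{align}
that is,
\begin{align}
\kappa=\lambda+\frac{\kappa}{N_{\mathrm{tr}}}\Tr[\Sigma(\Sigma+\kappa I_p)^{-1}].
\end{align}
Dividing by $\kappa$ gives exactly~\eqref{eq:self-consistent_eq}.

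\textbf{Step 4 (identifying $\kappa$).} I must ensure that the solution $\tilde\Lambda$ selected by Ref.~\cite{louart2021spectral} corresponds to the positive root, so that $\kappa=\kappa_\lambda$. I would argue this by uniqueness of the positive root (Lemma~\ref{lemma:unique_positive_fixed_point}) together with the sign and analyticity properties of $\tilde\Lambda^z$ on $z<0$ that come from the Stieltjes-type construction. Matching these is the main obstacle, since the paper's conventions must be checked carefully.

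\textbf{Step 5 (conclusion).} Multiplying by $\lambda$ gives $\lambda R_p(-\lambda)=Q_p(\lambda)$ and $\lambda\widetilde R_p(-\lambda)=\overline Q_p(\lambda)$. Since $\lambda$ is fixed, convergence of $\Tr[C_p(R_p-\widetilde R_p)(-\lambda)]$ to zero is equivalent to convergence of $\Tr[C_p(Q_p-\overline Q_p)]$ to zero. This establishes both asserted equivalences in the sense of Definition~\ref{def:frobenius_deterministic_equivalent}.

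One caveat concerns the relaxed Lipschitz constant $C_1=\order{\poly(\log p)}$. It enters the exponent of Theorem~\ref{thm:0.9} through $c\propto C_1^{-2}$. The summability in Lemma~\ref{lemma:borel_cantelli_convergence} still holds because $p/\poly(\log p)\to\infty$, so I would redo the Borel--Cantelli bound with $\exp(-c'p/\poly(\log p))$ in place of $\exp(-ckp)$.
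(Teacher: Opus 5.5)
Your proposal is correct and follows the paper's proof. Lemma~\ref{lemma:frobenius_dual_pairing} gives $\operatorname{Lip}(u_p)=\norm{C_p}_F\le M_F$, and Proposition~\ref{prop:uniform_linear_functional_resolvent_convergence} then gives almost-sure convergence. Because $R_p(-\lambda)$ and $\widetilde{R}_p(-\lambda)$ are real, taking the real part changes nothing, and multiplying by the fixed $\lambda$ transfers the statement to $Q_p(\lambda)$ and $\overline Q_p(\lambda)$. Your poly-log Lipschitz caveat likewise matches Remark~\ref{remark:lipschitz_constant}.

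The paper does not argue your Step~4 separately; it simply substitutes $\tilde\Lambda_i^{-\lambda}=-\kappa_\lambda$ in its specialization remark. The simplest way to close it is to run Step~3 in reverse. By your computation, the constant tuple $-\kappa_\lambda$ solves the system~\eqref{eq:self_consistent_eq_louart}, and it is negative, indeed $<-\lambda$, by Lemma~\ref{lemma:unique_positive_fixed_point}. Uniqueness of that system's solution then identifies it with $\tilde\Lambda^{-\lambda}$, so no analyticity argument is needed.
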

\begin{proof}
    By Lemma~\ref{lemma:frobenius_dual_pairing} the functionals \(u_p\) of
    Eq.~\eqref{eq:trace_pairing_functional} satisfy
    \(\sup_p\operatorname{Lip}(u_p) = \sup_p\norm{C_p}_F=M_F < \infty\), so
    Proposition~\ref{prop:uniform_linear_functional_resolvent_convergence} gives
    \(u_p(R_p - \widetilde{R}_p)\to0\) a.s.
    Both \(R_p(-\lambda)\) and \(\widetilde{R}_p(-\lambda)\) are real symmetric, so taking
    the real part does not change the value of the trace, and
    Eq.~\eqref{eq:frobenius_dual_resolvent_convergence} follows.
    The final claim for \(Q_p\) follows because
    \(Q_p(\lambda) = \lambda R_p(-\lambda)\) and
    \(\overline Q_p(\lambda) = \lambda\widetilde{R}_p(-\lambda)\), so that the two traces
    differ only by the fixed factor \(\lambda\).
\end{proof}

The following table summarizes the two scalar interfaces obtained below by specializing Corollary~\ref{cor:frobenius_dual_resolvent_convergence}.

\begin{table}[H]
    \centering
    \setlength{\tabcolsep}{15pt}
    \begin{tabular}{llll}
    \toprule
    Interface & Deterministic input & Test matrix \(C_p\) & \(\norm{C_p}_F\) \\
    \midrule
    Normalized trace
        & \(D_p\), with \(\sup_p\norm{D_p}_\infty < \infty\)
        & \(D_p/p\)
        & \(\leq\norm{D_p}_\infty/\sqrt p\) \\
    Bounded bilinear form
        & \(u_p, v_p\), with \(\sup_p \norm{u_p}_2 \norm{v_p}_2 < \infty\)
        & \(v_p u_p\T\)
        & \(= \norm{u_p}_2 \norm{v_p}_2\) \\
    \bottomrule
    \end{tabular}
    \caption{Overview of the two scalar interfaces established below in Corollaries~\ref{cor:normalized_trace_resolvent_convergence} and~\ref{cor:bounded_bilinear_resolvent_convergence}, both obtained from the Frobenius-dual resolvent convergence in Corollary~\ref{cor:frobenius_dual_resolvent_convergence}. The normalized-trace interface is used in the variance analysis, whereas the quadratic specialization \(u_p = v_p\) of the second interface is used in the bias analysis.}
    \label{tab:resolvent_interfaces}
\end{table}

\begin{corollary}[Normalized-trace resolvent convergence]
    \label{cor:normalized_trace_resolvent_convergence}
    Fix \(\lambda > \epsilon > 0\), and let
    \(D_p \in \bbR^{p\times p}\) be deterministic matrices satisfying
    \begin{align}
        M_D
        &:=
        \sup_p\norm{D_p}_\infty
        < \infty \,.
    \end{align}
    Under Assumption~\ref{assum:0.3-0.7},
    \begin{align}
        \frac{1}{p}
        \Tr\left[
            D_p
            \bigl\{
                R_p(-\lambda)-\widetilde{R}_p(-\lambda)
            \bigr\}
        \right]
        \longrightarrow
        0
        \qquad
        \text{a.s.}
        \label{eq:normalized_trace_resolvent_convergence}
    \end{align}
\end{corollary}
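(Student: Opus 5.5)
The plan is to specialize Corollary~\ref{cor:frobenius_dual_resolvent_convergence} to the test matrix $C_p := D_p/p$. The only thing to check is that this choice satisfies the Frobenius-boundedness hypothesis~\eqref{eq:frobenius_bounded_test_matrices}.

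First, I would bound the Frobenius norm of $D_p/p$ by its spectral norm. Using $\norm{A}_F \le \sqrt{\rank A}\,\norm{A}_\infty \le \sqrt{p}\,\norm{A}_\infty$ for any $A \in \bbR^{p\times p}$, we obtain
\begin{align}
    \norm{C_p}_F
    = \frac{1}{p}\norm{D_p}_F
    \le \frac{\sqrt{p}}{p}\norm{D_p}_\infty
    \le \frac{M_D}{\sqrt{p}}
    \le M_D \,.
\end{align}
Hence $M_F := \sup_p \norm{C_p}_F \le M_D < \infty$. In fact $\norm{C_p}_F \to 0$, which is stronger than needed. The matrices $C_p$ are deterministic and real, as Corollary~\ref{cor:frobenius_dual_resolvent_convergence} requires.

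Second, with $\lambda > \epsilon > 0$ fixed as in the statement and Assumption~\ref{assum:0.3-0.7} in force, Corollary~\ref{cor:frobenius_dual_resolvent_convergence} gives
\begin{align}
    \Tr\!\left[C_p\{R_p(-\lambda)-\widetilde R_p(-\lambda)\}\right]
    = \frac1p\Tr\!\left[D_p\{R_p(-\lambda)-\widetilde R_p(-\lambda)\}\right]
    \longrightarrow 0 \quad \text{a.s.}
\end{align}
This is exactly Eq.~\eqref{eq:normalized_trace_resolvent_convergence}. The version of the interface stated for $Q_p$ in Corollary~\ref{cor:resolvent_scalar_interfaces}(ii) then follows by multiplying by the fixed factor $\lambda$.

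There is no genuine obstacle here. The only point needing care is the inequality $\norm{D_p}_F \le \sqrt{p}\,\norm{D_p}_\infty$, which holds because the squared Frobenius norm is the sum of at most $p$ squared singular values, each bounded by $\norm{D_p}_\infty^2$. One could also strengthen the result with a quantitative rate: the Lipschitz constant $\norm{C_p}_F = \calO(p^{-1/2})$ feeds into Proposition~\ref{prop:uniform_linear_functional_resolvent_convergence}'s tail bound $C\exp(-ckp\,s^2/L^2)$. This rate is not needed for the almost-sure statement.
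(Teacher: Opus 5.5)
Your proposal is correct and follows the paper's proof: apply Corollary~\ref{cor:frobenius_dual_resolvent_convergence} with $C_p := D_p/p$, and use $\norm{D_p}_F \le \sqrt{p}\,\norm{D_p}_\infty$ to obtain $\sup_p \norm{C_p}_F \le M_D < \infty$. Your side remarks, namely the $\lambda$-rescaling to $Q_p$ and the faster tail decay coming from $\norm{C_p}_F = \calO(p^{-1/2})$, are consistent with the paper but are not needed for the almost-sure statement.
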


\begin{proof}
    Apply Corollary~\ref{cor:frobenius_dual_resolvent_convergence} with \(C_p:=D_p/p\).
    Then, we have
    \begin{align}
        \norm{C_p}_F
        =
        \frac{\norm{D_p}_F}{p}
        \leq
        \frac{\sqrt p\,\norm{D_p}_\infty}{p}
        =
        \frac{\norm{D_p}_\infty}{\sqrt p}
        \leq
        \frac{M_D}{\sqrt p}
        \leq
        M_D \,.
        \label{eq:normalized_trace_functional_lipschitz}
    \end{align}
    Therefore, Eq.~\eqref{eq:frobenius_bounded_test_matrices} holds with \(M_F \leq M_D\), and Eq.~\eqref{eq:frobenius_dual_resolvent_convergence} yields Eq.~\eqref{eq:normalized_trace_resolvent_convergence}.
\end{proof}

\begin{corollary}[Bounded bilinear-form resolvent convergence]
    \label{cor:bounded_bilinear_resolvent_convergence}
    Fix \(\lambda > \epsilon > 0\), and let
    \(u_p,v_p \in \bbR^p\) be deterministic vectors satisfying
    \begin{align}
        M_{uv}
        &:=
        \sup_p
        \norm{u_p}_2\norm{v_p}_2
        < \infty \,.
    \end{align}
    Under Assumption~\ref{assum:0.3-0.7},
    \begin{align}
        u_p\T
        \bigl\{
            R_p(-\lambda)-\widetilde{R}_p(-\lambda)
        \bigr\}
        v_p
        \longrightarrow
        0
        \qquad
        \text{a.s.}
        \label{eq:bounded_bilinear_form_resolvent_convergence}
    \end{align}
\end{corollary}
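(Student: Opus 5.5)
The plan is to specialize Corollary~\ref{cor:frobenius_dual_resolvent_convergence} to a rank-one test matrix. Set $C_p := v_p u_p\T$. Then the trace identity gives
\begin{align}
    \Tr\bigl[C_p\{R_p(-\lambda)-\widetilde{R}_p(-\lambda)\}\bigr]
    = u_p\T\{R_p(-\lambda)-\widetilde{R}_p(-\lambda)\}v_p ,
\end{align}
which is exactly the scalar whose almost-sure convergence we want. The whole proof therefore reduces to checking that this choice of $C_p$ satisfies the hypothesis~\eqref{eq:frobenius_bounded_test_matrices}.

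First I compute the Frobenius norm of the rank-one matrix:
\begin{align}
    \norm{v_p u_p\T}_F^2 = \Tr[u_p v_p\T v_p u_p\T] = \norm{u_p}_2^2\norm{v_p}_2^2 ,
\end{align}
so $\norm{C_p}_F = \norm{u_p}_2\norm{v_p}_2$. Taking the supremum over $p$ gives $\sup_p\norm{C_p}_F = M_{uv} < \infty$. Since $u_p$ and $v_p$ are deterministic, $C_p$ is deterministic and real. Applying Corollary~\ref{cor:frobenius_dual_resolvent_convergence} with $M_F = M_{uv}$ then yields the claimed almost-sure convergence~\eqref{eq:bounded_bilinear_form_resolvent_convergence}.

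The one point that needs care is matching the real part in Lemma~\ref{lemma:frobenius_dual_pairing} to the unqualified bilinear form. At $z=-\lambda$ both $R_p$ and $\widetilde{R}_p$ are real symmetric: $\widetilde{\Lambda}^{-\lambda}$ is real because the fixed point is real for real negative $z$. The bilinear form is therefore real, and taking its real part changes nothing. Corollary~\ref{cor:frobenius_dual_resolvent_convergence} already records this observation, so no extra argument is needed.

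The degenerate case $u_p=0$ or $v_p=0$ is trivial. There is no genuine obstacle: the step most worth double-checking is the Frobenius-norm identity, because it is what keeps the Lipschitz constant uniformly bounded in $p$. This is where the bilinear interface is stronger than the normalized-trace one, since it needs no $1/p$ factor.
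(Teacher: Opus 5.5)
Your proposal is correct and is essentially the paper's own proof: apply Corollary~\ref{cor:frobenius_dual_resolvent_convergence} with the rank-one test matrix $C_p = v_p u_p\T$, note $\norm{C_p}_F = \norm{u_p}_2\norm{v_p}_2 \leq M_{uv}$, and use $\Tr[C_p M] = u_p\T M v_p$. Your remark on the real part is the same observation the paper makes in proving Corollary~\ref{cor:frobenius_dual_resolvent_convergence}, so nothing further is needed.
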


\begin{proof}
    Apply
    Corollary~\ref{cor:frobenius_dual_resolvent_convergence}
    with the rank-one test matrix
    \(C_p:=v_pu_p\T\).
    Then
    \begin{align}
        \norm{C_p}_F
        &=
        \norm{v_pu_p\T}_F
        =
        \norm{u_p}_2\norm{v_p}_2
        \leq
        M_{uv} \,,
    \end{align}
    so Eq.~\eqref{eq:frobenius_bounded_test_matrices} holds with \(M_F \leq M_{uv}\), and \(\Tr[C_p M] = u_p\T Mv_p\) turns Eq.~\eqref{eq:frobenius_dual_resolvent_convergence} into Eq.~\eqref{eq:bounded_bilinear_form_resolvent_convergence}.
\end{proof}

\begin{remark}[Different Frobenius scales of the two interfaces]
    The two interfaces operate at different Frobenius scales.
    For the normalized-trace interface, \(C_p = D_p/p\) satisfies $\norm{C_p}_F \leq \norm{D_p}_\infty/\sqrt{p} = \calO(p^{-1/2})$, whereas for the bounded bilinear-form interface, \(C_p = v_p u_p\T\) satisfies $\norm{C_p}_F = \norm{u_p}_2 \norm{v_p}_2 = \calO(1)$.
    Thus, the canonical scales of the two test classes differ by up to a factor of \(\sqrt p\). In particular, representing the bilinear form \(u_p\T M v_p = \Tr[C_p M]\) through the normalized-trace interface \(\frac{1}{p}\Tr[D_p M]\) would require \(D_p = p\, v_p u_p\T\) and \(\norm{D_p}_\infty = p \norm{u_p}_2 \norm{v_p}_2\), whose operator norm is not guaranteed to remain uniformly bounded in \(p\) and can grow linearly with \(p\), so the uniformly bounded operator-norm assumption does not necessarily hold.
    This is why the bounded quadratic-form convergence needed for the bias does not follow from the normalized-trace interface alone.
\end{remark}

\subsection{Identically Distributed Feature Vectors and the Effective Regularization Parameter}
\begin{remark}[Specialization to identically distributed feature vectors]
    When $\Sigma_i = \Sigma$ for all $i \in [N_{\mathrm{tr}}]$, the fixed-point equation~\eqref{eq:self_consistent_eq_louart} reduces to the following form by replacing $z$ with $-\lambda$ and $\tilde{\Lambda}_i^z$ with $-\kappa_\lambda$:
    \begin{align}
        \kappa_\lambda
        = \lambda + \frac{1}{N_{\mathrm{tr}}}
        \Tr[
            \Sigma\left(I_p + \frac{\Sigma}{\kappa_\lambda}\right)^{-1}
        ] \,.
        \label{eq:fixed_point_equation}
    \end{align}
    In this case, the deterministic equivalent $\widetilde{R}(-\lambda)$ simplifies to
    \begin{align}
        \widetilde{R}(-\lambda)=\frac{\kappa_\lambda}{\lambda}(\Sigma+\kappa_\lambda I_p)^{-1} \,.
    \end{align}
    Applying
    Corollary~\ref{cor:frobenius_dual_resolvent_convergence}
    gives
    \begin{align}
        \lambda(\hat\Sigma+\lambda I_p)^{-1}
        \asymp_F
        \kappa_\lambda(\Sigma+\kappa_\lambda I_p)^{-1}.
    \end{align}
    This is the specialization stated in Proposition~\ref{prop:resolvent_deterministic_equivalent}.
    Note that we do not need Assumption~0.7 in Ref.~\cite{louart2021spectral} to establish this specialization, as explained in Remark~\ref{remark:assumption_0.7}.
\end{remark}

\begin{lemma}[Uniqueness of the Positive Solution to the Self-Consistent Equation]
    \label{lemma:unique_positive_fixed_point}
    Let $A \in \bbR^{p\times p}$ be positive definite, and let $a > 0$. Then the equation
    \begin{align}
        \frac{1}{N_{\mathrm{tr}}}\Tr\left[A(A+\kappa I_p)^{-1}\right]
        +\frac{a}{\kappa}
        =1
        \label{eq:general_positive_fixed_point}
    \end{align}
    has exactly one positive solution $\kappa > 0$.
    Moreover, this solution satisfies $\kappa > a$.
\end{lemma}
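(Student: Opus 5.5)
Since $A\succ0$, diagonalize $A$ with eigenvalues $\alpha_1,\ldots,\alpha_p>0$. The equation~\eqref{eq:general_positive_fixed_point} then reads
\begin{align}
    F(\kappa) := \frac{1}{N_{\mathrm{tr}}}\sum_{i=1}^p \frac{\alpha_i}{\alpha_i+\kappa} + \frac{a}{\kappa} = 1,
    \qquad \kappa>0 .
\end{align}
The plan is to show that $F\colon(0,\infty)\to(0,\infty)$ is continuous and strictly decreasing, and that its range is all of $(0,\infty)$. Existence and uniqueness of a positive solution then follow from the intermediate value theorem together with injectivity.

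\emph{Monotonicity.} Each term $\alpha_i/(\alpha_i+\kappa)$ has derivative $-\alpha_i/(\alpha_i+\kappa)^2<0$, and $a/\kappa$ has derivative $-a/\kappa^2<0$. Hence $F'(\kappa)<0$ on $(0,\infty)$, and $F$ is strictly decreasing and continuous.

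\emph{Limits.} As $\kappa\downarrow0$, the term $a/\kappa\to+\infty$ because $a>0$, while the trace term stays bounded by $p/N_{\mathrm{tr}}$. So $F(\kappa)\to+\infty$. As $\kappa\to\infty$, both terms tend to $0$, so $F(\kappa)\to0$. Since $F$ is continuous and strictly decreasing, it is a bijection of $(0,\infty)$ onto $(0,\infty)$. In particular, the level $1$ is attained at exactly one $\kappa>0$.

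\emph{The bound $\kappa>a$.} At the solution, the trace term is strictly positive, because every $\alpha_i>0$ and $p\ge1$. Therefore $a/\kappa = 1 - \frac{1}{N_{\mathrm{tr}}}\Tr[A(A+\kappa I_p)^{-1}] < 1$, which gives $\kappa>a$.

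There is no real obstacle here. The only point needing care is to use positivity of both $a$ and $A$: $a>0$ is what makes $F$ blow up at $0^+$, and $A\succ0$ (indeed, any nonzero $A\succeq0$ would do) is what makes the inequality $\kappa>a$ strict.
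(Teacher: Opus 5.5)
Your proof is correct and follows essentially the same route as the paper: strict monotonicity of the left-hand side via its derivative, the limits $+\infty$ as $\kappa\downarrow0$ and $0$ as $\kappa\to\infty$, continuity for existence and uniqueness, and then positivity of the trace term to get $\kappa>a$. The paper phrases the last step as $\kappa = a + \frac{\kappa}{N_{\mathrm{tr}}}\Tr[A(A+\kappa I_p)^{-1}]$, which is your $a/\kappa<1$ multiplied through by $\kappa$.
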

\begin{proof}
    Define
    \begin{align}
        f(\kappa)
        &:=
        \frac{1}{N_{\mathrm{tr}}}\Tr\left[A(A+\kappa I_p)^{-1}\right]
        +\frac{a}{\kappa},
        \qquad \kappa>0.
    \end{align}
    If $\alpha_1,\ldots,\alpha_p\geq0$ are the eigenvalues of $A$, then
    \begin{align}
        f'(\kappa)
        =
        -\frac{1}{N_{\mathrm{tr}}}\sum_{i=1}^p
        \frac{\alpha_i}{(\alpha_i+\kappa)^2}
        -\frac{a}{\kappa^2}
        <0.
    \end{align}
    Thus, $f$ is strictly decreasing. Furthermore,
    \begin{align}
        \lim_{\kappa\downarrow0}f(\kappa) = +\infty,
        \qquad
        \lim_{\kappa \to \infty}f(\kappa) = 0.
    \end{align}
    Hence, by continuity, $f(\kappa) = 1$ has exactly one positive solution.

    Finally, rearranging the fixed-point equation gives
    \begin{align}
        \kappa
        =
        a + \frac{\kappa}{N_{\mathrm{tr}}}
        \Tr\left[A(A+\kappa I_p)^{-1}\right]
        > a.
    \end{align}
\end{proof}

\begin{remark}\label{remark:lipschitz_constant}
    In Assumption~\ref{assum:feature_generation}, we consider a Lipschitz constant $C_1$ for the feature-generating map $\Omega$.
    In Theorem~\ref{thm:0.9}, the exponent $c$ in the concentration inequality~\eqref{eq:concentration_resolvent} depends on the Lipschitz constant $C_1$ as $c \propto 1/C_1^2$.
    In the quantum setting, the Lipschitz constant $C_1$ scales as $\sqrt{dL}\nu_{\max}$, as shown in Appendix~\ref{appdx:sec:lipschitz_qml_feature_map}, if we assume $\operatorname{Lip}(\chi)$ is constant.
    Since \(n = \order{\log p} = \order{\log{N_{\mathrm{tr}}}}\) by Assumption~\ref{assum:exponential_effective_dimension} and we assume $d,L \propto \poly(n)$, $C_1 \propto \poly(\log{N_{\mathrm{tr}}})$.
    While this choice results in a Lipschitz constant that grows with $N_{\mathrm{tr}}$, the subsequent results (Corollary~\ref{cor:normalized_trace_resolvent_convergence} and Corollary~\ref{cor:bounded_bilinear_resolvent_convergence}) still hold.
    This is because the concentration inequality \ref{eq:concentration_resolvent} in Theorem~\ref{thm:0.9} is simply adjusted by replacing the exponent $N_{\mathrm{tr}}$ with $\order{N_{\mathrm{tr}}/C_1^2} = \order{N_{\mathrm{tr}}/\poly(\log{N_{\mathrm{tr}}})}$ and the Borel--Cantelli lemma still applies. Thus, the growth of $C_1$ only slows the concentration rate while preserving almost-sure convergence.
\end{remark}
\newpage
\section{Derivative Trick for the Deterministic Equivalent of the Test Risk}
\label{appdx:sec:derivative_method}

In this section, we derive the deterministic equivalents of the bias and variance terms in Theorem~\ref{thm:test_risk_DE}. These terms are given by
\begin{align}
    \calB_{\lambda,\hat{\Sigma}}
    &=
    \lambda^2\bth_*\T
    (\hat{\Sigma}+\lambda I_p)^{-1}
    \Sigma
    (\hat{\Sigma}+\lambda I_p)^{-1}
    \bth_* \,,
    \label{eq:derivative_trick_bias_definition}
    \\
    \calV_{\lambda,\hat{\Sigma}}
    &=
    \frac{\sigma^2}{N_{\mathrm{tr}}}
    \Tr\left[
        \Sigma\hat{\Sigma}
        (\hat{\Sigma}+\lambda I_p)^{-2}
    \right] \,.
    \label{eq:derivative_trick_variance_definition}
\end{align}
Both expressions contain second-order resolvents. The derivative trick generates them by differentiating a first-order resolvent with respect to an auxiliary perturbation parameter or the regularization parameter~\cite{zavatone-vethlecture,atanasov2024scaling}.
The principal technical issue is that pointwise convergence of analytic functions does not, in general, imply convergence of their derivatives.
To rigorously justify the convergence of the derivatives, our proof proceeds in the following four steps.
\begin{enumerate}[leftmargin=20pt, label=\textbf{Step \arabic*:}]
    \item \textbf{Reduction to derivatives.}
    Both $\calB_{\lambda,\hat{\Sigma}}$ and $\calV_{\lambda,\hat{\Sigma}}$ are second-order in the resolvent, whereas the only available result is the first-order statement of Proposition~\ref{prop:resolvent_deterministic_equivalent}. We therefore represent each of them as the derivative of a first-order resolvent observable, with respect to an auxiliary perturbation parameter $J$ for the bias (Eq.~\eqref{eq:bias_as_derivative}) and with respect to the regularization parameter for the variance (Eq.~\eqref{eq:variance_resolvent_derivative_identity}).
    \item \textbf{General analytic tools (Appendix~\ref{appdx:subsec:reusable_tools}).}
    Passing from convergence of a function to convergence of its derivative requires Vitali's convergence theorem (Lemma~\ref{lemma:vitali}), which needs local uniform boundedness together with pointwise convergence on a set with a limit point. Since our convergence statements hold almost surely, we record the pathwise form of this argument in Proposition~\ref{prop:pathwise_analytic_derivative_transfer}. We also record a quantitative implicit-branch lemma based on Rouch\'e's theorem (Lemma~\ref{lemma:uniform_analytic_implicit_branch}).
    \item \textbf{Uniform analytic extensions (Appendices~\ref{appdx:subsec:vitali_justification}--\ref{appdx:subsec:analyticity_kappa}).}
    Vitali's theorem requires the limit candidate to be analytic as well, so the effective regularization parameter must be continued holomorphically in $J$ and in the complex regularization parameter. Moreover, the domain of this continuation must be \emph{uniform} in $p$. Lemma~\ref{lemma:uniform_analytic_implicit_branch} supplies such branches; the source of the $p$-uniformity is the stability bound $1 - \eta_{\lambda,p} \geq a_\lambda$ of Proposition~\ref{prop:effective_regularization_stability}. Table~\ref{tab:uniform_implicit_branch_correspondence} lists the correspondence between the lemma and its two instantiations.
    \item \textbf{Derivative calculation and convergence (Appendices~\ref{appdx:subsec:bias_derivation}--\ref{appdx:subsec:variance_derivation}).}
    On these domains we establish almost-sure scalar convergence on the real axis, using the quadratic specialization of the bounded-bilinear-form interface (Corollary~\ref{cor:bounded_bilinear_resolvent_convergence}) for the bias and the normalized-trace interface (Corollary~\ref{cor:normalized_trace_resolvent_convergence}) for the variance, apply Vitali's theorem, and finally evaluate the derivatives at $J = 0$ and at $z = \lambda$.
\end{enumerate}

Throughout this section, \(\lambda > 0\) is fixed.
We repeatedly apply the matrix inverse derivative identity
\begin{align}
    \frac{\ind}{\ind t}A(t)^{-1}
    =
    -A(t)^{-1}A'(t)A(t)^{-1} \,.
    \label{eq:matrix_inverse_derivative_identity}
\end{align}
To display the dependence on the effective feature dimension, we temporarily write
\begin{align}
    N_p := N_{\mathrm{tr}}(p),
    \qquad
    \hat{\Sigma}_p,
    \qquad
    \Sigma_p,
    \qquad
    \bth_{*,p} \,.
\end{align}
The subscript \(p\) is suppressed again in the final formulas.

\subsection{Reusable Analytic Tools}
\label{appdx:subsec:reusable_tools}
We first isolate the complex-analytic arguments that will be instantiated in both derivative calculations.
We use the following standard form of Vitali's theorem; see, for example, Ref.~\cite{bai2010spectral}.
\begin{lemma}[Vitali's convergence theorem]\label{lemma:vitali}
    Let \(D\subset\bbC\) be a connected open set, and let \(\{f_p\}_p\) be a sequence of analytic functions on \(D\).
    Assume that the sequence is locally uniformly bounded: for every compact set \(K\subset D\),
    \begin{align}
        \sup_p\sup_{z \in K}|f_p(z)|
        < \infty \,.
        \label{eq:vitali_local_uniform_bound}
    \end{align}
    Suppose that \(f_p(z)\) converges for every \(z\) in a subset \(E\subset D\) having a limit point in \(D\). Then there exists an analytic function \(f\) on \(D\) such that, for every compact set \(K\subset D\),
    \begin{align}
        \sup_{z \in K}
        |f_p(z)-f(z)|
        \longrightarrow
        0 \,,
        \qquad
        \sup_{z \in K}
        |f_p'(z)-f'(z)|
        \longrightarrow
        0 \,.
        \label{eq:vitali_local_uniform_convergence}
    \end{align}
    % Consequently, the derivative sequence is uniformly bounded on every compact
    % subset \(K \subset D\):
    % \begin{align}
    %     \sup_p\sup_{z \in K}|f_p'(z)|
    %     < \infty \,.
    % \end{align}
\end{lemma}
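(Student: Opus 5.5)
We would prove Lemma~\ref{lemma:vitali} by the classical normal-families argument: Montel's theorem, then the identity theorem, then Cauchy's estimates.

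\textbf{Step 1 (equicontinuity).} Fix a compact $K\subset D$ and choose $r>0$ with $2r<\operatorname{dist}(K,\bbC\setminus D)$. Then the closed $2r$-neighbourhood $K_{2r}$ of $K$ is a compact subset of $D$. Let $M:=\sup_p\sup_{K_{2r}}|f_p|$, which is finite by Eq.~\eqref{eq:vitali_local_uniform_bound}. Cauchy's integral formula on circles of radius $r$ centred at points of $K_r$ gives
\begin{align}
    \sup_p\sup_{z\in K_r}|f_p'(z)|\leq \frac{M}{r} \,.
\end{align}
Hence the family $\{f_p\}$ is uniformly Lipschitz, and therefore equicontinuous, on every sufficiently small closed disc around each point of $K$.

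\textbf{Step 2 (subsequential limits).} Exhaust $D$ by an increasing sequence of compact sets. Arzel\`a--Ascoli plus a diagonal extraction show that every subsequence of $\{f_p\}$ has a further subsequence converging locally uniformly on $D$. This is Montel's theorem. A locally uniform limit of analytic functions is analytic, by Morera's theorem applied to triangles, since integrals pass to the limit under uniform convergence.

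\textbf{Step 3 (uniqueness of the limit).} Let $g$ be any subsequential limit. Because $f_p(z)$ converges for every $z\in E$, the function $g$ must equal the pointwise limit $\lim_p f_p(z)$ on $E$. Thus any two subsequential limits agree on $E$. Since $E$ has a limit point in the connected open set $D$, the identity theorem forces them to coincide on all of $D$. Call this common analytic function $f$.

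\textbf{Step 4 (convergence of the full sequence).} Suppose, for contradiction, that $\sup_K|f_p-f|\not\to0$ for some compact $K$. Then some subsequence stays at distance at least $\delta>0$ from $f$ in sup-norm on $K$. By Step 2 it has a further subsequence converging locally uniformly, and by Step 3 the limit is $f$. This contradicts the $\delta$-separation, which proves the first statement in Eq.~\eqref{eq:vitali_local_uniform_convergence}.

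\textbf{Step 5 (derivatives).} Apply Cauchy's estimate to the analytic function $f_p-f$, using the same neighbourhoods $K\subset K_r\subset D$ as in Step 1:
\begin{align}
    \sup_{z\in K}|f_p'(z)-f'(z)| \leq \frac{1}{r}\sup_{w\in K_r}|f_p(w)-f(w)| \longrightarrow 0 \,.
\end{align}
This gives the second statement in Eq.~\eqref{eq:vitali_local_uniform_convergence}.

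The main obstacle is Step 2. Carrying out the diagonal extraction carefully over an exhaustion of $D$ is the only non-routine bookkeeping. Since Montel's theorem is standard (see Ref.~\cite{bai2010spectral}), we would cite it rather than reprove Arzel\`a--Ascoli.
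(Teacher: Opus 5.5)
Your proposal is correct. It is the standard normal-families proof: Montel via Arzel\`a--Ascoli, the identity theorem to pin down the limit, a subsequence argument for convergence of the full sequence, and Cauchy estimates on $K_r$ for the derivatives. The paper gives no proof of its own and simply quotes the lemma as a standard result from Ref.~\cite{bai2010spectral}, so your argument supplies exactly the locally-bounded, locally-uniform form that Proposition~\ref{prop:pathwise_analytic_derivative_transfer} relies on.
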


The following proposition records the pathwise form used in this section.
It is not an additional complex-analytic theorem: it is obtained by first constructing one probability-one event on which the pointwise convergence holds simultaneously on a countable set, and then applying Lemma~\ref{lemma:vitali} and the identity theorem on each sample path.

\begin{proposition}[Almost-sure analytic derivative transfer]
\label{prop:pathwise_analytic_derivative_transfer}
    Let \((\Omega,\calF,\bbP)\) be the underlying probability space.
    Let \(D\subset\bbC\) be a connected open set, and let \(E\subset D\) be a countable set having a limit point in \(D\).
    Let \(h_p\colon\Omega\times D\to\bbC\) be random functions, and write \(h_p(z):=h_p(\cdot,z)\) for the corresponding random variable.
    Suppose that for every \(\omega \in \Omega\),
    \begin{enumerate}[label=(\roman*),leftmargin=25pt]
        \item \(h_p(\omega,\cdot)\) is analytic on \(D\) for every \(p\);
        \item for every compact set \(K\subset D\),
        \begin{align}
            \sup_p\sup_{z \in K}
            |h_p(\omega,z)|
            < \infty \,.
            \label{eq:pathwise_analytic_local_uniform_bound}
        \end{align}
    \end{enumerate}
    Assume further that, for every fixed \(z \in E\),
    \begin{align}
        h_p(z)
        \longrightarrow
        0
        \quad
        \text{a.s. as }p \to \infty \,.
        \quad
        \Longleftrightarrow
        \quad
        \bbP\left(
            \set*{\omega \in \Omega}[
                h_p(\omega,z) \longrightarrow 0
                \text{ as }p \to \infty
            ]
        \right)
        =
        1 \,.
        \label{eq:analytic_transfer_pointwise_assumption}
    \end{align}
    Then there exists an event \(\Omega_0\) with \(\bbP(\Omega_0) = 1\) such that, for every \(\omega \in \Omega_0\) and every compact set \(K\subset D\),
    \begin{align}
        \sup_{z \in K}
        |h_p(\omega,z)|
        &\longrightarrow
        0 \,,
        \label{eq:analytic_transfer_local_uniform_conclusion}
        \\
        \sup_{z \in K}
        |h_p'(\omega,z)|
        &\longrightarrow
        0 \,.
        \label{eq:analytic_transfer_derivative_local_uniform_conclusion}
    \end{align}
    In particular, for every fixed \(z_0 \in D\),
    \begin{align}
        h_p'(z_0)
        \longrightarrow
        0
        \qquad
        \text{a.s. as }p \to \infty \,.
        \label{eq:analytic_transfer_derivative_conclusion}
    \end{align}
\end{proposition}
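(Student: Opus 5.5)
The plan is to build one probability-one event on which the pointwise hypothesis holds simultaneously at every point of \(E\), and then to argue pathwise with Lemma~\ref{lemma:vitali}. For each \(z \in E\), let \(\Omega_z := \set*{\omega \in \Omega}[h_p(\omega,z) \to 0]\). By assumption~\eqref{eq:analytic_transfer_pointwise_assumption}, \(\bbP(\Omega_z) = 1\). Since \(E\) is countable, the set
\begin{align}
    \Omega_0 := \bigcap_{z \in E} \Omega_z
\end{align}
is an event with \(\bbP(\Omega_0) = 1\) by countable subadditivity applied to the complements. Countability of \(E\) is essential here: otherwise the exceptional null sets could not be combined into a single null set.

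Next we fix \(\omega \in \Omega_0\) and work with the deterministic sequence \(f_p := h_p(\omega,\cdot)\). By hypothesis (i), each \(f_p\) is analytic on \(D\). By hypothesis (ii), the sequence is locally uniformly bounded in the sense of~\eqref{eq:vitali_local_uniform_bound}. Because \(\omega \in \Omega_0\), \(f_p(z) \to 0\) for every \(z \in E\), and \(E\) has a limit point in \(D\). Lemma~\ref{lemma:vitali} then yields an analytic \(f\) on \(D\) such that \(f_p \to f\) and \(f_p' \to f'\) uniformly on every compact \(K \subset D\).

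To identify the limit, note that \(f = 0\) on \(E\), which has a limit point in the connected open set \(D\). The identity theorem therefore gives \(f \equiv 0\) on \(D\), and hence \(f' \equiv 0\). Substituting this into the uniform convergence statements gives exactly~\eqref{eq:analytic_transfer_local_uniform_conclusion} and~\eqref{eq:analytic_transfer_derivative_local_uniform_conclusion} for every \(\omega \in \Omega_0\). For the final claim, we take \(K = \{z_0\}\), which is compact. This gives \(h_p'(\omega,z_0) \to 0\) for every \(\omega \in \Omega_0\), i.e.\ almost surely, which is~\eqref{eq:analytic_transfer_derivative_conclusion}.

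The only delicate step is ensuring that the good event does not depend on \(z\). This is handled by the countable intersection above. Hypotheses (i)–(ii) are stated for every \(\omega\), so they need no further intersection. Measurability of \(\Omega_z\) is routine, because \(h_p(\cdot,z)\) are random variables and convergence to zero is a countable \(\liminf/\limsup\) condition. Everything else is a direct application of Lemma~\ref{lemma:vitali} and the identity theorem.
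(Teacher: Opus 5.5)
Your proof is correct and follows the paper's argument step for step. Both form the countable intersection $\Omega_0=\bigcap_{z\in E}\Omega_z$ with countable subadditivity, apply Lemma~\ref{lemma:vitali} pathwise, use the identity theorem to force the limit to vanish, and take $K=\{z_0\}$ for the final claim; your remark on the measurability of $\Omega_z$ only makes explicit what the paper leaves implicit.
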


\begin{proof}
    For every fixed \(z \in E\), define
    \begin{align}
        \Omega_z
        &:=
        \set{\omega \in \Omega}[
            h_p(\omega,z) \longrightarrow 0
            \text{ as }p \to \infty
        ] \,.
        \label{eq:analytic_transfer_pointwise_event}
    \end{align}
    Equation~\eqref{eq:analytic_transfer_pointwise_assumption} means that \(\bbP(\Omega_z) = 1\) for every \(z \in E\). Since \(E\) is countable,
    define
    \begin{align}
        \Omega_0
        &:=
        \bigcap_{z\in E}\Omega_z \,.
        \label{eq:analytic_transfer_common_event}
    \end{align}
    This event has probability one. Explicitly, De Morgan's law and countable subadditivity give
    \begin{align}
        \bbP(\Omega_0^c)
        =
        \bbP\left(
            \bigcup_{z \in E}\Omega_z^c
        \right)
        \leq
        \sum_{z \in E}\bbP(\Omega_z^c)
        =
        0 \,.
        \label{eq:analytic_transfer_common_event_probability}
    \end{align}

    Fix \(\omega \in \Omega_0\). By construction,
    \begin{align}
        h_p(\omega,z)
        \longrightarrow
        0
        \qquad
        \text{for every }z \in E \,.
        \label{eq:analytic_transfer_pathwise_pointwise_convergence}
    \end{align}
    Moreover, the functions \(h_p(\omega,\cdot)\) are analytic on \(D\) and locally uniformly bounded there by Eq.~\eqref{eq:pathwise_analytic_local_uniform_bound}.
    Thus, Lemma~\ref{lemma:vitali} applies to the deterministic analytic sequence \(\{h_p(\omega,\cdot)\}_p\). It gives an analytic function \(h_\omega\) on \(D\) such that
    \begin{align}
        h_p(\omega,\cdot)
        &\longrightarrow
        h_\omega,
        &
        h_p'(\omega,\cdot)
        &\longrightarrow
        h_\omega'
    \end{align}
    locally uniformly on \(D\).

    For every \(z \in E\), Eq.~\eqref{eq:analytic_transfer_pathwise_pointwise_convergence} implies \(h_\omega(z) = 0\). Since \(E\) has a limit point in the connected open set \(D\), the identity theorem yields
    \begin{align}
        h_\omega
        \equiv
        0
        \qquad
        \text{on }D \,.
    \end{align}
    Consequently, \(h_\omega'\equiv0\), and the locally uniform convergence provided by Vitali's theorem becomes precisely Eqs.~\eqref{eq:analytic_transfer_local_uniform_conclusion} and \eqref{eq:analytic_transfer_derivative_local_uniform_conclusion}.
    Because this reasoning holds for every \(\omega\in\Omega_0\) and \(\bbP(\Omega_0)=1\), the conclusions hold almost surely. Taking \(K = \{z_0\}\) gives Eq.~\eqref{eq:analytic_transfer_derivative_conclusion}.
\end{proof}

\begin{lemma}[Rouch\'e's theorem]\label{lemma:rouche}
    Fix \(\zeta_0 \in \bbC\) and \(R > 0\), and define
    \begin{align}
        \bbD
        &:=
        \set{\zeta \in \bbC}[|\zeta - \zeta_0| < R] \,,
        \\
        \overline{\bbD}
        &:=
        \set{\zeta \in \bbC}[|\zeta - \zeta_0| \leq R] \,,
        \\
        \partial\bbD
        &:=
        \set{\zeta \in \bbC}[|\zeta - \zeta_0| = R] \,.
    \end{align}
    Let \(\mathfrak{u}\) and \(\mathfrak{v}\) be analytic on an open neighborhood of \(\overline{\bbD}\). If
    \begin{align}
        |\mathfrak{v}(\zeta)|
        <
        |\mathfrak{u}(\zeta)|
        \qquad
        \text{for every }\zeta \in \partial\bbD \,,
    \end{align}
    then \(\mathfrak{u}\) and \(\mathfrak{u} + \mathfrak{v}\) have the same number of zeros in \(\bbD\), counted with multiplicity.
\end{lemma}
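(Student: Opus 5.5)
We would prove the lemma by the standard argument-principle-plus-homotopy route, treating $\mathfrak{u}+t\mathfrak{v}$ for $t\in[0,1]$ as a continuous family of analytic functions with no zeros on the boundary circle.

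\textbf{Step 1 (no boundary zeros and finitely many interior zeros).} The strict inequality gives $|\mathfrak{u}(\zeta)|>|\mathfrak{v}(\zeta)|\geq 0$ on $\partial\bbD$. Hence $\mathfrak{u}$ has no zeros on $\partial\bbD$. For every $t\in[0,1]$ and $\zeta\in\partial\bbD$ we also have
\begin{align}
    |\mathfrak{u}(\zeta)+t\mathfrak{v}(\zeta)|
    \geq
    |\mathfrak{u}(\zeta)|-|\mathfrak{v}(\zeta)|
    > 0 \,.
\end{align}
So $h_t:=\mathfrak{u}+t\mathfrak{v}$ does not vanish on $\partial\bbD$; in particular $h_t$ is not identically zero. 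Its zeros are therefore isolated, and only finitely many lie in the compact set $\overline{\bbD}$. Since $\partial\bbD\times[0,1]$ is compact and $(\zeta,t)\mapsto|\mathfrak{u}(\zeta)|-|\mathfrak{v}(\zeta)|$ is continuous and positive there, we get a uniform bound $|h_t(\zeta)|\geq\delta>0$ on $\partial\bbD\times[0,1]$.

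\textbf{Step 2 (argument principle).} Each $h_t$ is analytic on an open neighborhood of $\overline{\bbD}$ and nonvanishing on $\partial\bbD$. The argument principle then gives, for the number of zeros in $\bbD$ counted with multiplicity,
\begin{align}
    N(t)
    =
    \frac{1}{2\pi i}\oint_{\partial\bbD}
    \frac{\mathfrak{u}'(\zeta)+t\mathfrak{v}'(\zeta)}{\mathfrak{u}(\zeta)+t\mathfrak{v}(\zeta)}
    \ind\zeta \,.
\end{align}

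\textbf{Step 3 (continuity in $t$, the only mildly delicate step).} The integrand is jointly continuous on $\partial\bbD\times[0,1]$. Its numerator is bounded on this compact set, and its denominator is bounded below by $\delta$ from Step 1. It is therefore uniformly continuous, and $t\mapsto N(t)$ is continuous on $[0,1]$. A continuous integer-valued function on the connected interval $[0,1]$ is constant, so $N(0)=N(1)$. That is, $\mathfrak{u}$ and $\mathfrak{u}+\mathfrak{v}$ have the same number of zeros in $\bbD$.

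I expect no serious obstacle. The only point requiring care is the uniform lower bound $\delta$, which is what makes the continuity in Step 3 rigorous; it follows from compactness as noted in Step 1.
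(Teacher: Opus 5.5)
The paper gives no proof of this lemma: it quotes the classical Rouch\'e theorem and uses it only as a black box in the proof of Lemma~\ref{lemma:uniform_analytic_implicit_branch}. Your homotopy argument via the argument principle is the standard textbook proof and is correct. One detail should be stated explicitly: apply the identity theorem in Step~1 on the connected component of the given neighborhood that contains $\overline{\bbD}$ (or on a disk $|\zeta-\zeta_0|<R+\epsilon$ lying inside it), so that non-vanishing on $\partial\bbD$ genuinely forces the zeros of $h_t$ to be isolated.
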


\begin{lemma}[Uniform analytic implicit branch]
\label{lemma:uniform_analytic_implicit_branch}
    Let \(\kappa_p \in \bbC\), and suppose that \(H_p(t,\kappa)\) is jointly analytic on an open neighborhood of the closed polydisc in $\bbC^2$ defined by
    \begin{align}
        |t| \leq \tau_0,
        \qquad
        |\kappa - \kappa_p| \leq \rho_0 \,,
    \end{align}
    where \(\tau_0,\rho_0 > 0\) are independent of \(p\). Assume that
    \begin{align}
        H_p(0,\kappa_p)
        &=0,
        &
        \inf_p
        |\partial_\kappa H_p(0,\kappa_p)|
        &\geq a>0,
        \label{eq:implicit_branch_base_conditions}
    \end{align}
    and that, throughout the indicated polydisc,
    \begin{align}
        |\partial_\kappa^2 H_p(0,\kappa)|
        &\leq B,
        &
        |\partial_t H_p(t,\kappa)|
        &\leq L
        \label{eq:implicit_branch_derivative_bounds}
    \end{align}
    for constants \(B,L < \infty\) independent of \(p\).
    Let \(\rho \in (0,\rho_0]\) and \(\tau \in (0,\tau_0]\) satisfy
    \begin{align}
        \frac{B\rho}{2}
        \leq
        \frac{a}{4},
        \qquad
        L\tau
        \leq
        \frac{a\rho}{4} \,.
        \label{eq:uniform_implicit_branch_radius_choice}
    \end{align}
    Then, for every \(|t|<\tau\), the equation \(H_p(t,\kappa)=0\) has exactly one zero \(\kappa_p(t)\) in
    \begin{align}
        |\kappa - \kappa_p| < \rho \,.
    \end{align}
    This zero is simple, \(t \mapsto \kappa_p(t)\) is analytic on \(\{|t| < \tau\}\), and
    \begin{align}
        \kappa_p'(0)
        =
        -
        \frac{
            \partial_t H_p(0,\kappa_p)
        }{
            \partial_\kappa H_p(0,\kappa_p)
        } \,.
        \label{eq:uniform_implicit_branch_derivative}
    \end{align}
\end{lemma}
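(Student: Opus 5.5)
The plan is to apply Rouch\'e's theorem (Lemma~\ref{lemma:rouche}) on the circle $|\kappa-\kappa_p|=\rho$. For fixed $t$ with $|t|<\tau$, I would compare $H_p(t,\cdot)$ with its linearization at the base point. Concretely, take the center $\zeta_0=\kappa_p$ and radius $R=\rho$, and set
\begin{align}
    \mathfrak{u}(\kappa) := \partial_\kappa H_p(0,\kappa_p)\,(\kappa-\kappa_p),
    \qquad
    \mathfrak{v}(\kappa) := H_p(t,\kappa)-\mathfrak{u}(\kappa).
\end{align}
Both functions are analytic on a neighborhood of the closed disc, because $\rho\le\rho_0$ and $H_p$ is jointly analytic near the polydisc. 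The function $\mathfrak{u}$ has exactly one zero in the disc, namely $\kappa_p$, and it is simple. On the boundary circle, $|\mathfrak{u}(\kappa)|\ge a\rho$ by Eq.~\eqref{eq:implicit_branch_base_conditions}.

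Next I bound $\mathfrak{v}$ on the circle by splitting it into two pieces:
\begin{align}
    \mathfrak{v}(\kappa) = \bigl[H_p(t,\kappa)-H_p(0,\kappa)\bigr] + \bigl[H_p(0,\kappa)-\mathfrak{u}(\kappa)\bigr].
\end{align}
For the first bracket, integrate $\partial_t H_p$ along the segment from $0$ to $t$. The $t$-disc is convex, so Eq.~\eqref{eq:implicit_branch_derivative_bounds} gives a bound of at most $L|t|<L\tau\le a\rho/4$. For the second bracket, use $H_p(0,\kappa_p)=0$ and Taylor's formula with integral remainder along the segment $[\kappa_p,\kappa]$, which lies in the disc by convexity. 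The bound $|\partial_\kappa^2H_p(0,\cdot)|\le B$ then gives at most $B\rho^2/2\le a\rho/4$ by Eq.~\eqref{eq:uniform_implicit_branch_radius_choice}. Hence $|\mathfrak{v}|<a\rho/2<|\mathfrak{u}|$ on the circle. Rouch\'e's theorem then says that $H_p(t,\cdot)=\mathfrak{u}+\mathfrak{v}$ has exactly one zero in $|\kappa-\kappa_p|<\rho$, counted with multiplicity. A zero of total multiplicity one is necessarily simple, so this is the zero $\kappa_p(t)$, and it is simple.

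For analyticity in $t$, I would use the argument-principle representation
\begin{align}
    \kappa_p(t)=\frac{1}{2\pi i}\oint_{|\kappa-\kappa_p|=\rho}\frac{\kappa\,\partial_\kappa H_p(t,\kappa)}{H_p(t,\kappa)}\,\ind\kappa .
\end{align}
This holds because $H_p(t,\cdot)$ has a single simple zero inside the circle. The strict inequality above shows that $H_p(t,\cdot)$ does not vanish on the circle for any $|t|<\tau$. The integrand is therefore jointly analytic and continuous on the relevant set, and differentiating under the integral shows that $t\mapsto\kappa_p(t)$ is analytic on $\{|t|<\tau\}$. Alternatively, one can note that the zero is simple and invoke the holomorphic implicit function theorem locally, with uniqueness gluing the local branches together.

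For the derivative formula, uniqueness at $t=0$ together with $H_p(0,\kappa_p)=0$ gives $\kappa_p(0)=\kappa_p$. Differentiating the identity $H_p(t,\kappa_p(t))\equiv0$ at $t=0$ and dividing by $\partial_\kappa H_p(0,\kappa_p)$, which is nonzero because its modulus is at least $a$, yields Eq.~\eqref{eq:uniform_implicit_branch_derivative}.

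I expect the only delicate point to be getting the boundary estimate strictly below $a\rho$ with constants independent of $p$, including the strictness needed to use $|t|<\tau$. This is handled entirely by the hypotheses in Eq.~\eqref{eq:uniform_implicit_branch_radius_choice}. Keeping these constants $p$-uniform is exactly what later makes the radii $\tau,\rho$ usable uniformly along the high-dimensional sequence.
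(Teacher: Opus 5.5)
Your proof is correct and follows essentially the paper's route: Rouch\'e's theorem against the linearization $\partial_\kappa H_p(0,\kappa_p)(\kappa-\kappa_p)$, with the same two error budgets $B\rho^2/2$ and $L|t|$, then differentiation of $H_p(t,\kappa_p(t))\equiv 0$. The paper applies Rouch\'e twice (first at $t=0$, then perturbing in $t$), whereas you merge the two steps with the triangle inequality. Likewise, the paper proves analyticity via the local implicit-function theorem plus gluing, whereas your argument-principle integral for $\kappa_p(t)$ gives analyticity on all of $|t|<\tau$ in one step. Your appeal to uniqueness for $\kappa_p(0)=\kappa_p$ makes explicit a step the paper leaves implicit.
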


\begin{proof}
    On \(|\kappa - \kappa_p| = \rho\), compare \(H_p(0,\kappa)\) with
    \begin{align}
        \ell_p(\kappa)
        &:=
        \partial_\kappa H_p(0,\kappa_p)
        (\kappa - \kappa_p) \,.
    \end{align}
    From Eq.~\eqref{eq:implicit_branch_base_conditions}, \(|\ell_p(\kappa)| \geq a \rho\) on the boundary.
    Taylor's formula and Eq.~\eqref{eq:implicit_branch_derivative_bounds} give
    \begin{align}
        |H_p(0,\kappa)-\ell_p(\kappa)|
        \leq
        \frac{B\rho^2}{2}
        \leq
        \frac{a\rho}{4}
        <
        |\ell_p(\kappa)| \,.
    \end{align}
    Rouch\'e's theorem shows that \(H_p(0,\cdot)\) has exactly one zero in the disk, counted with multiplicity. Moreover, on the boundary,
    \begin{align}
        |H_p(0,\kappa)|
        \geq
        |\ell_p(\kappa)| - \frac{a\rho}{4}
        \geq
        \frac{3a\rho}{4} \,.
    \end{align}
    For \(|t|<\tau\), Eq.~\eqref{eq:implicit_branch_derivative_bounds} gives
    \begin{align}
        |H_p(t,\kappa) - H_p(0,\kappa)|
        \leq
        L|t|
        <
        \frac{a\rho}{4}
        <
        |H_p(0,\kappa)| \,.
    \end{align}
    A second application of Rouch\'e's theorem therefore shows that \(H_p(t,\cdot)\) also has exactly one zero in the disk, counted with multiplicity. Since its total multiplicity is one, this zero is simple.
    Thus, denoting it by \(\kappa_p(t)\),
    \begin{align*}
        \partial_\kappa H_p(t,\kappa_p(t))
        \neq
        0 \,.
    \end{align*}

    Fix \(t_0\) with \(|t_0| < \tau\), and set \(\kappa_0 := \kappa_p(t_0)\). The complex-analytic implicit-function theorem states that if a jointly analytic function \(H(t,\kappa)\) satisfies
    \begin{align*}
        H(t_0,\kappa_0)=0,
        \qquad
        \partial_\kappa H(t_0,\kappa_0)\neq0,
    \end{align*}
    then, on some neighborhood \(U\) of \(t_0\), the equation \(H(t,\kappa) = 0\) determines \(\kappa\) uniquely near \(\kappa_0\) as an analytic function \(\varphi(t)\) satisfying \(\varphi(t_0) = \kappa_0\).
    Applying the theorem to \(H_p\), and shrinking \(U\) if necessary, we may ensure that \(U \subset \{t \in \bbC : |t| < \tau\}\) and \(|\varphi(t) - \kappa_p| < \rho\) for every \(t\in U\). The second Rouch\'e argument shows that, for each such \(t\), there is only one zero in this disk.
    Hence \(\varphi(t) = \kappa_p(t)\) on \(U\). Since \(t_0\) was arbitrary, \(t \mapsto \kappa_p(t)\) is analytic throughout \(\{|t| < \tau\}\); equivalently, the local implicit branches agree on overlaps because they all select the same zero in the common disk.

    Finally, \(\kappa_p(0) = \kappa_p\), and differentiating \(H_p(t,\kappa_p(t)) = 0\) at \(t = 0\) gives
    \begin{align*}
        0
        =
        \partial_t H_p(0,\kappa_p)
        +
        \partial_\kappa H_p(0,\kappa_p)\kappa_p'(0) \,.
    \end{align*}
    The coefficient of \(\kappa_p'(0)\) is nonzero by Eq.~\eqref{eq:implicit_branch_base_conditions}; solving for \(\kappa_p'(0)\) gives Eq.~\eqref{eq:uniform_implicit_branch_derivative}.
\end{proof}

\subsection{Scalar Analytic Framework}
\label{appdx:subsec:vitali_justification}
We begin by recording bounds that hold uniformly along the high-dimensional sequence. Since \(p/N_p \to \gamma \in (0,\infty)\), there exists a constant \(\Gamma < \infty\) such that
\begin{align}
    \sup_p\frac{p}{N_p} \leq \Gamma \,.
    \label{eq:derivative_trick_aspect_ratio_bound}
\end{align}
We also use
\begin{align}
    M_\Sigma
    &:=
    \sup_p \norm{\Sigma_p}_\infty < \infty,
    \qquad
    C_*
    :=
    \sup_p \norm{\bth_{*,p}}_2 < \infty \,.
    \label{eq:derivative_trick_uniform_bounds}
\end{align}
The second bound is Assumption~\ref{assum:target_regularity}. For the first one, Assumption~\ref{assum:feature_generation} and the Gaussian Poincar\'e inequality give, for every deterministic unit vector
\(v\in\bbR^p\),
\begin{align}
    v\T\Sigma_pv
    &=
    \Var(v\T\bx)+\bigl\{\E[v\T\bx]\bigr\}^2
    \leq C_1^2+C_2^2 \,.
\end{align}
Thus, one may take \(M_\Sigma=C_1^2+C_2^2\) when the constants in Assumption~\ref{assum:feature_generation} are uniform in \(p\).
For the normalized quantum feature vectors considered in this work, the sharper bound \(\norm{\Sigma_p}_\infty\leq\E\norm{\bx}_2^2\leq1\) holds.

Let \(\kappa_{\lambda,p}\) denote the positive solution of the self-consistent equation
\begin{align}
    \frac{1}{N_p}
    \Tr\left[
        \Sigma_p(\Sigma_p+\kappa_{\lambda,p}I_p)^{-1}
    \right]
    +
    \frac{\lambda}{\kappa_{\lambda,p}}
    =1 \,,
    \label{eq:derivative_trick_base_fixed_point}
\end{align}
and define
\begin{align}
    \eta_{\lambda,p}
    &:=
    \frac{1}{N_p}
    \Tr\left[
        \Sigma_p^2
        (\Sigma_p+\kappa_{\lambda,p}I_p)^{-2}
    \right] \,.
    \label{eq:derivative_trick_eta}
\end{align}
\begin{proposition}[Effective-regularization stability]
    \label{prop:effective_regularization_stability}
    For every fixed \(\lambda > 0\), the quantities in Eqs.~\eqref{eq:derivative_trick_base_fixed_point} and \eqref{eq:derivative_trick_eta} satisfy the uniform bounds
    \begin{align}
        \lambda
        \leq
        \kappa_{\lambda,p}
        \leq
        K_\lambda
        &:=
        \lambda+\Gamma M_\Sigma \,,
        \label{eq:derivative_trick_kappa_bounds}
        \\
        1-\eta_{\lambda,p}
        \geq
        a_\lambda
        &:=
        \frac{\lambda}{K_\lambda}
        >0 \,.
        \label{eq:uniform_stability_eta}
    \end{align}
\end{proposition}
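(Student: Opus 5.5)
The proof only needs the self-consistent equation~\eqref{eq:derivative_trick_base_fixed_point}, the uniform bounds~\eqref{eq:derivative_trick_aspect_ratio_bound}--\eqref{eq:derivative_trick_uniform_bounds}, and Lemma~\ref{lemma:unique_positive_fixed_point}. No probabilistic input is required, since everything is deterministic in $\Sigma_p$.

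\textbf{Bounds on $\kappa_{\lambda,p}$.} The lower bound $\kappa_{\lambda,p} > \lambda$ is exactly the last claim of Lemma~\ref{lemma:unique_positive_fixed_point}, applied with $A=\Sigma_p$ and $a=\lambda$. For the upper bound, I would use the equivalent fixed-point form~\eqref{eq:fixed_point_equation}:
\begin{align}
    \kappa_{\lambda,p}
    = \lambda + \frac{1}{N_p}\sum_{i=1}^p \frac{\xi_i\kappa_{\lambda,p}}{\xi_i+\kappa_{\lambda,p}} \,.
\end{align}
Each summand satisfies $\xi_i\kappa/(\xi_i+\kappa)\le \xi_i\le M_\Sigma$. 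Hence $\kappa_{\lambda,p}\le \lambda + (p/N_p)M_\Sigma \le \lambda+\Gamma M_\Sigma = K_\lambda$.

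\textbf{Stability of $1-\eta_{\lambda,p}$.} I would compare $\eta$ with the first-order quantity appearing in the self-consistent equation. Since $0\le \xi_i/(\xi_i+\kappa)<1$, squaring decreases each term, so
\begin{align}
    \eta_{\lambda,p}
    = \frac{1}{N_p}\sum_i \Bigl(\frac{\xi_i}{\xi_i+\kappa_{\lambda,p}}\Bigr)^2
    \le \frac{1}{N_p}\sum_i \frac{\xi_i}{\xi_i+\kappa_{\lambda,p}}
    = 1-\frac{\lambda}{\kappa_{\lambda,p}} \,.
\end{align}
The last equality is~\eqref{eq:derivative_trick_base_fixed_point}. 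Therefore $1-\eta_{\lambda,p}\ge \lambda/\kappa_{\lambda,p}\ge \lambda/K_\lambda = a_\lambda$ by the first step, and $a_\lambda>0$ because $\lambda>0$ and $K_\lambda<\infty$.

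\textbf{Main obstacle.} The only delicate point is uniformity in $p$, which rests entirely on $M_\Sigma<\infty$ and $\Gamma<\infty$. The bound $\Gamma<\infty$ follows from the proportional regime $p/N_p\to\gamma$. For $M_\Sigma$, I would cite the Poincar\'e argument stated just before the proposition. In the quantum case I would instead use the sharper normalization bound $\norm{\Sigma_p}_\infty\le\Tr\Sigma_p=1$ from Eq.~\eqref{eq:pure_state_kernel_normalization}. This sidesteps the fact that $C_1$ may grow like $\poly(\log p)$, which would otherwise spoil the uniform bound $C_1^2+C_2^2$.
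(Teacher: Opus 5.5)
Your proof is correct and follows the paper's argument essentially step for step. The paper also rewrites the self-consistent equation as $\kappa_{\lambda,p}=\lambda+\frac{\kappa_{\lambda,p}}{N_p}\Tr[\Sigma_p(\Sigma_p+\kappa_{\lambda,p}I_p)^{-1}]$, bounds the trace term by $\Tr\Sigma_p/N_p\leq\Gamma M_\Sigma$, and uses $x^2\leq x$ on $[0,1]$ to obtain $\eta_{\lambda,p}\leq 1-\lambda/\kappa_{\lambda,p}$; your remark on using $\norm{\Sigma_p}_\infty\leq 1$ in the quantum case, rather than $C_1^2+C_2^2$, also matches the paper's comment just before the proposition.
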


\begin{proof}
    Multiplying Eq.~\eqref{eq:derivative_trick_base_fixed_point} by \(\kappa_{\lambda,p}\) and rearranging gives
    \begin{align}
        \kappa_{\lambda,p}
        =
        \lambda
        +
        \frac{\kappa_{\lambda,p}}{N_p}
        \Tr\left[
            \Sigma_p(\Sigma_p+\kappa_{\lambda,p}I_p)^{-1}
        \right] \,.
    \end{align}
    Since \(\Sigma_p\) is positive definite and $\kappa_{\lambda,p} > 0$ by Lemma~\ref{lemma:unique_positive_fixed_point}
    \begin{align}
        \lambda
        <
        \kappa_{\lambda,p}
        \leq
        \lambda+\frac{1}{N_p}\Tr\Sigma_p
        \leq
        \lambda+\Gamma M_\Sigma
        =
        K_\lambda \,,
    \end{align}
    which proves Eq.~\eqref{eq:derivative_trick_kappa_bounds}.

    Each eigenvalue of \(\Sigma_p(\Sigma_p+\kappa_{\lambda,p}I_p)^{-1}\) belongs to \([0,1]\).
    Using \(x^2\leq x\) on \([0,1]\), we obtain
    \begin{align}
        \eta_{\lambda,p}
        &\leq
        \frac{1}{N_p}
        \Tr\left[
            \Sigma_p(\Sigma_p+\kappa_{\lambda,p}I_p)^{-1}
        \right]
        =
        1-\frac{\lambda}{\kappa_{\lambda,p}} \,.
        \label{eq:eta_upper_bound_from_fixed_point}
    \end{align}
    Combining Eq.~\eqref{eq:eta_upper_bound_from_fixed_point} with the upper bound on \(\kappa_{\lambda,p}\) in Eq.~\eqref{eq:derivative_trick_kappa_bounds}, we obtain
    \begin{align}
        1-\eta_{\lambda,p}
        \geq
        \frac{\lambda}{\kappa_{\lambda,p}}
        \geq
        \frac{\lambda}{K_\lambda}
        =:a_\lambda
        >0 \,.
    \end{align}
\end{proof}

Proposition~\ref{prop:effective_regularization_stability} is the common stability condition for both analytic branches below. In particular, it provides their uniform non-degeneracy; the derivative formulas are obtained only after the branches have been constructed.

\subsection{Uniform Holomorphic Continuation of the Effective Regularization}
\label{appdx:subsec:analyticity_kappa}
We next construct the holomorphic branches of the effective regularization parameter that enter the bias and variance calculations. The essential point is that the domains of these branches are uniform in \(p\).

For orientation, Table~\ref{tab:uniform_implicit_branch_correspondence} gives the complete correspondence between Lemma~\ref{lemma:uniform_analytic_implicit_branch} and its two applications.  Each analytic and quantitative assumption is verified in the corresponding subsubsection below.
\begin{table}[H]
    \centering
    \setlength{\tabcolsep}{15pt}
    \begin{tabular}{lll}
    \toprule
    Lemma symbol & Bias instance & Variance instance \\
    \midrule
    \(H_p(t,\kappa)\)
        & \(H_p(J,\kappa)\)
        & \(\widehat H_p(t,\kappa)
        =H_p^{\mathrm{var}}(\lambda+t,\kappa)\) \\
    \(t\) & \(J\) & \(z-\lambda\) \\
    \(\kappa_p\) & \(\kappa_{\lambda,p}\) & \(\kappa_{\lambda,p}\) \\
    \(\tau_0\) & \(r_0\) & \(\lambda/2\) \\
    \(\rho_0\) & \(\rho_0\) & \(\rho_0\) \\
    \(a\) & \(a_\lambda\) & \(a_\lambda\) \\
    \(B\) & \(B_\lambda\) & \(B_\lambda\) \\
    \(L\) & \(L_\lambda\) & \(1\) \\
    \(\tau\) & \(r\) & \(\delta\) \\
    \(\rho\) & \(\rho\) & \(\rho\) \\
    \(\kappa_p(t)\)
        & \(\kappa_{\lambda,J,p}\)
        & \(\kappa_{\lambda+t,p}=\kappa_{z,p}\) \\
    \bottomrule
    \end{tabular}
    \caption{Instantiation of the uniform analytic implicit-branch lemma.}
    \label{tab:uniform_implicit_branch_correspondence}
\end{table}

\subsubsection{Perturbation parameter for the bias}
For \((J,\kappa) \in \bbC^2\), define
\begin{align}
    H_p(J,\kappa)
    &:=
    \kappa-\lambda
    -
    \frac{\kappa}{N_p}
    \Tr\left[
        \Sigma_p
        \bigl\{\Sigma_p+\kappa(I_p + J\Sigma_p)\bigr\}^{-1}
    \right] \,.
    \label{eq:complex_fixed_point_bias}
\end{align}
At \(J = 0\), the equation \(H_p(0,\kappa)=0\) is equivalent to
Eq.~\eqref{eq:derivative_trick_base_fixed_point}; hence
\begin{align}
    H_p(0,\kappa_{\lambda,p})=0 \,.
\end{align}
To evaluate the derivative with respect to \(\kappa\), set
\begin{align}
    m_p(\kappa)
    &:=
    \frac{1}{N_p}
    \Tr\left[
        \Sigma_p(\Sigma_p+\kappa I_p)^{-1}
    \right] \,.
\end{align}
Since
\(H_p(0,\kappa)=\kappa-\lambda-\kappa m_p(\kappa)\), direct
differentiation gives
\begin{align}
    \partial_\kappa H_p(0,\kappa_{\lambda,p})
    &=
    1-m_p(\kappa_{\lambda,p})
    +
    \frac{\kappa_{\lambda,p}}{N_p}
    \Tr\left[
        \Sigma_p
        (\Sigma_p+\kappa_{\lambda,p}I_p)^{-2}
    \right] \,.
\end{align}
The identity
\begin{align}
    m_p(\kappa_{\lambda,p})-\eta_{\lambda,p}
    =
    \frac{\kappa_{\lambda,p}}{N_p}
    \Tr\left[
        \Sigma_p
        (\Sigma_p+\kappa_{\lambda,p}I_p)^{-2}
    \right]
\end{align}
therefore yields
\begin{align}
    \partial_\kappa H_p(0,\kappa_{\lambda,p})
    =
    1-\eta_{\lambda,p}
    \geq
    a_\lambda \,.
    \label{eq:complex_fixed_point_kappa_derivative}
\end{align}

We now verify the uniform analyticity and derivative bounds required by
Lemma~\ref{lemma:uniform_analytic_implicit_branch}. First choose preliminary
radii \(\rho_0,r_0>0\), independently of \(p\), such that
\begin{align}
    \rho_0
    +
    r_0M_\Sigma(K_\lambda+\rho_0)
    \leq
    \frac{\lambda}{2} \,.
    \label{eq:complex_neighborhood_initial_choice}
\end{align}
In particular, since the second term on the left-hand side of Eq.~\eqref{eq:complex_neighborhood_initial_choice} is nonnegative,
\begin{align}
    0<\rho_0\leq\frac{\lambda}{2} \,.
    \label{eq:rho0_upper_bound}
\end{align}
If \(|\kappa - \kappa_{\lambda,p}| \leq \rho_0\),
\(|J| \leq r_0\), and \(\xi\) is an eigenvalue of \(\Sigma_p\), then
\begin{align}
    |\xi + \kappa(1 + J\xi)|
    &=
    |\xi + \kappa_{\lambda,p} + \kappa - \kappa_{\lambda,p} + \kappa J\xi| \\
    &\geq
    \xi + \kappa_{\lambda,p}
    - |\kappa - \kappa_{\lambda,p}|
    - |\kappa J\xi|
    \\
    &\geq
    \lambda
    -
    \rho_0
    -
    r_0M_\Sigma(K_\lambda+\rho_0)
    \geq
    \frac{\lambda}{2} \,.
    \label{eq:complex_denominator_lower_bound}
\end{align}
Here we used Eqs.~\eqref{eq:derivative_trick_uniform_bounds} and
\eqref{eq:derivative_trick_kappa_bounds}. Diagonalizing \(\Sigma_p\) gives
\begin{align}
    H_p(J,\kappa)
    =
    \kappa-\lambda
    -
    \frac{\kappa}{N_p}
    \sum_{i=1}^p
    \frac{\xi_i}
    {\xi_i+\kappa(1 + J\xi_i)} \,.
\end{align}
Equation~\eqref{eq:complex_denominator_lower_bound} therefore shows that
\(H_p\) is jointly analytic on an open neighborhood of the resulting closed
polydisc.

The relevant derivatives are uniformly bounded there. In particular, there
exist \(B_\lambda,L_\lambda<\infty\), independent of \(p\), such that
\begin{align}
    \sup_{|\kappa - \kappa_{\lambda,p}|\leq\rho_0}
    |\partial_\kappa^2 H_p(0,\kappa)|
    &\leq
    B_\lambda \,,
    \label{eq:H_second_derivative_bound}
    \\
    \sup_{\substack{
        |J| \leq r_0 \\
        |\kappa - \kappa_{\lambda,p}| \leq \rho_0}}
    |\partial_J H_p(J,\kappa)|
    &\leq
    L_\lambda \,.
    \label{eq:H_J_derivative_bound}
\end{align}
Indeed,
\begin{align}
    \partial_\kappa^2H_p(0,\kappa)
    &=
    \frac{2}{N_p}
    \Tr\left[
        \Sigma_p(\Sigma_p+\kappa I_p)^{-2}
    \right]
    -
    \frac{2\kappa}{N_p}
    \Tr\left[
        \Sigma_p(\Sigma_p+\kappa I_p)^{-3}
    \right] \,,
    \\
    \partial_JH_p(J,\kappa)
    &=
    \frac{\kappa^2}{N_p}
    \Tr\left[
        \Sigma_p^2
        \bigl\{\Sigma_p+\kappa(I_p + J\Sigma_p)\bigr\}^{-2}
    \right] \,,
    \label{eq:H_J_derivative}
\end{align}
and the asserted bounds follow from
Eqs.~\eqref{eq:derivative_trick_aspect_ratio_bound},
\eqref{eq:derivative_trick_uniform_bounds}, and
\eqref{eq:complex_denominator_lower_bound}.

Choose final radii \(\rho \in (0,\rho_0]\) and \(r \in (0,r_0]\),
independently of \(p\), such that
\begin{align}
    \frac{B_\lambda\rho}{2}
    \leq
    \frac{a_\lambda}{4},
    \qquad
    L_\lambda r
    \leq
    \frac{a_\lambda\rho}{4},
    \qquad
    rM_\Sigma
    \leq
    \frac{1}{2} \,.
    \label{eq:rouche_radius_choice}
\end{align}
From \eqref{eq:rho0_upper_bound}, we have
\begin{align}
    0<\rho\leq\frac{\lambda}{2} \,.
    \label{eq:rho_upper_bound}
\end{align}
Define the family of \(\kappa\)-disks with common radius \(\rho\), and the
common parameter disk,
\begin{align}
    \bbD_{\kappa,p}
    &:=
    \set{\kappa \in \bbC}[
        |\kappa-\kappa_{\lambda,p}|<\rho
    ] \,,
    \label{eq:kappa_rouche_disk}
    \\
    D_J
    &:=
    \set{J \in \bbC}[|J|<r] \,.
    \label{eq:bias_complex_disk}
\end{align}
The Bias column of
Table~\ref{tab:uniform_implicit_branch_correspondence} is verified as
follows:
\begin{enumerate}[label=(\roman*),leftmargin=25pt]
    \item the base-point identity is
    \(H_p(0,\kappa_{\lambda,p})=0\), established after
    Eq.~\eqref{eq:complex_fixed_point_bias};
    \item joint analyticity on the preliminary polydisc follows from
    Eq.~\eqref{eq:complex_denominator_lower_bound};
    \item uniform non-degeneracy is
    Eq.~\eqref{eq:complex_fixed_point_kappa_derivative};
    \item the derivative bounds are
    Eqs.~\eqref{eq:H_second_derivative_bound} and
    \eqref{eq:H_J_derivative_bound};
    \item the final radii satisfy the lemma's inequalities by
    Eq.~\eqref{eq:rouche_radius_choice}.
\end{enumerate}
Thus all assumptions of
Lemma~\ref{lemma:uniform_analytic_implicit_branch} hold.  Hence, for every
\(J \in D_J\), there is a
unique simple zero \(\kappa_{\lambda,J,p} \in \bbD_{\kappa,p}\) of
\(H_p(J,\cdot)\), and \(J \mapsto \kappa_{\lambda,J,p}\) is analytic on
\(D_J\). In particular,
\begin{align}
    |\kappa_{\lambda,J,p} - \kappa_{\lambda,p}|
    &< \rho,
    \label{eq:bias_kappa_branch_bound}
    \\
    \partial_\kappa H_p(J,\kappa_{\lambda,J,p})
    &\neq 0 \,.
    \label{eq:perturbed_fixed_point_simple_zero}
\end{align}

For real \(J \in (-r,r)\),
\begin{align}
    H_p(J,\overline{\kappa})
    =
    \overline{H_p(J,\kappa)} \,.
\end{align}
Since \(H_p(J,\kappa_{\lambda,J,p}) = 0\), \(H_p(J,\overline{\kappa_{\lambda,J,p}}) = 0\) as well.
Moreover, since \(\kappa_{\lambda,p}\) is real, \(|\overline{\kappa_{\lambda,J,p}} - \kappa_{\lambda,p}| = |\kappa_{\lambda,J,p} - \kappa_{\lambda,p}| < \rho\). Hence, \(\overline{\kappa_{\lambda,J,p}}\) is also a zero of \(H_p(J,\cdot)\) in \(\bbD_{\kappa,p}\).
Uniqueness then implies \(\overline{\kappa_{\lambda,J,p}}=\kappa_{\lambda,J,p}\), so \(\kappa_{\lambda,J,p}\) is real.
Furthermore, Eqs.~\eqref{eq:derivative_trick_kappa_bounds},
\eqref{eq:rho_upper_bound}, and
\eqref{eq:bias_kappa_branch_bound} give
\begin{align}
    \kappa_{\lambda,J,p}
    >
    \kappa_{\lambda,p}-\rho
    \geq
    \lambda-\rho
    \geq
    \frac{\lambda}{2}
    >0 \,.
\end{align}

It remains to identify this branch with the effective regularization parameter of the transformed model.
The implicit-branch construction guarantees only that $\kappa_{\lambda,J,p}$ is the unique zero of $H_p(J,\cdot)$ \emph{inside the disk} $\bbD_{\kappa,p}$; it does not exclude further zeros outside that disk, and in particular it does not by itself identify $\kappa_{\lambda,J,p}$ with the effective regularization parameter of the transformed model, which is characterized as the unique zero \emph{on the positive real axis} by Lemma~\ref{lemma:unique_positive_fixed_point}. The following argument closes this gap.

For real $J \in (-r,r)$, every eigenvalue of $\Sigma_{p,J} = \Sigma_p(I_p + J\Sigma_p)^{-1}$ equals $\xi_i/(1 + J\xi_i)>0$ because $1 + J\xi_i \geq 1 - rM_\Sigma \geq 1/2$; hence $\Sigma_{p,J}$ is positive definite.
Dividing $H_p(J,\kappa) = 0$ by $\kappa$, which is legitimate since $\kappa_{\lambda,J,p}\geq\lambda/2>0$, and using Eq.~\eqref{eq:perturbed_population_resolvent_identity}, we see that the equation $H_p(J,\kappa) = 0$ is precisely Eq.~\eqref{eq:general_positive_fixed_point} with $A = \Sigma_{p,J}$ and $a = \lambda$.
We have shown that $\kappa_{\lambda,J,p}$ is real and positive.
Lemma~\ref{lemma:unique_positive_fixed_point}, which gives uniqueness on the whole positive real axis, therefore implies that $\kappa_{\lambda,J,p}$ is the unique positive solution of the self-consistent equation associated with the transformed model. Hence it is exactly the effective regularization parameter appearing in the deterministic resolvent of that model, as required for the identification in Eq.~\eqref{eq:bias_deterministic_transform_identity} below.

\subsubsection{Complex regularization parameter for the variance}
For complex \(z\) near the fixed real number \(\lambda\), define
\begin{align}
    H_p^{\mathrm{var}}(z,\kappa)
    &:=
    \kappa-z
    -
    \frac{\kappa}{N_p}
    \Tr\left[
        \Sigma_p(\Sigma_p+\kappa I_p)^{-1}
    \right] \,.
    \label{eq:complex_fixed_point_variance}
\end{align}
At \((z,\kappa)=(\lambda,\kappa_{\lambda,p})\),
\begin{align}
    H_p^{\mathrm{var}}(\lambda,\kappa_{\lambda,p})
    &=0,
    \\
    \partial_\kappa
    H_p^{\mathrm{var}}(\lambda,\kappa_{\lambda,p})
    &=
    1-\eta_{\lambda,p}
    \geq a_\lambda \,.
    \label{eq:H_variance_base_point}
\end{align}
The first identity follows from
Eq.~\eqref{eq:derivative_trick_base_fixed_point}, while the derivative
identity is Eq.~\eqref{eq:complex_fixed_point_kappa_derivative}. Moreover,
Eqs.~\eqref{eq:complex_fixed_point_bias} and
\eqref{eq:complex_fixed_point_variance} give the exact relation
\begin{align}
    H_p^{\mathrm{var}}(z,\kappa)
    =
    H_p(0,\kappa)-(z-\lambda) \,.
    \label{eq:H_variance_as_constant_perturbation}
\end{align}
Set
\begin{align}
    \widehat H_p(t,\kappa)
    &:=
    H_p^{\mathrm{var}}(\lambda+t,\kappa)
    =
    H_p(0,\kappa)-t \,.
    \label{eq:variance_shifted_implicit_function}
\end{align}
By Eq.~\eqref{eq:complex_denominator_lower_bound} at \(J = 0\), this
function is jointly analytic on an open neighborhood of
\begin{align}
    |t|\leq\frac{\lambda}{2},
    \qquad
    |\kappa-\kappa_{\lambda,p}|\leq\rho_0 \,.
\end{align}
Choose \(\delta > 0\), independently of \(p\), so that
\begin{align}
    0<\delta
    <
    \min\!\left\{
        \frac{\lambda}{2},
        \frac{a_\lambda\rho}{4}
    \right\} \,,
    \label{eq:variance_disk_radius_choice}
\end{align}
and define
\begin{align}
    D_\lambda
    &:=
    \set{z \in \bbC}[|z - \lambda| < \delta] \,.
    \label{eq:variance_complex_disk}
\end{align}
For the Variance column of
Table~\ref{tab:uniform_implicit_branch_correspondence}, the base-point and
non-degeneracy conditions are Eq.~\eqref{eq:H_variance_base_point}, the
second-\(\kappa\)-derivative bound is
Eq.~\eqref{eq:H_second_derivative_bound}, and
\begin{align}
    |\partial_t\widehat H_p(t,\kappa)| = 1 \,.
\end{align}
Finally, Eqs.~\eqref{eq:rouche_radius_choice} and
\eqref{eq:variance_disk_radius_choice} verify the two inequalities in
Eq.~\eqref{eq:uniform_implicit_branch_radius_choice}. Thus all assumptions
of Lemma~\ref{lemma:uniform_analytic_implicit_branch} hold. Hence, for every
\(z \in  D_\lambda\), the equation
\(H_p^{\mathrm{var}}(z,\kappa)=0\) has a unique simple zero
\(\kappa_{z,p} \in \bbD_{\kappa,p}\), the map
\(z\mapsto\kappa_{z,p}\) is analytic on \(D_\lambda\), and
\begin{align}
    |\kappa_{z,p} - \kappa_{\lambda,p}|
    <\rho \,.
    \label{eq:variance_kappa_branch_bound}
\end{align}

For real \(z \in (\lambda-\delta,\lambda+\delta)\),
\begin{align}
    H_p^{\mathrm{var}}(z,\overline{\kappa})
    =
    \overline{H_p^{\mathrm{var}}(z,\kappa)} \,.
\end{align}
Since \(H_p^{\mathrm{var}}(z,\kappa_{z,p}) = 0\), \(H_p^{\mathrm{var}}(z,\overline{\kappa_{z,p}}) = 0\) as well.
Moreover, since \(\kappa_{\lambda,p}\) is real, \(|\overline{\kappa_{z,p}} - \kappa_{\lambda,p}| = |\kappa_{z,p} - \kappa_{\lambda,p}| < \rho\). Hence, \(\overline{\kappa_{z,p}}\) is also a zero of \(H_p^{\mathrm{var}}(z,\cdot)\) in \(\bbD_{\kappa,p}\).
Uniqueness then implies \(\overline{\kappa_{z,p}}=\kappa_{z,p}\), so \(\kappa_{z,p}\) is real.
Moreover, Eqs.~\eqref{eq:derivative_trick_kappa_bounds}, \eqref{eq:rho_upper_bound}, and \eqref{eq:variance_kappa_branch_bound} give
\begin{align}
    \kappa_{z,p}
    >
    \kappa_{\lambda,p}-\rho
    \geq
    \lambda-\rho
    \geq
    \frac{\lambda}{2}
    >0 \,.
\end{align}

As in the bias case, we must still show that this branch coincides with the effective regularization parameter in Proposition~\ref{prop:resolvent_deterministic_equivalent}, since Rouch\'e's theorem gives uniqueness only inside $\bbD_{\kappa,p}$.
For real $z \in (\lambda-\delta,\lambda+\delta)$ we have $z > 0$, and dividing by $\kappa\geq\lambda/2 > 0$ shows that $H_p^{\mathrm{var}}(z,\kappa)=0$ is Eq.~\eqref{eq:general_positive_fixed_point} with $A=\Sigma_p$ and $a = z$.
Since $\kappa_{z,p}$ has been shown to be real and positive, Lemma~\ref{lemma:unique_positive_fixed_point} identifies it with the unique positive solution of the self-consistent equation in Proposition~\ref{prop:resolvent_deterministic_equivalent}.

Finally, if \(z \in D_\lambda\), then \(|z - \lambda| < \delta\).  Since
Eq.~\eqref{eq:variance_disk_radius_choice} gives
\(\delta<\lambda/2\),
\begin{align}
    \operatorname{Re}z
    >
    \frac{\lambda}{2}
    \qquad
    \text{for every }z \in D_\lambda \,.
    \label{eq:variance_disk_positive_real_part}
\end{align}

\subsection{Derivation of the Bias Term}\label{appdx:subsec:bias_derivation}
We now apply the preceding analytic framework to the bias. For real
\(J \in (-r,r)\), define
\begin{align}
    S_{p,J}
    &:=
    (I_p + J\Sigma_p)^{-1/2},
    \qquad
    \bx_{i,J}
    :=
    S_{p,J}\bx_i \,.
    \label{eq:bias_feature_perturbation}
\end{align}
The choice in Eq.~\eqref{eq:rouche_radius_choice} ensures
\(rM_\Sigma\leq1/2\). By the spectral theorem and
Eq.~\eqref{eq:derivative_trick_uniform_bounds}, every eigenvalue of
\(I_p + J\Sigma_p\) is at least \(1 - rM_\Sigma\geq1/2\). Thus,
\(I_p + J\Sigma_p\) is positive definite and
\begin{align}
    \sup_p\sup_{|J| < r}
    \norm{S_{p,J}}_\infty
    \leq
    (1 - rM_\Sigma)^{-1/2}
    \leq
    \sqrt{2} \,.
    \label{eq:bias_transform_uniform_bound}
\end{align}
Consequently, the linear map $\bx \mapsto (I + J\Sigma)^{-1/2}\bx$ is $\sqrt{2}$-Lipschitz. Since the composition of Lipschitz functions remains Lipschitz, $\bx_J$ also satisfies Assumption~\ref{assum:feature_generation} with the modified constants $C_1' = \sqrt{2} C_1$ and $C_2' = \sqrt{2} C_2$.

Let
\begin{align}
    X_{p,J}
    &:=
    [\bx_{1,J},\ldots,\bx_{N_p,J}]\T \,.
\end{align}
The sample and population covariance matrices of the transformed features are
\begin{align}
    \hat{\Sigma}_{p,J}
    &=
    \frac{1}{N_p}X_{p,J}\T X_{p,J}
    =
    S_{p,J}\hat{\Sigma}_pS_{p,J} \,,
    \\
    \Sigma_{p,J}
    &=
    \E[\bx_{i,J}\bx_{i,J}\T]
    =
    S_{p,J}\Sigma_pS_{p,J}
    =
    \Sigma_p(I_p + J\Sigma_p)^{-1} \,.
    \label{eq:perturbed_covariances}
\end{align}
Because \(\Sigma_p\) commutes with every analytic function of itself,
\begin{align}
    \Sigma_{p,J}
    (\Sigma_{p,J}+\kappa I_p)^{-1}
    =
    \Sigma_p
    \bigl\{\Sigma_p+\kappa(I_p + J\Sigma_p)\bigr\}^{-1} \,.
    \label{eq:perturbed_population_resolvent_identity}
\end{align}
Substituting Eq.~\eqref{eq:perturbed_population_resolvent_identity} into the fixed-point equation~\eqref{eq:fixed_point_equation} shows that the fixed-point equation for the transformed model is \(H_p(J,\kappa) = 0\).
Its positive real solution is \(\kappa_{\lambda,J,p}\).

For real \(J \in (-r,r)\), the identity
\begin{align}
    S_{p,J}
    (\hat{\Sigma}_{p,J}+\lambda I_p)^{-1}
    S_{p,J}
    &=
    (\hat{\Sigma}_p + \lambda I_p + \lambda J\Sigma_p)^{-1}
    \label{eq:perturbed_resolvent_identity}
\end{align}
follows from \(\hat{\Sigma}_{p,J} = S_{p,J}\hat{\Sigma}_pS_{p,J}\) and \(S_{p,J}^{-2} = I_p + J\Sigma_p\).
For \(J \in D_J\), let
\begin{align}
    M_p(J) := \hat{\Sigma}_p + \lambda I_p + \lambda J \Sigma_p = (\hat{\Sigma}_p + \lambda I_p)\big\{I_p + \lambda J(\hat{\Sigma}_p + \lambda I_p)^{-1}\Sigma_p\big\} \,.
    \label{eq:bias_random_resolvent_factorization}
\end{align}
Since \(\hat{\Sigma}_p\) is positive semidefinite, \(\norm*{(\hat{\Sigma}_p + \lambda I_p)^{-1}}_\infty\leq1/\lambda\).
Combining this fact with Eq.~\eqref{eq:derivative_trick_uniform_bounds} and \(rM_\Sigma\leq1/2\), we obtain
\begin{align}
    \norm*{
        \lambda J
        (\hat{\Sigma}_p + \lambda I_p)^{-1}\Sigma_p
    }_\infty
    \leq
    |J|M_\Sigma
    \leq
    r M_\Sigma
    \leq
    \frac{1}{2} \,.
    \label{eq:bias_neumann_perturbation_bound}
\end{align}
The Neumann-series theorem states that, if \(\norm{B}_\infty < 1\), then \(I_p + B\) is invertible and \(\norm{(I_p + B)^{-1}}_\infty \leq (1 - \norm{B}_\infty)^{-1}\).
Hence, from Eqs.~\eqref{eq:bias_random_resolvent_factorization} and
\eqref{eq:bias_neumann_perturbation_bound}, for every \(J \in D_J\), \(M_p(J)\) is invertible and
\begin{align}
    \norm*{
        M_p(J)^{-1}
    }_\infty
    \leq
    \frac{2}{\lambda} \,.
\end{align}
We can therefore define an analytic function:
\begin{align}
    g_p(J)
    &:=
    -\lambda\bth_{*,p}^{\T} M_p(J)^{-1} \bth_{*,p},
    \qquad J \in D_J \,,
    \label{eq:bias_random_function}
\end{align}
which is bounded as follows:
\begin{align}
    \sup_p\sup_{J \in D_J}|g_p(J)|
    \leq
    2C_*^2 \,.
    \label{eq:bias_random_uniform_bound}
\end{align}
Since \(M_p'(J)=\lambda\Sigma_p\), the inverse-derivative identity Eq.~\eqref{eq:matrix_inverse_derivative_identity} gives
\begin{align}
    g_p'(0)
    =
    \lambda^2\bth_{*,p}^{\T}
    (\hat{\Sigma}_p + \lambda I_p)^{-1}
    \Sigma_p
    (\hat{\Sigma}_p + \lambda I_p)^{-1}
    \bth_{*,p}
    =
    \calB_{\lambda,\hat{\Sigma}_p}.
    \label{eq:bias_as_derivative}
\end{align}

We next establish scalar convergence of \(g_p(J)\) for each fixed real \(J \in (-r,r)\) by applying
Corollary~\ref{cor:bounded_bilinear_resolvent_convergence}.

Fix a real \(J \in (-r,r)\).
Let \(R_{p,J}\) and \(\widetilde{R}_{p,J}\) denote, respectively, the
resolvent of \(\hat{\Sigma}_{p,J}\) and its deterministic resolvent in
Theorem~\ref{thm:0.9}, normalized so that
\begin{align}
    \lambda\widetilde{R}_{p,J}(-\lambda)
    =
    \kappa_{\lambda,J,p}
    (\Sigma_{p,J}+\kappa_{\lambda,J,p}I_p)^{-1} \,.
    \label{eq:bias_deterministic_resolvent_normalization}
\end{align}
For every \(J \in D_J\), define the analytic candidate
\begin{align}
    \widetilde{g}_p(J)
    &:=
    -\bth_{*,p}\T
    \kappa_{\lambda,J,p}
    \bigl\{
        \Sigma_p
        +
        \kappa_{\lambda,J,p}(I_p + J\Sigma_p)
    \bigr\}^{-1}
    \bth_{*,p} \,.
    \label{eq:bias_deterministic_function}
\end{align}
This definition is valid on all of \(D_J\) because Eq.~\eqref{eq:complex_denominator_lower_bound} keeps every denominator away from zero.  For real \(J \in (-r,r)\), its identification with the
deterministic resolvent follows from
\begin{align}
    S_{p,J}
    \kappa_{\lambda,J,p}
    (\Sigma_{p,J} + \kappa_{\lambda,J,p}I_p)^{-1}
    S_{p,J}
    =
    \kappa_{\lambda,J,p}
    \bigl\{
        \Sigma_p
        +
        \kappa_{\lambda,J,p}(I_p + J\Sigma_p)
    \bigr\}^{-1} \,.
    \label{eq:bias_deterministic_transform_identity}
\end{align}

The correspondence with Corollary~\ref{cor:bounded_bilinear_resolvent_convergence} is
\begin{align}
    v_p
    &\longleftrightarrow
    v_{p,J}
    :=
    S_{p,J}\bth_{*,p} \,.
    \label{eq:bias_quadratic_form_correspondence}
\end{align}
Equations~\eqref{eq:bias_transform_uniform_bound} and
\eqref{eq:derivative_trick_uniform_bounds} give
\begin{align}
    \sup_p\sup_{J \in (-r,r)}
    \norm{v_{p,J}}_2
    \leq
    \frac{C_*}{\sqrt{1-rM_\Sigma}}
    < \infty \,.
    \label{eq:bias_quadratic_vector_bound}
\end{align}
The assumptions of Corollary~\ref{cor:bounded_bilinear_resolvent_convergence} are therefore satisfied.
Moreover, \(p/N_p \to \gamma \in (0,\infty)\) implies \(N_p \geq c_0p\) for all sufficiently large \(p\) and some \(c_0 > 0\).
Consequently, for the fixed real \(J\), the corollary yields
\begin{align}
    v_{p,J}\T
    \bigl\{
        R_{p,J}(-\lambda)-\widetilde{R}_{p,J}(-\lambda)
    \bigr\}
    v_{p,J}
    \longrightarrow
    0
    \qquad
    \text{a.s.}
\end{align}
Equations~\eqref{eq:perturbed_resolvent_identity},
\eqref{eq:bias_deterministic_resolvent_normalization}, and
\eqref{eq:bias_deterministic_transform_identity} identify
\(-\lambda\) times this quadratic form with
\(g_p(J) - \widetilde{g}_p(J)\). Hence, for every fixed real
\(J \in (-r,r)\),
\begin{align}
    g_p(J) - \widetilde{g}_p(J)
    \longrightarrow
    0
    \qquad
    \text{a.s. as }p \to \infty \,.
    \label{eq:bias_pointwise_convergence}
\end{align}

Eqs.~\eqref{eq:derivative_trick_kappa_bounds} and
\eqref{eq:bias_kappa_branch_bound} give
\(|\kappa_{\lambda,J,p}|\leq K_\lambda+\rho\), while
Eq.~\eqref{eq:complex_denominator_lower_bound} bounds the inverse in
Eq.~\eqref{eq:bias_deterministic_function} by \(2/\lambda\). Therefore,
\begin{align}
    \sup_p\sup_{J \in D_J}|\widetilde{g}_p(J)|
    \leq
    \frac{2(K_\lambda+\rho)}{\lambda}C_*^2 \,.
    \label{eq:bias_deterministic_uniform_bound}
\end{align}

Let
\begin{align}
    E_J
    &:=
    \bbQ\cap(-r,r) \,,
    \label{eq:bias_vitali_countable_set}
    \\
    f_p(J)
    &:=
    g_p(J)-\widetilde{g}_p(J) \,.
    \label{eq:bias_vitali_difference_function}
\end{align}
The functions \(f_p\) are analytic on \(D_J\): analyticity of \(g_p\)
follows from Eqs.~\eqref{eq:bias_random_resolvent_factorization} and
\eqref{eq:bias_neumann_perturbation_bound}, while analyticity of
\(\widetilde{g}_p\) follows from the analytic branch
\(J\mapsto\kappa_{\lambda,J,p}\) and
Eq.~\eqref{eq:complex_denominator_lower_bound}. Equations
\eqref{eq:bias_random_uniform_bound} and
\eqref{eq:bias_deterministic_uniform_bound} give uniform boundedness on
\(D_J\), and Eq.~\eqref{eq:bias_pointwise_convergence} gives pointwise
almost-sure convergence to zero on \(E_J\).

The correspondence with
Proposition~\ref{prop:pathwise_analytic_derivative_transfer} is
\begin{align}
    \begin{aligned}
        D \longleftrightarrow D_J,
        \quad
        E \longleftrightarrow E_J,
        \quad
        h_p \longleftrightarrow f_p,
        \quad
        z_0 \longleftrightarrow0 \,.
    \end{aligned}
    \label{eq:bias_derivative_transfer_correspondence}
\end{align}
The set \(E_J\) is countable and has \(0\) as a limit point in \(D_J\).
The proposition therefore gives
\begin{align}
    f_p'(0)
    \longrightarrow
    0
    \qquad
    \text{a.s.}
    \label{eq:bias_vitali_derivative_convergence}
\end{align}
Using Eqs.~\eqref{eq:bias_vitali_difference_function},
\eqref{eq:bias_vitali_derivative_convergence}, and
\eqref{eq:bias_as_derivative}, we obtain
\begin{align}
    \calB_{\lambda,\hat{\Sigma}_p}
    -
    \widetilde{g}_p'(0)
    \longrightarrow0
    \qquad\text{a.s.}
    \label{eq:bias_DE_derivative}
\end{align}

It remains to compute \(\widetilde{g}_p'(0)\).
The derivative formula in Lemma~\ref{lemma:uniform_analytic_implicit_branch}, under the Bias correspondence in Table~\ref{tab:uniform_implicit_branch_correspondence}, gives
\begin{align}
    \left.
    \frac{\partial\kappa_{\lambda,J,p}}{\partial J}
    \right|_{J=0}
    &=
    -
    \frac{
        \partial_JH_p(0,\kappa_{\lambda,p})
    }{
        \partial_\kappa H_p(0,\kappa_{\lambda,p})
    } \,.
\end{align}
By Eqs.~\eqref{eq:H_J_derivative}, \eqref{eq:derivative_trick_eta}, and
\eqref{eq:complex_fixed_point_kappa_derivative},
\begin{align}
    \left.
    \frac{\partial\kappa_{\lambda,J,p}}{\partial J}
    \right|_{J=0}
    =
    -
    \frac{
        \kappa_{\lambda,p}^2\eta_{\lambda,p}
    }{
        1-\eta_{\lambda,p}
    } \,.
    \label{eq:kappa_J_derivative}
\end{align}

For brevity, write
\begin{align}
    \kappa_p
    :=
    \kappa_{\lambda,p},
    \qquad
    \dot{\kappa}_p
    :=
    \left.
    \frac{\partial\kappa_{\lambda,J,p}}{\partial J}
    \right|_{J=0},
    \qquad
    Q_p
    :=
    (\Sigma_p+\kappa_pI_p)^{-1} \,.
    \label{eq:bias_shorthand_definitions}
\end{align}
Apply the product rule and the inverse-derivative identity
in Eq.~\eqref{eq:matrix_inverse_derivative_identity} to
Eq.~\eqref{eq:bias_deterministic_function}. Using the notation in
Eq.~\eqref{eq:bias_shorthand_definitions} then yields
\begin{align}
    \widetilde{g}_p'(0)
    &=
    \bth_{*,p}\T
    \bigl[
        -\dot{\kappa}_pQ_p
        +
        \kappa_pQ_p
        (\dot{\kappa}_pI_p+\kappa_p\Sigma_p)
        Q_p
    \bigr]
    \bth_{*,p} \,.
    \label{eq:bias_deterministic_derivative_expanded}
\end{align}
Multiplying the defining relation
\(Q_p(\Sigma_p+\kappa_pI_p)=I_p\) by \(Q_p\), and using the commutativity of
\(Q_p\) and \(\Sigma_p\), gives
\begin{align}
    Q_p-\kappa_pQ_p^2
    =
    Q_p^2\Sigma_p \,.
    \label{eq:bias_Q_identity}
\end{align}
Substituting Eq.~\eqref{eq:bias_Q_identity} into
Eq.~\eqref{eq:bias_deterministic_derivative_expanded}, we obtain
\begin{align}
    \widetilde{g}_p'(0)
    &=
    (\kappa_p^2-\dot{\kappa}_p)
    \bth_{*,p}\T Q_p^2\Sigma_p\bth_{*,p} \,.
    \label{eq:bias_deterministic_derivative_reduced}
\end{align}
Equation~\eqref{eq:kappa_J_derivative} gives
\begin{align}
    \kappa_p^2-\dot{\kappa}_p
    =
    \frac{\kappa_p^2}{1-\eta_{\lambda,p}} \,.
    \label{eq:bias_kappa_prefactor_identity}
\end{align}
Combining Eqs.~\eqref{eq:bias_deterministic_derivative_reduced} and
\eqref{eq:bias_kappa_prefactor_identity}, and restoring
\(\kappa_p=\kappa_{\lambda,p}\), gives
\begin{align}
    \widetilde{g}_p'(0)
    =
    \frac{\kappa_{\lambda,p}^2}{1-\eta_{\lambda,p}}
    \bth_{*,p}\T
    (\Sigma_p+\kappa_{\lambda,p}I_p)^{-2}
    \Sigma_p
    \bth_{*,p} \,.
    \label{eq:bias_DE_theta}
\end{align}
By construction, \(\Sigma_p\) is positive definite on the effective feature
space introduced in Section~\ref{subsec:eigendecomposition}, and
\(\bth_{*,p}=\Sigma_p^{-1/2}\bs{\beta}_{*,p}\). Since every matrix in
Eq.~\eqref{eq:bias_DE_theta} is a function of \(\Sigma_p\),
\begin{align}
    \bth_{*,p}\T
    (\Sigma_p+\kappa_{\lambda,p}I_p)^{-2}
    \Sigma_p
    \bth_{*,p}
    =
    \bs{\beta}_{*,p}\T
    (\Sigma_p+\kappa_{\lambda,p}I_p)^{-2}
    \bs{\beta}_{*,p} \,.
\end{align}
Combining this identity with Eq.~\eqref{eq:bias_DE_derivative}, we conclude
that
\begin{align}
    \calB_{\lambda,\hat{\Sigma}_p}
    \asymp
    \frac{\kappa_{\lambda,p}^2}{1-\eta_{\lambda,p}}
    \bs{\beta}_{*,p}\T
    (\Sigma_p+\kappa_{\lambda,p}I_p)^{-2}
    \bs{\beta}_{*,p} \,.
    \label{eq:bias_DE_final}
\end{align}

\subsection{Derivation of the Variance Term}
\label{appdx:subsec:variance_derivation}

We next derive the deterministic equivalent of
Eq.~\eqref{eq:derivative_trick_variance_definition}. For \(z \in D_\lambda\),
define
\begin{align}
    T_p(z)
    &:=
    \frac{\sigma^2}{N_p}
    \Tr\left[
        \Sigma_p
        z(\hat{\Sigma}_p + zI_p)^{-1}
    \right] \,.
    \label{eq:variance_random_function}
\end{align}
Applying Eq.~\eqref{eq:matrix_inverse_derivative_identity} with
\(A(z) = \hat{\Sigma}_p + zI_p\), followed by the algebraic identity
\(I_p - z(\hat{\Sigma}_p + zI_p)^{-1}
=\hat{\Sigma}_p(\hat{\Sigma}_p + zI_p)^{-1}\), gives
\begin{align}
    \frac{\partial}{\partial z}
    \bigl\{
        z(\hat{\Sigma}_p + zI_p)^{-1}
    \bigr\}
    =
    (\hat{\Sigma}_p + zI_p)^{-1}
    -
    z(\hat{\Sigma}_p + zI_p)^{-2}
    =
    \hat{\Sigma}_p
    (\hat{\Sigma}_p + zI_p)^{-2} \,.
    \label{eq:variance_resolvent_derivative_identity}
\end{align}
Substituting Eq.~\eqref{eq:variance_resolvent_derivative_identity} into
Eq.~\eqref{eq:variance_random_function} and setting \(z=\lambda\), we obtain
\begin{align}
    T_p'(\lambda)
    =
    \calV_{\lambda,\hat{\Sigma}_p} \,.
    \label{eq:variance_as_derivative}
\end{align}

The corresponding deterministic function is
\begin{align}
    \widetilde{T}_p(z)
    &:=
    \frac{\sigma^2\kappa_{z,p}}{N_p}
    \Tr\left[
        \Sigma_p(\Sigma_p+\kappa_{z,p}I_p)^{-1}
    \right] \,.
    \label{eq:variance_deterministic_function}
\end{align}
For every fixed real \(z \in (\lambda - \delta,\lambda + \delta)\),
Proposition~\ref{prop:resolvent_deterministic_equivalent} identifies the
corresponding deterministic resolvent as
\begin{align}
    z(\hat{\Sigma}_p + zI_p)^{-1}
    \asymp_F
    \kappa_{z,p}
    (\Sigma_p + \kappa_{z,p}I_p)^{-1} \,.
    \label{eq:variance_real_axis_resolvent_DE}
\end{align}
Choose \(0 < \epsilon < \lambda - \delta\), which is possible by
Eq.~\eqref{eq:variance_disk_radius_choice}.  The correspondence with
Corollary~\ref{cor:normalized_trace_resolvent_convergence} is
\begin{align}
    \lambda\text{ in the corollary} & \longleftrightarrow z \,,
    \qquad
    D_p & \longleftrightarrow \Sigma_p \,.
    \label{eq:variance_normalized_trace_correspondence}
\end{align}
The test matrices are admissible by
Eq.~\eqref{eq:derivative_trick_uniform_bounds}.  The corollary therefore
gives
\begin{align}
    \Delta_p(z)
    :=
    \frac{1}{p}
    \Tr\left[
        \Sigma_p
        \left\{
            z(\hat{\Sigma}_p + zI_p)^{-1}
            -
            \kappa_{z,p}
            (\Sigma_p+\kappa_{z,p}I_p)^{-1}
        \right\}
    \right]
    \longrightarrow
    0
    \qquad
    \text{a.s.}
    \label{eq:variance_normalized_trace_convergence}
\end{align}
for every such fixed real \(z\). By the definitions of \(T_p\) and
\(\widetilde{T}_p\),
\begin{align}
    T_p(z)-\widetilde{T}_p(z)
    =
    \sigma^2\frac{p}{N_p}
    \Delta_p(z) \,.
\end{align}
Since \(p/N_p\) is uniformly bounded by
Eq.~\eqref{eq:derivative_trick_aspect_ratio_bound}, it follows that
\begin{align}
    T_p(z)-\widetilde{T}_p(z)
    \longrightarrow0
    \qquad
    \text{a.s.}
    \label{eq:variance_pointwise_convergence}
\end{align}
for every such fixed real \(z\).

We now establish the uniform bounds needed for Vitali's theorem. By the
spectral theorem, if \(\mu \geq 0\) is an eigenvalue of
\(\hat{\Sigma}_p\), then
\(|\mu+z| \geq \operatorname{Re}(\mu+z)
\geq\operatorname{Re}z\). Together with
Eq.~\eqref{eq:variance_disk_positive_real_part}, this gives
\begin{align}
    \norm{(\hat{\Sigma}_p + zI_p)^{-1}}_\infty
    \leq
    \frac{1}{\operatorname{Re}z}
    \leq
    \frac{2}{\lambda} \,.
    \label{eq:variance_random_resolvent_bound}
\end{align}
Furthermore, Eqs.~\eqref{eq:variance_disk_radius_choice} and
\eqref{eq:variance_complex_disk} imply
\( |z| \leq \lambda + \delta < 3\lambda/2 \). Therefore,
\begin{align}
    \norm{z(\hat{\Sigma}_p + zI_p)^{-1}}_\infty
    \leq3 \,.
    \label{eq:variance_scaled_resolvent_bound}
\end{align}
For a positive semidefinite matrix \(A\) and an arbitrary matrix \(B\),
H\"older's inequality for Schatten norms gives
\(|\Tr[AB]| \leq \Tr[A]\norm{B}_\infty\).
Applying it with \(A=\Sigma_p\), using
Eq.~\eqref{eq:variance_scaled_resolvent_bound}, and then using
Eqs.~\eqref{eq:derivative_trick_aspect_ratio_bound} and
\eqref{eq:derivative_trick_uniform_bounds}, we obtain
\begin{align}
    |T_p(z)|
    &\leq
    \frac{3\sigma^2}{N_p}\Tr\Sigma_p
    \leq
    3\sigma^2\Gamma M_\Sigma \,.
    \label{eq:variance_random_uniform_bound}
\end{align}

The fixed-point equation
\(H_p^{\mathrm{var}}(z,\kappa_{z,p})=0\) can be divided by
\(\kappa_{z,p}\).
Indeed, Eqs.~\eqref{eq:derivative_trick_kappa_bounds},
\eqref{eq:rho_upper_bound}, and
\eqref{eq:variance_kappa_branch_bound} give
\begin{align}
    |\kappa_{z,p}|
    \geq
    \kappa_{\lambda,p}
    -
    |\kappa_{z,p} - \kappa_{\lambda,p}|
    >
    \lambda - \rho
    \geq \frac{\lambda}{2} >0 \,.
\end{align}

Using the definition Eq.~\eqref{eq:complex_fixed_point_variance}, we obtain
\begin{align}
    \frac{1}{N_p}
    \Tr\left[
        \Sigma_p(\Sigma_p+\kappa_{z,p}I_p)^{-1}
    \right]
    +
    \frac{z}{\kappa_{z,p}}
    =1 \,.
    \label{eq:variance_complex_fixed_point_normalized}
\end{align}
Substituting Eq.~\eqref{eq:variance_complex_fixed_point_normalized} into
Eq.~\eqref{eq:variance_deterministic_function}, we find
\begin{align}
    \widetilde{T}_p(z)
    =
    \sigma^2(\kappa_{z,p}-z) \,.
    \label{eq:variance_deterministic_simplified}
\end{align}
Equations~\eqref{eq:derivative_trick_kappa_bounds} and
\eqref{eq:variance_kappa_branch_bound} give
\(|\kappa_{z,p}| \leq K_\lambda + \rho\). Together with
\(|z| < 3\lambda/2\), established before
Eq.~\eqref{eq:variance_scaled_resolvent_bound}, this proves
\begin{align}
    \sup_p\sup_{z \in D_\lambda}
    |\widetilde{T}_p(z)|
    \leq
    \sigma^2
    \left(
        K_\lambda + \rho + \frac{3\lambda}{2}
    \right)
    < \infty \,.
    \label{eq:variance_deterministic_uniform_bound}
\end{align}

Let
\begin{align}
    E_\lambda
    &:=
    \bbQ\cap(\lambda - \delta,\lambda + \delta) \,,
    \label{eq:variance_vitali_countable_set}
    \\
    F_p(z)
    &:=
    T_p(z) - \widetilde{T}_p(z) \,.
    \label{eq:variance_vitali_difference_function}
\end{align}
The functions \(F_p\) are analytic on \(D_\lambda\): analyticity of \(T_p\) follows from Eq.~\eqref{eq:variance_random_resolvent_bound}, while analyticity of \(\widetilde{T}_p\) follows from \(z\mapsto\kappa_{z,p}\) and Eq.~\eqref{eq:complex_denominator_lower_bound} at \(J = 0\). Equations~\eqref{eq:variance_random_uniform_bound} and \eqref{eq:variance_deterministic_uniform_bound} give uniform boundedness, and Eq.~\eqref{eq:variance_pointwise_convergence} gives pointwise almost-sure convergence to zero on \(E_\lambda\).

The correspondence with Proposition~\ref{prop:pathwise_analytic_derivative_transfer} is
\begin{align}
    \begin{aligned}
        D \longleftrightarrow D_\lambda,
        \quad
        E \longleftrightarrow E_\lambda,
        \quad
        h_p \longleftrightarrow F_p,
        \quad
        z_0 \longleftrightarrow\lambda \,.
    \end{aligned}
    \label{eq:variance_derivative_transfer_correspondence}
\end{align}
The set \(E_\lambda\) is countable and has \(\lambda\) as a limit point in \(D_\lambda\). The proposition therefore gives
\begin{align}
    F_p'(\lambda)
    \longrightarrow
    0
    \qquad
    \text{a.s.}
    \label{eq:variance_vitali_derivative_convergence}
\end{align}
Using Eqs.~\eqref{eq:variance_vitali_difference_function}, \eqref{eq:variance_vitali_derivative_convergence}, and \eqref{eq:variance_as_derivative}, we obtain
\begin{align}
    \calV_{\lambda,\hat{\Sigma}_p}
    -
    \widetilde{T}_p'(\lambda)
    \longrightarrow0
    \qquad\text{a.s.}
    \label{eq:variance_DE_derivative}
\end{align}

It remains to differentiate the effective regularization parameter.
The derivative formula in Lemma~\ref{lemma:uniform_analytic_implicit_branch}, applied to \(\widehat H_p(t,\kappa) = H_p^{\mathrm{var}}(\lambda + t,\kappa)\), gives
\begin{align}
    \left.
    \frac{\partial\kappa_{z,p}}{\partial z}
    \right|_{z=\lambda}
    &=
    -
    \frac{
        \partial_z H_p^{\mathrm{var}}
        (\lambda,\kappa_{\lambda,p})
    }{
        \partial_\kappa H_p^{\mathrm{var}}
        (\lambda,\kappa_{\lambda,p})
    } \,.
\end{align}
Here Eq.~\eqref{eq:complex_fixed_point_variance} gives
\(\partial_zH_p^{\mathrm{var}}=-1\), while
Eq.~\eqref{eq:H_variance_base_point} gives
\(\partial_\kappa H_p^{\mathrm{var}}=1-\eta_{\lambda,p}\). Therefore,
\begin{align}
    \left.
    \frac{\partial\kappa_{z,p}}{\partial z}
    \right|_{z=\lambda}
    =
    \frac{1}{1-\eta_{\lambda,p}} \,.
    \label{eq:kappa_derivative}
\end{align}
Differentiating Eq.~\eqref{eq:variance_deterministic_simplified} and then
using Eq.~\eqref{eq:kappa_derivative} yields
\begin{align}
    \widetilde{T}_p'(\lambda)
    =
    \sigma^2
    \left(
        \frac{1}{1-\eta_{\lambda,p}}-1
    \right)
    =
    \sigma^2
    \frac{\eta_{\lambda,p}}{1-\eta_{\lambda,p}} \,.
    \label{eq:variance_deterministic_derivative}
\end{align}
Combining Eqs.~\eqref{eq:variance_DE_derivative} and
\eqref{eq:variance_deterministic_derivative}, we obtain
\begin{align}
    \calV_{\lambda,\hat{\Sigma}_p}
    \asymp
    \sigma^2
    \frac{\eta_{\lambda,p}}{1-\eta_{\lambda,p}} \,.
    \label{eq:variance_DE_final}
\end{align}

Suppressing the subscript \(p\) in
Eqs.~\eqref{eq:bias_DE_final} and \eqref{eq:variance_DE_final}, we arrive at
\begin{align}
    \calB_{\lambda,\hat{\Sigma}}
    &\asymp
    \frac{\kappa_\lambda^2}{1-\eta_\lambda}
    \bs{\beta}_*\T
    (\Sigma+\kappa_\lambda I_p)^{-2}
    \bs{\beta}_* \,,
    \\
    \calV_{\lambda,\hat{\Sigma}}
    &\asymp
    \sigma^2
    \frac{\eta_\lambda}{1-\eta_\lambda} \,,
\end{align}
which are the deterministic equivalents stated in
Theorem~\ref{thm:test_risk_DE}.

\begin{remark}[Fixed positive regularization]
\label{remark:fixed_positive_regularization}
    The proof is local around an arbitrary fixed \(\lambda > 0\). The construction
    of the common complex neighborhoods relies on
    \begin{align}
        1-\eta_{\lambda,p}
        \geq
        \frac{\lambda}{\kappa_{\lambda,p}} \,.
    \end{align}
    If \(\lambda=\lambda_p\to0\) simultaneously with \(p \to \infty\), this lower
    bound may vanish, and the radii of the \(p\)-uniform complex neighborhoods may shrink with \(p\).
    Quantitatively, \(a_\lambda = \lambda/K_\lambda\),
    \(\rho = \order{a_\lambda}\) by Eq.~\eqref{eq:rouche_radius_choice}, and
    \(r = \order{a_\lambda\rho}\), so both radii shrink to zero as \(\lambda \downarrow 0\).
    The argument above therefore does not, by itself, establish a
    deterministic equivalent that is uniform in a simultaneous ridgeless limit.
\end{remark}

\subsection{Proof of Corollary~\ref{cor:fixed_ridge_normalized_quantum_kernel}}\label{appdx:subsec:fixed_ridge_normalized_quantum_kernel}
\begin{proof}[Proof of Corollary~\ref{cor:fixed_ridge_normalized_quantum_kernel}]
    Multiplying the self-consistent equation~\eqref{eq:self-consistent_eq} by $\kappa_{\lambda,p}$ and using $\Tr\Sigma_p=1$ gives
    \begin{align}
        \kappa_{\lambda,p} - \lambda
        &=
        \frac{\kappa_{\lambda,p}}{N_p}
        \Tr\!\left[
            \Sigma_p(\Sigma_p + \kappa_{\lambda,p}I_p)^{-1}
        \right]
        \leq
        \frac{\Tr\Sigma_p}{N_p}
        =
        \frac{1}{N_p}.
    \end{align}
    Moreover, by \eqref{eq:eta_upper_bound_from_fixed_point},
    \begin{align}
        0 \leq \eta_{\lambda,p}
        \leq 1 - \frac{\lambda}{\kappa_{\lambda,p}}
        = \frac{\kappa_{\lambda,p} - \lambda}{\kappa_{\lambda,p}}
        \leq \frac{1}{N_p\lambda}.
    \end{align}
    Since $N_p \to \infty$, the stated limits follow for every fixed $\lambda > 0$.
\end{proof}

\subsection{Training Error and Generalized Cross-Validation}\label{appdx:sec:training_error_DE}
We close this section by presenting the deterministic equivalent of the training error and the generalized cross-validation (GCV) risk estimator~\cite{atanasov2024scaling,MisiakiewiczSaeed2024}, which can be derived using techniques similar to those used for the test risk.
The training error is defined as
\begin{align}
    \hat{R}^{\mathrm{train}}_{\lambda,\hat{\Sigma}}
    &:=
    \frac{1}{N_{\mathrm{tr}}} \norm*{\by - X\hat{\bth}}^2 \,.
\end{align}
The deterministic equivalent of the training error is
\begin{align}
    \hat{R}^{\mathrm{train}}_{\lambda,\hat{\Sigma}}
    \asymp
    \frac{\lambda^2}{\kappa_\lambda^2}
    R^{\mathrm{DE}}_{\lambda,\Sigma} \,.
\end{align}

The GCV risk estimator is defined as
\begin{align}
    R_{\lambda,\hat{\Sigma}}^{\mathrm{GCV}}
    &:=
    \frac{
        \hat{R}^{\mathrm{train}}_{\lambda,\hat{\Sigma}}
    }{
        (1 - \frac{1}{N_{\mathrm{tr}}}\operatorname{Tr}[\hat{\Sigma}(\hat{\Sigma} + \lambda I_p)^{-1}])^2
    } \,.
\end{align}
We define $G := XX\T/N_{\mathrm{tr}}$.
Noting that
\begin{align}
    \hat{R}^{\mathrm{train}}_{\lambda,\hat{\Sigma}}
    =
    \frac{\lambda^2}{N_{\mathrm{tr}}}
    \by\T
    \qty(
        G + \lambda I_{N_{\mathrm{tr}}}
    )^{-2}
    \by \,,
    \quad
    1 - \frac{1}{N_{\mathrm{tr}}}\operatorname{Tr}[\hat{\Sigma}(\hat{\Sigma} + \lambda I_p)^{-1}]
    =
    \frac{\lambda}{N_{\mathrm{tr}}}
    \operatorname{Tr}[(G + \lambda I_{N_{\mathrm{tr}}})^{-1}] \,,
\end{align}
we obtain the following expression for the GCV risk estimator:
\begin{align}
    R_{\lambda,\hat{\Sigma}}^{\mathrm{GCV}}
    &:=
    \frac{
        \frac{1}{N_{\mathrm{tr}}}
        \by\T(G + \lambda I_{N_{\mathrm{tr}}})^{-2}\by
    }{
    \qty(
        \frac{1}{N_{\mathrm{tr}}}
        \operatorname{Tr}
        [(G + \lambda I_{N_{\mathrm{tr}}})^{-1}]
    )^2
    } \,.
\end{align}

In Definition~\ref{def:frobenius_deterministic_equivalent}, we only consider the case where $\{B_p\}_p$ are deterministic. More generally, the comparison sequence \(B_p\) may itself be random.
Whether $B_p$ is deterministic or not, if Eq.~\eqref{eq:def:frobenius_deterministic_equivalent} holds, we say that $B_p$ is an \textbf{asymptotic equivalent} of $A_p$, and we also denote it as $A_p \asymp_F B_p$.

Under the same assumptions as in Theorem~\ref{thm:test_risk_DE}, the GCV estimator is expected to be asymptotically equivalent to the test risk, i.e., $R_{\lambda,\hat{\Sigma}}^{\mathrm{GCV}} \asymp R_{\lambda,\hat{\Sigma}}$.
Here we drop the subscript $F$ in the notation $\asymp_F$ since these two quantities are scalars as stated after Definition~\ref{def:frobenius_deterministic_equivalent}.
The key point is that the GCV risk estimator depends only on the training samples. This avoids the estimation of the population covariance matrix $\Sigma$ as in Section~\ref{subsec:estimation_sigma_beta}, allowing us to estimate the test-risk behavior more efficiently.
The GCV risk estimator and the test-risk DE $R^{\mathrm{DE}}_{\lambda,\Sigma}$ are compared in Fig.~\ref{fig:gcv_synthetic_5qubits_TPA} for the synthetic dataset with 5-qubit TPA.

\begin{figure}[H]
    \centering
    \includegraphics[width=10cm]{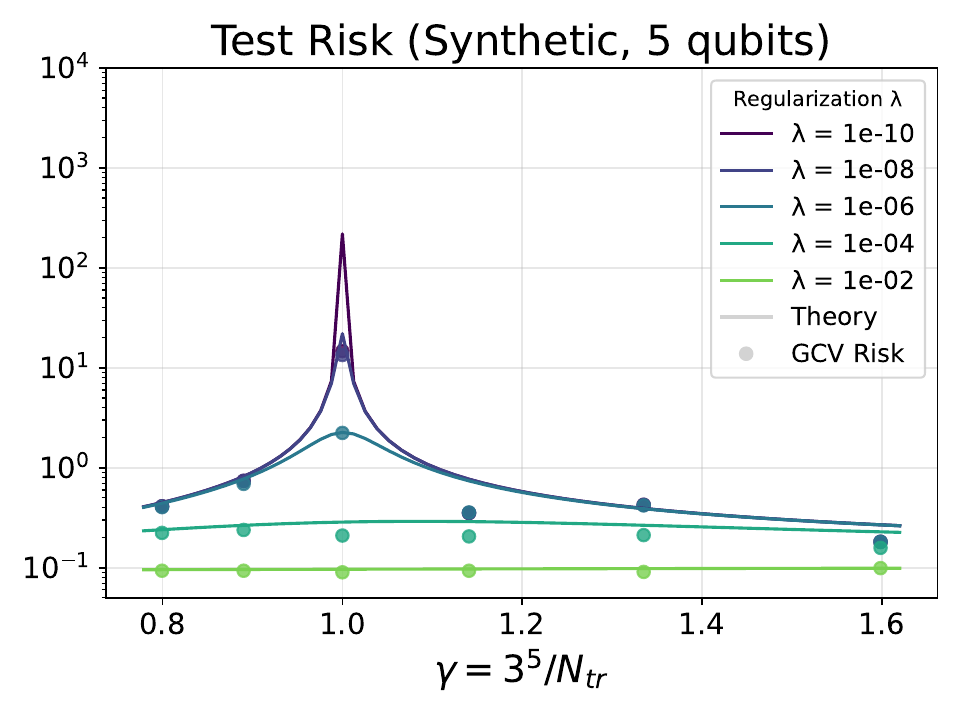}
    \caption{Comparison of the finite-dimensional test-risk DE $R^{\mathrm{DE}}_{\lambda,\Sigma}$ and the GCV risk estimator $R_{\lambda,\hat{\Sigma}}^{\mathrm{GCV}}$ for the synthetic dataset with 5-qubit TPA.}
    \label{fig:gcv_synthetic_5qubits_TPA}
\end{figure}

\newpage
\section{Properties of the Effective Regularization Parameter $\kappa_\lambda$}\label{appdx:sec:properties_of_kappa}

This appendix studies the properties of the effective regularization parameter $\kappa_\lambda$ as $\lambda \downarrow 0$ that are shown in Section~\ref{subsec:asymptotic_test_risk}.
Throughout this section, the effective feature dimension $p$ and the number of training samples $N_{\mathrm{tr}}$ are \emph{held fixed}, and we use the self-consistent equation~\eqref{eq:self-consistent_eq} as a relation between the two scalars $\lambda$ and $\kappa_\lambda$.
The constants appearing below may depend on $p$ and on the spectrum of $\Sigma$.
We emphasize that these statements do \emph{not} assert a deterministic equivalent in a simultaneous limit $p,N_{\mathrm{tr}} \to \infty$ with $\lambda = \lambda_p \downarrow 0$, for which see Remark~\ref{remark:fixed_positive_regularization}.

We write $\Sigma = \diag(\xi_1,\ldots,\xi_p)$ with $\xi_1 \geq \cdots \geq \xi_p > 0$ and set $\xi_{\min} := \xi_p$; positive definiteness holds by construction in the effective feature space (Section~\ref{subsec:test_risk_decomposition}).
We use the effective DOF of Eq.~\eqref{eq:effective_DOF} and define $\eta$ for a general scalar $\kappa \geq 0$:
\begin{align}
    \mathrm{df}_s(\kappa)
    :=
    \Tr[\Sigma^s(\Sigma + \kappa I_p)^{-s}]
    =
    \sum_{i=1}^p\left(\frac{\xi_i}{\xi_i + \kappa}\right)^{\!s} \,,
    \qquad
    \eta(\kappa)
    :=
    \frac{1}{N_{\mathrm{tr}}}\mathrm{df}_2(\kappa) \,.
\end{align}
Thus, the quantity denoted by $\eta_\lambda$ in Theorem~\ref{thm:test_risk_DE} is $\eta(\kappa_\lambda)$ in this appendix.
We record the elementary properties of $\mathrm{df}_1$ that we use repeatedly.
\begin{enumerate}[label=(P\arabic*),leftmargin=32pt]
    \item\label{item:df1_range} $\mathrm{df}_1$ is continuous on $[0,\infty)$ with $\mathrm{df}_1(0)=p$ and $\mathrm{df}_1(\kappa)\to0$ as $\kappa \to \infty$.
    \item\label{item:df1_decreasing} $\mathrm{df}_1'(\kappa)=-\sum_{i=1}^p\xi_i/(\xi_i+\kappa)^2<0$, so $\mathrm{df}_1$ is strictly decreasing.
    \item\label{item:df1_jensen1} The map $x\mapsto 1/(1 + \kappa x)$ is strictly convex on $(0,\infty)$ for $\kappa > 0$, since its second derivative is $2\kappa^2/(1 + \kappa x)^3 > 0$.
    Define $\bar{\xi}_{\mathrm{inv}} := \frac{1}{p}\sum_{i=1}^p \xi_i^{-1} = \frac{1}{p}\Tr[\Sigma^{-1}]$.
    Jensen's inequality ($\E[f(X)] \geq f(\E[X])$) over the empirical eigenvalue distribution therefore gives
    \begin{align}\label{eq:df_inequality1}
        \frac{1}{p}\mathrm{df}_1(\kappa)
        \geq
        \frac{1}{1 + \kappa \bar{\xi}_{\mathrm{inv}}} \,.
    \end{align}
    \item\label{item:df1_jensen2} The map $x \mapsto x/(x+\kappa)$ is strictly concave on $(0,\infty)$ for $\kappa > 0$, since its second derivative is $-2\kappa/(x + \kappa)^3 < 0$.
    Define $\bar{\xi} := \frac{1}{p}\sum_{i=1}^p\xi_i = \Tr[\Sigma]/p$.
    Jensen's inequality ($\E[f(X)] \leq f(\E[X])$) over the empirical eigenvalue distribution therefore gives
    \begin{align}\label{eq:df_inequality2}
        \frac{1}{p}\mathrm{df}_1(\kappa)
        \leq
        \frac{\bar{\xi}}{\kappa + \bar{\xi}} \,.
    \end{align}
\end{enumerate}

\subsection{The ridgeless effective regularization parameter $\kappa_0$}\label{appdx:subsec:kappa_zero}
For $\lambda > 0$, the self-consistent equation~\eqref{eq:self-consistent_eq} is equivalent to
\begin{align}
    \lambda = F(\kappa_\lambda),
    \qquad
    F(\kappa)
    := \kappa\left(1 - \frac{\mathrm{df}_1(\kappa)}{N_{\mathrm{tr}}}\right).
    \label{eq:self_consistent_equation2}
\end{align}
Differentiating Eq.~\eqref{eq:self_consistent_equation2} and using
$-\kappa\,\mathrm{df}_1'(\kappa) = \mathrm{df}_1(\kappa)-\mathrm{df}_2(\kappa)$ yields
\begin{align}\label{eq:lambda_prime_of_kappa}
    F'(\kappa)
    =
    1-\frac{\mathrm{df}_2(\kappa)}{N_{\mathrm{tr}}}
    =
    1-\eta(\kappa) \,.
\end{align}

Note that $F(0) = 0$ and $F(\kappa)\to\infty$ as $\kappa\to\infty$.
The latter follows from the fact that $\mathrm{df}_1(\kappa) \to 0$ as $\kappa \to \infty$.
For $\kappa \geq 0$, the zero equation has the two possible forms
\begin{align}\label{eq:lambda_zero_roots}
    F(\kappa) = 0
    \iff
    \kappa = 0
    \quad\text{or}\quad
    \left[\kappa > 0\ \text{and}\ \mathrm{df}_1(\kappa)=N_{\mathrm{tr}}\right] \,.
\end{align}
We define the largest nonnegative root by
\begin{align}\label{eq:kappa_zero_definition}
    \kappa_e
    :=
    \max\set{\kappa \geq 0}[F(\kappa) = 0] \,.
\end{align}
This maximum exists because the zero set is nonempty, as it contains $0$; it is closed, by continuity of $F$; and it is bounded, because $F(\kappa) \to \infty$. Since $\mathrm{df}_1$ is strictly decreasing from $p$ to $0$, the positive alternative in Eq.~\eqref{eq:lambda_zero_roots} has no solution when $\tilde{\gamma} \leq 1$ and has exactly one solution when $\tilde{\gamma} > 1$. Thus the zero set is $\{0\}$ for $\tilde{\gamma} \leq 1$ and $\{0,\kappa_e\}$ with $\kappa_e > 0$ for $\tilde{\gamma} > 1$.
In the latter case, since $\mathrm{df}_1$ is strictly decreasing and $\mathrm{df}_1(\kappa_e) = N_{\mathrm{tr}}$, the sign of $F$ is given by
\begin{align}
    \begin{cases}
        F(\kappa) < 0, & 0 < \kappa < \kappa_e,\\
        F(\kappa) = 0, & \kappa = 0 \text{ or } \kappa = \kappa_e,\\
        F(\kappa) > 0, & \kappa > \kappa_e.
    \end{cases}
    \label{eq:sign_of_F}
\end{align}
Consequently, for every $\lambda > 0$, any solution of $F(\kappa) = \lambda$ must satisfy $\kappa > \kappa_e$.
Thus, the root $\kappa = 0$ is separated from all positive-$\lambda$ solutions and cannot be the $\lambda \downarrow 0$ endpoint of the positive-$\lambda$ solution branch.

For every $\kappa > \kappa_e$, we have $F(\kappa) > 0$, or equivalently $\mathrm{df}_1(\kappa) < N_{\mathrm{tr}}$. Since $\mathrm{df}_2(\kappa)\leq\mathrm{df}_1(\kappa)$, Eq.~\eqref{eq:lambda_prime_of_kappa} gives
\begin{align}
    F'(\kappa)>0
    \qquad\text{for every }\kappa > \kappa_e \,.
\end{align}
Consequently, the restriction $F|_{[\kappa_e,\infty)}$ is a strictly increasing continuous bijection from $[\kappa_e,\infty)$ onto $[0,\infty)$. For every $\lambda > 0$, the unique positive solution is therefore
\begin{align}\label{eq:kappa_inverse_branch}
    \kappa_\lambda
    =
    \left(F|_{[\kappa_e,\infty)}\right)^{-1}(\lambda) \,.
\end{align}
Thus, the map $\lambda\mapsto\kappa_\lambda$ is strictly increasing. The inverse is continuous at $\lambda = 0$, so the ridgeless endpoint exists. We define it separately by
\begin{align}\label{eq:kappa_zero_limit_definition}
    \kappa_0
    :=
    \lim_{\lambda\downarrow0}\kappa_\lambda \,.
\end{align}
Because $F|_{[\kappa_e,\infty)}$ is a continuous strictly increasing bijection onto $[0,\infty)$, its inverse is continuous and \emph{strictly} increasing on $[0,\infty)$. Hence, taking the limit $\lambda \downarrow 0$ in Eq.~\eqref{eq:kappa_inverse_branch} gives
\begin{align}\label{eq:kappa_zero_equals_kappa_e}
    \kappa_0
    &=
    \lim_{\lambda\downarrow0}
    \qty(F|_{[\kappa_e,\infty)})^{-1}(\lambda)
    =
    \qty(F|_{[\kappa_e,\infty)})^{-1}(0)
    =
    \kappa_e.
\end{align}

Thus, $\kappa_e$ and $\kappa_0$ have distinct definitions: $\kappa_e$ is the largest algebraic root of the continuously extended zero equation, whereas $\kappa_0$ is the endpoint selected by the positive-$\lambda$ solution branch. Their equality is a consequence of the branch analysis above.

When $\tilde{\gamma} > 1$, it obeys the two-sided estimate
\begin{align}
    \frac{p}{\Tr[\Sigma^{-1}]}(\tilde{\gamma}-1)
    \;\leq\;
    \kappa_0
    \;\leq\;
    \frac{\Tr[\Sigma]}{p}\,(\tilde{\gamma}-1) \,.
\end{align}
The lower bound in Eq.~\eqref{eq:kappa_zero_two_sided_bounds} follows by inserting $\mathrm{df}_1(\kappa_0) = N_{\mathrm{tr}}$ into Eq.~\eqref{eq:df_inequality1} and rearranging.
The upper bound follows by inserting $\mathrm{df}_1(\kappa_0) = N_{\mathrm{tr}}$ into Eq.~\eqref{eq:df_inequality2} and rearranging.
The bounds are tight when all eigenvalues are equal, $\xi_1 = \cdots = \xi_p$.

For $\lambda > 0$, the inverse-function theorem gives
\begin{align}\label{eq:kappa_derivative_appendix}
    \frac{\ind\kappa_\lambda}{\ind\lambda}
    =
    \frac{1}{F'(\kappa_\lambda)}
    =
    \frac{1}{1 - \eta(\kappa_\lambda)} \,.
\end{align}
This agrees with Eq.~\eqref{eq:kappa_derivative}.
It is useful to record the following exact decomposition of the denominator $1 - \eta(\kappa_\lambda)$ in Eq.~\eqref{eq:kappa_derivative_appendix}.
Combining $\mathrm{df}_1(\kappa)-\mathrm{df}_2(\kappa)=\kappa\sum_{i=1}^p\xi_i/(\xi_i+\kappa)^2$ with $\mathrm{df}_1(\kappa_\lambda)=N_{\mathrm{tr}}(1-\lambda/\kappa_\lambda)$ from Eq.~\eqref{eq:self_consistent_equation2} gives
\begin{align}\label{eq:eta_exact_identity}
    1-\eta(\kappa_\lambda)
    =
    \frac{\lambda}{\kappa_\lambda}
    +
    \frac{\kappa_\lambda}{N_{\mathrm{tr}}}\Tr[\Sigma(\Sigma + \kappa_\lambda I_p)^{-2}] \,.
\end{align}
Both terms on the right-hand side are positive for $\lambda > 0$.
When $\lambda = 0$, the first term is $1 - \tilde{\gamma}$ (shown in Eq.~\eqref{eq:kappa_leading_underparameterized}) and the second term is zero if $\tilde{\gamma} \leq 1$ and strictly positive if $\tilde{\gamma} > 1$.
Thus, even when $\lambda = 0$, we have $1 - \eta(\kappa_\lambda)$ strictly positive for $\tilde{\gamma} \neq 1$:
\begin{align}
    1 - \eta(\kappa_0)
    =
    \begin{cases}
        1 - \tilde{\gamma} > 0 \,, & \tilde{\gamma} < 1 \,,\\
        0 \,, & \tilde{\gamma} = 1 \,,\\
        \frac{\kappa_0}{N_{\mathrm{tr}}}\Tr[\Sigma(\Sigma + \kappa_0 I_p)^{-2}] > 0 \,, & \tilde{\gamma} > 1 \,.
    \end{cases}
\end{align}

To summarize, we have the following proposition.
\begin{proposition}[Strict positivity of the Ridgeless Endpoint $\kappa_0$]
    \label{prop:kappa_zero_positive}
    Let $p,N_{\mathrm{tr}}$ be fixed and $\Sigma\succ0$. Then
    \begin{align}\label{eq:kappa_zero_piecewise}
        \kappa_0 = \kappa_e
        =
        \begin{cases}
            0 \,,
            & \tilde{\gamma} \leq 1 \,,\\
            \text{the unique positive solution of }
            \mathrm{df}_1(\kappa) = N_{\mathrm{tr}} \,,
            & \tilde{\gamma} > 1 \,,
        \end{cases}
    \end{align}
    and $\kappa_\lambda$ is strictly increasing in $\lambda \in [0,\infty)$.
    When $\tilde{\gamma} > 1$, it obeys the two-sided estimate
    \begin{align}\label{eq:kappa_zero_two_sided_bounds}
        \frac{p}{\Tr[\Sigma^{-1}]}(\tilde{\gamma}-1)
        \;\leq\;
        \kappa_0
        \;\leq\;
        \frac{\Tr[\Sigma]}{p}\,(\tilde{\gamma}-1) \,.
    \end{align}
    Furthermore,
    \begin{align}\label{eq:one_minus_eta_at_kappa_zero}
        1 - \eta(\kappa_0)
        =
        \begin{cases}
            1 - \tilde{\gamma} > 0 \,, & \tilde{\gamma} < 1 \,,\\
            0 \,, & \tilde{\gamma} = 1 \,,\\
            \frac{\kappa_0}{N_{\mathrm{tr}}}\Tr[\Sigma(\Sigma + \kappa_0 I_p)^{-2}] > 0 \,, & \tilde{\gamma} > 1 \,.
        \end{cases}
    \end{align}
\end{proposition}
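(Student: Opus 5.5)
This proposition essentially packages the branch analysis just carried out, so the plan is to assemble those pieces in order. First, I would work with the reformulation $\lambda = F(\kappa_\lambda)$ of Eq.~\eqref{eq:self_consistent_equation2}, where $F(\kappa)=\kappa(1-\mathrm{df}_1(\kappa)/N_{\mathrm{tr}})$. From properties \ref{item:df1_range}--\ref{item:df1_decreasing}, the zero set of $F$ on $[0,\infty)$ is $\{0\}$ when $\tilde{\gamma}\le1$, because then $\mathrm{df}_1(\kappa)<p\le N_{\mathrm{tr}}$ for $\kappa>0$. When $\tilde{\gamma}>1$, the zero set is $\{0,\kappa_e\}$, where $\kappa_e>0$ is the unique solution of $\mathrm{df}_1=N_{\mathrm{tr}}$. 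The sign pattern \eqref{eq:sign_of_F} then shows that every positive-$\lambda$ solution satisfies $\kappa_\lambda>\kappa_e$.

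Next, I would use the derivative identity \eqref{eq:lambda_prime_of_kappa}, $F'=1-\eta$, together with $\mathrm{df}_2\le\mathrm{df}_1<N_{\mathrm{tr}}$ on $(\kappa_e,\infty)$, to get $F'>0$ there. Hence $F|_{[\kappa_e,\infty)}$ is a continuous strictly increasing bijection onto $[0,\infty)$, so its inverse is continuous and strictly increasing. This gives the monotonicity of $\kappa_\lambda$ on $[0,\infty)$ and $\kappa_0=\lim_{\lambda\downarrow0}\kappa_\lambda=\kappa_e$, which is \eqref{eq:kappa_zero_piecewise}.

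For the two-sided bound, I would substitute $\mathrm{df}_1(\kappa_0)=N_{\mathrm{tr}}$, i.e.\ $\frac1p\mathrm{df}_1(\kappa_0)=1/\tilde{\gamma}$, into the Jensen inequalities \eqref{eq:df_inequality1} and \eqref{eq:df_inequality2}. Rearranging $1/\tilde{\gamma}\ge 1/(1+\kappa_0\bar{\xi}_{\mathrm{inv}})$ gives $\kappa_0\ge(\tilde{\gamma}-1)/\bar{\xi}_{\mathrm{inv}}=p(\tilde{\gamma}-1)/\Tr[\Sigma^{-1}]$. Rearranging $1/\tilde{\gamma}\le\bar{\xi}/(\kappa_0+\bar{\xi})$ gives $\kappa_0\le\bar{\xi}(\tilde{\gamma}-1)=\Tr[\Sigma](\tilde{\gamma}-1)/p$.

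For \eqref{eq:one_minus_eta_at_kappa_zero}, I would evaluate $1-\eta$ at $\kappa_0$ directly rather than passing to the limit in \eqref{eq:eta_exact_identity}. When $\tilde{\gamma}\le1$, $\kappa_0=0$ and $\mathrm{df}_2(0)=p$, so $1-\eta(0)=1-\tilde{\gamma}$; this yields both the $\tilde{\gamma}<1$ and the $\tilde{\gamma}=1$ cases. When $\tilde{\gamma}>1$, the identity $\mathrm{df}_1-\mathrm{df}_2=\kappa\Tr[\Sigma(\Sigma+\kappa I)^{-2}]$ combined with $\mathrm{df}_1(\kappa_0)=N_{\mathrm{tr}}$ gives $1-\eta(\kappa_0)=\frac{\kappa_0}{N_{\mathrm{tr}}}\Tr[\Sigma(\Sigma+\kappa_0I)^{-2}]>0$, since $\kappa_0>0$ and $\Sigma\succ0$.

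The only delicate point is identifying the limit of the branch with $\kappa_e$ rather than with the spurious root $0$ when $\tilde{\gamma}>1$. This is handled by the sign analysis \eqref{eq:sign_of_F}, which confines all positive-$\lambda$ solutions to $(\kappa_e,\infty)$, together with continuity of the inverse at $0$.
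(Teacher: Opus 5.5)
Your proposal is correct and follows the paper's argument essentially step for step. It uses the same zero-set and sign analysis of $F$, and the same identity $F'=1-\eta$, which makes $F|_{[\kappa_e,\infty)}$ a strictly increasing bijection onto $[0,\infty)$ whose continuous inverse gives both monotonicity and $\kappa_0=\kappa_e$; it then evaluates the same two Jensen inequalities at $\mathrm{df}_1(\kappa_0)=N_{\mathrm{tr}}$. The one deviation is that you compute $1-\eta(\kappa_0)$ directly, via $\eta(0)=\tilde{\gamma}$ and $\mathrm{df}_1-\mathrm{df}_2=\kappa\Tr[\Sigma(\Sigma+\kappa I_p)^{-2}]$, instead of taking the $\lambda\downarrow0$ limit of the exact identity \eqref{eq:eta_exact_identity}; this is slightly cleaner, since it avoids the paper's forward appeal to the small-$\lambda$ asymptotics \eqref{eq:kappa_leading_underparameterized}.
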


Proposition~\ref{prop:kappa_zero_positive} is the precise characterization of the \emph{ridgeless endpoint} $\kappa_0$ discussed in Section~\ref{subsec:asymptotic_test_risk}: in the overparameterized regime, the test-risk DE behaves as if a strictly positive ridge $\kappa_0=\kappa_e$ were present even without explicit regularization. Equation~\eqref{eq:one_minus_eta_at_kappa_zero} shows that the denominator $1 - \eta(\kappa_\lambda)$ in the test-risk DE of Theorem~\ref{thm:test_risk_DE} stays bounded away from zero as $\lambda \downarrow 0$ for $\tilde{\gamma} \neq 1$.
Note that only $\Sigma \succ 0$ is used; no assumption on the shape of the spectrum is needed.
The lower bound in Eq.~\eqref{eq:kappa_zero_two_sided_bounds} can nevertheless be very loose in the quantum setting, since highly expressive circuits produce exponentially small eigenvalues and hence a large $\Tr[\Sigma^{-1}]$; the \emph{existence} of $\kappa_0 > 0$ is unaffected.

\subsection{Small-Regularization Asymptotics}
\label{appdx:subsec:kappa_asymptotics}
Since $p - \mathrm{df}_1(\kappa) = \kappa\Tr[(\Sigma + \kappa I_p)^{-1}]$ for $\kappa \geq 0$, we can rewrite $F(\kappa)$ as
\begin{align}\label{eq:lambda_of_kappa}
    F(\kappa)
    =
    \kappa\left[
        (1 - \tilde{\gamma})
        +
        \frac{\kappa}{N_{\mathrm{tr}}}\Tr[(\Sigma + \kappa I_p)^{-1}]
    \right] \,.
\end{align}
We now read off the behavior of $\kappa_\lambda$ as $\lambda\downarrow0$ from Eq.~\eqref{eq:lambda_of_kappa} in the three regimes.
Related analyses can be found in Ref.~\cite{bach2024highdimensional}.
The results are summarized in Table~\ref{tab:kappa_regimes}.

\begin{table}[H]
    \centering
    \setlength{\tabcolsep}{15pt}
    \begin{tabular}{llll}
    \toprule
    Regime & Ridgeless endpoint $\kappa_0$ & Small-$\lambda$ behavior of $\kappa_\lambda$ & Small-$\lambda$ behavior of $1/[1 - \eta(\kappa_\lambda)]$ \\
    \midrule
    $\tilde{\gamma} < 1$
        & $0$
        & $\dfrac{\lambda}{1 - \tilde{\gamma}}+\calO(\lambda^2)$
        & $\to\dfrac{1}{1 - \tilde{\gamma}}$ \\[2.2ex]
    $\tilde{\gamma} = 1$
        & $0$
        & $\sqrt{\lambda/c_2}$, \; $c_2=\Tr[\Sigma^{-1}]/N_{\mathrm{tr}}$
        & $\sim\dfrac{1}{2\sqrt{c_2\lambda}} \to \infty$ \\[2.2ex]
    $\tilde{\gamma} > 1$
        & $>0$, \; $\mathrm{df}_1(\kappa_0)=N_{\mathrm{tr}}$
        & $\kappa_0+\dfrac{\lambda}{1 - \eta(\kappa_0)}+\calO(\lambda^2)$
        & $\to\dfrac{1}{1 - \eta(\kappa_0)} < \infty$ \\
    \bottomrule
    \end{tabular}
    \caption{Behavior of the effective regularization parameter $\kappa_\lambda$ as the explicit regularization becomes small, at fixed $p$ and $N_{\mathrm{tr}}$. Here $\kappa_0=\lim_{\lambda\downarrow0}\kappa_\lambda = \kappa_e$. Only the interpolation threshold $\tilde{\gamma} = 1$ produces a diverging $1/[1 - \eta(\kappa_\lambda)]$, which is the origin of the double-descent peak in the test-risk DE.}
    \label{tab:kappa_regimes}
\end{table}

\paragraph{Underparameterized regime ($\tilde{\gamma} < 1$).}
Proposition~\ref{prop:kappa_zero_positive} gives $\kappa_0 = \kappa_e = 0$.
Solving Eq.~\eqref{eq:lambda_of_kappa} for $\kappa_\lambda$ gives the exact relation
\begin{align}\label{eq:kappa_exact_underparameterized}
    \kappa_\lambda
    =
    \frac{\lambda}{
        (1-\tilde{\gamma})
        +
        \frac{\kappa_\lambda}{N_{\mathrm{tr}}}\Tr[(\Sigma+\kappa_\lambda I_p)^{-1}]
    } \,,
\end{align}
and since $\kappa_\lambda \downarrow 0$ and $\Tr[(\Sigma+\kappa I_p)^{-1}]\to\Tr[\Sigma^{-1}] < \infty$, we obtain
\begin{align}\label{eq:kappa_leading_underparameterized}
    \frac{\kappa_\lambda}{\lambda}
    \longrightarrow
    \frac{1}{1 - \tilde{\gamma}}
    \qquad (\lambda \downarrow 0) \,.
\end{align}
If in addition $\kappa_\lambda < \xi_{\min}$, so that the geometric series
$\Tr[(\Sigma+\kappa I_p)^{-1}] = \Tr[\Sigma^{-1}]-\kappa\Tr[\Sigma^{-2}]+\calO(\kappa^2)$ converges, then Eq.~\eqref{eq:lambda_of_kappa} becomes
\begin{align}
    \lambda
    =
    (1 - \tilde{\gamma})\kappa_\lambda
    +
    \frac{\Tr[\Sigma^{-1}]}{N_{\mathrm{tr}}}\kappa_\lambda^2
    -
    \frac{\Tr[\Sigma^{-2}]}{N_{\mathrm{tr}}}\kappa_\lambda^3
    +
    \calO(\kappa_\lambda^4) \,,
\end{align}
and inverting this series order by order gives the second-order correction
\begin{align}\label{eq:kappa_second_order_underparameterized}
    \kappa_\lambda
    =
    \frac{\lambda}{1-\tilde{\gamma}}
    -
    \frac{\Tr[\Sigma^{-1}]}{N_{\mathrm{tr}}}
    \frac{\lambda^2}{(1-\tilde{\gamma})^3}
    +
    \calO(\lambda^3) \,.
\end{align}
To see the inversion explicitly, write $a:=1-\tilde{\gamma}$ and $b:=\Tr[\Sigma^{-1}]/N_{\mathrm{tr}}$, and substitute $\kappa_\lambda = A\lambda + B\lambda^2+\calO(\lambda^3)$ into $\lambda = a\kappa_\lambda + b\kappa_\lambda^2+\calO(\kappa_\lambda^3)$. Comparing the coefficients of $\lambda$ and $\lambda^2$ gives $A = a^{-1}$ and $B=-ba^{-3}$, which yields Eq.~\eqref{eq:kappa_second_order_underparameterized}.
At fixed $p$, the condition $\kappa_\lambda<\xi_{\min}$ is eventually automatic because $\xi_{\min}>0$ and $\kappa_\lambda\to0$. Using Eq.~\eqref{eq:kappa_leading_underparameterized}, its onset is roughly $\lambda\ll(1-\tilde{\gamma})\xi_{\min}$. This condition need not be uniform in $p$: if $\xi_{\min}$ is exponentially small for an expressive quantum feature map, the second-order expansion is informative only in a correspondingly narrow window, whereas Eqs.~\eqref{eq:kappa_exact_underparameterized}--\eqref{eq:kappa_leading_underparameterized} do not require this expansion.

Since $\eta$ is continuous on $[0,\infty)$ and $\kappa_\lambda \to 0$ as $\lambda \downarrow 0$, we have
\begin{align}
    1 - \eta(\kappa_\lambda)
    &\longrightarrow
    1 - \eta(0)
    =
    1 - \frac{p}{N_{\mathrm{tr}}}
    =
    1 - \tilde{\gamma} > 0
    \quad
    \implies
    \quad
    \frac{1}{1 - \eta(\kappa_\lambda)}
    \longrightarrow
    \frac{1}{1 - \tilde{\gamma}} \,.
\end{align}

\paragraph{At the interpolation threshold ($\tilde{\gamma} = 1$).}
Here $\kappa_0 = \kappa_e = 0$, and $1 - \tilde{\gamma} = 0$ in Eq.~\eqref{eq:lambda_of_kappa}, leaving the \emph{exact} relation
\begin{align}
    \lambda
    =
    \frac{\kappa_\lambda^2}{N_{\mathrm{tr}}}\Tr[(\Sigma+\kappa_\lambda I_p)^{-1}] \,.
\end{align}
Letting $\lambda\downarrow0$ and using $\Tr[(\Sigma+\kappa I_p)^{-1}]\to\Tr[\Sigma^{-1}]$,
\begin{align}\label{eq:kappa_sqrt_scaling}
    \kappa_\lambda
    \sim
    \sqrt{\frac{\lambda}{c_2}} \,,
    \qquad
    c_2
    :=
    \frac{\Tr[\Sigma^{-1}]}{N_{\mathrm{tr}}}
    =
    \frac{1}{N_{\mathrm{tr}}}\sum_{i=1}^p\frac{1}{\xi_i}
    >0 \,,
\end{align}
which recovers the scaling reported in Refs.~\cite{defilippis2024dimension,bach2024highdimensional}.
To obtain the corresponding behavior of $1-\eta(\kappa_\lambda)$, we expand $\mathrm{df}_2$ directly. Since $p=N_{\mathrm{tr}}$,
\begin{align}
    \mathrm{df}_2(\kappa)
    =
    \sum_{i=1}^p\left(1+\frac{\kappa}{\xi_i}\right)^{-2}
    =
    N_{\mathrm{tr}}
    -2\kappa\Tr[\Sigma^{-1}]
    +\calO(\kappa^2) \,.
\end{align}
Hence
\begin{align}
    1 - \eta(\kappa_\lambda)
    =
    1 - \frac{\mathrm{df}_2(\kappa_\lambda)}{N_{\mathrm{tr}}}
    =
    2c_2\kappa_\lambda+\calO(\kappa_\lambda^2)
    \sim
    2\sqrt{c_2\lambda} \,.
\end{align}
Therefore,
\begin{align}\label{eq:one_minus_eta_gamma1}
    \frac{1}{1 - \eta(\kappa_\lambda)}
    \sim
    \frac{1}{2\sqrt{c_2\lambda}} \,.
\end{align}
Thus, the variance term $\sigma^2\eta(\kappa_\lambda)/(1 - \eta(\kappa_\lambda))$ of Theorem~\ref{thm:test_risk_DE} grows as $\lambda^{-1/2}$, while the bias prefactor $\kappa_\lambda^2/(1 - \eta(\kappa_\lambda))$ decays like $\lambda^{1/2}$; this is the double-descent peak discussed in Section~\ref{subsec:asymptotic_test_risk}.
The expansion requires $\kappa_\lambda$ to be small relative to the relevant spectral scales. A sufficient condition is $\kappa_\lambda\ll\xi_{\min}$, which, using Eq.~\eqref{eq:kappa_sqrt_scaling}, corresponds roughly to $\lambda\ll c_2\xi_{\min}^2$. Small eigenvalues can therefore make the valid window narrow; the value of $c_2$ alone does not determine where the $\sqrt{\lambda}$ regime begins.

\paragraph{Overparameterized regime ($\tilde{\gamma} > 1$).}
Proposition~\ref{prop:kappa_zero_positive} gives $\kappa_0 = \kappa_e > 0$, and Eq.~\eqref{eq:one_minus_eta_at_kappa_zero} gives $1 - \eta(\kappa_0) > 0$.
Since $F'(\kappa_0) = 1 - \eta(\kappa_0) > 0$, the inverse function is differentiable at $\lambda = 0$. Equation~\eqref{eq:kappa_derivative_appendix} therefore gives
\begin{align}\label{eq:kappa_expansion_overparameterized}
    \kappa_\lambda
    =
    \kappa_0
    +
    \frac{\lambda}{1 - \eta(\kappa_0)}
    +
    \calO(\lambda^2) \,.
\end{align}
In particular, $\kappa_\lambda \to \kappa_0$ as $\lambda \downarrow 0$. Hence, by continuity of $\eta$ at $\kappa_0 > 0$,
\begin{align}
    \frac{1}{1-\eta(\kappa_\lambda)}
    &\longrightarrow
    \frac{1}{1-\eta(\kappa_0)}
    <\infty.
\end{align}

\subsection{Isotropic Covariance: Closed Form}
\label{appdx:subsec:kappa_isotropic}
For $\Sigma = I_p$ the self-consistent equation~\eqref{eq:self-consistent_eq} is quadratic in $\kappa_\lambda$ and can be solved in closed form,
\begin{align}\label{eq:kappa_isotropic_closed_form}
    \kappa_\lambda
    =
    \frac{\tilde{\gamma}-1 + \lambda + \sqrt{(\tilde{\gamma} - 1 + \lambda)^2 + 4\lambda}}{2} \,.
\end{align}
For $\lambda = 0$, this gives the ridgeless endpoint $\kappa_0 = (\tilde{\gamma} - 1 + |\tilde{\gamma} - 1|)/2 = \max\{\tilde{\gamma} - 1,0\}$.
This is in accordance with Proposition~\ref{prop:kappa_zero_positive} for $\Sigma = I_p$.
Figure~\ref{fig:kappa_isotropic} shows the resulting behavior of $\kappa_\lambda$ as a function of $\lambda$ and $\tilde{\gamma}$, where the three scaling regimes of Table~\ref{tab:kappa_regimes} are clearly visible.

\begin{figure}[t]
    \centering
    \includegraphics[width=10cm]{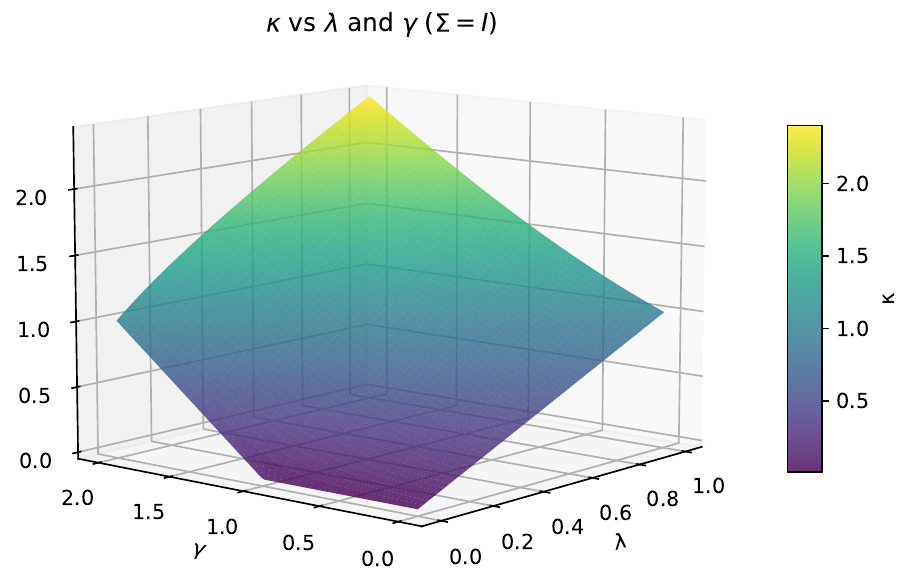}
    \caption{Behavior of $\kappa_\lambda$ as a function of $\lambda$ and $\tilde{\gamma}$ for isotropic covariance $\Sigma = I_p$.}
    \label{fig:kappa_isotropic}
\end{figure}

\clearpage
\newpage
\section{Estimation of the Population Covariance Matrix $\Sigma$ and the Projected Target Vector $\bs{\beta}_*$}\label{appdx:sec:estimation_Sigma_beta}

To plot the test-risk DE curves from Eq.~\eqref{eq:test_risk_DE}, we need to estimate the population covariance matrix $\Sigma = \diag(\xi_1,\ldots,\xi_p) \in \bbR^{p\times p}$ and the projected target vector $\bs{\beta}_* = (\beta_{*,1},\ldots,\beta_{*,p})\T\in\bbR^p$. Let $\calD_{\mathrm{est}} = \{(\bs{u}_i,y_i)\}_{i=1}^{N_{\mathrm{est}}}$ be an estimation dataset independent of the training and test datasets, and let $\calU_{\mathrm{est}}=\{\bs{u}_i\}_{i=1}^{N_{\mathrm{est}}}$ denote its inputs. The set $\calU_{\mathrm{est}}$ is used to estimate the eigenvalues of $\Sigma$. To estimate $\bs{\beta}_*$, we use the available noise-free teacher values $f_*(\bs{u}_i)$ in the synthetic experiments and the observed class labels $y_i$ in the Fashion-MNIST experiments. For Fashion-MNIST, this substitution is an empirical approximation because the noise-free target values are unavailable.

\subsection{Estimation of the Population Spectrum}
Let $X_{\mathrm{est}}\in\bbR^{N_{\mathrm{est}}\times p}$ be the effective feature matrix whose $i$-th row is $\phi(\bs{u}_i)\T$, and let $K_{\mathrm{est}}:=X_{\mathrm{est}}X_{\mathrm{est}}\T\in\bbR^{N_{\mathrm{est}}\times N_{\mathrm{est}}}$ be the corresponding kernel matrix. For every finite $N_{\mathrm{est}}$, the nonzero eigenvalues of the sample covariance matrix $X_{\mathrm{est}}\T X_{\mathrm{est}}/N_{\mathrm{est}}$ are exactly the same as those of $K_{\mathrm{est}}/N_{\mathrm{est}}$. Moreover, for each fixed feature map, $X_{\mathrm{est}}\T X_{\mathrm{est}}/N_{\mathrm{est}}$ converges to $\Sigma$ as $N_{\mathrm{est}}$ increases. We therefore use the nonzero eigenvalues of $K_{\mathrm{est}}/N_{\mathrm{est}}$ computed from a large estimation dataset $N_{\mathrm{est}}$ as empirical approximations of the eigenvalues of $\Sigma$.

Let $\xi^{(\mathrm{est})}_1\geq\cdots\geq\xi^{(\mathrm{est})}_{N_{\mathrm{est}}}\geq0$ denote the eigenvalues of $K_{\mathrm{est}}/N_{\mathrm{est}}$.
We then use the compact rank-$p$ eigendecomposition
\begin{align}
    \frac{1}{N_{\mathrm{est}}}K_{\mathrm{est}}
    = H_p \Lambda_p^{(\mathrm{est})} H_p\T
    = \sum_{i=1}^{p}\xi_i^{(\mathrm{est})}\bs{h}_i\bs{h}_i\T \,.
\end{align}
Here $\Lambda_p^{(\mathrm{est})} := \diag(\xi_1^{(\mathrm{est})},\ldots,\xi_p^{(\mathrm{est})})$ and $H_p := [\bs{h}_1,\ldots,\bs{h}_p]\in\bbR^{N_{\mathrm{est}}\times p}$, where $\bs{h}_k\T\bs{h}_l = \delta_{kl}$. Because $\Sigma$ is usually unknown, we use $\Lambda_p^{(\mathrm{est})}$ as its empirical approximation in Eq.~\eqref{eq:test_risk_DE}.
The remaining $N_{\mathrm{est}} - p$ Gram-matrix eigenvalues are numerically null and do not enter the $p$-dimensional test-risk DE formula.

\subsection{Approximation of Eigenfunction Values at the Data Samples}
To estimate the projected target vector $\bs{\beta}_*$, we require approximations of the eigenfunction values at the data samples, i.e., $\psi_k(\bs{u}_j)$ for $j \in [N_{\mathrm{est}}]$ and $k \in [p]$.
We employ the Nystr\"om method, which discretizes the integral eigenvalue equation $\int_{\calU} k(\bs{u}, \bs{u}') \psi_k(\bs{u}') \ind\mu(\bs{u}') = \xi_k \psi_k(\bs{u})$ using the empirical measure $\frac{1}{N_{\mathrm{est}}} \sum_{j=1}^{N_{\mathrm{est}}} \delta_{\bs{u}_j}$.

Applying this approximation to the inputs in $\calU_{\mathrm{est}}$ yields the matrix equation:
\begin{align}
    \forall i \in [N_{\mathrm{est}}], \quad
    \frac{1}{N_{\mathrm{est}}} \sum_{j=1}^{N_{\mathrm{est}}} k(\bs{u}_i, \bs{u}_j) \psi^{(\mathrm{est})}_k(\bs{u}_j)
    &= \xi^{(\mathrm{est})}_k \psi^{(\mathrm{est})}_k(\bs{u}_i) \,.
\end{align}
Let $\bs{\psi}^{(\mathrm{est})}_k = [\psi^{(\mathrm{est})}_k(\bs{u}_1), \dots, \psi^{(\mathrm{est})}_k(\bs{u}_{N_{\mathrm{est}}})]\T$ be the approximate vector of eigenfunction evaluations. The equation above can be rewritten in matrix form as
\begin{align}
    \frac{1}{N_{\mathrm{est}}} K_{\mathrm{est}} \bs{\psi}^{(\mathrm{est})}_k = \xi^{(\mathrm{est})}_k \bs{\psi}^{(\mathrm{est})}_k \,.
\end{align}
Comparing this with the eigendecomposition $\frac{1}{N_{\mathrm{est}}} K_{\mathrm{est}} \bs{h}_k = \xi^{(\mathrm{est})}_k \bs{h}_k$, we see that $\bs{\psi}^{(\mathrm{est})}_k$ is proportional to the eigenvector $\bs{h}_k$. The scaling factor is determined by the orthonormality condition, i.e., $\int \psi_k(\bs{u})^2 \ind\mu(\bs{u}) = 1$. Its empirical counterpart is
\begin{align}
    \frac{1}{N_{\mathrm{est}}} \sum_{j=1}^{N_{\mathrm{est}}} \psi^{(\mathrm{est})}_k(\bs{u}_j)^2 = 1
    \iff \frac{1}{N_{\mathrm{est}}} \|\bs{\psi}^{(\mathrm{est})}_k\|_2^2 = 1
    \iff \|\bs{\psi}^{(\mathrm{est})}_k\|_2 = \sqrt{N_{\mathrm{est}}} \,.
\end{align}
Since the eigenvectors $\{\bs{h}_k\}_{k=1}^p$ are normalized such that $\|\bs{h}_k\|_2 = 1$, we define the estimated eigenfunction values by
\begin{align}
    \bs{\psi}^{(\mathrm{est})}_k := \sqrt{N_{\mathrm{est}}}\,\bs{h}_k \,.
\end{align}

\subsection{Estimation of the Projected Target Vector $\bs{\beta}_*$}
The entries of the projected target vector $\beta_{*,k}$ are defined as the projections of the target function $f_*$ onto the orthonormal eigenfunctions $\{\psi_k\}_{k=1}^p$:
\begin{align}
    \beta_{*,k} = \int_{\calU} f_*(\bs{u}) \psi_k(\bs{u}) \ind\mu(\bs{u}) \,.
\end{align}
Since $f_*$ is known for the synthetic experiments, we use its noise-free values rather than the noisy responses.
However, for the Fashion-MNIST experiments, we use the observed class labels $y_j$ as empirical approximations of the unknown noise-free values $f_*(\bs{u}_j)$.
In both cases, the actual values of the eigenfunctions $\psi_k(\bs{u}_j)$ are unavailable, so we use $\psi^{(\mathrm{est})}_k(\bs{u}_j)$ instead.
Define $\bs{t}_{\mathrm{est}} := (t_1,\ldots,t_{N_{\mathrm{est}}})\T$, where $t_j := f_*(\bs{u}_j)$ for synthetic data and $t_j := y_j$ for Fashion-MNIST.
For each $k\in[p]$, we define $\beta_{*,k}^{(\mathrm{est})}$, the empirical approximation of the projected target coefficient $\beta_{*,k}$, by
\begin{align}
    \beta^{(\mathrm{est})}_{*,k}
    &
    := \frac{1}{N_{\mathrm{est}}} \sum_{j=1}^{N_{\mathrm{est}}} t_j \psi^{(\mathrm{est})}_k(\bs{u}_j) \nonumber\\
    &
    = \frac{1}{N_{\mathrm{est}}} \sum_{j=1}^{N_{\mathrm{est}}} t_j \qty[\sqrt{N_{\mathrm{est}}}(\bs{h}_k)_j] \nonumber\\
    &
    = \frac{1}{\sqrt{N_{\mathrm{est}}}}\bs{h}_k\T\bs{t}_{\mathrm{est}} \,.
\end{align}
Collecting these empirical coefficients gives the empirical estimator
\begin{align}
    \bs{\beta}^{(\mathrm{est})}_*
    := \frac{1}{\sqrt{N_{\mathrm{est}}}}H_p\T\bs{t}_{\mathrm{est}}
    \in \bbR^p \,.
\end{align}
Finally, we use $\Lambda_p^{(\mathrm{est})}$ and $\bs{\beta}_*^{(\mathrm{est})}$ in place of $\Sigma$ and $\bs{\beta}_*$ in Eq.~\eqref{eq:test_risk_DE}.

\section{Plot of the Normalized Effective Degrees of Freedom $\eta_\lambda$}\label{appdx:sec:effective_DOF}
\begin{figure}[H]
    \centering
    \includegraphics[width=10cm]{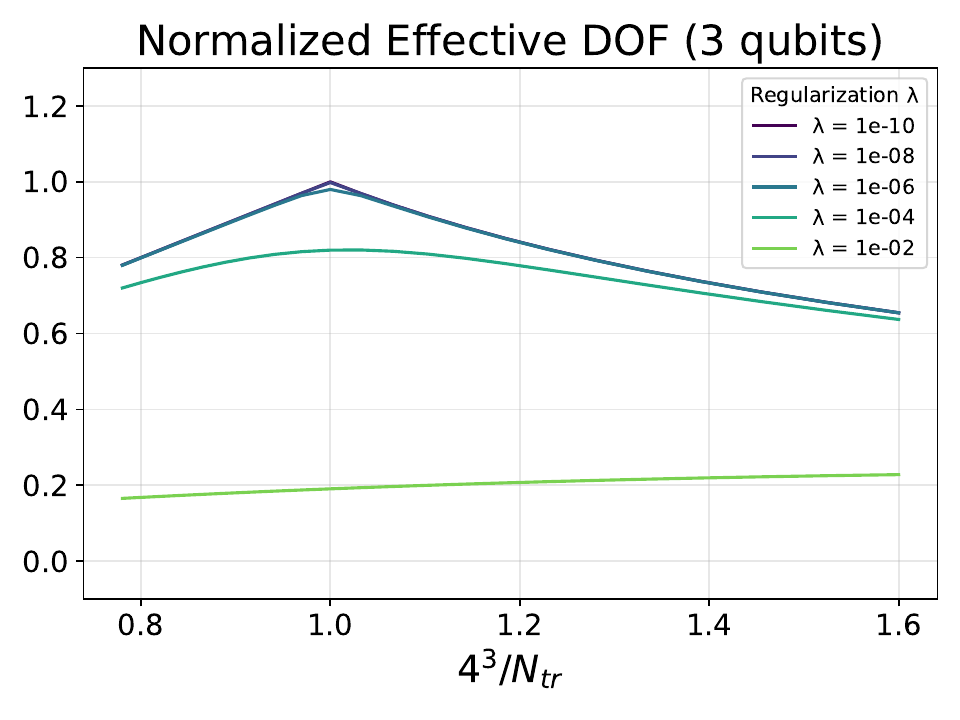}
    \caption{Plot of the normalized effective degrees of freedom $\eta_\lambda$ as a function of $\tilde{\gamma} = 4^3/N_{\mathrm{tr}}$ for the synthetic dataset with the 3-qubit HEA.}
    \label{fig:effective_DOF_synthetic_3qubits}
\end{figure}

\end{document}